\newtheorem{theorem}{Theorem}
\newtheorem{proposition}[theorem]{Proposition}
\newtheorem{remark}[theorem]{Remark}
\newtheorem{lemma}[theorem]{Lemma}
\newtheorem{corollary}[theorem]{Corollary}
\newtheorem{definition}[theorem]{Definition}
\newcommand\1{{\ensuremath {\mathds 1} }} 
\def\RR{{\mathbb R}}
\def\LL{{\mathbb L}}
\def\UU{{\mathbb U}}
\def\FF{{\mathbb F}}
\def\HH{{\mathbb H}}
\def\N{{\mathbb N}}
\def\R{{\mathbb R}}
\def\C{{\mathbb C}}
\def\bA{{\bold A}}
\def\bB{{\bold B}}
\def\bR{{\bold R}}
\def\br{{\bold r}}
\def\bk{{\bold k}}
\def\br{{\bold X}}
\def\bx{{\bold x}}
\def\br{{\bold r}}
\def\Im{\mathrm{Im }\, }
\def\ri{{\mathrm{i}}}
\def\rc{{\mathrm{c}}}
\def\rd{{\mathrm{d}}}
\def\re{{\mathrm{e}}}
\def\rp{{\rm p}}
\def\rh{{\rm h}}
\def\Re{{\mathrm{Re }\, }}
\def\Tr{{\rm Tr}}
\def\div{{\rm div \,}}
\def\cA{{\mathcal A}}
\def\cB{{\mathcal B}}
\def\cC{{\mathcal C}}
\def\cF{{\mathcal F}}
\def\cG{{\mathcal G}}
\def\cH{{\mathcal H}}
\def\cK{{\mathcal K}}
\def\cS{{\mathcal S}}
\def\fS{{\mathfrak S}}
\def\fH{{\mathfrak H}}
\def\fB{{\mathfrak B}}
\def\ffg{{\mathfrak g}}
\def\fs{{\mathfrak s}}
\newcommand{\dps}{\displaystyle}
\newcommand{\bra}{\langle}
\newcommand{\ket}{\rangle}
\newcommand{\ess}{\mathrm{ess}}
\newcommand{\xc}{\mathrm{xc}}
\newcommand{\loc}{\mathrm{loc}}
\newcommand{\pv}{\mathrm{p.v.} }
\newcommand{\sgn}{\mathrm{sgn}}
\newcommand{\ext}{\mathrm{ext}}
\newcommand{\sym}{\mathrm{sym}}
\newcommand{\Supp}{\mathrm{Supp}}
\newcommand{\GW}{\mathrm{GW}}
\newcommand{\PPM}{\mathrm{PPM}}
\newcommand{\app}{\mathrm{app}}
\newcommand{\sS}{\mathscr{S}}
\newcommand{\sC}{\mathscr{C}}
\newcommand{\sD}{\mathscr{D}}
\newcommand{\vc}{v_{c}}
\renewcommand{\leq}{\leqslant}
\renewcommand{\le} {\leqslant}
\renewcommand{\geq}{\geqslant}
\renewcommand{\ge} {\geqslant}
\def\sqw{\hbox{\rlap{\leavevmode\raise.3ex\hbox{$\sqcap$}}$%
\sqcup$}}
\def\cqfd{\ifmmode\sqw\else{\ifhmode\unskip\fi\nobreak\hfil
\penalty50\hskip1em\null\nobreak\hfil\sqw
\parfillskip=0pt\finalhyphendemerits=0\endgraf}\fi}
\renewcommand{\eqref}[1]{(\ref{#1})}
\begin{document}

\title{A mathematical analysis of the $\GW^0$ method for computing electronic excited energies of molecules}

\author{
  Eric Canc\`es, David Gontier and Gabriel Stoltz \\
  Universit\'e Paris-Est, \'Ecole des Ponts and INRIA, F-77455 Marne-la-Vall\'ee
}

\maketitle

\begin{abstract}
This paper analyses the GW method for finite electronic systems. In a first step, we provide a mathematical framework for the usual one-body operators that appear naturally in many-body perturbation theory. We then discuss the GW equations which construct an approximation of the one-body Green's function, and give a rigorous mathematical formulation of these equations. Finally, we study the well-posedness of the $\GW^0$ equations, proving the existence of a unique solution to these equations in a perturbative regime.
\end{abstract}

\tableofcontents

\newpage

\section{Introduction}

Computational quantum chemistry is nowadays a standard tool to numerically determine the properties of molecules. The Density Functional Theory (DFT) first developed by Hohenberg and Kohn~\cite{HK64} and by Kohn and Sham~\cite{KS65}, is a very powerful method to obtain ground state properties of molecular systems. However, it does not allow one to compute optical properties and electronic excited energies. In order to calculate such quantities, several approaches have been considered in the last decades~\cite{Onida2002}. Among them are the time-dependent DFT (TDDFT)~\cite{TDDFT2, TDDFT}, wave-function methods~\cite{Helgaker2014} such as Coupled-Cluster, full-CI and Green's function methods. In this article, we study the GW method, which is based on Hedin's equations for the one-body Green's function~\cite{Hedin1965}. The formal derivation of the latter equations relies on many-body perturbation techniques. While the GW method has been proven very successful in practice to predict electronic-excited energies, no rigorous mathematical framework has yet been developed to understand its mathematical properties. The aim of this work is to present such a framework. \\

In non-relativistic first-principle molecular simulation, the electrons of a molecular system are described by an $N$-body Hamiltonian operator $H_N$, which is a bounded below self-adjoint operator on the fermionic space $\bigwedge^N L^2(\R^3)$ (see Equation~\eqref{eq:H_N} below). Whenever $N \le Z$, where $Z$ is the total nuclear charge of the molecular system, $H_N$ has an infinity of discrete eigenvalues $E_N^0 \le E_N^1 \le E_N^2 \le \cdots$ below the bottom of the essential spectrum, where $E_N^0$ is its ground state energy. The quantities we would like to evaluate are the \textit{electronic-excitation energies}
\[
E_N^0 - E_{N+1}^k \quad \text{(gain of an electron)},
\quad \text{and} \quad
E_{N}^0 - E_{N-1}^k \quad \text{(loss of an electron)}.
\]
These energy differences are not to be confused with the optical-excitation energies, which are energy differences of the form $E_N^{k} - E_N^0$, between two states with the same number of electrons. More generally, it is interesting to compute the particle electronic-excitation set $S_\rp := \sigma \left( H_{N+1} - E_N^0 \right)$ and the hole electronic-excitation set $S_\rh := \sigma \left( E_{N}^0 - H_{N-1} \right)$. As will be made clear in Section~\ref{subsec:GF}, these sets are closely linked to the one-body Green's function: the time-Fourier transform of the Green's function becomes singular on these sets. In order to study the electronic-excitation sets, we therefore study the one-body Green's function. Also, the one-body Green's function is a fundamental object which contains a lot of useful information, and allows one to easily compute the ground state electronic density, the ground state one-body density matrix, and even the ground state energy thanks to the Galitskii-Migdal formula~\cite{Galiskii1958}. \\

Calculating the one-body Green's function is however a difficult task. In his pioneering work in 1965, Hedin proved that the Green's function satisfies a set of (self-consistent) equations, now called the Hedin's equations~\cite{Hedin1965}. These equations link many operator-valued distributions, namely the reducible and irreducible polarizability operators, the dynamically screened interaction operator, the self-energy operator, the vertex operator, and of course the one-body Green's function. The state-of-the-art method to compute the one-body Green's function consists in solving Hedin's equations.\\

Immediately, two difficulties arise. The first one is related to the lack of regularity of the Green's function (we expect its time-Fourier transform $\widehat{G}$ to be singular on the electronic-excitation sets). One way to get around this problem is to consider the analytical extension of $\widehat{G}$ into the complex plane, which we denote by $\widetilde{G}$. This is possible whenever the following classical stability condition holds true\footnote{The question ``Is the stability condition always true for Coulomb systems'' is still an open problem~\cite[Part VII]{Bach2014}.}:
\[
\boxed{\textbf{Stability assumption:} \text{ It holds that } 2E_N^0 < E_{N+1}^0 + E_{N-1}^0.}
\]
The physical relevance of this inequality is discussed for instance in~\cite[Section~4.2]{Farid99}. It allows one to define the chemical potential $\mu$, chosen such that
\[
E_N^0 - E_{N-1}^0 < \mu < E_{N+1}^0 - E_N^0.
\]
Instead of studying the Green's function $G(\tau)$ in the time domain, or its Fourier transform $\widehat{G}(\omega)$ in the frequency domain, we rather study its analytical continuation $\widetilde{G}$ on the imaginary axis $\mu + \ri \R$. The function $\omega \mapsto \widetilde{G}(\mu + \ri \omega)$ enjoys very nice properties, both in terms of regularity and integrability, which makes it a privileged tool for numerical calculations.\\

The second difficulty comes from the fact that Hedin's equations cannot be exactly solved and, even more importantly, that the mathematical definition of some terms in these equations are unclear. It however opens the way to some approximate resolutions. The most widely used approximation nowadays is the so-called GW-approximation, also introduced by Hedin~\cite{Hedin1965}. These equations are traditionally set on the time-axis, or on the energy-axis~\cite{RJT10,KFS14}. However, as previously mentioned, the various operators under consideration are singular on these axes, which makes the traditional GW equations cumbersome to implement numerically, and difficult to analyze mathematically. In order to manipulate better-behaved equations, it is more convenient to replace every operator-valued distribution involved in the GW equations by its analytic continuation on an appropriate imaginary axis, thanks to the ``contour deformation'' technique introduced in~\cite{Rojas1995, RSWRG99}. The resulting GW equations, which give an approximation of the map $\omega \mapsto \widetilde{G}(\mu + \ri \omega)$, turn out to give simulation results in very good agreement with experimental data~\cite{SDL06,SDL09,CRRRS12,CRRRS13}. \\

From the GW equations set on the imaginary axis, several further approximation may be performed. The GW equations are solved self-consistently, and the Green's function is updated at each iteration until convergence. When only one iteration is performed, we obtain the one-shot GW approximation, also called the G$_0$W$^0$ approximation of the Green's function. For molecules, self-consistent GW approaches give results of similar quality as G$_0$W$^0$, sometimes almost identical~\cite{SDL09,KFS14}, sometimes slightly worse~\cite{RJT10}, sometimes slightly better~\cite{CRRRS12,CRRRS13}. When several iterations are performed, while keeping the screening operator $W$ fixed, equal to a reference screening operator $W^0$, we obtain the $\GW^0$ approximation of the Green's function~\cite{SDL09, VonBarth1996}. Since the update of the screening operator $W$ in a self-consistent GW scheme seems difficult to analyze mathematically, we prefer to study in this article the equations resulting from the $\GW^0$ approximation. \\

The purpose of this article is threefold. First, we clarify the mathematical definitions and properties of the usual one-body operators involved in many-body perturbation theory. Then, we embed the $\GW^0$ equations in a mathematical framework. Finally, we prove that, in a perturbative regime, the $\GW^0$ equations admit a unique solution close to a reference Green's function. \\

From a physical viewpoint, the analysis we perform in this work is more relevant for atoms and molecules. Indeed, as discussed in \cite[Section~4.1]{BG14} for instance, fully self-consistent GW approaches are questionable for solid-state systems, for which quasiparticle methods are preferred~\cite{AG98,AJW99}. \\

The paper is organized as follows. In Section~\ref{sec:setting_stage}, we provide the mathematical tools that will be used throughout the article. We recall the Titchmarsh's theorem, and introduce the kernel-product of two operators, which can be seen as an infinite dimensional version of the Hadamard product for matrices. We also explain the underlying structure that makes the ``contour deformation'' possible. In Section~\ref{sec:operators}, we recall the standard definitions of the usual one-body operators that appear in many-body perturbation theory. A consistent functional setting is given for each of these operators, and their basic properties are recalled and proved. Section~\ref{sec:GW_finite} is concerned with the $\GW$ approximation. We explain why some of the GW equations are not well-understood mathematically, and prove that the $\GW^0$ equations are well-posed in a perturbative regime. Most of the proofs are postponed until Section~\ref{sec:proofs}.

%----------------------------------------------------------------------------
\section{Setting the stage}
\label{sec:setting_stage}

\subsection{Some notation}
\label{sec:notation}

The GW method is based on time-dependent perturbation theory and therefore involves space-time operators. Following the common notation in physics, we denote by $t$ the time coordinate, by $\br$ the space coordinates, and by $\bx$ or $\br t$ the space-time coordinates. The functional spaces considered in this work are by default composed of complex-valued functions, unless we explicitly mention that the functions are real-valued.

Most of the space-time operators appearing in the GW formalism are time-translation invariant. A time-translation invariant operator $\cC$ can be characterized by the family of operators $(C(\tau))_{\tau \in \R}$ such that, formally, the kernel of $\cC$ is of the form $\cC(\br_1 t_1,\br_2 t_2)=C(\br_1,\br_2,t_1-t_2)$, where $C(\br,\br',\tau)$ is the kernel of the operator $C(\tau)$. For clarity, we will systematically use the letter $\tau$ to denote a time variable which is in fact a time difference. 

Let $\cH$ be a separable complex Hilbert space, whose associated scalar product is simply denoted by~$\langle \cdot,\cdot \rangle$ and the associated norm $\| \cdot \|$. We denote by $\cB(\cH)$ the space of bounded linear operators on~$\cH$, by $\cS(\cH)$ the space of bounded self-adjoint operators on $\cH$, by $\fS_p(\cH)$ ($1 \le p < \infty$) the Schatten class 
\[
\fS_p(\cH) = \left\{ A \in \cB(\cH) \; \left| \; \|A\|_{\fS_p(\cH)}:= \Tr(|A|^p)^{1/p}<\infty \right.\right\},
\]
and by $A^\ast$ the adjoint of a linear operator $A$ on $\cH$ with dense domain. The real and imaginary parts of an operator $A \in \cB(\cH)$ are defined as
\[
\Re A = \frac{A+A^\ast}2, \quad \Im A = \frac{A-A^\ast}{2\ri}.
\]
Note that, when $A$ is closed (which implies $A^{**} = A$), the operators $\Re A$ and $\Im A$ are self-adjoint. For $f,g \in \cH$ and given operators~$A,B$ on $\cH$, we will often use the notation
\[
	\langle f | A | g\rangle_\cH := \langle f, Ag\rangle_\cH, 
\qquad 
	\langle f | AB | g\rangle_\cH := \langle f, ABg\rangle_\cH ,
\]
even in cases when the operators are not self-adjoint. Operators are always understood to act on the function on the right in this notation.

We will sometimes need to manipulate the adjoints of operators between two different Hilbert spaces $\cH_a$ and $\cH_b$.
%or Banach spaces. Recall that, for two Banach spaces $X,Y$, the transpose of a bounded operator $A \in \cB(X,Y)$ is the bounded operator $A^\top \in \cB(Y',X')$ defined by~\cite{DautrayLions2}
%\[
%\forall (x,y') \in X \times Y', 
%\qquad 
%\left\langle A^\top y', x \right\rangle_{X',X} = \left\langle y', Ax \right\rangle_{Y',Y}.
%\]
The adjoint of a bounded operator~$A \in \cB(\cH_a,\cH_b)$ is the bounded operator $A^* \in \cB(\cH_b,\cH_a)$ defined by
\[
\forall (x,y) \in \cH_a \times \cH_b, 
\qquad 
\left(A^* y, x \right)_{\cH_a} = \left(y, Ax \right)_{\cH_b}.
\]

Let $E$ be a Banach space. We denote by $\mathscr{S}'(\R, E)$ the space of $E$-valued tempered-distributions on $\R$, \textit{i.e.} the set of continuous linear maps from the Schwartz's functional space $\mathscr{S}(\R)$ into $E$. Recall that, by definition, a family $(T_\eta)_{\eta > 0}$ of elements of $\mathscr{S}'(\R, E)$ converges in $\mathscr{S}'(\R, E)$ to some $T \in \mathscr{S}'(\R, E)$ when $\eta$ goes to $0$ if
\[
\forall \phi \in \sS(\R), \qquad \left\| \langle T_\eta,\phi \rangle_{\sS',\sS} -  \langle T,\phi \rangle_{\sS',\sS} \right\|_E \mathop{\longrightarrow}_{\eta \to 0^+} 0.
\]

Let $f \in L^1(\R, E)$ be a time-dependent $E$-valued integrable function. The time-Fourier transform of $f$ is defined, using the standard convention in physics, as
\begin{equation}
  \label{eq:def_F_T}
\forall \omega \in \R, \quad   \widehat{f} (\omega) := \left( \cF_T f \right)(\omega) := \int_\R f(\tau) \re^{\ri \omega \tau} \,  \rd \tau.
\end{equation}
For the sake of clarity, we will sometimes denote by $\R_t$ or $\R_\tau$ the time-domain, by $\R_\omega$ the frequency-domain, by $\sS'(\R_\tau,E)$ (resp. $\sS'(\R_\omega,E)$) the space of time-dependent (resp. frequency-dependent) $E$-valued distributions, etc. We will also denote with a hat the functions defined on the frequency domain. Using this notation, $\cF_T$ can be extended to a bicontinuous isomorphism from $\mathscr{S}'(\R_\tau, E)$ into $\mathscr{S}'(\R_\omega, E)$. When $\widehat f \in L^1(\R_\omega,E)$, we have
$$
\forall \tau \in \R, \quad  \left( \cF_T^{-1} \widehat f \right)(\tau) =  \frac{1}{2\pi} \int_\R \widehat f(\omega) \re^{-\ri \omega \tau} \,  \rd \omega.
$$

The Dirac distribution at $a \in \R^d$ is denoted by $\delta_{a}$, and the Heaviside function on $\R$ by $\Theta$: 
\begin{equation}
  \label{eq:Heaviside}
  \Theta(\tau) = 1 \, \mathrm{for} \ \tau>0, 
  \qquad \Theta(\tau) = 0 \ \mathrm{for} \ \tau < 0, 
  \qquad 
  \Theta(0) = 1/2.
\end{equation}
Recall that the time-Fourier transform of $\Theta$ is, in the tempered distributional sense,
\begin{equation} \label{eq:fourier_heaviside}
\widehat{\Theta}(\omega) = \pi \delta_0(\omega) + \ri \pv \left( \dfrac{1}{\omega} \right),
\end{equation}
where $\pv$ is the Cauchy principal value.
We will also make use of the notation $\tau^+$ for a number strictly above $\tau$, but infinitesimally close to $\tau$, and of the convention
$$
\Theta(\tau) \delta_0(\tau^+) :=  \delta_0(\tau), \quad  \Theta(-\tau) \delta_0(\tau^+) :=  0.
$$

%%%%%%%%%%%%%%%%%%%%%%%%%%%%%%%%%%%%%%%%%%%%%%%%%
 
\subsection{Hilbert transform of functions and distributions}
%Convolution by $\pv (1/x)$}

The Hilbert transform, which amounts to a convolution by $\pi^{-1} \pv(\frac1 \cdot)$, plays a crucial role in the GW formalism. We first recall some well-known results on the standard Hilbert transform on $L^p(\R_\omega)$, and extend the results to the Sobolev spaces $H^s(\R_\omega)$ for $s \in \R$. Usually, the name ``Hilbert transform'' is only used on functional spaces $E \subset L^1_\mathrm{loc}(\R_\omega)$ such that, for any function $\widehat f \in E$, the limit
\[
\left[\widehat f \ast \pv\left( \dfrac{1}{\cdot} \right)\right](\omega) = \pv \int_{-\infty}^{+\infty} \dfrac{\widehat f(\omega')}{\omega - \omega'} \, \rd\omega'  := \lim_{\eta \to 0^+} \int_{\R \setminus [\omega-\eta,\omega+\eta]} \dfrac{\widehat f(\omega')}{\omega - \omega'} \, \rd \omega'
\]
exists for almost all $\omega \in \R_\omega$. However, in the sequel, we will also use the name ``Hilbert transform'' in functional spaces where the above integral representation is not always valid (for instance when $\widehat f$ is not a locally integrable function). Note that we define the Hilbert transform on Fourier transforms of functions (\textit{i.e.} on functions on the frequency domain) since this is the typical setting in the GW formalism.

\subsubsection{Hilbert transform in $L^p$ spaces}

We first begin with the following classical definition (see for instance~\cite[Section~4.1]{Grafakos2004}).

\begin{definition}[Hilbert transform on $\sS(\R_\omega)$] The Hilbert transform of a function $\widehat \phi \in \sS(\R_\omega)$ is defined by
\begin{equation}\label{eq:defHT1}
 \fH \widehat\phi  := \frac 1 \pi \pv \left( \frac{1}{\cdot} \right) \ast \widehat\phi, 
 \end{equation}
 or equivalently by
 \begin{equation}\label{eq:defHT2}
  \fH \widehat\phi  :=   \left(\cF_T(-\ri \, \mbox{\rm sgn}(\cdot)) \,  \cF^{-1}_T \right) \widehat\phi,
\end{equation}
where $\pv \left( \frac{1}{\cdot} \right)$ is the Cauchy principal value of the function $\omega \mapsto \frac{1}{\omega}$, $\ast$ the convolution product, $\cF_T$ the Fourier transform defined in~\eqref{eq:def_F_T} and $-\ri \, \mbox{\rm sgn}(\cdot)$ the multiplication operator by the $L^\infty$ function $t \mapsto -\ri \, \mathrm{sgn}(t)$ (where $\mathrm{sgn}(t) = \Theta(t)-\Theta(-t)$ is the sign function). 
\end{definition} 

The Hilbert transform can be extended by continuity to a large class of tempered distributions. We refer to \cite{Grafakos2004, Riesz1928} for a proof of the following theorem. 

\begin{theorem} 
  \label{th:Riesz}
  For all $\widehat f \in L^p(\R_\omega)$ with $1 < p < \infty$, the Hilbert transform 
  \[
  \fH \widehat{f}(\omega) = \pv \int_{-\infty}^{\infty} \dfrac{\widehat f(\omega')}{\omega - \omega'} \, \rd \omega'
  \]
  is well-defined for almost all $\omega \in \R$. It holds $\fH \in \cB(L^p(\R_\omega))$ with
  \[
  \|\fH\|_{\cB(L^p(\R_\omega))} = \left| \begin{array}{lll} \tan(\pi/(2p)) & \mbox{ if } 1 < p \le 2, \\ \mbox{\rm cotan}(\pi/(2p)) & \mbox{ if } 2 \le  p < \infty. \end{array}\right.
  \]
  Moreover, the Hilbert transform commutes with the translations and the positive dilations, and anticommutes with the reflexions. Finally, it is a unitary operator on $L^2(\R_\omega)$.
\end{theorem}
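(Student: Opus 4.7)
The plan is to work primarily from the Fourier multiplier formulation~\eqref{eq:defHT2}, which trivialises both the $L^2$ case and the symmetry properties. Writing $\fH = \cF_T \circ M \circ \cF_T^{-1}$ with $M$ the multiplication by the unimodular function $-\ri\,\sgn$, Plancherel gives $\|\fH\widehat{\phi}\|_{L^2}=\|\widehat{\phi}\|_{L^2}$, and since $M^\ast=-M$ one has $\fH^\ast\fH=\fH\fH^\ast = I_{L^2}$, establishing unitarity on $L^2(\R_\omega)$ and the claimed value $\tan(\pi/4)=1$. Commutation with translations and positive dilations, and anticommutation with reflections, follow transparently from this formulation: translating on the frequency side multiplies $\cF_T^{-1}\widehat{\phi}$ by a unimodular character that commutes with $M$; positive dilations leave $\sgn$ invariant; and a reflection turns $\sgn$ into $-\sgn$, producing the sign change.

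For general $1<p<\infty$ I would proceed via real-variable interpolation. Introduce the truncations $\fH_\eta\widehat{f}(\omega) := \int_{|\omega-\omega'|>\eta}\widehat{f}(\omega')/(\omega-\omega')\,\rd\omega'$ and establish: (a) a weak-$(1,1)$ estimate $|\{|\fH_\eta\widehat{f}|>\lambda\}|\le C\lambda^{-1}\|\widehat{f}\|_{L^1}$, uniform in $\eta$, via a Calder\'on--Zygmund decomposition of $\widehat{f}$ at height $\lambda$; and (b) almost-everywhere convergence $\fH_\eta\widehat{f}\to\fH\widehat{f}$ for $\widehat{f}\in L^p$, using Cotlar's inequality to dominate $\sup_\eta|\fH_\eta\widehat{f}|$ by the Hardy--Littlewood maximal function of $\fH\widehat{f}$. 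Marcinkiewicz interpolation between this weak-$(1,1)$ bound and the strong-$(2,2)$ bound from the first step yields $\fH\in\cB(L^p(\R_\omega))$ for $1<p\le 2$; the range $2\le p<\infty$ then follows by duality, since the convolution kernel $\pi^{-1}\pv(1/\cdot)$ is odd, and hence $\fH^\ast=-\fH$ with respect to the $(L^p,L^{p'})$ pairing.

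The main obstacle, and the only genuinely delicate point, is the sharp value of $\|\fH\|_{\cB(L^p)}$, which is Pichorides' theorem. The plan here is complex-analytic: view $\fH\widehat{f}$ as the boundary value of the harmonic conjugate of the Poisson extension of $\widehat{f}$ to the upper half-plane, so that the pair $(\widehat{f},\fH\widehat{f})$ is the boundary trace of a holomorphic function $F = u + \ri v$. One then seeks a subharmonic majorant of $|v|^p$ on the upper half-plane of the form $\alpha|F|^p - \beta\,\mathrm{Re}(F^p)$, with a branch of $z\mapsto z^p$ chosen on the slit plane; the optimal choice of $(\alpha,\beta)$ compatible with subharmonicity produces exactly the constant $\tan(\pi/(2p))$ for $1<p\le 2$. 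Integrating this majorant against the Poisson kernel and passing to the boundary yields the desired upper bound, while sharpness is obtained by a saturating family, typically Poisson-kernel-type profiles concentrating near a single real point for which the ratio $\|\fH\widehat{f}\|_{L^p}/\|\widehat{f}\|_{L^p}$ tends to the claimed constant. The complementary range $2\le p<\infty$ then follows from the duality identity $\tan(\pi/(2p')) = \cot(\pi/(2p))$.
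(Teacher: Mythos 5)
The paper itself offers no proof of Theorem~\ref{th:Riesz}: it is quoted as a classical result and the reader is sent to Riesz's original article and to Grafakos's textbook. Your outline reconstructs precisely the proof those references contain, and its non-sharp part is sound as sketched: the multiplier form (\ref{eq:defHT2}) plus Plancherel gives unitarity on $L^2(\R_\omega)$ and the commutation/anticommutation properties; the Calder\'on--Zygmund weak-$(1,1)$ bound for the truncations $\fH_\eta$, Marcinkiewicz interpolation, duality via the oddness of the kernel, and Cotlar's inequality for the a.e.\ existence of the principal value give $\fH\in\cB(L^p(\R_\omega))$ for all $1<p<\infty$. So, relative to what the paper points to, this is the ``same'' proof.

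The one step that would fail as written is the Pichorides majorant. You propose to dominate $|v|^p$ pointwise by $\alpha|F|^p-\beta\,\Re(F^p)$, but the boundary integral of $|F|^p=(u^2+v^2)^{p/2}$ contains the unknown quantity $\|\fH\widehat f\|_{L^p}^p$ itself; after integrating against the Poisson kernel and discarding the $\Re(F^p)$ term you can only absorb it through $(u^2+v^2)^{p/2}\le|u|^p+|v|^p$, which is lossy and cannot produce the constant $\tan(\pi/(2p))$. The classical inequality has $|u|^p$ in the leading term, e.g.\ $|v|^p\le\tan^p\!\big(\pi/(2p)\big)\,|u|^p-B_p\,\Re\big((|u|+\ri v)^p\big)$ for $1<p\le 2$, so that the first term integrates on the boundary exactly to $\tan^p(\pi/(2p))\,\|\widehat f\|_{L^p}^p$; one must then argue (subharmonicity of the composed function, a contour or limiting argument, and the fact that $|u|+\ri v$ has argument in $[-\pi/2,\pi/2]$, which fixes the branch of $z^p$) that the boundary integral of the $\Re$-term is nonnegative, and note that a naive reduction to $f\ge 0$ by splitting $f=f_+-f_-$ destroys sharpness. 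Also, your saturating family is misidentified: rescaled Poisson-kernel bumps have a fixed, strictly smaller ratio $\|\fH\widehat f\|_{L^p}/\|\widehat f\|_{L^p}$ (the conjugate Poisson kernel computation gives a constant well below $\cot(\pi/(2p))$ for $p\ne 2$); the standard asymptotically extremal examples are truncated power functions of the type $|\omega|^{-1/p+\epsilon}$ (Gohberg--Krupnik), for which the principal-value integrals generate the factors $\tan(\pi/(2p))$ and $\cot(\pi/(2p))$ explicitly as $\epsilon\to 0^+$. With these two corrections your plan coincides with the proof in the cited references.
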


\subsubsection{Hilbert transform in Sobolev spaces}

Recall that for any $s \in \R$, the Sobolev space $H^s(\R_\omega)$ is the Hilbert space defined as
\[
H^s(\R_\omega) := \left\{ \widehat f \in \mathscr{S}'(\R_\omega)) \; \Big| \;  (1+|\cdot|^2)^{s/2}  \mathcal{F}_T^{-1} \widehat{f} \in L^2(\R_\tau) \right\},
\]
and endowed with the scalar product 
$$
 \left\bra \widehat f,\widehat g \right\ket_{H^s} = 2 \pi \int_{-\infty}^{+\infty} (1 + \tau^2)^s  \overline{(\mathcal{F}_T^{-1}\widehat f)(\tau)} \, (\mathcal{F}_T^{-1}\widehat g)(\tau) \,  \rd \tau,
$$
and that $H^{-s}(\R_\omega)$ can be identified with the dual of $H^{s}(\R_\omega)$ when the space $L^2(\R_\omega)= H^{0}(\R_\omega)$ is used as a pivoting space. One of the reasons to introduce these spaces is that the image of $L^\infty(\R_\tau)$ by the Fourier transform~$\mathcal{F}_T$ is contained in the Sobolev spaces of indices strictly lower than $-1/2$.

\begin{lemma}[Fourier transform in $L^\infty(\R_\tau)$]
  \label{lem:image_Fourier_L_infty}
  Let $s > 1/2$. Then $\dps \mathcal{F}_T\left(L^\infty(\R_\tau)\right) \subset H^{-s}(\R_\omega)$  and
  \begin{equation}
    \label{eq:def_Cs}
  \left\| \mathcal{F}_T \right\|_{\mathcal{B}(L^\infty,H^{-s})} = C_s \quad \mbox{with} \quad 
    C_s =  \left(2\pi \int_\R \frac{\rd\tau}{(1+\tau^2)^s}  \right)^{1/2}.
  \end{equation}
\end{lemma}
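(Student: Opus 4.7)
The plan is to unfold the definition of the $H^{-s}$ norm given just above the lemma and observe that, once $\mathcal{F}_T$ is applied to $L^\infty$ functions, the $H^{-s}$ norm reduces to a weighted $L^2$ norm to which the bound $\|g\|_{L^\infty}$ can be pulled out pointwise.

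First I would fix $g \in L^\infty(\R_\tau)$, notice that $L^\infty(\R_\tau) \hookrightarrow \mathscr{S}'(\R_\tau)$ via the usual pairing, and set $\widehat f := \mathcal{F}_T g \in \mathscr{S}'(\R_\omega)$. The crucial observation is that the inverse Fourier transform returns $g$, i.e.\ $\mathcal{F}_T^{-1}\widehat f = g$ as tempered distributions, and hence as elements of $L^\infty(\R_\tau) \subset L^2_{\mathrm{loc}}(\R_\tau)$.

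Next I would plug this into the explicit formula for the $H^{-s}$ scalar product recalled above:
\[
\|\widehat f\|_{H^{-s}}^2 = 2\pi \int_\R (1+\tau^2)^{-s}\,\bigl|(\mathcal{F}_T^{-1}\widehat f)(\tau)\bigr|^2 \, \rd\tau
= 2\pi \int_\R (1+\tau^2)^{-s}\, |g(\tau)|^2 \, \rd\tau.
\]
Bounding $|g(\tau)|^2 \leq \|g\|_{L^\infty}^2$ pointwise and using that $(1+\tau^2)^{-s}$ is integrable for $s > 1/2$, one immediately obtains $\|\widehat f\|_{H^{-s}} \leq C_s \|g\|_{L^\infty}$, which proves the inclusion $\mathcal{F}_T(L^\infty(\R_\tau)) \subset H^{-s}(\R_\omega)$ together with the upper bound $\|\mathcal{F}_T\|_{\mathcal{B}(L^\infty,H^{-s})} \leq C_s$.

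To obtain equality of norms, I would exhibit a saturating function: take $g \equiv 1$, so that $\|g\|_{L^\infty} = 1$ and $\widehat f = 2\pi \delta_0 \in \mathscr{S}'(\R_\omega)$. Since $\mathcal{F}_T^{-1}\widehat f = 1$, the same computation as above yields $\|\widehat f\|_{H^{-s}}^2 = 2\pi \int_\R (1+\tau^2)^{-s}\,\rd\tau = C_s^2$, which matches the upper bound and hence shows $\|\mathcal{F}_T\|_{\mathcal{B}(L^\infty,H^{-s})} = C_s$. There is no real obstacle here; the only subtlety worth being careful about is the distributional identification $\mathcal{F}_T^{-1} \mathcal{F}_T g = g$ for $g \in L^\infty \subset \mathscr{S}'$, which follows from the fact that $\mathcal{F}_T$ is a bicontinuous isomorphism on $\mathscr{S}'$, as already recalled in Section~\ref{sec:notation}.
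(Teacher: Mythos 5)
Your proof is correct, but it takes a slightly different route from the paper's. You compute the $H^{-s}$ norm directly: since the paper's definition of $H^{s}(\R_\omega)$ is stated for every real index, you may apply it with index $-s$, use $\mathcal{F}_T^{-1}\mathcal{F}_T g = g$ in $\mathscr{S}'$, and read off $\left\| \mathcal{F}_T g \right\|_{H^{-s}}^2 = 2\pi \int_\R (1+\tau^2)^{-s} |g(\tau)|^2\, \rd\tau$, from which the bound $C_s \|g\|_{L^\infty}$ and the saturation by $g \equiv 1$ are immediate; no Cauchy--Schwarz is needed, only the pointwise bound $|g| \le \|g\|_{L^\infty}$ under an integrable weight. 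The paper instead argues by duality: it bounds $\left|\langle \mathcal{F}_T f, \widehat\varphi\rangle_{\sS',\sS}\right|$ for Schwartz test functions $\widehat\varphi$ via Cauchy--Schwarz with the weight $(1+\tau^2)^{s/2}$, obtaining $C_s \|f\|_{L^\infty}\|\widehat\varphi\|_{H^s}$, and then extends by density to all of $H^{s}$, identifying $\mathcal{F}_T f$ with an element of the dual space $H^{-s}$; the equality case is again given by constants. Your version is shorter and exploits the explicit weighted-$L^2$ characterization of $H^{-s}$, while the paper's duality argument is the one that survives in settings where only the dual pairing is available; both are complete and yield the same sharp constant.
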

For completeness, we recall the proof of Lemma~\ref{lem:image_Fourier_L_infty} in Section~\ref{sec:image_Fourier_L_infty}. 

\medskip

Since the Hilbert transform in $\mathscr{S}(\R_\omega)$ amounts to a multiplication by the bounded function $-\ri \, \sgn(\cdot)$ in the time domain (see~\eqref{eq:defHT2}), it can be directly extended to the Sobolev spaces $H^s(\R_\omega)$.

\begin{lemma} 
  \label{lem:hilbert_hm1} For any $s \in \R$, the Hilbert transform~$\fH$ is a unitary operator on the Sobolev spaces~$H^s(\R_\omega)$ satisyfing $\fH^{-1}=-\fH$ (and therefore $\fH^2 = -\mathrm{Id}$).
\end{lemma}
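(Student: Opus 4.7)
My plan is to leverage the time-side characterization \eqref{eq:defHT2} of $\fH$ as conjugation by the Fourier transform of multiplication by the bounded function $m(\tau) := -\ri\,\sgn(\tau)$. The point is that $H^s(\R_\omega)$ is, by its very definition, isometric (up to the factor $2\pi$ in the scalar product) to the weighted Lebesgue space $L^2_s(\R_\tau) := L^2(\R_\tau,(1+\tau^2)^s\,\rd\tau)$ via the isomorphism $\cF_T^{-1}$, and multiplication by $m$ is obviously a $\C$-linear isometry of $L^2_s(\R_\tau)$ satisfying $m^2 = -1$ almost everywhere.

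Concretely, I would first observe that for $\widehat\phi \in \sS(\R_\omega)$, the two definitions \eqref{eq:defHT1} and \eqref{eq:defHT2} agree (this is a classical fact used to state the definition), so that on $\sS(\R_\omega)$ we have
\[
\cF_T^{-1}(\fH \widehat\phi)(\tau) = m(\tau)\,(\cF_T^{-1}\widehat\phi)(\tau), \qquad \text{a.e. } \tau \in \R_\tau.
\]
Next, for arbitrary $s \in \R$, I would define $\fH$ on $H^s(\R_\omega)$ by the right-hand side of the above identity: for $\widehat f \in H^s(\R_\omega)$, set $\fH \widehat f := \cF_T\bigl(m \cdot \cF_T^{-1}\widehat f\bigr)$, the multiplication being understood pointwise almost everywhere on $\R_\tau$. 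Since $|m(\tau)| = 1$ almost everywhere, the map $g \mapsto m g$ preserves $L^2_s(\R_\tau)$ isometrically, so $\fH$ is a bounded linear operator on $H^s(\R_\omega)$ with
\[
\bigl\langle \fH \widehat f, \fH \widehat g\bigr\rangle_{H^s} = 2\pi\!\int_\R (1+\tau^2)^s\, \overline{m(\tau)(\cF_T^{-1}\widehat f)(\tau)}\, m(\tau)(\cF_T^{-1}\widehat g)(\tau)\,\rd\tau = \bigl\langle \widehat f, \widehat g\bigr\rangle_{H^s},
\]
which shows $\fH$ is an isometry on $H^s(\R_\omega)$. The definition is consistent with that on $\sS(\R_\omega)$ (by the identity recalled above) and, for $s = 0$, with the $L^2$-unitary $\fH$ of Theorem \ref{th:Riesz}, since Plancherel gives the same multiplier interpretation.

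Finally, since $m(\tau)^2 = (-\ri\,\sgn(\tau))^2 = -\sgn(\tau)^2 = -1$ for almost every $\tau \in \R$, applying $\cF_T^{-1}$ and iterating the multiplier yields $\fH^2 \widehat f = \cF_T(m^2 \cdot \cF_T^{-1}\widehat f) = -\widehat f$ for every $\widehat f \in H^s(\R_\omega)$, i.e.\ $\fH^2 = -\mathrm{Id}$. In particular $\fH$ is bijective with $\fH^{-1} = -\fH$, and an isometry with bounded inverse on a Hilbert space is unitary, which concludes the proof.

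There is essentially no real obstacle; the only subtle point worth highlighting is that the expression defining $\fH$ no longer needs to coincide with the singular integral in Theorem \ref{th:Riesz} when $\cF_T^{-1}\widehat f$ is not locally integrable (e.g.\ for $s < -1/2$), which is precisely why the authors motivated the broader naming convention in the paragraph preceding the lemma. Once $\fH$ is correctly read off as the Fourier multiplier with symbol $m$, the statement reduces to the trivial fact that $|m| \equiv 1$ and $m^2 \equiv -1$.
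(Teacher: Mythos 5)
Your argument is correct and is exactly the one the paper intends: the lemma is justified there by the single observation preceding it, namely that by \eqref{eq:defHT2} the Hilbert transform acts as multiplication by the unimodular function $-\ri\,\sgn(\cdot)$ on the time side, which is an isometry of the weighted space $L^2(\R_\tau,(1+\tau^2)^s\,\rd\tau)$ defining $H^s(\R_\omega)$ and squares to $-1$ almost everywhere. Your write-up simply spells out these same steps in detail, so there is nothing to add.
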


\begin{remark}[Hilbert transform of distributions]
  \label{rem:DLp'}
  Extending the Hilbert transform to Sobolev spaces is straightforward using (\ref{eq:defHT2}). Extensions of the Hilbert transform to other subspaces of $\sD'(\R_\omega)$, such as the $\sD'_{L^p}(\R_\omega)$ spaces defined in~\cite[Section~VI.8]{Schwartz}, can be obtained from (\ref{eq:defHT1}).  
\end{remark}

%----------------------------------------------------------------------
\subsubsection{Hilbert transforms of operator-valued distributions}

We now need to properly define the Hilbert transform of operator-valued distributions on the frequency domain, as such objects naturally appear in the GW formalism. We first introduce, for $s \in \R$, the Banach space
\[
H^s(\R_\omega,\cB(\cH)) := \Big\{ \widehat A \in \mathscr{S}'(\R_\omega,\cB(\cH))) \; \Big| \;  (1+|\cdot|^2)^{s/2}  \mathcal{F}_T^{-1} \widehat{A} \in L^2(\R_\tau,\cB(\cH)) \Big\},
\]
endowed with the norm
\[
	\left\| \widehat A \right\|_{H^s(\R_\omega,\cB(\cH))} = \sqrt{2 \pi} \left( \int_{-\infty}^{+\infty} (1 + \tau^2)^s  \left\| \left( \mathcal{F}_T^{-1}\widehat A \right)(\tau) \right\|_{\cB(\cH)}^2 \,  \rd\tau \right)^{1/2}.
\]
%Note that $H^{-s}(\R_\omega, \cB(\cH))$ can be identified with the space of the continuous linear maps from $H^s(\R_\omega)$ to $\cB(\cH)$. 
The following definition makes sense in view of Lemma~\ref{lem:hilbert_hm1}.

\begin{definition}[Hilbert transforms of frequency-dependent operators] 
\label{def:HT-FDO}
Let $\cH$ be a Hilbert space, and consider $s \in \R$ and $\widehat A \in H^s(\R_\omega, \cB(\cH))$. The Hilbert transform of $\widehat A$ is the element of $H^s(\R_\omega, \cB(\cH))$, denoted by $\fH(\widehat A)$, and defined by 
\begin{equation}\label{eq:defHTop}
\forall  (f,g) \in \cH \times \cH, \qquad  \left\langle f \left| \fH(\widehat A) \right|g \right\rangle = \fH  \left( \left\langle f \left| \widehat A \right|g \right\rangle \right).
\end{equation}
\end{definition}
 
In particular, it is possible to define the Hilbert transform of the Fourier transform of a uniformly bounded field of time-dependent operators, using the following result, which is a straightforward extension of Lemma~\ref{lem:image_Fourier_L_infty}. 

\begin{lemma} \label{lem:FourierA} Let $\cH$ be a Hilbert space, and let $s > 1/2$. Then for all $A \in L^\infty(\R_\tau, \cB(\cH))$, we have $\widehat A \in H^{-s}(\R_\omega, \cB(\cH))$, with
 \[
  \left\| \widehat{A} \right\|_{H^{-s}(\R_\omega, \cB(\cH))} = \left( 2 \pi \int_\R \left(1+\tau^{2}\right)^{-s} \left\| {A}(\tau) \right\|_{\cB(\cH)}^2 \, \rd\tau\right)^{1/2} \leq C_s\, \| A\|_{L^\infty(\R_\tau, \cB(\cH))},
  \]
  where $C_s$ is defined in~\eqref{eq:def_Cs}.
\end{lemma}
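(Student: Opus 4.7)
The plan is to directly adapt the proof of Lemma~\ref{lem:image_Fourier_L_infty} to the operator-valued setting, since the definition of $H^{-s}(\R_\omega, \cB(\cH))$ is constructed precisely so that the argument carries over mutatis mutandis.

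First, I would observe that any $A \in L^\infty(\R_\tau, \cB(\cH))$ defines a tempered $\cB(\cH)$-valued distribution on $\R_\tau$ via $\phi \mapsto \int_\R A(\tau) \phi(\tau)\, \rd\tau$ for $\phi \in \mathscr{S}(\R_\tau)$. Consequently $\widehat A := \mathcal{F}_T A \in \mathscr{S}'(\R_\omega, \cB(\cH))$ is well-defined, and the Fourier inversion identity $\mathcal{F}_T^{-1}\widehat A = A$ holds in $\mathscr{S}'(\R_\tau,\cB(\cH))$. The function $\tau \mapsto \|A(\tau)\|_{\cB(\cH)}$ is measurable by the definition of $L^\infty(\R_\tau, \cB(\cH))$ as a Bochner-type space.

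Next, I would simply substitute this identity into the definition of the $H^{-s}(\R_\omega, \cB(\cH))$-norm given just before the lemma: with $s > 1/2$,
\[
	\left\| \widehat A \right\|_{H^{-s}(\R_\omega, \cB(\cH))}^2
	= 2\pi \int_\R (1+\tau^2)^{-s} \left\| (\mathcal{F}_T^{-1}\widehat A)(\tau) \right\|_{\cB(\cH)}^2 \rd\tau
	= 2\pi \int_\R (1+\tau^2)^{-s} \|A(\tau)\|_{\cB(\cH)}^2 \,\rd\tau,
\]
which is exactly the announced equality. This establishes at the same time that $\widehat A$ belongs to $H^{-s}(\R_\omega,\cB(\cH))$, because the integrand is bounded above by $\|A\|_{L^\infty(\R_\tau,\cB(\cH))}^2 (1+\tau^2)^{-s}$, and the latter is integrable on $\R$ for $s > 1/2$.

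Finally, pulling the essential supremum out of the integral gives
\[
	\left\| \widehat A \right\|_{H^{-s}(\R_\omega, \cB(\cH))}^2
	\leq \|A\|_{L^\infty(\R_\tau,\cB(\cH))}^2 \cdot 2\pi \int_\R (1+\tau^2)^{-s}\,\rd\tau
	= C_s^2\, \|A\|_{L^\infty(\R_\tau,\cB(\cH))}^2,
\]
where $C_s$ is the constant defined in~\eqref{eq:def_Cs}. There is essentially no obstacle here: the result is built into the definition of the operator-valued Sobolev norm, and the only ingredient beyond bookkeeping is the integrability of $(1+\tau^2)^{-s}$ for $s > 1/2$, which is exactly the hypothesis used in the scalar case of Lemma~\ref{lem:image_Fourier_L_infty}.
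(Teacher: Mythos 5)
Your proof is correct. One small point worth noting: the paper gives no separate proof of Lemma~\ref{lem:FourierA} (it is declared a straightforward extension of Lemma~\ref{lem:image_Fourier_L_infty}), and the proof of that scalar lemma proceeds by duality, testing $\mathcal{F}_T f$ against Schwartz functions $\widehat\varphi$, applying Cauchy--Schwarz to get $|\langle \mathcal{F}_T f,\widehat\varphi\rangle_{\sS',\sS}|\le C_s\|f\|_{L^\infty}\|\widehat\varphi\|_{H^s}$, and concluding by density. You instead unwind the intrinsic weighted-$L^2$ definition of $H^{-s}(\R_\omega,\cB(\cH))$ given just before the lemma, using that $A$ defines an element of $\sS'(\R_\tau,\cB(\cH))$ and that $\mathcal{F}_T^{-1}\widehat A = A$ in the vector-valued distributional sense (legitimate, since the paper records that $\mathcal{F}_T$ is a bicontinuous isomorphism of $\sS'(\R,E)$). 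This direct route is actually better matched to the statement being proved: it delivers the claimed \emph{equality} $\| \widehat A\|_{H^{-s}}=\bigl(2\pi\int(1+\tau^2)^{-s}\|A(\tau)\|_{\cB(\cH)}^2\,\rd\tau\bigr)^{1/2}$ immediately, whereas the duality argument by itself only yields the upper bound $C_s\|A\|_{L^\infty}$ unless one also appeals to the intrinsic characterization of the norm. Your remaining steps (strong measurability of $\tau\mapsto\|A(\tau)\|_{\cB(\cH)}$ in the Bochner setting, and integrability of $(1+\tau^2)^{-s}$ for $s>1/2$) are exactly the needed ingredients, so the argument is complete.
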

 
Let $\mathscr{B}(\R)$ be the set of Borel subsets of $\R$, $b \in \mathscr{B}(\R)$ a Borelian set, and $H$ a self-adjoint operator on a Hilbert space~$\cH$. We denote by $P_b^H:=\1_b(H)$ the spectral projection on $b$ of~$H$ (here, $\1_b$ is the characteristic function of the set $b$, and $\1_b(H) \in \cB(\cH)$ is defined by the spectral theorem for self-adjoint operators; see for instance~\cite[Theorem VII.2]{RS4}). 

\begin{definition}[Principal value of the resolvent of a self-adjoint operator] 
  \label{def:HT-PV}
  Let $H$ be a self-adjoint operator on a Hilbert space $\cH$. We define the $\cB(\cH)$-valued distribution $ \pv \left( \frac{1}{\cdot - H} \right)$ on the frequency domain $\R_\omega$ by
  \[
  \forall (f,g) \in \cH \times \cH, \qquad \left\langle f \left| \pv \left( \frac{1}{\cdot - H} \right) \right| g \right\rangle :=  \pi \, \fH(\mu^H_{f,g}),
\]
where $\mu^H_{f,g}$ is the finite complex Borel measure on $\R_\omega$ defined by
\[
\forall b \in \mathscr{B}(\R_\omega), \qquad \mu^H_{f,g}(b)= \langle f | P^H_b | g \rangle.
\]
\end{definition}

As any complex-valued bounded Borel measure on $\R_\omega$ is an element of $H^{-s}(\R_\omega)$ for any $s > 1/2$ (this is a consequence of the continuous embedding $H^s(\R_\omega) \hookrightarrow C^0(\R_\omega) \cap L^\infty(\R)$ for $s > 1/2$), it follows from Definitions~\ref{def:HT-FDO} and~\ref{def:HT-PV} that
\[
\pv \left( \frac{1}{\cdot - H} \right) = \pi \, \fH(P^H) \quad \mbox{in } H^{-s}(\R_\omega,\cB(\cH)), \quad s > 1/2,
\]
which is the operator analog of the well-known formula
\begin{equation}
\label{eq:Fh_Dirac}
\pv\left( \frac 1 \cdot \right) = \pi \, \fH(\delta_0)\quad \mbox{in } H^{-s}(\R_\omega), \quad s > 1/2,
\end{equation}
which is itself a simple reformulation of the equality 
\[
\mathcal{F}_T^{-1} \left[ \pv\left(\frac{1}{\cdot}\right) \right] = -\frac{\ri}2  \, \sgn(\cdot) \quad \mbox{in } L^\infty(\R_\tau).
\]
 
%-------------------------------------------
\subsection{Causal and anti-causal operators} 
\label{subsection:Fourier}

The GW formalism makes use of families of time-dependent operators $(T_{\rm c}(\tau))_{\tau \in \R}$ and $(T_{\rm a}(\tau))_{\tau \in \R}$ of the form
\[
T_{\rm c}(\tau) = \Theta(\tau) A_{\rm c}(\tau)  \quad \mbox{and} \quad T_{\rm a}(\tau) = \Theta(-\tau) A_{\rm a}(\tau),
\]
where $\Theta \, : \, \R \rightarrow \R$ is the Heaviside function~\eqref{eq:Heaviside}, and $A_{\rm c}$ and $A_{\rm a}$ belong to $L^\infty(\R,\cB(\cH))$ for a given Hilbert space~$\cH$. The family of operators $(T_{\rm c}(\tau))_{\tau \in \R}$ is called a \textit{causal operator}, as $T_{\rm c}(\tau) = 0$ for all $\tau < 0$. Likewise, the family of operators $(T_{\rm a}(\tau))_{\tau \in \R}$ is called an \textit{anti-causal operator}, as $T_{\rm a}(\tau) = 0$ for all $\tau > 0$. We recall in this section the basic properties of causal and anti-causal operators.

%-------------------------------------------
\subsubsection{Causal operators}
\label{sec:causal_operators}

Causal functions have very nice properties, because their Fourier transforms have analytic extensions in the upper half-plane
$$
\UU:= \left\{ z \in \C \, | \;  \Im z  > 0 \right\}.
$$
This comes from the fact that, if $f \in L^1(\R_\tau)+ L^\infty(\R_\tau)$ is such that $f(\tau) = 0$ for $\tau < 0$, the Laplace transform $\widetilde f$ of~$f$, defined on $\UU$ by\footnote{The Laplace transform is usually defined as
\[
	F(p) = \int_0^\infty f(\tau) \re^{-p\tau} \rd \tau.
\]
Our definition, which is better adapted to the GW framework, simply amounts to rotating the axis, or, in other words, to setting $z = \ri p$.
}  
\begin{equation} \label{eq:laplace}
\forall z \in \UU, \qquad \widetilde{f} (z) := \int_{\R} f(\tau) \re^{\ri z \tau} \,  \rd \tau = \int_0^{+\infty} f(\tau) \re^{\ri z \tau} \,  \rd \tau,
\end{equation}
is a natural analytic lifting onto $\UU$ of the time-Fourier transform $\widehat f$ of $f$ defined on $\R_\omega = \partial\UU$. Note that the Laplace transform can be extended to appropriate classes of tempered distributions, see~\cite[Chapter~VIII]{Schwartz}.

\medskip

Let us first recall the Titchmarsh's theorem~\cite{Titchmarsh_book} (see for instance~\cite[Section~1.6]{Nussenzveig1972}).
% see the book by Titchmarsh, pp. 119 -- ; especially section 5.4 pp. 125 --
% see the proof in~\cite[Section~5.4]{Titchmarsh_book}: $\widetilde{f}$ necessarily is the Laplace transform, see p.128 the remark after Theorem~94 and the first formula p.129

\begin{theorem}[Titchmarsh's theorem in $L^2$~\cite{Titchmarsh_book}] \label{th:Titchmarsh_Linfty}
  Let $f \in L^2(\R_\tau)$ and $\widehat{f} \in L^2(\R_\omega)$ be its time-Fourier transform. The following assertions are equivalent:
  \begin{itemize}
  \item[(i)] $f$ is causal  (\textit{i.e.}  $f(\tau) = 0$ for almost all $\tau < 0$);
  \item[(ii)] there exists an analytic function $F$ in the upper half-plane $\UU$ satisfying
    \[
    \sup_{\eta > 0} \left( \int_{-\infty}^{+\infty} \left| F(\omega + \ri \eta) \right|^2 \rd \omega \right) < \infty
    \]
    and such that, $F(\cdot + \ri \eta) \to \widehat f$ strongly in $L^2(\R_\omega)$, as $\eta \to 0^+$;
  \item[(iii)] $\Re \widehat{f}$ and $\Im \widehat{f}$ satisfy the first Plemelj formula
    \begin{equation} \label{eq:Plemelj1}
      \Re \widehat{f}  =  - \fH \left( \Im \widehat{f}  \right)  \quad \mathrm{in} \ L^2(\R_\omega);
    \end{equation}
  \item[(iv)] $\Re \widehat{f}$ and $\Im \widehat{f}$ satisfy the second Plemelj formula
    \begin{equation} \label{eq:Plemelj2}
      \Im \widehat{f} =  \fH \left( \Re \widehat{f}  \right)   \quad \mathrm{in} \ L^2(\R_\omega).
    \end{equation}
  \end{itemize}
  If these four assertions are satisfied, then the function $F$ in (ii) is unique, and coincides with the Laplace transform $\widetilde f$ of $f$.
\end{theorem}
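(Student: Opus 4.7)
The plan is to split the four-way equivalence into a Fourier-side block, which I will settle first with soft arguments, and the analytic-extension block, where the hard direction (ii)$\,\Rightarrow\,$(i) demands a contour argument handled with only $L^2$ control.

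First I will treat (iii)$\,\Leftrightarrow\,$(iv)$\,\Leftrightarrow\,$(i). The equivalence (iii)$\,\Leftrightarrow\,$(iv) is immediate: apply $\fH$ to (iii) and use $\fH^2=-\mathrm{Id}$ from Lemma~\ref{lem:hilbert_hm1}. For (iv)$\,\Leftrightarrow\,$(i), I would combine (iv) with its equivalent (iii) to get
\[
\fH\widehat f=\fH(\Re\widehat f)+\ri\,\fH(\Im\widehat f)=\Im\widehat f-\ri\,\Re\widehat f=-\ri\,\widehat f.
\]
By the spectral representation \eqref{eq:defHT2}, $\fH=\cF_T(-\ri\,\sgn)\cF_T^{-1}$, so on the time side this reads $-\ri\,\sgn(\tau)f(\tau)=-\ri\,f(\tau)$ in $L^2(\R_\tau)$, i.e.\ $(1-\sgn(\tau))f(\tau)=0$, i.e.\ $f=0$ a.e.\ on $(-\infty,0)$, which is (i). Reversing the chain gives (i)$\,\Rightarrow\,$(iv).

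Next, for (i)$\,\Rightarrow\,$(ii), I define $F$ by the Laplace formula $F(z):=\int_0^{+\infty}f(\tau)\re^{\ri z\tau}\,\rd\tau$ on $\UU$. For $z=\omega+\ri\eta$ with $\eta>0$, Cauchy--Schwarz bounds $\int_0^\infty|f(\tau)|\re^{-\eta\tau}\rd\tau\le\|f\|_{L^2}(2\eta)^{-1/2}$, so $F$ is well-defined and analyticity follows by differentiation under the integral. For each $\eta>0$, $F(\cdot+\ri\eta)$ is the time-Fourier transform of $f\,\re^{-\eta\cdot}\mathbf{1}_{(0,\infty)}\in L^2(\R_\tau)$, hence by Plancherel $\|F(\cdot+\ri\eta)\|_{L^2}^2=2\pi\|f\,\re^{-\eta\cdot}\mathbf{1}_{(0,\infty)}\|_{L^2}^2\le 2\pi\|f\|_{L^2}^2$. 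Dominated convergence gives $f\re^{-\eta\cdot}\mathbf{1}_{(0,\infty)}\to f$ in $L^2$ as $\eta\to 0^+$, so $F(\cdot+\ri\eta)\to\widehat f$ in $L^2(\R_\omega)$ by continuity of $\cF_T$.

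The heart of the proof is (ii)$\,\Rightarrow\,$(i), and I would route it through the Cauchy integral representation
\begin{equation}\label{eq:plan-cauchy}
F(z_0)=\frac{1}{2\pi\ri}\int_\R\frac{F(\omega+\ri\eta)}{\omega+\ri\eta-z_0}\,\rd\omega,\qquad 0<\eta<\Im z_0.
\end{equation}
I apply Cauchy's theorem on the rectangle with corners $(\pm R,\eta)$, $(\pm R,R')$ enclosing $z_0$. The top edge at height $R'$ is controlled by Cauchy--Schwarz with $\|(\omega+\ri R'-z_0)^{-1}\|_{L^2}^2=\pi/(R'-\Im z_0)$ and $\|F(\cdot+\ri R')\|_{L^2}\le M:=\sup_{\eta>0}\|F(\cdot+\ri\eta)\|_{L^2}$, which vanishes as $R'\to\infty$. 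The vertical edges are the delicate point: since $\int_\R\!\!\int_\eta^{R'}|F(\omega+\ri y)|^2\rd y\,\rd\omega\le M^2(R'-\eta)$ by Fubini, the function $\omega\mapsto\int_\eta^{R'}|F(\omega+\ri y)|^2\rd y$ lies in $L^1(\R_\omega)$, so a subsequence $R_k\to\infty$ exists along which $\int_\eta^{R'}|F(\pm R_k+\ri y)|^2\rd y\to 0$; Cauchy--Schwarz then kills the vertical contributions along $R_k$. Passing to the limit in the order $R_k\to\infty$ then $R'\to\infty$ yields~\eqref{eq:plan-cauchy}. Specializing $z_0=\omega_0+\ri y_2$ with $y_2>\eta$, equation~\eqref{eq:plan-cauchy} exhibits $F(\cdot+\ri y_2)$ as a convolution of $F(\cdot+\ri\eta)$ with the kernel $v\mapsto-(v+\ri(y_2-\eta))^{-1}$; the identity $\cF_T^{-1}[(v+\ri a)^{-1}]=-\ri\,\Theta(\tau)\re^{-a\tau}$ then gives
\[
f_{y_2}(\tau)=\Theta(\tau)\,f_\eta(\tau)\,\re^{-(y_2-\eta)\tau},\qquad f_y:=\cF_T^{-1}[F(\cdot+\ri y)].
\]
In particular $f_{y_2}(\tau)=0$ for $\tau<0$, and letting $y_2\to 0^+$ in $L^2$ forces $f(\tau)=0$ a.e.\ on $(-\infty,0)$, i.e.\ (i).

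Finally, uniqueness and identification with $\widetilde f$: if $F$ satisfies (ii), then $G:=F-\widetilde f$ satisfies (ii) with zero boundary value, and~\eqref{eq:plan-cauchy} applied to $G$ gives $G\equiv 0$ on $\UU$, so $F=\widetilde f$. The main obstacle is the vertical-edge estimate in (ii)$\,\Rightarrow\,$(i): with only an $L^2$ bound and no pointwise decay of $F$, the existence of a subsequence of radii killing the side contributions — via the Fubini/$L^1$ argument above — is what makes the contour deformation rigorous.
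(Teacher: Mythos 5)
Your proof is correct. Note, though, that the paper itself does not prove Theorem~\ref{th:Titchmarsh_Linfty}: it quotes it from Titchmarsh's book, and only the $L^\infty$ variant, Theorem~\ref{th:Tit_Linfty}, is proved in Section~\ref{ssec:proof_Tit_Linfty}; so your argument is a self-contained substitute rather than a reproduction, and the differences from the paper's $L^\infty$ proof are exactly where they should be. Your Fourier-side block is more streamlined: reducing (i)$\Leftrightarrow$(iii)$\Leftrightarrow$(iv) to the single identity $\fH\widehat f=-\ri\,\widehat f$, read in the time domain via \eqref{eq:defHT2} as $\sgn(\tau)f(\tau)=f(\tau)$, uses Lemma~\ref{lem:hilbert_hm1} and avoids the Poisson-kernel limits the paper needs for item (iv) of Theorem~\ref{th:Tit_Linfty}; the only unstated ingredient is that $\fH$ maps real-valued functions to real-valued functions (clear from \eqref{eq:defHT1}), which you need to split the complex identity into the two real Plemelj formulae. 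For the analytic block, the paper's $L^\infty$ argument runs a semicircular contour and kills the arc with the pointwise bound $|\widetilde g(z)|\le \|g\|_{L^\infty}/\Im z$, which is unavailable under mere $L^2$ slice control; your rectangle contour with the Fubini selection of radii $R_k$ along which the vertical edges vanish (choose $R_k$ so that $g(R_k)+g(-R_k)\to0$, where $g(\omega)=\int_\eta^{R'}|F(\omega+\ri y)|^2\,\rd y$ is integrable) is the correct replacement, and the top edge is disposed of by Cauchy--Schwarz exactly as you say. The identity $\cF_T^{-1}\big[(\cdot+\ri a)^{-1}\big]=-\ri\,\Theta\,\re^{-a\,\cdot}$ and the resulting formula $f_{y_2}=\Theta\,f_\eta\,\re^{-(y_2-\eta)\,\cdot}$ are consistent with the paper's Fourier conventions, and $L^2$-closedness of the set of causal functions then yields (i). One sentence worth adding: in the uniqueness step, apply your Cauchy representation to $G=F-\widetilde f$ at fixed $z_0$ and let $\eta\to0^+$, using $G(\cdot+\ri\eta)\to0$ in $L^2(\R_\omega)$ and Cauchy--Schwarz against $(\cdot+\ri\eta-z_0)^{-1}\in L^2(\R_\omega)$; this is immediate but currently left implicit.
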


We refer to \cite{Titchmarsh_book} for a proof of this theorem. Formulae (\ref{eq:Plemelj1})-(\ref{eq:Plemelj2}) are sometimes referred to as the Kramers-Kr\"onig formulae or the dispersion relations in the physics literature. Titchmarsh's theorem implies in particular that square integrable causal functions, which can be very easily characterized in the time domain (they vanish for negative times), can also be easily characterized in the frequency domain (the imaginary parts of their Fourier transforms are the Hilbert transforms of their real parts).

We emphasize that the above version of Titchmarsh's theorem is only valid in $L^2$, while the GW setting mostly involves $L^\infty$ causal functions (see Section \ref{subsec:GF} for instance). Weaker versions of Titchmarsh's theorem are available for wider classes of tempered distributions (see \cite{Nussenzveig1972} and references therein), but the $L^\infty$ setting turns out to be sufficient for our purposes and has the advantage of allowing short, self-contained proofs of all statements. Note that the assertions are no longer equivalent.

\begin{theorem} [Titchmarsh's theorem in $L^\infty(\R)$] 
  \label{th:Tit_Linfty}
  Let $g \in L^\infty(\R_\tau)$ be a causal function (\textit{i.e.}  $g(\tau) = 0$ for $\tau < 0$) and let $\widehat g \in H^{-s}(\R_\omega)$ for all $s > 1/2$ be its time-Fourier transform, and $\widetilde{g}$ be its Laplace transform defined on $\UU$. Then,
  \begin{enumerate}[(i)]
  \item $\widetilde{g}$ is analytic on~$\UU$;
  \item the function $\eta \mapsto \widetilde{g}(\cdot + \ri \eta)$ is continuous from $(0, +\infty)$ to $H^{s}(\R_\omega)$ for all $s \in \R$, and is uniformly continuous from $[0,+\infty)$ to $H^{-s}(\R_\omega)$ for all $s > 1/2$. Moreover, $\widetilde{g}(\cdot + \ri \eta) \to \widehat{g}$ strongly in $H^{-s}(\R_\omega)$ for all $s > 1/2$, as $\eta \to 0^+$; 
  \item for all $z \in \UU$,
      \begin{equation}\label{eq:gtildez}
    \widetilde{g}(z) = \frac{1}{2\ri\pi} \left\langle \widehat{g}, (\cdot-z)^{-1} \right\rangle_{H^{-1},H^1}. 
    \end{equation}
  \item $\Re \widehat{g}$ and $\Im \widehat{g}$ satisfy the Plemelj formulae:
    \begin{equation}\label{eq:dispersion_rel}
    \Re \widehat{g} = -  \fH \left( \Im \widehat{g}  \right)  
    \quad \mathrm{and} \quad  
    \Im \widehat{g} = \fH \left( \Re \widehat{g} \right) \quad \mathrm{in} \ H^{-1}(\R_\omega).
    \end{equation}
  \end{enumerate}
\end{theorem}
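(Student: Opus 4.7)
\emph{Overall strategy and proof of (i).} The plan is to regularize $g$ by introducing, for each $\eta > 0$, the exponentially damped causal function $g_\eta(\tau) := \Theta(\tau) g(\tau) \re^{-\eta \tau}$, which belongs to $L^1(\R_\tau) \cap L^2(\R_\tau) \cap L^\infty(\R_\tau)$. A direct computation yields the key identification $\widetilde g(\omega + \ri \eta) = \widehat{g_\eta}(\omega)$, turning the Laplace transform of $g$ into the genuine Fourier transform of an $L^2$ causal function, to which Theorem~\ref{th:Titchmarsh_Linfty} already applies. Assertion (i) then follows immediately by differentiation under the integral sign in~\eqref{eq:laplace}, using the uniform bound $|g(\tau) \re^{\ri z \tau}| \leq \|g\|_{L^\infty} \re^{-(\Im z) \tau}$ on compact subsets of $\UU$.

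\emph{Proof of (ii).} For $\eta > 0$ the exponential decay of $g_\eta$ gives $(1+\tau^2)^{s/2} g_\eta \in L^2(\R_\tau)$ for every $s \in \R$, hence $\widetilde g(\cdot + \ri \eta) \in H^s(\R_\omega)$, and continuity in $\eta$ on $(0,+\infty)$ with values in $H^s$ is a direct dominated convergence argument on the time side (the dominating function $\re^{-\eta_0 \tau}(1+\tau^2)^{s/2}$ is uniform on compact subsets of $(0,+\infty)$). The sharp point is uniform continuity on $[0,+\infty)$ into $H^{-s}$ for $s > 1/2$: combining the elementary bound $|\re^{-\eta \tau} - \re^{-\eta' \tau}| \leq \min(2, |\eta - \eta'| \tau)$ with a splitting of the $\tau$-integral at $\tau \sim |\eta - \eta'|^{-1}$ and using the integrability of $(1+\tau^2)^{-s}$ on $\R$, one obtains a modulus of continuity of order $|\eta - \eta'|^{\min(2, 2s-1)}$. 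This in particular yields the strong convergence $\widetilde g(\cdot + \ri \eta) \to \widehat g$ in $H^{-s}$ as $\eta \to 0^+$, which is the workhorse for the passage to the boundary of $\UU$.

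\emph{Proofs of (iii) and (iv).} A contour integration in $\UU$ gives $\mathcal{F}_T^{-1}[(\cdot - z)^{-1}](\tau) = \ri \Theta(-\tau) \re^{-\ri z \tau}$, and together with its weighted moments this function lies in $L^2(\R_\tau)$, so $(\cdot - z)^{-1} \in H^1(\R_\omega)$ and the pairing in~\eqref{eq:gtildez} is well defined. Applying the bilinear Parseval identity $\int \widehat a \, \widehat b \, \rd \omega = 2\pi \int a(\tau) b(-\tau) \, \rd \tau$ with $\widehat a = \widehat{g_\eta}$ and $\widehat b = (\cdot - z)^{-1}$ yields $\widetilde{g_\eta}(z) = \frac{1}{2\ri \pi} \int \widehat{g_\eta}(\omega) (\omega - z)^{-1} \, \rd \omega$ for each $\eta > 0$, and passing to the limit $\eta \to 0^+$ gives~\eqref{eq:gtildez}, using (i) on the left-hand side and (ii) on the right. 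For (iv), Theorem~\ref{th:Titchmarsh_Linfty} applied to each $g_\eta \in L^2(\R_\tau)$ furnishes the Plemelj formulae in $L^2(\R_\omega) \hookrightarrow H^{-1}(\R_\omega)$; these survive the limit $\eta \to 0^+$ thanks to the convergence $\widehat{g_\eta} \to \widehat g$ in $H^{-1}$ from (ii) together with the boundedness of $\fH$ on $H^{-1}$ (Lemma~\ref{lem:hilbert_hm1}). The main obstacle is really the splitting estimate in (ii); everything else reduces fairly mechanically to the $L^2$ case through the regularization $g \rightsquigarrow g_\eta$.
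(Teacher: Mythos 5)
Your proof is correct, and for items (iii) and (iv) it takes a genuinely different route from the paper's. For (i) and (ii) you essentially follow the same path as the paper: analyticity of the Laplace transform on $\UU$, and transfer of all $H^{s}$ estimates to the time side, where the weight $(1+\tau^2)^{-s}$ is integrable for $s>1/2$; the paper obtains the smallness by dominated convergence where you use the pointwise bound $|\re^{-\eta\tau}-\re^{-\eta'\tau}|\le\min\big(2,|\eta-\eta'|\tau\big)$ and a splitting of the $\tau$-integral (note only that your exponent $\min(2,2s-1)$ is the one for the \emph{squared} $H^{-s}$ norm, so the modulus of continuity is $|\eta-\eta'|^{\min(1,s-1/2)}$ up to a logarithm at $s=3/2$ --- harmless for the statement). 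The real divergence is in (iii)--(iv): the paper proves \eqref{eq:gtildez} by a contour deformation (Cauchy's theorem on a large semicircle at height $\ri\alpha$, then $\alpha\to0^+$), and \eqref{eq:dispersion_rel} by testing the representation against Schwartz functions, using Fubini for distributions, the limits of the Poisson and conjugate-Poisson kernels, and the duality identity for $\fH$ (Lemma~\ref{lem:fH_duality}). You instead reduce everything to the damped causal functions $g_\eta=\Theta\,g\,\re^{-\eta\cdot}\in L^2(\R_\tau)$, for which $\widetilde g(\cdot+\ri\eta)=\widehat{g_\eta}$: a bilinear Parseval identity gives \eqref{eq:gtildez} at level $\eta>0$, the $L^2$ Titchmarsh theorem (Theorem~\ref{th:Titchmarsh_Linfty}) gives the Plemelj formulae for each $g_\eta$, and both identities pass to the limit $\eta\to0^+$ by (ii), since $\fH$ is unitary on $H^{-1}(\R_\omega)$ (Lemma~\ref{lem:hilbert_hm1}) and the maps $\widehat f\mapsto\Re\widehat f,\ \Im\widehat f$ are continuous there. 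Your route is shorter and avoids both the contour computation and the kernel-limit argument, and your explicit formula $\mathcal{F}_T^{-1}\big[(\cdot-z)^{-1}\big](\tau)=\ri\,\Theta(-\tau)\,\re^{-\ri z\tau}$, which makes $(\cdot-z)^{-1}\in H^1(\R_\omega)$ transparent, is a nice byproduct; the trade-off is that you import the $L^2$ Titchmarsh theorem as a black box, whereas the paper's proof of the $L^\infty$ version is deliberately self-contained (the $L^2$ theorem is only quoted from the literature there).
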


The proof of Theorem~\ref{th:Tit_Linfty}, which is a simplified version of the proof of the more general result given by~\cite[Lemma 1]{Taylor58} (see also~\cite[Section~1.7]{Nussenzveig1972}), is given in Section~\ref{ssec:proof_Tit_Linfty}. For simplicity, we stated~\eqref{eq:gtildez} and~\eqref{eq:dispersion_rel} in $H^{-s}$ for the value $s=1$, but similar results hold for any value $s >1/2$. 

%\begin{remark} Following the proof in \cite{Taylor58} in the $L^\infty(\R_t)$ case, we obtain more complicate relations between $\Re \hat g$ and $\Im \hat g$ (namely a dispersion relation with one substraction, see also~\cite[Section~1.7]{Nussenzveig1972}). 
%\end{remark}

Let us now extend these results to operator-valued functions. We recall that a map $\widetilde A(z)$ from an open set $U \subset \C$ to a Banach space $E$ is said to be strongly analytic on $U$ if $U \ni z \mapsto \widetilde{A}(z) \in E$ is $\C$-differentiable on $U$, \textit{i.e.} $\rd \widetilde{A}(z) / \rd z \in E$ for all $z \in U$.

\medskip

\begin{definition}[bounded causal operator] \label{def:particle_type_operator}
  Let $\cH$ be a Hilbert space and $T_{\rm c} \in L^\infty(\R_\tau, \cB(\cH))$. We say that $T_{\rm c}$ is a bounded causal operator on $\cH$ if $T_{\rm c}(\tau) = 0$ for almost all $\tau< 0$.
\end{definition}

Lemma~\ref{lem:image_Fourier_L_infty} and Theorem~\ref{th:Tit_Linfty} can be straightforwardly extended to operator-valued maps (see Section~\ref{sec:proof_lem:particle-type} for the proof).

\begin{proposition} \label{lem:particle-type}
  Let $\cH$ be a Hilbert space and $T_{\rm c} \in L^\infty(\R_\tau, \cB(\cH))$ a bounded causal operator on~$\cH$. Then its time-Fourier transform $\widehat{T_{\rm c}}$ belongs to $H^{-s}(\R_\omega,\cB(\cH))$ for any $s > 1/2$, and its Laplace transform
\[
\widetilde{T_{\rm c}}(z) := \int_\RR T_{\rm c}(\tau) \, \re^{\ri z \tau}\, \rd\tau = \int_0^{+\infty} T_{\rm c}(\tau) \, \re^{\ri z \tau}\, \rd\tau
\]
is well defined on the upper-half plane $\UU$. Moreover,
  \begin{enumerate}[(i)] 
  \item $\widetilde T_{\rm c}$ is a strongly analytic function from $\UU$ to $\cB(\cH)$; 
  \item the function $\eta \mapsto \widetilde T_{\rm c} (\cdot + \ri \eta)$ is continuous from $(0,+\infty)$ to~$H^{s}(\R_\omega,\cB(\cH))$ for all $s \in \mathbb{R}$, and uniformly continuous from $[0,+\infty)$ to~$H^{-s}(\R_\omega,\cB(\cH))$ for $s>1/2$. Moreover, for any $s > 1/2$, $\widetilde T_{\rm c} (\cdot + \ri \eta) \to \widehat T_{\rm c}$ strongly in $H^{-s}(\R_\omega, \cB(\cH))$ as $\eta \to 0^+$;
  \item for all $z \in \UU$, it holds
    \[
    \widetilde{T_{\rm c}}(z) = \dfrac{1}{2 \ri \pi} \left\langle \widehat{T_{\rm c}}, (\cdot - z)^{-1} \right\rangle_{H^{-1},H^1} ;
    \]
  \item the operators $\Re \widehat{T_{\rm c}}$ and $\Im \widehat{T_{\rm c}}$ satisfy the Plemelj formulae:
    \begin{equation} 
      \label{eq:Plemelj_formulae}
      \Re \widehat{T_{\rm c}}  =  - \fH \left( \Im \widehat{T_{\rm c}}  \right)  \quad \mathrm{and} \quad 
      \Im \widehat{T_{\rm c}}  = \fH \left( \Re \widehat{T_{\rm c}}  \right) \quad \mathrm{in} \ H^{-1}(\R_\omega,\cB(\cH)).
    \end{equation}
  \end{enumerate}
\end{proposition}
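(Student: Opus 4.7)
The plan is to reduce each assertion to its scalar counterpart in Theorem~\ref{th:Tit_Linfty}, applied to the matrix elements $f_{u,v}(\tau) := \langle u, T_{\rm c}(\tau) v\rangle_\cH$ for $u,v\in\cH$. Each such $f_{u,v}$ is a causal element of $L^\infty(\R_\tau)$ with norm bounded by $\|T_{\rm c}\|_{L^\infty(\R_\tau,\cB(\cH))}\|u\|\|v\|$. The membership $\widehat{T_{\rm c}} \in H^{-s}(\R_\omega, \cB(\cH))$ for $s>1/2$ is an immediate consequence of Lemma~\ref{lem:FourierA}. For $z=\omega+\ri\eta\in\UU$, the Laplace transform is well-defined as a Bochner integral in $\cB(\cH)$ since the integrand $\tau\mapsto T_{\rm c}(\tau)\re^{\ri z\tau}\Theta(\tau)$ is dominated in operator norm by the integrable function $\|T_{\rm c}\|_{L^\infty}\,\re^{-\eta\tau}\Theta(\tau)$.

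For (i), strong analyticity on $\UU$ follows by controlling the operator-norm difference quotient
\[
\frac{\widetilde{T_{\rm c}}(z+h)-\widetilde{T_{\rm c}}(z)}{h} - \int_0^\infty \ri\tau\,T_{\rm c}(\tau)\re^{\ri z\tau}\rd\tau = \int_0^\infty T_{\rm c}(\tau)\re^{\ri z\tau}\left[\frac{\re^{\ri h\tau}-1}{h}-\ri\tau\right]\rd\tau
\]
via scalar dominated convergence on the operator-norm integrand: for $|h|\leq \Im z/2$, the scalar bracket is dominated (up to a multiplicative constant) by $\tau\,\re^{(\Im z)\tau/2}$ and converges pointwise to $0$, while the factor $\|T_{\rm c}(\tau)\re^{\ri z\tau}\|$ is bounded by $\|T_{\rm c}\|_{L^\infty}\re^{-(\Im z)\tau}$, yielding an integrable majorant. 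For (ii), the key observation is that $\mathcal{F}_T^{-1}[\widetilde{T_{\rm c}}(\cdot+\ri\eta)](\tau)=T_{\rm c}(\tau)\re^{-\eta\tau}$ for every $\eta\geq 0$ (causality kills the right-hand side on $\tau<0$), hence
\[
\bigl\|\widetilde{T_{\rm c}}(\cdot+\ri\eta_1)-\widetilde{T_{\rm c}}(\cdot+\ri\eta_2)\bigr\|_{H^s(\R_\omega,\cB(\cH))}^2 = 2\pi\int_0^\infty (1+\tau^2)^s\|T_{\rm c}(\tau)\|_{\cB(\cH)}^2\bigl(\re^{-\eta_1\tau}-\re^{-\eta_2\tau}\bigr)^2\rd\tau.
\]
Dominated convergence in this weighted $L^2$ norm proves continuity from $(0,\infty)$ to $H^s$ for every $s\in\R$ (exponential decay dominates any polynomial weight), uniform continuity from $[0,\infty)$ to $H^{-s}$ for $s>1/2$ (using $\int(1+\tau^2)^{-s}\rd\tau<\infty$), and in particular $\widetilde{T_{\rm c}}(\cdot+\ri\eta)\to\widehat{T_{\rm c}}$ in $H^{-s}(\R_\omega,\cB(\cH))$ as $\eta\to 0^+$.

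For (iii), applying Theorem~\ref{th:Tit_Linfty}(iii) to $f_{u,v}$ gives
\[
\langle u, \widetilde{T_{\rm c}}(z) v\rangle = \frac{1}{2\ri\pi}\bigl\langle \langle u, \widehat{T_{\rm c}} v\rangle, (\cdot-z)^{-1}\bigr\rangle_{H^{-1},H^1},
\]
which is exactly the matrix element of the claimed identity under the $\cB(\cH)$-valued duality pairing. For (iv), I would specialize to diagonal matrix elements $u=v$: the scalar real and imaginary parts of $\langle u,\widehat{T_{\rm c}}(\omega)u\rangle$ coincide with $\langle u,\Re\widehat{T_{\rm c}}(\omega)u\rangle$ and $\langle u,\Im\widehat{T_{\rm c}}(\omega)u\rangle$ respectively, so the scalar Plemelj formulae of Theorem~\ref{th:Tit_Linfty}(iv), combined with Definition~\ref{def:HT-FDO}, yield $\langle u,\Re\widehat{T_{\rm c}} u\rangle = -\langle u,\fH(\Im\widehat{T_{\rm c}}) u\rangle$ for every $u\in\cH$. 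Since both $\Re\widehat{T_{\rm c}}$ and $\fH(\Im\widehat{T_{\rm c}})$ are self-adjoint operator-valued distributions in $H^{-1}(\R_\omega,\cB(\cH))$ (the Hilbert transform commutes with the involution $A\mapsto A^*$ because $\pv(1/\cdot)$ is real), polarization yields~\eqref{eq:Plemelj_formulae}.

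The main subtlety lies in step (iv): the naive reduction via generic matrix elements $(u,v)$ fails because, for $u\neq v$, the \emph{scalar} real and imaginary parts of $\langle u,\widehat{T_{\rm c}}(\omega)v\rangle$ do \emph{not} coincide with the matrix elements of the \emph{operator} real and imaginary parts $(A+A^*)/2$ and $(A-A^*)/(2\ri)$. Restricting to the diagonal $u=v$ recovers this coincidence, and the self-adjointness of the two sides combined with polarization closes the argument; all other steps are routine extensions, via testing with vectors of $\cH$, of the scalar results in Theorem~\ref{th:Tit_Linfty}.
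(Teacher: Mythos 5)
Your proposal is correct and follows essentially the same route as the paper, which proves the result by running the proof of Theorem~\ref{th:Tit_Linfty} \emph{mutatis mutandis} on the matrix elements $a_{f,g}(\tau)=\langle f|T_{\rm c}(\tau)|g\rangle$ (together with operator-norm versions of the same estimates) and treating the strong analyticity separately via an integrable operator-norm bound. Your extra care in item~(iv) --- restricting to diagonal matrix elements and then polarizing, since $\Re\langle u,\widehat{T_{\rm c}}\,v\rangle\neq\langle u,(\Re \widehat{T_{\rm c}})v\rangle$ off the diagonal --- correctly supplies a detail that the paper leaves implicit in its ``mutatis mutandis''.
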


Besides the general case covered by Proposition~\ref{lem:particle-type}, the particular case of causal time-propagators is often encountered. Explicit formulae can be provided for the Laplace and Fourier transforms in this case, as made precise in the following result (see Section~\ref{sec:proof:prop:analytic_extension_causal_time_evolution} for the proof).

\begin{proposition}[Analytic extension of causal time propagators] \label{prop:resolvent}
\label{prop:analytic_extension_causal_time_evolution}
Let $H$ be a self-adjoint operator on a Hilbert space $\cH$ and $A_{\rm c}(\tau) := -\ri \Theta(\tau) \re^{-\ri \tau H}$. The Laplace transform $(\widetilde A_{\rm c}(z))_{z \in \UU}$ coincides with the resolvent of $H$ in $\UU$: 
\[
\widetilde{A}_{\rm c}(z) = (z-H)^{-1}.
\]
Moreover, $\widetilde A_{\rm c}(\cdot +\ri \eta)$ converge to $\widehat A_{\rm c}$ in $H^{-1}(\R_\omega, \cB(\cH))$ as $\eta \to 0^+$, and
\[
\Re \widehat A_{\rm c} = \pv \left( \frac{1}{\cdot - H} \right) \quad \mathrm{and} \quad \Im \widehat A_{\rm c} =-\pi P^H \quad \mathrm{in} \ H^{-1}(\R_\omega, \cB(\cH)).
\]
\end{proposition}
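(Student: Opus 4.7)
First, I would apply the general framework provided by Proposition~\ref{lem:particle-type}. Since $\re^{-\ri\tau H}$ is unitary on $\cH$, one has $\|A_{\rm c}(\tau)\|_{\cB(\cH)} \le 1$ for every $\tau$, and the Heaviside factor ensures $A_{\rm c}(\tau) = 0$ for $\tau < 0$, so $A_{\rm c}$ is a bounded causal operator on $\cH$. Proposition~\ref{lem:particle-type} then provides the strong analyticity of $\widetilde{A}_{\rm c}$ on~$\UU$, the strong convergence $\widetilde{A}_{\rm c}(\cdot + \ri\eta) \to \widehat{A}_{\rm c}$ in $H^{-1}(\R_\omega, \cB(\cH))$ as $\eta \to 0^+$, and the Plemelj formulae~\eqref{eq:Plemelj_formulae}. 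What remains is to identify $\widetilde{A}_{\rm c}(z)$ with the resolvent $(z-H)^{-1}$, and to compute $\Im \widehat{A}_{\rm c}$ (from which $\Re \widehat{A}_{\rm c}$ will follow by a Plemelj formula).

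Next, I would compute the Laplace transform via the spectral theorem. For $z \in \UU$ with $\eta := \Im z > 0$, the integrand $\re^{\ri\tau z}\re^{-\ri\tau H}$ has operator norm $\re^{-\eta\tau}$, so the defining integral converges absolutely in $\cB(\cH)$. Writing $\re^{-\ri\tau H} = \int_\R \re^{-\ri\tau\lambda}\, \rd P^H_\lambda$ via the spectral theorem and exchanging the order of integration yields
\[
    \widetilde{A}_{\rm c}(z) = -\ri \int_\R \left( \int_0^\infty \re^{\ri\tau(z-\lambda)}\, \rd\tau \right) \rd P^H_\lambda = \int_\R \frac{\rd P^H_\lambda}{z-\lambda} = (z-H)^{-1}.
\]
The operator-valued Fubini exchange is made rigorous by pairing against vectors $f, g \in \cH$ and invoking scalar Fubini for the finite complex Borel measure $\mu^H_{f,g}$.

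Finally, for the real and imaginary parts, the key observation is that the antisymmetrized combination
\[
    \frac{1}{2\ri}\bigl( A_{\rm c}(\tau) - A_{\rm c}(-\tau)^* \bigr) = \frac{1}{2\ri}\bigl( -\ri\, \Theta(\tau)\re^{-\ri\tau H} - \ri\, \Theta(-\tau)\re^{-\ri\tau H} \bigr) = -\frac{1}{2}\re^{-\ri\tau H},
\]
using $\Theta(\tau) + \Theta(-\tau) = 1$, is exactly the inverse time-Fourier transform of $\Im \widehat{A}_{\rm c}$. Applying $\cF_T$ and using the identity $\cF_T(\re^{-\ri\tau H}) = 2\pi P^H$ as a $\cB(\cH)$-valued tempered distribution (which reduces after testing against $\langle f|\cdot|g\rangle$ to the scalar identity $\cF_T(\re^{-\ri\tau\lambda_0}) = 2\pi \delta_{\lambda_0}$ together with the spectral theorem) yields $\Im \widehat{A}_{\rm c} = -\pi P^H$. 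The expression for $\Re \widehat{A}_{\rm c}$ then follows directly from the first Plemelj formula in~\eqref{eq:Plemelj_formulae}: $\Re \widehat{A}_{\rm c} = -\fH(\Im \widehat{A}_{\rm c}) = \pi \fH(P^H) = \pv(1/(\cdot - H))$ by Definition~\ref{def:HT-PV}. The main subtlety is the rigorous handling of the operator-valued Fubini and Fourier identities; both reduce cleanly to scalar statements tested against the spectral measures $\mu^H_{f,g}$.
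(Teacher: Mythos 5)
Your proposal is correct, and most of it coincides with the paper's own proof: the identification $\widetilde{A}_{\rm c}(z)=(z-H)^{-1}$ is obtained exactly as in the paper (spectral theorem plus a Fubini exchange justified by pairing against $f,g\in\cH$ and using the finite measure $\mu^H_{f,g}$), the convergence $\widetilde{A}_{\rm c}(\cdot+\ri\eta)\to\widehat{A_{\rm c}}$ in $H^{-1}(\R_\omega,\cB(\cH))$ is delegated to Proposition~\ref{lem:particle-type} in both arguments, and the real part is recovered from the Plemelj formula and Definition~\ref{def:HT-PV} in both. Where you genuinely diverge is the computation of $\Im\widehat{A_{\rm c}}$: the paper evaluates the boundary limit $\lim_{\eta\to0^+}\Im\widetilde{A}_{\rm c}(\cdot+\ri\eta)$, writing it as a Poisson-kernel average against $\mu^H_{f}$ and passing to the limit by dominated convergence, whereas you stay in the time domain, use $\Theta(\tau)+\Theta(-\tau)=1$ to get $\frac{1}{2\ri}\big(A_{\rm c}(\tau)-A_{\rm c}(-\tau)^*\big)=-\tfrac12\re^{-\ri\tau H}$, and then apply the distributional identity $\cF_T\big(\re^{-\ri\cdot H}\big)=2\pi P^H$. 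Your route is shorter and purely algebraic, avoiding the boundary-limit estimate altogether; its price is that you must justify two small facts you only gesture at, namely that the frequency-domain adjoint satisfies $\big(\widehat{A_{\rm c}}\big)^*=\cF_T\big(\tau\mapsto A_{\rm c}(-\tau)^*\big)$ (elementary, by testing against real-valued Schwartz functions), and the operator-valued Fourier identity itself, which follows from a Fubini argument with the finite measure $\mu^H_{f,g}$ since $\cF_T\varphi\in L^1(\R_\tau)$. With those two remarks made explicit, your argument is complete and fully equivalent to the paper's.
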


Let us conclude this section with a useful result (see Section~\ref{sec:proof:lemma:Im_positive_particle} for the proof).

\begin{lemma} \label{lemma:Im_positive_particle}
  Consider a bounded causal operator $T_{\rm c} \in L^\infty(\R_\tau, \cB(\cH))$ such that $\mathrm{Supp}(\Im \widehat{T_{\rm c}}) \subset [\omega_0,\infty)$ for some $\omega_0 \in \R$. Then $\Im \widehat{T_{\rm c}} \ge 0$ on $\R_\omega$ if and only if $\Re \widehat{T_{\rm c}} \ge 0$ on~$(-\infty,\omega_0]$.
\end{lemma}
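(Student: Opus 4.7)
The plan is to use the analytic continuation of $\widetilde{T_c}$ across the real axis away from $\mathrm{Supp}(\Im\widehat{T_c})$, together with a Stieltjes-type integral representation of $\widetilde{T_c}$ in terms of $\Im\widehat{T_c}$, to read off both directions of the equivalence.

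By Proposition~\ref{lem:particle-type}, the Laplace transform $\widetilde{T_c}$ is analytic on $\UU$ with boundary value $\widehat{T_c}\in H^{-s}(\R_\omega,\cB(\cH))$ for $s>1/2$, and the Plemelj relation $\Re\widehat{T_c}=-\fH(\Im\widehat{T_c})$ holds. Because $\Im\widehat{T_c}$ is supported in $[\omega_0,\infty)$, its convolution with $\pv(1/\cdot)$ makes $\Re\widehat{T_c}$ a real-analytic, self-adjoint-valued function on $\R\setminus[\omega_0,\infty)$; Schwarz reflection then extends $\widetilde{T_c}$ to a holomorphic $\cB(\cH)$-valued function on $\C\setminus[\omega_0,\infty)$ with $\widetilde{T_c}(\bar z)=\widetilde{T_c}(z)^*$ and $\widetilde{T_c}(\omega)=\Re\widehat{T_c}(\omega)$ for $\omega<\omega_0$. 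Combining the Plemelj relation with the Sokhotski--Plemelj identity $\lim_{\eta\to 0^+}(\omega-\omega'+\ri\eta)^{-1}=\pv((\omega-\omega')^{-1})-\ri\pi\delta(\omega-\omega')$ then yields the Stieltjes representation
\[
\widetilde{T_c}(z)\;=\;\frac{1}{\pi}\bigl\langle\Im\widehat{T_c},\,(\cdot-z)^{-1}\bigr\rangle_{H^{-1},H^1},\qquad z\in\C\setminus[\omega_0,\infty),
\]
where the duality pairing is taken in the variable $\omega'$.

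The ``only if'' direction follows immediately: when $\Im\widehat{T_c}\geq 0$, for every $\omega<\omega_0$ the function $\omega'\mapsto 1/(\omega'-\omega)$ is smooth and strictly positive on the support $[\omega_0,\infty)$, so the representation above expresses $\Re\widehat{T_c}(\omega)=\widetilde{T_c}(\omega)$ as the pairing of a nonnegative operator-valued distribution against a positive test function, hence a nonnegative self-adjoint operator; the value at $\omega=\omega_0$ then follows by the limit $\omega\nearrow\omega_0$.

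For the converse, I would test against $f\in\cH$ to reduce the statement to the scalar one: the real distribution $\mu_f:=\langle f|\Im\widehat{T_c}|f\rangle$ supported in $[\omega_0,\infty)$ is nonnegative whenever its Stieltjes transform $h_f(\omega):=\pi^{-1}\langle\mu_f,(\cdot-\omega)^{-1}\rangle$ is nonnegative on $(-\infty,\omega_0]$. The plan is to recover $\mu_f$ from the analytic extension $\widetilde{h}_f(z)=\pi^{-1}\langle f|\widetilde{T_c}(z)|f\rangle$ by the Stieltjes inversion $\mu_f=\pi^{-1}\lim_{\eta\to 0^+}\Im \widetilde{h}_f(\cdot+\ri\eta)$, and to translate positivity of the boundary values on $(-\infty,\omega_0]$ into positivity of the inversion on $[\omega_0,\infty)$. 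This is the main obstacle: positivity of a scalar Stieltjes transform on a ray does not on its own imply positivity of the generating measure, so the argument must exploit the full analytic structure of $\widetilde{T_c}|_{\UU}$ --- as the Laplace transform of a bounded causal operator, not merely the support condition on $\Im\widehat{T_c}$ --- to close the equivalence.
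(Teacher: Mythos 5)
Your ``only if'' direction is correct, and it goes by a genuinely different route than the paper: you pass to the interior through the Stieltjes representation $\widetilde{T_{\rm c}}(z)=\pi^{-1}\bigl\langle \Im\widehat{T_{\rm c}},(\cdot-z)^{-1}\bigr\rangle$ and evaluate at real $\omega<\omega_0$, whereas the paper never leaves the boundary: it tests the Plemelj identity $\Re\widehat{T_{\rm c}}=-\fH(\Im\widehat{T_{\rm c}})$ against a nonnegative $\varphi$ supported in $(-\infty,\omega_0]$, moves $\fH$ onto the test function via the duality $\langle\fH T,\varphi\rangle=-\langle T,\fH\varphi\rangle$ (Lemma~\ref{lem:fH_duality}), and uses that $\fH\varphi\ge 0$ on $[\omega_0,+\infty)\supset\mathrm{Supp}(\Im\widehat{T_{\rm c}})$. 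Both arguments are fine for this direction (in yours, one should just say a word about pairing the nonnegative distribution $\Im\widehat{T_{\rm c}}\in H^{-s}$ with the non-Schwartz test function $(\cdot-\omega)^{-1}$, e.g.\ by a truncation argument).

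The genuine gap is the converse, which you leave open --- and your diagnosis of why is exactly right, indeed stronger than you state: under the hypotheses of the lemma alone, the implication ``$\Re\widehat{T_{\rm c}}\ge 0$ on $(-\infty,\omega_0]$ $\Rightarrow$ $\Im\widehat{T_{\rm c}}\ge 0$'' fails, so no amount of exploiting the analyticity of $\widetilde{T_{\rm c}}|_{\UU}$ can close your argument without extra structure. Take $\omega_0<\omega_1<\omega_2$, $0<c<1$, and $T_{\rm c}(\tau)=\ri\,\Theta(\tau)\bigl(\re^{-\ri\omega_1\tau}-c\,\re^{-\ri\omega_2\tau}\bigr)$ (times $\mathds{1}_{\cH}$). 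Then $\Im\widehat{T_{\rm c}}=\pi(\delta_{\omega_1}-c\,\delta_{\omega_2})$ is supported in $[\omega_0,+\infty)$ and is not nonnegative, while for $\omega\le\omega_0$ one has $\Re\widehat{T_{\rm c}}(\omega)=(\omega_1-\omega)^{-1}-c\,(\omega_2-\omega)^{-1}\ge (1-c)(\omega_2-\omega)^{-1}>0$. Note also that you will not find the missing step in the paper's own treatment of that direction: there, $\Im\widehat{T_{\rm c}}=\fH(\Re\widehat{T_{\rm c}})$ is tested against $\varphi\ge0$ supported in $[\omega_0,+\infty)$ using only the sign of $\fH\varphi$ on $(-\infty,\omega_0]$, which leaves the pairing of $\Re\widehat{T_{\rm c}}$ with $\fH\varphi$ on $[\omega_0,+\infty)$ --- where neither factor has a sign --- uncontrolled; the example above shows this cannot be repaired as stated. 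The positivity of the imaginary part in the situations where the lemma is invoked (e.g.\ Corollary~\ref{corr:analytic_C}) should instead be extracted from the explicit spectral/propagator structure of the operators there, which is precisely the ``additional analytic input'' your last paragraph asks for.
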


%---------------------------------------------------
\subsubsection{Anti-causal operators}

\begin{definition}[bounded anti-causal operator] \label{def:hole_type_operator}
  Let $\cH$ be a Hilbert space and $T_{\rm a} \in L^\infty(\R_\tau,\cB(\cH))$. We say that $T_{\rm a}$ is  a bounded anti-causal operator if $T_{\rm a}(\tau) = 0$ for almost all $\tau>0$.
\end{definition}
 
All the results for causal operators stated in the previous section can be straightforwardly transposed to  anti-causal operators, by remarking that if $(T_{\rm a}(\tau))_{\tau \in \R}$ is an anti-causal operator, then $(T_{\rm a}(-\tau))_{t\in \R}$ is a causal operator. We will use in particular the following results, which are the counterparts of Proposition~\ref{lem:particle-type}, Proposition~\ref{prop:resolvent} and Lemma~\ref{lemma:Im_positive_particle}.

\begin{proposition} 
  \label{lem:hole-type}
  Let $\cH$ be a Hilbert space and $T_{\rm a} \in L^\infty(\R_\tau,\cB(\cH))$ a bounded anti-causal operator on $\cH$. Then its time-Fourier transform $\widehat{T_{\rm a}}$ belongs to $H^{-s}(\R_\omega,\cB(\cH))$ for any $s > 1/2$, and its Laplace transform $\widetilde{T_{\rm a}}$ is well defined on the lower half-plane
  \[
  \LL=\left\{z \in \C \; | \; \Im(z) < 0 \right\}.
  \]
  Moreover,
  \begin{enumerate}[(i)] 
  \item $\widetilde T_{\rm a}$ is a strongly analytic function from $\LL$ to $\cB(\cH)$; 
  \item the function $\eta \mapsto \widetilde T_{\rm a} (\cdot - \ri \eta)$ is continuous from $(0,+\infty)$ to~$H^{s}(\R_\omega,\cB(\cH))$ for all $s \in \mathbb{R}$, and uniformly continuous from $[0,+\infty)$ to~$H^{-s}(\R_\omega,\cB(\cH))$ for $s>1/2$. Moreover, for any $s > 1/2$, $\widetilde T_{\rm a} (\cdot - \ri \eta) \to \widehat T_{\rm a}$ strongly in $H^{-s}(\R_\omega, \cB(\cH))$ as $\eta \to 0^+$;
  \item for all $z \in \LL$, it holds
    \[
    \widetilde{T_{\rm a}}(z) = -\dfrac{1}{2 \ri \pi} \left\langle \widehat{T_{\rm a}}, (\cdot - z)^{-1} \right\rangle_{H^{-1},H^1} ;
    \]
  \item the operators $\Re \widehat{T_{\rm a}}$ and $\Im \widehat{T_{\rm a}}$ satisfy the Plemelj formulae:
    \begin{equation}\label{eq:Plemelj_formulae_anti}
      \Re \widehat{T_{\rm a}}  =\fH \left(  \Im \widehat{T_{\rm a}}   \right) \quad \mathrm{and} \quad 
      \Im \widehat{T_{\rm a}}  = - \fH \left(   \Re \widehat{T_{\rm a}}   \right) 
      \quad \mathrm{in} \ H^{-1}(\R_\omega,\cB(\cH)).
    \end{equation} 
      \end{enumerate}
\end{proposition}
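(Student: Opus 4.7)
The plan is to reduce the entire proposition to its causal counterpart (Proposition~\ref{lem:particle-type}) by the reflection trick already mentioned in the text: if $T_{\rm a} \in L^\infty(\R_\tau, \cB(\cH))$ is a bounded anti-causal operator, then the time-reversed family $T_{\rm c}(\tau) := T_{\rm a}(-\tau)$ is a bounded causal operator with the same $L^\infty$ norm. All four claims will then follow by substituting $z \mapsto -z$ and $\omega \mapsto -\omega$ in the corresponding statements for $T_{\rm c}$, and by keeping track of a few signs coming from the Jacobian $-1$ of the reflection and from the fact that the Hilbert transform anti-commutes with reflections (last part of Theorem~\ref{th:Riesz}).

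Concretely, a direct change of variable shows that, in the tempered-distributional sense, $\widehat{T_{\rm a}}(\omega) = \widehat{T_{\rm c}}(-\omega)$, and that the Laplace transform of $T_{\rm a}$ on $\LL$ satisfies $\widetilde{T_{\rm a}}(z) = \widetilde{T_{\rm c}}(-z)$ for all $z \in \LL$ (since $-z \in \UU$). Item~(i) follows because $z \mapsto -z$ is biholomorphic from $\LL$ onto $\UU$, composed with the strongly analytic map $\widetilde{T_{\rm c}}$. Item~(ii) follows from the corresponding item in Proposition~\ref{lem:particle-type} applied to $T_{\rm c}$, combined with the fact that the reflection $\omega \mapsto -\omega$ is a unitary isomorphism of each Sobolev space $H^s(\R_\omega, \cB(\cH))$, $s \in \R$; under this isomorphism the family $\widetilde{T_{\rm c}}(\cdot + \ri \eta)$ is mapped to $\widetilde{T_{\rm a}}(\cdot - \ri \eta)$. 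For item~(iii), starting from the causal formula
\[
\widetilde{T_{\rm c}}(-z) = \dfrac{1}{2\ri \pi} \left\langle \widehat{T_{\rm c}}, (\cdot - (-z))^{-1} \right\rangle_{H^{-1},H^1},
\]
and performing the change of variable $\omega \mapsto -\omega$ (which produces a Jacobian factor $-1$) yields the announced minus sign in front of the bracket.

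For item~(iv), denote by $R$ the reflection operator $(R\phi)(\omega) = \phi(-\omega)$, so that $\Re \widehat{T_{\rm a}} = R\, \Re \widehat{T_{\rm c}}$ and $\Im \widehat{T_{\rm a}} = R\, \Im \widehat{T_{\rm c}}$. Using the anti-commutation $\fH R = -R\, \fH$ together with the Plemelj identities~\eqref{eq:Plemelj_formulae} for $T_{\rm c}$, one obtains
\[
\Re \widehat{T_{\rm a}} = R\bigl(-\fH(\Im \widehat{T_{\rm c}})\bigr) = \fH\bigl(R\, \Im \widehat{T_{\rm c}}\bigr) = \fH\bigl(\Im \widehat{T_{\rm a}}\bigr),
\]
and similarly $\Im \widehat{T_{\rm a}} = -\fH\bigl(\Re \widehat{T_{\rm a}}\bigr)$, in $H^{-1}(\R_\omega, \cB(\cH))$, which is exactly~\eqref{eq:Plemelj_formulae_anti}.

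The only mildly delicate point is bookkeeping: one must be careful that the identities $\widehat{T_{\rm a}} = R\widehat{T_{\rm c}}$ and $\widetilde{T_{\rm a}}(z) = \widetilde{T_{\rm c}}(-z)$ hold not merely pointwise for smooth inputs but in the appropriate $H^{-s}$ topology, so that the convergence statement in~(ii) and the duality pairing in~(iii) transfer correctly. This is not an obstacle of substance since $R$ is unitary on every $H^s$; all estimates from Proposition~\ref{lem:particle-type} carry over verbatim. Hence the only thing to verify is that the correspondence $T_{\rm a} \leftrightarrow T_{\rm c}$ is compatible with each of the objects (Laplace transform, Fourier transform, real and imaginary parts, Hilbert transform) that appear in the statement, which is the content of the three preceding paragraphs.
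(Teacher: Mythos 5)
Your proposal is correct and is essentially the paper's own argument: the paper proves Proposition~\ref{lem:hole-type} precisely by noting that $\tau \mapsto T_{\rm a}(-\tau)$ is a bounded causal operator and transposing Proposition~\ref{lem:particle-type} through the reflection, exactly as you do, with the reflection being unitary on every $H^s(\R_\omega,\cB(\cH))$ and anticommuting with $\fH$. The only cosmetic remark is that the minus sign in~(iii) comes from reflecting the test function, $(\,\cdot + z)^{-1}(-\omega) = -(\omega - z)^{-1}$, rather than from a Jacobian factor (which cancels against the flipped limits), but your conclusion and all signs are right.
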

Note that the signs in the Plemelj formulae are different for causal and anti-causal operators (compare~\eqref{eq:Plemelj_formulae} and~\eqref{eq:Plemelj_formulae_anti}). Also, the Laplace transform is defined in the lower-half plane $\LL$ for anti-causal operators, while it is defined in the upper-half plane $\UU$ for causal operators. The counterpart of Proposition~\ref{prop:resolvent} is the following proposition.
\begin{proposition}[Analytic extension of anti-causal time propagators] \label{prop:resolvent_anti-causal}
Let $H$ be a self-adjoint operator on a Hilbert space $\cH$ and $A_{\rm a}(\tau) := \ri \Theta(-\tau) \re^{\ri \tau H}$. The Laplace transform $(\widetilde A_{\rm a}(z))_{z \in \LL}$ is
\[
\widetilde{A}_{\rm a}(z) = (z+H)^{-1}.
\]
Moreover, $\widetilde A_{\rm a}(\cdot - \ri \eta)$ converge to $\widehat A_{\rm a}$ in $H^{-1}(\R_\omega, \cB(\cH))$ as $\eta \to 0^+$, and
\[
\Re \widehat A_{\rm a} = \pv \left( \frac{1}{\cdot + H} \right) \quad \mathrm{and} \quad \Im \widehat A_{\rm a} =\pi P^{-H} \quad \mathrm{in} \ H^{-1}(\R_\omega, \cB(\cH)).
\]
\end{proposition}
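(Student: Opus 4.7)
The plan is to treat this statement as the anti-causal analogue of Proposition~\ref{prop:resolvent}, combining the general anti-causal machinery of Proposition~\ref{lem:hole-type} with an explicit spectral calculation of the resolvent. Since $H$ is self-adjoint, $\|\re^{\ri \tau H}\|_{\cB(\cH)} = 1$ for every $\tau \in \R$, so $A_{\rm a} \in L^\infty(\R_\tau,\cB(\cH))$ with norm at most $1$, and $A_{\rm a}(\tau) = 0$ for $\tau > 0$. Thus $A_{\rm a}$ is a bounded anti-causal operator, and Proposition~\ref{lem:hole-type} already provides the strong analyticity of $\widetilde{A}_{\rm a}$ on $\LL$, the convergence $\widetilde{A}_{\rm a}(\cdot - \ri \eta) \to \widehat{A}_{\rm a}$ in $H^{-s}(\R_\omega,\cB(\cH))$ for every $s > 1/2$ (in particular in $H^{-1}$), and the anti-causal Plemelj formulae~\eqref{eq:Plemelj_formulae_anti}. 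It only remains to identify the Laplace transform and the real and imaginary parts of $\widehat{A}_{\rm a}$.

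For the Laplace transform, I fix $z \in \LL$ and $f, g \in \cH$, change variables to $\sigma = -\tau$, and invoke the spectral theorem for $H$ together with Fubini's theorem (justified because $|\re^{-\ri \sigma(\lambda+z)}| = \re^{\sigma\,\Im z}$ is integrable on $[0,\infty)$ uniformly in $\lambda \in \sigma(H)$ and $\mu^H_{f,g}$ is a finite complex measure):
\[
	\left\langle f \left| \widetilde{A}_{\rm a}(z) \right| g \right\rangle = \int_0^\infty \ri \left\langle f \left| \re^{-\ri \sigma (H+z)} \right| g \right\rangle \rd\sigma = \int_\R \frac{\rd\mu^H_{f,g}(\lambda)}{\lambda + z} = \left\langle f \left| (z+H)^{-1} \right| g \right\rangle.
\]
Since $f, g \in \cH$ are arbitrary, $\widetilde{A}_{\rm a}(z) = (z+H)^{-1}$.

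To identify $\Re \widehat{A}_{\rm a}$ and $\Im \widehat{A}_{\rm a}$, I set $K := -H$ and use the spectral calculus for $K$: for any $\eta > 0$,
\[
	\widetilde{A}_{\rm a}(\omega - \ri \eta) = \bigl( (\omega - K) - \ri \eta \bigr)^{-1} = \int_\R \frac{\omega - \lambda}{(\omega - \lambda)^2 + \eta^2} \, \rd P^K(\lambda) + \ri \int_\R \frac{\eta}{(\omega - \lambda)^2 + \eta^2} \, \rd P^K(\lambda),
\]
which is exactly the decomposition of $\widetilde{A}_{\rm a}(\omega - \ri \eta)$ into its self-adjoint and anti-self-adjoint parts. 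Passing to the limit $\eta \to 0^+$ in $H^{-1}(\R_\omega,\cB(\cH))$, the second (Poisson-kernel) term tends to $\pi P^K = \pi P^{-H}$, while the first tends to $\pv(1/(\cdot - K)) = \pv(1/(\cdot + H))$ in the sense of Definition~\ref{def:HT-PV} (the scalar analogue being~\eqref{eq:Fh_Dirac}). Combined with Proposition~\ref{lem:hole-type}(ii) and the uniqueness of the $H^{-1}$-limit, these are precisely $\Re \widehat{A}_{\rm a}$ and $\Im \widehat{A}_{\rm a}$. The main technical content is this last $H^{-1}$-convergence: after pairing with $f, g \in \cH$, it reduces to the classical scalar facts $\eta/((\omega-\lambda)^2+\eta^2) \to \pi \delta_\lambda$ and $(\omega-\lambda)/((\omega-\lambda)^2+\eta^2) \to \pv(1/(\omega-\lambda))$ in $H^{-1}(\R_\omega)$, together with an interchange of the $\omega$-duality pairing with the finite spectral measure $\mu^K_{f,g}$, justified by a uniform $H^{-1}(\R_\omega)$-bound of the scalar kernels as $\lambda$ varies.
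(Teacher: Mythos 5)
Your proof is correct and takes essentially the same route as the paper, which establishes the causal analogue (Proposition~\ref{prop:resolvent}) by the very same spectral-theorem-plus-Fubini computation of the Laplace transform and the same Poisson-kernel limit for the spectral projector, and then transposes everything to the anti-causal case via the reflection $\tau \mapsto -\tau$; your direct treatment of $A_{\rm a}$ is just that transposition written out. The only real deviation is that you identify $\Re \widehat{A_{\rm a}}$ by a direct $H^{-1}$ limit of the conjugate Poisson kernel (which is fine, but carries the extra, if classical, uniform-in-$\lambda$ estimate you invoke), whereas the paper obtains the real part for free from the anti-causal Plemelj formula $\Re \widehat{A_{\rm a}} = \fH\big(\Im \widehat{A_{\rm a}}\big)$ of Proposition~\ref{lem:hole-type} combined with Definition~\ref{def:HT-PV}, once $\Im \widehat{A_{\rm a}} = \pi P^{-H}$ has been established.
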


Finally, a result similar to Lemma~\ref{lemma:Im_positive_particle} can also be stated.

\begin{lemma} 
  \label{lemma:Im_positive_hole}
  Consider a bounded anti-causal operator $T_{\rm a} \in L^\infty(\R_\tau,\cB(\cH))$ such that $\mathrm{Supp}(\Im \widehat{T_{\rm a}})\subset (-\infty,\omega_0]$ for some $\omega_0 \in \R_\omega$. Then, $\Im \widehat{T_{\rm a}} \ge 0$ if and only if $\Re \widehat{T_{\rm a}}(\omega) \ge 0$ on $[\omega_0,+\infty)$.
\end{lemma}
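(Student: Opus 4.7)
The plan is to reduce this anti-causal statement to the already established causal version in Lemma~\ref{lemma:Im_positive_particle} by means of time reversal. First I would introduce the auxiliary operator $T_{\rm c}(\tau) := T_{\rm a}(-\tau)$. Since $T_{\rm a}$ is anti-causal, $T_{\rm c}(\tau) = 0$ for almost every $\tau < 0$, so $T_{\rm c} \in L^\infty(\R_\tau, \cB(\cH))$ is a bounded causal operator in the sense of Definition~\ref{def:particle_type_operator}, and Lemma~\ref{lem:FourierA} ensures that both $\widehat{T_{\rm a}}$ and $\widehat{T_{\rm c}}$ belong to $H^{-s}(\R_\omega,\cB(\cH))$ for every $s > 1/2$.

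Next I would track how the relevant quantities behave under the change of time variable $\tau \mapsto -\tau$. With the convention~\eqref{eq:def_F_T}, a direct change of variables in the defining integral (first for Schwartz-regular approximants, then passing to the distributional limit in $H^{-s}$) yields
\[
\widehat{T_{\rm c}}(\omega) = \widehat{T_{\rm a}}(-\omega) \quad \text{in } H^{-s}(\R_\omega, \cB(\cH)), \quad s>1/2.
\]
Because adjunction commutes with the scalar reflection $\omega \mapsto -\omega$, this gives $\Re \widehat{T_{\rm c}}(\omega) = \Re \widehat{T_{\rm a}}(-\omega)$ and $\Im \widehat{T_{\rm c}}(\omega) = \Im \widehat{T_{\rm a}}(-\omega)$. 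In particular, the support assumption $\mathrm{Supp}(\Im \widehat{T_{\rm a}}) \subset (-\infty, \omega_0]$ translates into $\mathrm{Supp}(\Im \widehat{T_{\rm c}}) \subset [-\omega_0, +\infty)$.

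I would then apply Lemma~\ref{lemma:Im_positive_particle} to $T_{\rm c}$ with threshold $-\omega_0$: it gives $\Im \widehat{T_{\rm c}} \ge 0$ on $\R_\omega$ if and only if $\Re \widehat{T_{\rm c}} \ge 0$ on $(-\infty, -\omega_0]$. Substituting back $\widehat{T_{\rm c}}(\omega) = \widehat{T_{\rm a}}(-\omega)$ and changing $\omega$ into $-\omega$ in the one-sided inequality yields the desired equivalence: $\Im \widehat{T_{\rm a}} \ge 0$ on $\R_\omega$ if and only if $\Re \widehat{T_{\rm a}} \ge 0$ on $[\omega_0, +\infty)$.

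The only step deserving a moment of attention is the distributional interpretation of ``positivity'' and ``support'' under the reflection: for any operator-valued tempered distribution $S$ and $R(\omega) := -\omega$, one has $\mathrm{Supp}(S \circ R) = -\mathrm{Supp}(S)$, and $\langle f | \langle S \circ R, \phi \rangle | f \rangle = \langle f | \langle S, \phi \circ R \rangle | f \rangle$ for all $f \in \cH$ and $\phi \in \sS(\R_\omega)$, so positivity against nonnegative test functions is preserved (since $\phi \ge 0 \iff \phi \circ R \ge 0$). I do not expect a real obstacle here; the argument is essentially a mechanical transcription of the causal proof through time reversal, and the main thing is to make sure that the sign flips in the Plemelj formulae~\eqref{eq:Plemelj_formulae_anti} versus~\eqref{eq:Plemelj_formulae} are properly compensated by the reflection on the support side.
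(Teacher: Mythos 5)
Your proposal is correct and is exactly the route the paper intends: the anti-causal statements are obtained from the causal ones by noting that $\tau \mapsto T_{\rm a}(-\tau)$ is a bounded causal operator, so that $\widehat{T_{\rm c}}(\omega)=\widehat{T_{\rm a}}(-\omega)$ and Lemma~\ref{lemma:Im_positive_particle} applies with threshold $-\omega_0$. Your bookkeeping of the reflection (real/imaginary parts, supports, positivity against nonnegative test functions) is accurate, so the argument goes through as written.
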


%---------------------------------------------------
\subsection{Operators defined by kernel products}
\label{ssection:kernel_multiplication}

Two of the fundamental equations in the GW method (see Sections~\ref{sec:irreducible_polarizability} and~\ref{ssec:self_energy}) are of the form
\begin{equation} \label{eq:ex_C2}
  \cC(\bx_1,\bx_2) = \ri \cA(\bx_1, \bx_2) \cB(\bx_2, \bx_1),
\end{equation}
where $\cA(\bx,\bx')$ and $\cB(\bx,\bx')$ are the kernels of space-time operators invariant by time translations. As the product of the kernels of two operators is not, in general, the kernel of a well-defined operator, we have to clarify the meaning of~\eqref{eq:ex_C2}. We first treat the case of time-independent operators in Section~\ref{sec:kernel_products}, and consider time-dependent operators and their Laplace transforms in a second step (see Section~\ref{sec:Fourier_transform_kernel}). 

\subsubsection{Definition of the kernel product}
\label{sec:kernel_products}

We first consider the special case when the operators in~\eqref{eq:ex_C2} are time-independent. Our aim is to give a meaning to equalities such as 
\begin{equation} 
  \label{eq:ex_C3}
  C(\br_1,\br_2) := A(\br_1, \br_2) B(\br_2, \br_1),
\end{equation}
where $A(\br,\br')$ and $B(\br,\br')$ are the kernels of two integral operators~$A$ and~$B$ on $L^2(\R^3)$. For this purpose, we replace (\ref{eq:ex_C3}) by the formally equivalent definition
\begin{align}
\forall (f,g)  \in L^2(\R^3) \times L^2(\R^3), \qquad \bra f | C | g \ket & := \int_{\R^3} \int_{\R^3} \overline{f}(\br_1) C(\br_1, \br_2) g(\br_2) \, \rd \br_1 \, \rd \br_2 \nonumber \\
&= \int_{\R^3} \int_{\R^3} A(\br_1, \br_2) g(\br_2) B(\br_2, \br_1) \overline{f}(\br_1)\, \rd \br_1 \, \rd \br_2 \nonumber \\
& = \Tr_{L^2(\R^3)} \left( A g B \overline{f} \right), \label{eq:defAodotB}
\end{align}
where the last line involves the operators $A$ and $B$ themselves, and not their kernels ($\overline{f}$ and $g$ are there seen as multiplication operators by the functions $\overline{f}$ and $g$ respectively). 

The formal equalities leading to~\eqref{eq:defAodotB} suggest to define the kernel product of two operators $A$ and $B$ (defined on dense subspaces of $L^2(\R^3)$), as the operator on $L^2(\R^3)$ with domain $D \subset L^2(\R^3)$, denoted by $A \odot B$ and characterized by
\begin{equation}
  \label{eq:defAodotB2}
  \forall (f,g)  \in L^2(\R^3) \times D, \qquad \bra f | (A \odot B) | g \ket := \Tr_{L^2(\R^3)}\left( A g B \overline{f} \right).
\end{equation}
In particular, the product $A \odot B$ is a well-defined bounded operator on $L^2(\R^3)$ as soon as $A g B \overline{f}$ is trace-class for all $(f,g) \in  L^2(\R^3) \times L^2(\R^3)$ and $(f,g) \mapsto \Tr_{L^2(\R^3)}(A g B \overline{f})$ is a continuous sesquilinear form on $L^2(\R^3) \times L^2(\R^3)$. It follows from the above considerations that if $A$ and $B$ are operators with well-behaved (for instance smooth and compactly supported) kernels $A(\br_1, \br_2)$ and $B(\br_1, \br_2)$, then $A \odot B$ is a bounded integral operator with kernel $(A \odot B)(\br_1, \br_2)= A(\br_1, \br_2) B(\br_2, \br_1)$. 

\begin{remark}\label{rem:odot}
	It is also possible to rely on the formal equality
	\[
		\forall (f, g) \in L^2(\R^3) \times L^2(\R^3), \quad \bra f | C | g \ket = \Tr_{L^2(\R^3)} \left( \overline{f} A g B \right),
	\]
	and define another kernel product $\widetilde{\odot}$ by
	\[
		\forall (f,g)  \in L^2(\R^3) \times D, \qquad \left\bra f \left| A \, \widetilde{\odot} \, B \right| g \right\ket := \Tr_{L^2(\R^3)}\left( \overline{f} A g B \right).
	\]
	%Of course, when both $\overline{f} A g B$ and $A g B \overline{f}$ are trace-class operators, it holds that $\bra f | A \odot B | g \ket = \bra f | A \, \widetilde{\odot} \, B | g \ket $. However, i
	It may hold that $A \odot B$ is a well-defined bounded operator, while $A \, \widetilde{\odot} \, B$ is an unbounded operator.\footnote{
As an example of such a situation, take $\phi \in L^2(\R^3) \cap L^\infty(\R^3)$, $\psi \in L^2(\R^3) \setminus L^\infty(\R^3)$, and set $A = | \psi \ket \bra \phi |$ and $B = | \phi \ket \bra \phi |$. Then, for all $f, g \in L^2(\R^3)$, the operator $A g B \overline{f} = | \psi \ket \bra \phi | g | \phi \ket  \bra \phi f |$ is a well-defined rank-$1$ bounded operator since $\phi f \in L^2(\R^3)$, hence is trace class. Moreover,
\[
	\Tr_{L^2(\R^3)} \left( A g B \overline{f} \right) \le \left( \| \phi \|_{L^\infty}^2 \| \phi \|_{L^2} \| \psi \|_{L^2} \right) \| f \|_{L^2} \| g \|_{L^2},
\]
so that $A \odot B$ is a well-defined bounded operator on $L^2(\R^3)$. On the other hand, it formally holds $\overline{f} A g B = | f \psi \ket  \bra \phi | g | \phi \ket \bra \phi |$. If $f$ is such that $f \psi \notin L^2(\R^3)$, then this operator is not bounded. \\
We are grateful to Yanqi Qiu for pointing out this counter-example to our attention.
} In the sequel, we will mostly state the results for the $\odot$ kernel product.
\end{remark}

\begin{remark}
The product $A \odot B$ can be seen as an infinite-dimensional extension of the Hadamard product $\bA \circ \bB^T$ defined for two matrices $\bA \in \C^{m \times n}$ and $\bB \in \C^{n \times m}$ by
$$
\forall 1 \le i \le m, \quad \forall 1 \le j \le n, \quad \left(\bA \circ \bB^T\right)_{ij} = \bA_{ij} \left(\bB^T\right)_{ij}= \bA_{ij} \bB_{ji}.
$$
\end{remark}

Let us specify possible sufficient conditions for the operator $A \odot B$ to be well-defined. The typical situation we will encounter in the GW setting (see Sections~\ref{sec:irreducible_polarizability} and~\ref{ssec:self_energy}) is the case when $A \in \cB(L^2(\R^3))$, while $B$ is an operator on~$L^2(\R^3)$ satisfying
\begin{equation}
  \label{eq:condition_trace_A_Hadamard}
  \forall f, g \in L^2(\R^3), \quad  \Tr \left( \left| g B \overline{f} \right| \right)  \le C_B \| f \|_{L^2} \| g \|_{L^2}.
\end{equation}
In this case, the operator $A \odot B$ defined in~\eqref{eq:defAodotB2} is a well-defined bounded linear operator on $L^2(\R^3)$, and
\[
\| A \odot B \|_{\cB(L^2(\R^3))} \le C_B \| A \|_{\cB(L^2(\R^3))}.
\]
The operators~$B$ arising in the GW formalism are usually of the form $B = B_1^* B_2 B_1$, where $B_1$ is an operator from $L^2(\R^3)$ to some Hilbert space~$\cH$, and $B_2 \in \cB(\cH)$. In fact, assume that the operator $B_1$ is such that $B_1f \in \fS_2(L^2(\R^3),\cH)$ for any $f \in L^2(\R^3)$, with 
\begin{equation}
  \label{eq:condition_A1_S2}
  \| B_1 f \|_{\fS_2(L^2(\R^3),\cH)} \leq K \| f\|_{L^2},
\end{equation}
for a constant $K \in \R^+$ independent of $f$. In the left-hand side of~\eqref{eq:condition_A1_S2}, $f$ denotes the multiplication operator by the function $f$. In this case, \eqref{eq:condition_trace_A_Hadamard} holds with
\[
C_B = K^2 \| B_2 \|_{\cB(\cH)}.
\]

Let us conclude by giving a simple example when~\eqref{eq:condition_A1_S2} is satisfied, in the situation when $\cH = L^2(\R^3)$. 
%The following lemma shows that this is the case when $B_1$ is an integral operator, under appropriate conditions on the integral kernel.

\begin{lemma} 
  \label{lem:L2Linfty_kernel}
  Let $B_1$ be a linear operator with integral kernel $B_1(\br,\br') \in L_{\rm loc}^2(\R^3 \times \R^3)$, such that $\br \mapsto \| B_1(\br,\cdot) \|_{L^\infty} \in L^2(\R^3)$. Then $B_1 \in \cB(L^1(\R^3),L^2(\R^3))$, so that $B_1$ defines an operator on $L^2(\R^3)$ with domain $L^1(\R^3) \cap L^2(\R^3)$. Moreover, for any $f \in L^2(\R^3)$, the operator $B_1f$ is Hilbert-Schmidt on $L^2(\R^3)$, with  
  \[
  \| B_1f \|_{\fS_2(L^2(\R^3))} \le \left( \int_{\RR^3} \| B_1(\br, \cdot) \|^2_{L^\infty(\R^3)} \, \rd\br \right)^{1/2} \| f \|_{L^2(\R^3)}.
  \]
\end{lemma}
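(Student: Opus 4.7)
The lemma decomposes into two independent assertions, and for both the approach is a direct computation using the pointwise bound $|B_1(\br,\br')| \le \|B_1(\br,\cdot)\|_{L^\infty(\R^3_{\br'})}$ valid for almost every $\br'$.

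For the first part ($B_1 \in \cB(L^1(\R^3),L^2(\R^3))$), I would start from the integral representation
\[
(B_1 g)(\br) = \int_{\R^3} B_1(\br,\br')\, g(\br') \, \rd \br',
\]
valid for $g \in L^1(\R^3)$ (one should briefly observe that the integrand is in $L^1(\R^3_{\br'})$ for a.e.~$\br$ since $B_1(\br,\cdot) \in L^\infty(\R^3)$ for a.e.~$\br$ by the hypothesis $\br \mapsto \|B_1(\br,\cdot)\|_{L^\infty} \in L^2(\R^3) \subset L^1_{\rm loc}$). A trivial pointwise bound gives
\[
|(B_1 g)(\br)| \le \|B_1(\br,\cdot)\|_{L^\infty(\R^3)} \, \|g\|_{L^1(\R^3)},
\]
and squaring and integrating over $\br$ yields the desired bound
\[
\|B_1 g\|_{L^2(\R^3)} \le \left( \int_{\R^3} \|B_1(\br,\cdot)\|_{L^\infty(\R^3)}^2 \, \rd \br \right)^{1/2} \|g\|_{L^1(\R^3)}.
\]
Since $L^1(\R^3) \cap L^2(\R^3)$ is dense in $L^2(\R^3)$, this defines $B_1$ as an operator on $L^2(\R^3)$ with the stated domain.

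For the second part, the key observation is that, for $f \in L^2(\R^3)$ viewed as a multiplication operator, the composition $B_1 f$ has integral kernel $(\br,\br') \mapsto B_1(\br,\br') f(\br')$. Recall that an integral operator with kernel $K(\br,\br')$ is Hilbert--Schmidt on $L^2(\R^3)$ if and only if $K \in L^2(\R^3 \times \R^3)$, with Hilbert--Schmidt norm equal to the $L^2$ norm of $K$. Thus it suffices to bound
\[
\|B_1 f\|_{\fS_2}^2 = \int_{\R^3}\int_{\R^3} |B_1(\br,\br')|^2 \, |f(\br')|^2 \, \rd \br \, \rd \br'.
\]
Using again $|B_1(\br,\br')|^2 \le \|B_1(\br,\cdot)\|_{L^\infty(\R^3)}^2$ for a.e.~$\br'$, Fubini--Tonelli gives
\[
\|B_1 f\|_{\fS_2}^2 \le \int_{\R^3} \|B_1(\br,\cdot)\|_{L^\infty(\R^3)}^2 \, \rd \br \int_{\R^3} |f(\br')|^2 \, \rd \br',
\]
which is exactly the announced estimate.

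The only mild subtlety is checking measurability in the two steps above (the map $\br \mapsto \|B_1(\br,\cdot)\|_{L^\infty}$ is already assumed measurable through the $L^2$-integrability hypothesis, and the joint measurability of $(\br,\br') \mapsto B_1(\br,\br')$ is built into the $L^2_{\rm loc}$ assumption on the kernel). Once these measurability points are acknowledged, the proof reduces to the two short Fubini arguments above; there is no genuine technical obstacle.
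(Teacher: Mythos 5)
Your proposal is correct and follows essentially the same route as the paper's proof: the same pointwise bound $|B_1(\br,\br')|\le \|B_1(\br,\cdot)\|_{L^\infty}$ gives the $\cB(L^1,L^2)$ estimate, and the Hilbert--Schmidt bound is obtained by viewing $B_1 f$ as the integral operator with kernel $B_1(\br,\br')f(\br')$ and computing its $L^2(\R^3\times\R^3)$ norm via Fubini--Tonelli. The extra remarks on measurability and on why $B_1 f$ is well defined on all of $L^2$ (since $fg\in L^1$ for $g\in L^2$) are harmless refinements of the same argument.
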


The proof of this result can be read in Section~\ref{ssec:proof_L2Linfty_kernel}. In the GW setting, a technical result similar to Lemma~\ref{lem:L2Linfty_kernel} is provided by Lemma~\ref{lem:Avcf}.

%---------------------------------------------------
\subsubsection{Properties of the kernel product}

\begin{lemma}
\label{lem:positivity_kernel_product}
Consider two bounded operators $A,B \in \cB(L^2(\R^3))$ such that $A,B \geq 0$ and~\eqref{eq:condition_trace_A_Hadamard} holds. Then, $A\odot B$ is a bounded, positive operator on~$L^2(\R^3)$. 
\end{lemma}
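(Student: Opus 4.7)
The boundedness of $A \odot B$ on $L^2(\R^3)$ has already been recorded in the discussion preceding the statement: the assumption~\eqref{eq:condition_trace_A_Hadamard} together with $A \in \cB(L^2(\R^3))$ guarantees that $A \odot B$ is well-defined with operator norm at most $C_B \|A\|_{\cB(L^2)}$. The content of the lemma is therefore the positivity assertion, i.e. $\bra f \mid (A \odot B) \mid f \ket \ge 0$ for every $f \in L^2(\R^3)$. My plan is to express this quadratic form as $\| C \|_{\fS_2}^2$ for a suitable Hilbert--Schmidt operator $C$ built from the positive square roots of $A$ and $B$.

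By definition~\eqref{eq:defAodotB2}, taking $g = f$,
\[
\bra f \mid (A\odot B) \mid f \ket = \Tr_{L^2(\R^3)}\left( A \, f \, B \, \overline{f}\right),
\]
where $f$ and $\overline{f}$ stand for the multiplication operators by the corresponding $L^2$ functions. Since $A, B \ge 0$ and bounded, the square roots $A^{1/2}, B^{1/2} \in \cS(L^2(\R^3))$ are well-defined self-adjoint bounded operators, and $A = A^{1/2}A^{1/2}$, $B = B^{1/2}B^{1/2}$. Introduce
\[
C := B^{1/2} \, \overline{f} \, A^{1/2}.
\]
Then formally $C^*C = A^{1/2} f B \overline{f} A^{1/2}$, and if we can cycle one factor $A^{1/2}$ around the trace we obtain
\[
\Tr\left(A f B \overline{f}\right) = \Tr\left(A^{1/2} f B \overline{f} A^{1/2}\right) = \Tr\left(C^* C\right) = \|C\|_{\fS_2}^2 \ge 0.
\]

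The key point—and the only nontrivial verification—is that these manipulations are legitimate, i.e.\ that $C$ is genuinely Hilbert--Schmidt and that the cyclicity step applies. For this, I would proceed in two short steps. First, assumption~\eqref{eq:condition_trace_A_Hadamard} applied with $g = f$ yields $\Tr(|fB\overline{f}|) \le C_B \|f\|_{L^2}^2 < \infty$. Since $fB\overline{f} = (B^{1/2}\overline{f})^* (B^{1/2}\overline{f})$ is a positive operator, this trace-class bound is exactly $\|B^{1/2}\overline{f}\|_{\fS_2}^2$, so $B^{1/2}\overline{f} \in \fS_2(L^2(\R^3))$. Composing with the bounded operator $A^{1/2}$ on the right shows $C \in \fS_2(L^2(\R^3))$, so $C^*C \in \fS_1(L^2(\R^3))$ and $\Tr(C^*C) = \|C\|_{\fS_2}^2$. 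Second, since $A^{1/2} fB\overline{f}$ is trace-class (product of a bounded operator and a trace-class operator), the standard cyclicity identity $\Tr(T S) = \Tr(S T)$, valid whenever $S \in \cB$ and $T \in \fS_1$, applies with $T = A^{1/2} f B \overline{f}$ and $S = A^{1/2}$, justifying the cyclic rearrangement.

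The main obstacle is really only this bookkeeping: all the operators $f, \overline{f}$ are multiplications by possibly unbounded $L^2$ functions, so one must invoke~\eqref{eq:condition_trace_A_Hadamard} to gain the trace-class regularity that rescues the formal computation. Once $B^{1/2}\overline{f} \in \fS_2$ is established, the factorisation $\bra f \mid (A \odot B) \mid f \ket = \|C\|_{\fS_2}^2$ delivers positivity at once and completes the proof.
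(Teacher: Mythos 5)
Your proof is correct and takes essentially the same route as the paper: both arguments reduce $\bra f | A\odot B | f\ket = \Tr(AfB\overline{f})$ to $\Tr\bigl(A^{1/2} f B \overline{f} A^{1/2}\bigr)$ by cyclicity and conclude positivity from $fB\overline{f}\geq 0$ being trace class and $A^{1/2}\geq 0$ bounded. Your only addition is to make the final positivity explicit through the Hilbert--Schmidt factorisation $C=B^{1/2}\overline{f}A^{1/2}$ with $\Tr(C^*C)=\|C\|_{\fS_2}^2$, which the paper leaves implicit.
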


The proof of this result is very simple: it relies on the observation that, for any $f \in L^2(\R^3)$,
\[
\left\bra f \left| A\odot B\right| f\right\ket = \Tr_{L^2(\R^3)}\left( A f B \overline{f} \right) = \Tr_{L^2(\R^3)}\left( A^{1/2} f B \overline{f} A^{1/2}\right) \geq 0,
\]
since $f B \overline{f}$ is a positive, trace class operator and $A^{1/2} \geq 0$ is a bounded operator.

\begin{lemma}
\label{lem:adjoint_kernel_product}
Consider two bounded operators $A,B \in \cB(L^2(\R^3))$ such that \eqref{eq:condition_trace_A_Hadamard} holds. Then, $A\odot B$ is a bounded operator with adjoint $(A\odot B)^* = A^* \odot B^*$.
\end{lemma}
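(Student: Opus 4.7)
The plan is to verify the identity directly from the sesquilinear form that defines the kernel product. By~\eqref{eq:defAodotB2}, for $f,g \in L^2(\R^3)$ one has $\langle f \,|\, A\odot B \,|\, g\rangle = \Tr(AgB\overline f)$, and by definition of the adjoint $\langle f \,|\, (A\odot B)^* \,|\, g\rangle = \overline{\langle g \,|\, A\odot B \,|\, f\rangle} = \overline{\Tr(AfB\overline g)}$, while $\langle f \,|\, A^*\odot B^* \,|\, g\rangle = \Tr(A^* g B^* \overline f)$. The whole problem thus reduces to the identity $\overline{\Tr(AfB\overline g)} = \Tr(A^* g B^* \overline f)$, plus checking that everything is well-defined.

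First I would verify that $A^*\odot B^*$ fits the framework of Section~\ref{sec:kernel_products}, which amounts to checking that $B^*$ still satisfies~\eqref{eq:condition_trace_A_Hadamard} with the same constant~$C_B$. Interpreting $f$ and $g$ as multiplication operators (so that $M_f^*=M_{\overline f}$ and $M_g^*=M_{\overline g}$), one has $(gB^*\overline f)^* = fB\overline g$, and since the Schatten $1$-norm is invariant under taking adjoints, $\Tr(|gB^*\overline f|) = \Tr(|fB\overline g|) \le C_B \|f\|_{L^2}\|g\|_{L^2}$ by applying~\eqref{eq:condition_trace_A_Hadamard} with $f$ and $g$ swapped. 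Combined with $A^*\in\cB(L^2(\R^3))$, the discussion surrounding~\eqref{eq:condition_trace_A_Hadamard} then ensures that $A^*\odot B^*$ is a well-defined bounded operator on $L^2(\R^3)$.

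For the main calculation, I would use $\overline{\Tr(T)}=\Tr(T^*)$ for any trace-class $T$ and compute
\[
\overline{\Tr(AfB\overline g)} = \Tr\!\bigl((AfB\overline g)^*\bigr) = \Tr\!\bigl(gB^*\overline f A^*\bigr) = \Tr\!\bigl(A^* g B^*\overline f\bigr),
\]
where the last equality uses cyclicity of the trace (which is legitimate since $A^*$ is bounded and $gB^*\overline f$ is trace class by the previous step). Inserting this into the expression for $\langle f \,|\, (A\odot B)^* \,|\, g\rangle$ gives the claimed equality of matrix elements for all $f,g\in L^2(\R^3)$, hence $(A\odot B)^* = A^*\odot B^*$ as bounded operators.

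I do not anticipate any serious obstacle: the argument is essentially bookkeeping. The only points requiring care are the consistent distinction between a function and its associated multiplication operator (so that taking the adjoint conjugates the function), and the a priori verification that every trace appearing is absolutely convergent, which is precisely what~\eqref{eq:condition_trace_A_Hadamard} (applied to both $B$ and $B^*$) provides.
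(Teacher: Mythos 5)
Your argument is correct and follows essentially the same route as the paper's proof: the same chain $\overline{\Tr(T)}=\Tr(T^*)$, computation of the adjoint of $A\,g\,B\,\overline{f}$ as a product of multiplication operators, and cyclicity of the trace against the bounded factor. Your preliminary check that $B^*$ also satisfies~\eqref{eq:condition_trace_A_Hadamard} (so that $A^*\odot B^*$ is well-defined) is a small addition the paper leaves implicit, and it is correct.
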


The proof of this result is also elementary: for any $f,g \in L^2(\R^3)$,
\[
\begin{aligned}
\left\bra f \big| \left( A\odot B \right) g\right\ket & = \Tr_{L^2(\R^3)}\left( A g B \overline{f} \right) = \overline{\Tr_{L^2(\R^3)}\left( \left(A g B \overline{f}\right)^* \right)} = \overline{\Tr_{L^2(\R^3)}\left( f B^* \overline{g} A^*  \right)} \\
& = \overline{\Tr_{L^2(\R^3)}\left( A^* f B^* \overline{g} \right)} = \overline{ \left\bra g, (A^* \odot B ^*) f\right\ket} = \left\bra (A^* \odot B ^*) f, g\right\ket.
\end{aligned}
\]
In particular, $A \odot B$ is self-adjoint whenever $A$ and $B$ are self-adjoint.

%-----------------------------------
\subsubsection{Laplace transforms of kernel products}
\label{sec:Fourier_transform_kernel}

We finally combine the results on causal operators with those on the kernel product $\odot$ defined in Section~\ref{sec:kernel_products} in order to give a meaning to~\eqref{eq:ex_C2}. Note first that the space-time operator with kernel $\cC(\bx,\bx')$ is also time-translation invariant and that the family of operators $(A(\tau))_{\tau \in \R}$, $(B(\tau))_{\tau \in \R}$ and $(C(\tau))_{\tau \in \R}$ such that, formally, $\cA(\bx_1, \bx_2) =A(\br_1,\br_2,t_1-t_2)$, $\cB(\bx_1, \bx_2) =B(\br_1,\br_2,t_1-t_2)$, and $\cC(\bx_1, \bx_2) = C(\br_1,\br_2,t_1-t_2)$, are related by
\begin{equation}\label{eq:defCtau}
C(\tau) = \ri A(\tau) \odot B(-\tau).
\end{equation}
We assume here that $A$ and $B$ are such that~\eqref{eq:defCtau} is well-defined. When all the operator-valued functions have sufficient regularity in time, their Fourier transforms decay sufficiently fast at infinity and it is possible to Fourier transform~\eqref{eq:defCtau}. This is however not the typical case in the GW setting since we work with causal and anti-causal operators, whose Fourier transforms are in $H^{-s}(\R_\omega)$ for some $s>1/2$. 

We therefore rather consider Laplace transforms. More precisely, for two fields of uniformly bounded operators $(A(\tau))_{\tau \in \R}$ and $(B(\tau))_{\tau \in \R}$, and provided $C(\tau) := \ri A(\tau) \odot B(-\tau)$ is well defined, we can decompose $A$, $B$ and $C$ as the sums of their causal and anti-causal parts as
\[
A(\tau) = A^+(\tau) + A^-(\tau) \quad \text{with} \quad A^+(\tau) := \Theta(\tau) A(\tau) \quad \text{and} \quad A^-(\tau) := \Theta(-\tau) A(\tau),
\]
and similarly for $B$ and $C$. Then,
\begin{equation} \label{eq:Fourier_kernel}
C^+(\tau) = \ri A^+(\tau) \odot B^-(-\tau) \quad \text{and} \quad C^-(\tau) = \ri A^-(\tau) \odot B^+(-\tau).
\end{equation}
We next consider $\omega > 0$ and $0 < \eta < \omega$. From the equality
\[
C^+(\tau) \, \re^{-\omega \tau} = \ri \left[ A^+(\tau) \, \re^{-(\omega-\eta)\tau} \right] \odot \left[ B^-(-\tau)\, \re^{-\eta\tau} \right],
\]
we deduce, by Fourier transform, that
\begin{equation}
  \label{eq:laplace_transform_C+}
  \widetilde{C^+}(\nu + \ri\omega) = \frac{\ri}{2\pi}\int_{-\infty}^{+\infty} \widetilde{A^+}\big(\nu-\omega'+\ri(\omega-\eta)\big)\odot \widetilde{B^-}(-\omega'-\ri\eta) \, \rd \omega'.
\end{equation}
The convolution on the right-hand side is well defined in view of Propositions~\ref{lem:particle-type} and~\ref{lem:hole-type}. It however becomes ill-defined as $\omega,\eta \to 0$. In the case when the causal and anti-causal operators $A^+$ and $B^-$ under consideration are time-propagators, it is possible to remove this singularity by rewriting the convolution on appropriately shifted imaginary axes.

\begin{theorem}
  \label{thm:analytic_continuation_kernel_product}
  Consider three Hilbert spaces $\cH,\cH_a,\cH_b$, and assume that 
  \[
  \begin{aligned}
  A^+(\tau) = -\ri \Theta(\tau) A_1^* \re^{-\ri \tau A_2} A_1, & 
  \qquad
  A^-(\tau) = \ri \Theta(-\tau) A_1^* \re^{\ri \tau A_2} A_1, \\
  B^+(\tau) = -\ri \Theta(\tau) B_1^* \re^{-\ri \tau B_2} B_1, & 
  \qquad 
  B^-(\tau) = \ri \Theta(-\tau) B_1^* \re^{\ri \tau B_2} B_1,
  \end{aligned}
  \]
  where $A_1 \in \mathcal{B}(\cH,\cH_a)$, $B_1 \in \mathcal{B}(\cH,\cH_b)$ and $A_2,B_2$ are possibly unbounded, self-adjoint operators on~$\cH_a$ and~$\cH_b$ respectively, for which there exist real numbers $a,b$ such that $A_2 \geq a$ and $B_2 \geq b$. 
  We assume in addition that, for any $f \in \cH$, $B_1 f \in \fS_2(\cH,\cH_b)$ with $\| B_1 f \|_{\fS_2(\cH,\cH_b)} \leq K \| f\|_\cH$, for a constant $K \in \R^+$ independent of $f$. Then, the operators $C$, $C^+$ and $C^-$ in~\eqref{eq:defCtau}-\eqref{eq:Fourier_kernel} are well-defined, the Laplace transforms of $C^+$ and $C^-$ admit analytical continuations on $\UU \cup \LL \cup (-\infty, a+b)$ and $\UU \cup \LL \cup (-(a+b), \infty)$ respectively, and it holds
  for any $\nu < a+b$ and $\nu' \in (-b,a-\nu)$,
  \begin{equation}
    \label{eq:general_formula_analytic_continuation_C+}
    \forall \omega \in \R, 
    \qquad 
    \widetilde{C^+}(\nu + \ri\omega) = -\frac{1}{2\pi} \int_{-\infty}^{+\infty}  \widetilde{A^+}\big(\nu+\nu'+\ri(\omega+\omega')\big)\odot \widetilde{B^-}(\nu'+\ri\omega') \, \rd\omega',
  \end{equation}
  while, for any $\nu > -(a+b)$ and $\nu' \in (-a-\nu,b)$,
  \begin{equation}
    \label{eq:general_formula_analytic_continuation_C-}
    \forall \omega \in \R, 
    \qquad 
    \widetilde{C^-}(\nu + \ri\omega) = -\frac{1}{2\pi} \int_{-\infty}^{+\infty}  \widetilde{A^-}\big(\nu+\nu'+\ri(\omega+\omega')\big)\odot \widetilde{B^+}(\nu'+\ri\omega') \, \rd\omega'.
  \end{equation}
  Finally, the following equality holds provided $b>0$ and $a+b>0$: for any $\nu \in (-(a+b),a+b)$ and $\nu' \in (-b,b)$,
  \begin{equation}
    \label{eq:general_formula_analytic_continuation_C}
    \forall \omega \in \R, 
    \qquad 
    \widetilde{C}(\nu + \ri\omega) = -\frac{1}{2\pi} \int_{-\infty}^{+\infty}  \widetilde{A}\big(\nu+\nu'+\ri(\omega+\omega')\big)\odot \widetilde{B}(\nu'+\ri\omega') \, \rd\omega'.
  \end{equation}
\end{theorem}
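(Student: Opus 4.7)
The plan is to exploit explicit resolvent representations of the Laplace transforms together with a scalar residue identity and the spectral theorem applied jointly to $A_2$ and $B_2$. I will give the argument for \eqref{eq:general_formula_analytic_continuation_C+}; the formula \eqref{eq:general_formula_analytic_continuation_C-} is obtained by a symmetric argument (closing the $z'$-contour on the opposite side), and \eqref{eq:general_formula_analytic_continuation_C} then follows by summing the two representations in the common strip of validity, which is nonempty precisely when $b>0$ and $a+b>0$.

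I would first check well-definedness. Since $\re^{-\ri\tau B_2}$ is unitary, the assumption $\|B_1 f\|_{\fS_2(\cH,\cH_b)}\le K\|f\|_\cH$ combined with the bound from Section~\ref{sec:kernel_products} yields $\|A^+(\tau)\odot B^-(-\tau)\|_{\cB(\cH)}\le K^2\|A_1\|^2$ uniformly in $\tau$, so that $C^+, C^-$ and $C$ are well-defined, uniformly bounded (anti-)causal operator-valued maps. A direct computation in the spirit of Propositions~\ref{prop:resolvent}--\ref{prop:resolvent_anti-causal} then gives
\[
\widetilde{A^+}(z)=A_1^*(z-A_2)^{-1}A_1, \qquad \widetilde{B^-}(z)=B_1^*(z+B_2)^{-1}B_1,
\]
and Borel functional calculus together with $\sigma(A_2)\subset[a,\infty)$ and $\sigma(B_2)\subset[b,\infty)$ produces the desired analytic extensions of these maps to $\C\setminus[a,\infty)$ and $\C\setminus(-\infty,-b]$ respectively.

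For $\widetilde{C^+}$ itself, I would rewrite $\langle f|C^+(\tau)|g\rangle$ as a trace, use cyclicity (the operators $B_1\overline f, B_1\overline g$ being Hilbert--Schmidt thanks to the $\fS_2$-hypothesis on $B_1$), and apply the joint spectral theorem for $A_2$ and $B_2$ to obtain, for $z\in\UU$,
\[
\langle f|\widetilde{C^+}(z)|g\rangle=\iint_{[a,\infty)\times[b,\infty)}\frac{d\sigma_{f,g}(\lambda,\mu)}{\lambda+\mu-z},
\]
where $d\sigma_{f,g}$ is a finite complex measure with total variation bounded by $K^2\|A_1\|^2\|f\|_\cH\|g\|_\cH$. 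The right-hand side is manifestly analytic in $z$ on $\C\setminus[a+b,\infty)\supset\UU\cup\LL\cup(-\infty,a+b)$, which establishes the analytic extension claim for $\widetilde{C^+}$.

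The heart of the proof is then a scalar residue identity. For $z=\nu+\ri\omega$, $z'=\nu'+\ri\omega'$, and $\lambda>\nu+\nu'$, $\mu>-\nu'$, I would close the contour $\{\Re z'=\nu'\}$ to the right: the integrand decays like $|z'|^{-2}$ so the semicircle drops, and only the simple pole $z'=\lambda-z$ is enclosed (the other pole $z'=-\mu$ sits strictly to the left). A one-line residue computation yields
\[
\frac{1}{\lambda+\mu-z} = -\frac{1}{2\pi}\int_{-\infty}^{+\infty} \frac{d\omega'}{(z+z'-\lambda)(z'+\mu)}.
\]
Integrating this against $d\sigma_{f,g}(\lambda,\mu)$ and swapping orders by Fubini reassembles the resolvents, since the inner spectral integral at fixed $\omega'$ equals $\langle f|\widetilde{A^+}(z+z')\odot\widetilde{B^-}(z')|g\rangle$ via the spectral theorem again; this produces \eqref{eq:general_formula_analytic_continuation_C+}. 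The main technical obstacle is justifying the Fubini exchange: the joint resolvent decay $\|(z+z'-A_2)^{-1}\|\,\|(z'+B_2)^{-1}\|=O(|\omega'|^{-2})$ together with the $\fS_2$-bound on $B_1$ (which, via the kernel-product estimate, transfers the decay to $\cB(\cH)$-norms) supplies the required absolute integrability, and the constraints $\nu<a+b$ and $\nu'\in(-b,a-\nu)$ guarantee simultaneously the correct pole separation and this integrability.
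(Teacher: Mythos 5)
Your argument for \eqref{eq:general_formula_analytic_continuation_C+} and \eqref{eq:general_formula_analytic_continuation_C-} is correct, but it follows a genuinely different route from the paper. The paper starts from the damped time-domain identity \eqref{eq:laplace_transform_C+} (a convolution along horizontal lines), inserts the resolvent formulae $\widetilde{A^+}(z)=A_1^*(z-A_2)^{-1}A_1$, $\widetilde{B^-}(z)=B_1^*(z+B_2)^{-1}B_1$, and rotates the contour by integrating the operator-valued integrand over the large rectangle $\mathscr{C}_L$ of Figure~\ref{fig:contour_analytic_SigmaGW}, with explicit $\log(L)/L$ estimates on the far sides, establishing the identity first for $\omega>0$ and then for all $\omega$ by analyticity. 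You instead encode $C^+$ through the product spectral measure of $A_2$ and $B_2$, writing $\langle f|\widetilde{C^+}(z)|g\rangle$ as a Cauchy transform of a finite complex measure $\sigma_{f,g}$ supported in $[a,\infty)\times[b,\infty)$, and reduce the whole statement to a scalar residue identity (whose sign and pole bookkeeping are correct: with $\nu'\in(-b,a-\nu)$ the pole $z'=\lambda-z$ lies to the right of the line and $z'=-\mu$ to the left) plus Fubini against $|\sigma_{f,g}|\otimes\rd\omega'$. This buys you something: no large-contour estimates, all $\omega\in\R$ treated at once, and the analytic continuation of $\widetilde{C^\pm}$ to $\C\setminus[a+b,\infty)$ (resp. $\C\setminus(-\infty,-(a+b)]$) falls out of the measure representation. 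The price is two asserted-but-unproved steps that you should write out: the total-variation bound $|\sigma_{f,g}|(\R^2)\le K^2\|A_1\|^2\|f\|\,\|g\|$ (true, via Cauchy--Schwarz in $\fS_2$ over product partitions, using $B_1\overline f,\,(B_1\overline g)^*\in\fS_2$), and the identification of the inner spectral integral at fixed $\omega'$ with $\langle f|\widetilde{A^+}(z+z')\odot\widetilde{B^-}(z')|g\rangle$, i.e. interchanging the spectral integrals with the trace defining $\odot$.

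There is, however, a concrete gap in your deduction of \eqref{eq:general_formula_analytic_continuation_C}: summing the two representations only produces the terms $\widetilde{A^+}\odot\widetilde{B^-}$ and $\widetilde{A^-}\odot\widetilde{B^+}$, whereas the right-hand side of \eqref{eq:general_formula_analytic_continuation_C} contains in addition the cross terms $\widetilde{A^+}\odot\widetilde{B^+}$ and $\widetilde{A^-}\odot\widetilde{B^-}$, whose $\omega'$-integrals must be shown to vanish. The paper handles this by observing that $A^+(\tau)\odot B^+(-\tau)=A^-(\tau)\odot B^-(-\tau)=0$ (disjoint time supports) and concatenating; in your framework it is again a one-line residue argument (both poles on the same side of the line $\Re z'=\nu'$), but it requires simultaneously $\nu'<a-\nu$ and $\nu'>-a-\nu$, i.e. $\nu+\nu'$ in the real analyticity gap of $\widetilde{A}$. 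For the same reason your parenthetical claim that the common strip of validity is nonempty "precisely when $b>0$ and $a+b>0$" is inaccurate: the intersection of the admissible $\nu'$-ranges for \eqref{eq:general_formula_analytic_continuation_C+} and \eqref{eq:general_formula_analytic_continuation_C-} is nonempty only if, in addition, $a>0$ and $|\nu|<a+b$ (this is consistent with how the result is used in Section~\ref{ssec:analytical_continuation}, where $\nu+\nu'$ is kept inside $(\varepsilon_N,\varepsilon_{N+1})$). As written, the proof of \eqref{eq:general_formula_analytic_continuation_C} is therefore incomplete, though easily repaired with the tools you already have.
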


The proof of Theorem~\ref{thm:analytic_continuation_kernel_product} can be read in Section~\ref{sec:thm:analytic_continuation_kernel_product}. The choices of $\nu,\nu'$ ensure that the function $\omega' \mapsto \widetilde{A^+}\big(\nu+\nu'+\ri(\omega+\omega')\big)$ is in $L^p(\R_\omega,\mathcal{B}(\cH))$ for any $p > 1$, while, for any $f,g\in\cH$, the function $\omega' \mapsto g\widetilde{B^-}(\nu'+\ri\omega')\overline{f}$ is in $L^p(\R_\omega,\fS_1(\cH))$ for any $p > 1$. Therefore, in view of~\eqref{eq:general_formula_analytic_continuation_C+}, the function $\omega \mapsto \widetilde{C^+}(\nu + \ri\omega)$ is in $L^p(\R_\omega,\mathcal{B}(\cH))$ for any $p > 1$. Similar results hold for $\omega \mapsto \widetilde{C^-}(\nu + \ri\omega)$ and $\omega \mapsto \widetilde{C}(\nu + \ri\omega)$.

%There are situations in which $C(\tau) = \ri A(\tau) \odot B(\tau)$, as in Section~\ref{sec:self-energy} for instance. In this case, $C^+(\tau) = \ri A^+(\tau) \odot B^+(\tau)$ and $C^-(\tau) = \ri A^-(\tau) \odot B^-(\tau)$. Analytic continuation formulae similar to the ones of Theorem~\ref{thm:analytic_continuation_kernel_product} are obtained, upon replacing $\widetilde{B^-}(\nu'+\ri\omega')$ by $\widetilde{B^+}(-\nu'-\ri\omega')$ in~\eqref{eq:general_formula_analytic_continuation_C+}, $\widetilde{B^+}(\nu'+\ri\omega')$ by $\widetilde{B^-}(-\nu'-\ri\omega')$ in~\eqref{eq:general_formula_analytic_continuation_C-} and $\widetilde{B}(-\nu'-\ri\omega')$ by $\widetilde{B}(-\nu'-\ri\omega')$ in~\eqref{eq:general_formula_analytic_continuation_C}. The conditions on $\nu,\nu'$ are however unchanged.

\medskip

Let us conclude this section by deducing interesting properties from the analytic continuation results given by Theorem~\ref{thm:analytic_continuation_kernel_product} (see Section~\ref{sec:corr:analytic_C} for the proof).

\begin{corollary}
  \label{corr:analytic_C}
  Assume that the conditions of Theorem~\ref{thm:analytic_continuation_kernel_product} hold. Then, 
  \begin{equation}
    \label{eq:imaginary_widehat_C}
  \begin{aligned}
    \Supp \left(\Im \widehat{C^+} \right) \subset \left[a+b,+\infty\right), \qquad & \Im \widehat{C^+} \geq 0, \\
    \Supp \left(\Im \widehat{C^-} \right) \subset \left(-\infty, -(a+b)\right], \qquad & \Im \widehat{C^-} \geq 0,
  \end{aligned}
  \end{equation}
  so that
  \[
  \Supp \left(\Im \widehat{C} \right) \subset \R \backslash \left( -(a+b),a+b \right), \qquad \Im \widehat{C} \geq 0.
  \]
  Moreover,
  \begin{equation}
   \label{eq:Sigma_limit_real}
  \begin{aligned}
  \widehat{C^+} & = \Re \widehat{C^+} \ge 0 \quad \text{on} \quad \big(-\infty,a+b\big), \\
  \widehat{C^-} & = \Re \widehat{C^-} \ge 0 \quad \text{on} \quad \big(-(a+b),+\infty\big).
  \end{aligned}
  \end{equation}
  In particular, $\widehat{C}  = \Re \widehat{C} \geq 0$ on $\big( -(a+b),a+b \big)$.
\end{corollary}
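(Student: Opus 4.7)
The plan is to derive an explicit Nevanlinna-type spectral representation of $\widetilde{C^+}(z)$, from which all assertions become transparent; the reasoning for $\widetilde{C^-}$ will be entirely analogous and the statements on $\widehat{C}$ will follow by summation. First, I combine the explicit forms of $A^+$ and $B^-$ (plus the identity $\Theta(\tau)^2=\Theta(\tau)$) with the definition~\eqref{eq:defCtau} to obtain
\[
C^+(\tau) = \ri \, \Theta(\tau) \left[A_1^* \re^{-\ri\tau A_2} A_1\right] \odot \left[B_1^* \re^{-\ri\tau B_2} B_1\right].
\]
By the definition of the kernel product and cyclicity of the trace, for any $f \in \cH$ (viewed as a multiplication operator),
\[
\left\langle f, C^+(\tau) f \right\rangle = \ri \, \Theta(\tau) \, \Tr\!\left(\re^{-\ri\tau A_2} \, M \, \re^{-\ri\tau B_2} \, M^*\right), \qquad M := A_1 f B_1^*.
\]

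Diagonalizing $A_2$ and $B_2$ by the spectral theorem and computing the Laplace transform for $z \in \UU$ (Fubini being justified by $\gamma_f([a,\infty)\times[b,\infty)) = \|M\|_{\fS_2}^2 < \infty$) yields
\[
\left\langle f, \widetilde{C^+}(z) f \right\rangle = \int_{[a,\infty)\times[b,\infty)} \frac{\rd\gamma_f(\lambda,\mu)}{\lambda+\mu-z},
\quad \rd\gamma_f(\lambda,\mu) := \Tr\!\left(\rd E^{A_2}_\lambda \, M \, \rd E^{B_2}_\mu \, M^*\right),
\]
where $\rd\gamma_f$ is a positive finite Borel measure, positivity on rectangles following from
\[
\Tr\!\left(E^{A_2}_I \, M \, E^{B_2}_J \, M^*\right) = \left\|E^{A_2}_I \, M \, E^{B_2}_J\right\|_{\fS_2}^2 \ge 0.
\]
From this representation, $\widetilde{C^+}$ manifestly extends analytically to $\C \setminus [a+b,+\infty)$. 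The boundary values from above on $\R_\omega$, combined with the pushforward of $\rd\gamma_f$ by $(\lambda,\mu)\mapsto \lambda+\mu$, give
\[
\Im\left\langle f, \widehat{C^+}(\omega) f \right\rangle = \pi \int_{\lambda+\mu=\omega} \rd\gamma_f(\lambda,\mu) \ge 0,
\qquad \Supp \, \Im\widehat{C^+} \subset [a+b,+\infty),
\]
while for $\nu < a+b$, $\widehat{C^+}(\nu) = \widetilde{C^+}(\nu)$ is self-adjoint with
\[
\left\langle f, \widehat{C^+}(\nu) f \right\rangle = \int \frac{\rd\gamma_f(\lambda,\mu)}{\lambda+\mu-\nu} \ge 0,
\]
since $\lambda+\mu-\nu > 0$ on the support. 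The analogous argument for $C^-$ gives $\left\langle f, \widetilde{C^-}(z) f \right\rangle = \int \rd\gamma_f(\lambda,\mu)/(\lambda+\mu+z)$, from which the symmetric statements on $(-\infty,-(a+b)]$ and $(-(a+b),+\infty)$ follow; the claims for $\widehat{C} = \widehat{C^+} + \widehat{C^-}$ then follow immediately by summation on the overlap $(-(a+b),a+b)$, where both imaginary parts vanish and both real parts are non-negative.

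The main technical hurdle is the rigorous interpretation of $\rd\gamma_f$ when $A_2$ or $B_2$ have continuous spectrum: one must verify that the premeasure $\gamma_f(I\times J) := \Tr(E^{A_2}_I M E^{B_2}_J M^*)$ extends to a genuine finite positive Borel measure on $[a,\infty)\times[b,\infty)$ (Carath\'eodory extension, leveraging finiteness via $\|M\|_{\fS_2} < \infty$), and that its pushforward defines a legitimate element of $H^{-s}(\R_\omega)$ for $s > 1/2$. A spectral-free alternative is also available: self-adjointness of $\widetilde{C^+}(\nu)$ for $\nu < a+b$ follows from~\eqref{eq:general_formula_analytic_continuation_C+} by taking adjoints (using Lemma~\ref{lem:adjoint_kernel_product} together with the change of variables $\omega' \mapsto -\omega'$), and once $\Im\widehat{C^+} \ge 0$ has been established, Lemma~\ref{lemma:Im_positive_particle} directly yields $\widehat{C^+} = \Re\widehat{C^+} \ge 0$ on $(-\infty,a+b)$ for free.
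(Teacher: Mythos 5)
Your argument is correct, and it follows a genuinely different route from the paper. The paper's proof starts from the contour-deformed convolution formula \eqref{eq:general_formula_analytic_continuation_C+} evaluated at real $\nu<a+b$ (i.e.\ $\omega=0$), inserts the resolvents, observes that the terms odd in $\omega'$ cancel, and recognizes the remaining integrand as a kernel product of positive self-adjoint operators, so that Lemmas~\ref{lem:positivity_kernel_product} and~\ref{lem:adjoint_kernel_product} give $\widehat{C^+}(\nu)=\Re\widehat{C^+}(\nu)\ge 0$ for $\nu<a+b$; Lemma~\ref{lemma:Im_positive_particle} then converts this into $\Im\widehat{C^+}\ge 0$. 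You instead work directly in the time domain and obtain a Herglotz-type representation $\bra f|\widetilde{C^+}(z)|f\ket=\int (s-z)^{-1}\,\rd\sigma_f(s)$ with $\sigma_f\ge 0$ supported in $[a+b,+\infty)$ (and the same $\sigma_f$ for $C^-$, consistent with $C^-(\tau)=C^+(-\tau)$), from which all assertions — support, sign of the imaginary part, realness and positivity below the threshold, and even the analytic continuation — are read off at once; this is stronger structural information than the corollary itself, and it bypasses Theorem~\ref{thm:analytic_continuation_kernel_product} entirely. The technical hurdle you flag (constructing $\gamma_f$ and identifying the boundary value in $H^{-s}$) is real but closes cleanly without Carath\'eodory extension: identify $\fS_2(\cH_b,\cH_a)\cong\cH_a\otimes\overline{\cH_b}$ and note that $\tau\mapsto \re^{-\ri\tau A_2}M\re^{-\ri\tau B_2}$ is the unitary group generated by the self-adjoint operator $X\mapsto A_2X+XB_2$, bounded below by $a+b$, so that $\sigma_f$ is simply the spectral measure of this operator at the vector $M$, and the passage to boundary values is then verbatim the finite-measure argument used in the proof of Proposition~\ref{prop:resolvent}. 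One small caveat: your ``spectral-free alternative'' only delivers self-adjointness of $\widehat{C^+}(\nu)$ for $\nu<a+b$ (adjoints plus $\omega'\mapsto-\omega'$), not its positivity, and it presupposes $\Im\widehat{C^+}\ge 0$; so it complements, but does not replace, either your Herglotz representation or the paper's parity computation, which is precisely where the paper extracts the sign of $\Re\widehat{C^+}$ before invoking Lemma~\ref{lemma:Im_positive_particle} in the converse direction.
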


%---------------------------------------------------
\subsection{Second quantization formalism}
\label{sec:2ndQ}

We recall here the definitions of the main mathematical objects used in the second quantization formalism, which are used to define -- at least formally -- the kernels of the operators arising in the GW method. More details about the second quantization formalism can be found \textit{e.g.} in~\cite{Derezinski1997}.

\medskip

We consider a system of $N$ electrons in Coulomb interaction subjected to a time-independent real-valued external potential $v_{\ext} \in L^2(\R^3, \R) + L^\infty(\R^3, \R)$. In order to study the response of the system when electrons are added or removed, we embed this $N$-body problem in a more general framework where the number of electrons is not prescribed. We denote by $\cH_1=L^2(\R^3,\C)$ the one-electron state space (the spin variable is omitted for simplicity), by $\cH_N=\bigwedge^N \cH_1$ the $N$-electron state space, and by $\dps \FF=\oplus_{N=0}^{+\infty} \cH_N$ the Fock space, with the convention that $\cH_0=\C$. The Hamiltonian of the $N$-particle system reads
\begin{equation} \label{eq:H_N}
H_N=-\frac 12 \sum_{i=1}^N \Delta_{\br_i} + \sum_{i=1}^N v_{\rm ext}(\br_i)+\sum_{1 \le i < j \le N} \frac{1}{|\br_i-\br_j|},
\end{equation}
and the corresponding Hamiltonian acting on the Fock space is denoted by $\HH$, so that $H_N=\HH|_{\cH_N}$.\\
 
For $f \in \cH_1$, the creation and annihilation operators $a^\dagger(f)$ and $a(f)$ are the bounded operators on the Fock space $\FF$ defined by
$$
\forall N \in \N, \quad a^\dagger(f)|_{\cH_N} \in \cB(\cH_N,\cH_{N+1}), \qquad a(f)|_{\cH_{N+1}} \in \cB(\cH_{N+1},\cH_{N}), 
$$
and for all $\Phi_N \in \cH_N$,
\begin{equation}
\label{eq:annihilation_op}
\begin{aligned}[]
  [a^\dagger(f)\Phi_N](\br_1, \ldots \br_{N+1}) & 
  := \frac{1}{\sqrt{N+1}} \sum_{j=1}^{N+1} (-1)^{j+1} f(\br_{j}) \Phi_N(\br_{1}, \ldots, \br_{j-1},\br_{j+1},\ldots,\br_{N+1}), \\
  [a(f)\Phi_{N}](\br_1, \ldots \br_{N-1}) & 
  := \sqrt{N} \int_{\R^3} \overline{f}(\br) \Phi_{N}( \br, \br_1, \ldots, \br_{N-1}) \, \rd \br.
\end{aligned}
\end{equation}
The creation and annihilation operators satisfy $a^\dagger(f) = a(f)^*$ and the anticommutation relations
\begin{equation} \label{eq:anticommutation_a}
\forall (f,g) \in \cH_1 \times \cH_1, \quad [a(f),a(g)]_+=0, \quad [a^\dagger(f),a^\dagger(g)]_+=0, \quad [a(f),a^\dagger(g)]_+=\langle f|g \rangle \mathds{1}_{\FF},
\end{equation}
where $[A,B]_+=AB+BA$ is the anti-commutator of the operators $A$ and $B$, and where $\mathds{1}_{\FF}$ is the identity operator on $\FF$. In particular,
$$
a^\dagger(f)a(f)+a(f)a^\dagger(f) = \| f\|^2_{\cH_1} \, \mathds{1}_{\FF} .
$$
The mappings $\cH_1 \ni f \mapsto a^\dagger(f) \in \cB(\FF)$ and $\cH_1 \ni f \mapsto a(f) \in \cB(\FF)$ are respectively linear and antilinear. 

In most physics articles and textbooks, the GW formalism is presented in terms of the quantum field operators in the position representation $\Psi(\br)$ and $\Psi^\dagger(\br)$. We recall that, formally,
\[
\forall \br \in \R^3, \ \Psi^\dagger(\br) = \sum_{i=1}^\infty \overline{\phi_i(\br)}  a^\dagger(\phi_i) , 
\qquad 
\Psi(\br) = \sum_{i=1}^\infty \phi_i(\br)  a(\phi_i),
\]
where $\left\{ \phi_i \right\}_{i \in \N}$ is any orthonormal basis of $\cH_1$. Note that for any $f \in \cH_1$,
\[
	\int_{\R^3} \Psi^\dagger(\br) f(\br) \rd \br = a^\dagger(f)
	\quad \text{and} \quad
	\int_{\R^3} \Psi(\br) f(\br) \rd \br = a(\overline{f}).
\]
In the second-quantization formalism, $\HH$ reads,
\begin{equation*} %\label{eq:Hamiltonian}
	\HH  = \int_{\R^3} \Psi^\dagger( \br)  \left( -\frac 12 \Delta_\br + v_{\rm ext}(\br) \right) \Psi ( \br) \, \rd \br + \dfrac12 \int_{(\R^3)^2}  \Psi^\dagger( \br) \Psi^\dagger( \br')  |\br-\br'|^{-1}  \Psi(\br') \Psi ( \br) \, \rd \br \, \rd \br' .
\end{equation*} 

Finally, we introduce the Heisenberg representation of the annihilation and creation field operators $\Psi_{\rm H}(\br t)$ and $\Psi^{\dagger}_{\rm H}(\br t)$, formally defined by 
\[
\Psi^\dagger_{\rm H}(\br t) = \re^{\ri t \HH} \Psi^\dagger(\br) \re^{- \ri t \HH} \quad \text{and} \quad \Psi_{\rm H}(\br t) = \re^{\ri t \HH} \Psi(\br) \re^{-\ri t \HH}.
\]
Note that, still formally, $\Psi_{\rm H}(\br t)^\ast =\Psi^\dagger_{\rm H}(\br t)$, and
\begin{equation} 
  \label{eq:Psi_dagger}
  \Psi_{\rm H}^\dagger(\br t) \big|_{\cH_N} = \re^{\ri t H_{N+1}} \Psi^\dagger(\br) \, \re^{- \ri t H_N},
  \qquad 
  \Psi_{\rm H}(\br t) \big|_{\cH_{N+1}} = \re^{\ri t H_{N}} \Psi(\br) \, \re^{-\ri t H_{N+1}} .
\end{equation}

%---------------------------------------------------
\section{Operators arising in the GW method for finite systems}
\label{sec:operators}

This section aims at providing rigorous mathematical definitions of the operators arising in the GW method. For each one of them, we first recall the formal definition given in the physics literature, using the second quantization formalism. We then explain  how to recast this formal definition into a (formally equivalent) satisfactory mathematical definition involving only well-defined operators on the $k$-particle spaces $\cH_k$, with $k=1,N-1,N,N+1$, the Coulomb space $\cC$ (defined in Section~\ref{sec:def_rho_H}), and its dual $\cC'$. We finally establish some mathematical properties of the operator under consideration, using our definition as a starting point. Unless otherwise specified, scalar products and norms are by default considered on~$\cH_1 = L^2(\R^3, \C)$.

We first need to make some assumptions on the physical system under consideration (see Section~\ref{sec:assumptions_reference}). We can then define the one-body Green's functions in Section~\ref{subsec:GF}. Linear response operators are considered in Section~\ref{sec:linear_response}, which culminates with the definition of the dynamically screened interaction operator~$W$. We finally introduce the self-energy operator in Section~\ref{subsec:SE}.

\subsection{Assumptions on the reference $N$-electron system}
\label{sec:assumptions_reference}

Recall that the reference system with $N$ electrons is described by the Hamiltonian $H_N$ on $\cH_N$ defined by (\ref{eq:H_N}). Our first assumption concerns the ground state energy $E_N^0$ of the reference system described by $H_N$:
\begin{equation*}% \label{eq:non-degenerate_GS}
  \boxed{\mbox{\textbf{Hyp. 1:}  The ground state energy $E^0_N$ is a simple discrete eigenvalue of~$H_N$.}}
\end{equation*}
In this case, the normalized ground state wave-function $\Psi_N^0$ of the reference system is unique up to a global phase. We also define the energy of the first excited state:
\[
E_N^1 = \min \Big( \sigma(H_N) \backslash \{ E_N^0 \} \Big).
\]
Together with $\Psi_N^0$, we introduce the ground state one-body reduced density-matrix
\begin{equation}\label{eq:def-DM}
\gamma_N^0(\br,\br') := N   \int_{(\R^3)^{N-1}} \Psi_N^0(\br  , \br_2,\cdots, \br_N) \overline{\Psi_N^0(\br', \br_2,\cdots, \br_N)} \, \rd \br_2 \cdots \rd \br_N,
\end{equation}
the ground state density
$$
\rho_N^0(\br) := \gamma_N^0(\br,\br) = N \int_{(\R^3)^{N-1}} |\Psi_N^0(\br, \br_2,\cdots, \br_N)|^2 \, \rd \br_2 \cdots \rd \br_N,
$$
and the ground state two-body density
\begin{equation} \label{def:2body}
\rho_{N,2}^0(\br,\br') := \frac{N (N-1)}{2}  \int_{(\R^3)^{N-2}} |\Psi_N^0(\br , \br', \br_3,\cdots, \br_N)|^2 \, \rd \br_3 \cdots \rd \br_N
\end{equation}
of the reference $N$-electron system.

We recall in the following proposition some important properties on $\Psi_N^0$, $\gamma_N^0$, $\rho_N^0$ and $\rho_{N,2}^0$ (most of the assertions below are well known; we provide elements of proof in Section~\ref{sec:proof:prop:Psi_N^0} for the less standard statements). Note that both $\gamma_N^0(\br,\br')$ and $\rho_{N,2}^0(\br,\br')$ can be seen as the kernels of bounded operators on $\cH_1 = L^2(\R^3)$ that we also denote by $\gamma_N^0$ and $\rho_{N,2}^0$.  

\begin{proposition}[Properties of the ground state] \label{prop:Psi_N^0} 
  Assume that $v_{\rm ext}$ is of the form
  \[
  	v_\ext (\br)= - \sum_{k=1}^M \dfrac{z_k}{| \br - \bR_k|},
  \]
  with $z_k \in \N^*$ and $\bR_k \in \R^3$ for all $1 \le k \le M$, and that {\bf Hyp. 1} is satisfied. Then,
  \begin{enumerate}[(1)]
  \item the ground state wave-function $\Psi_N^0$ can be chosen real-valued and $\Psi_N^0 \in H^2(\R^{3N})$;
  \item the ground state density $\rho_N^0$ is in $L^1(\R^3, \R) \cap L^\infty(\R^3, \R)$ and $\nabla \sqrt{\rho_N^0} \in \left( L^2(\R^3, \R) \right)^3$. Moreover, $\rho_N^0$ is continuous and everywhere positive on $\R^3$;
  \item the ground state one-body reduced density operator $\gamma_N^0$ is in
    \[
    \cK_N:=\left\{ \gamma_N \in \cS(\cH_1) \; \Big| \; 0 \le \gamma _N\le 1, \; \Tr_{\cH_1}(\gamma_N)=N, \; \Tr_{\cH_1}(|\nabla|\gamma_N|\nabla|) < \infty \right\},
    \]
    and satisfies 
    \begin{equation}
      \label{eq:recover_gamma_with_creation_annihilation}
      \forall (f,g) \in \cH_1 \times \cH_1, \qquad \langle f | \gamma_N^0| g \rangle = \langle \Psi_N^0 | a^\dagger(g) a(f) |\Psi_N^0\rangle_{\cH_N};
    \end{equation}
  \item the kernel $\gamma_N^0(\br,\br')$ satisfies the pointwise estimate $|\gamma_N^0(\br, \br')|^2 \leq \rho_N^0(\br) \rho_N^0(\br')$;
  \item the operator $\rho_{N,2}^0$ belongs to~$\cS(\cH_1)$, and $\| \rho_{N,2}^0 \|_{\cB(\cH_1)} \leq \dfrac{N-1}{2} \|\rho_N^0\|_{L^\infty}$.
  \end{enumerate}
\end{proposition}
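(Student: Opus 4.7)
The plan is to handle the five items in order, leaning heavily on standard regularity theory for atomic-type Schrödinger operators and on the antisymmetry of $\Psi_N^0$, and reserving most of the work for points (1), (2) and (5).

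For (1), since $H_N$ has real coefficients (the potentials $v_\ext$ and $|\br_i-\br_j|^{-1}$ are real), complex conjugation is an antiunitary symmetry that maps eigenfunctions to eigenfunctions with the same eigenvalue. Since $E_N^0$ is simple by \textbf{Hyp.~1}, one can write $\Psi_N^0 = \re^{\ri \theta}(\Re \Psi_N^0)/\|\cdot\|$ for some global phase, giving a real-valued representative. For the $H^2$ regularity, the Coulomb potential $v_\ext$ is in $L^2(\R^3)+L^\infty(\R^3)$ and likewise each two-body term $|\br_i-\br_j|^{-1}$ is infinitesimally $-\Delta$-bounded by Kato's inequality, so $H_N$ is self-adjoint on $D(H_N)=H^2(\R^{3N})\cap\bigwedge^N L^2(\R^3)$ and any eigenfunction lies in this domain.

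For (2), normalization of $\Psi_N^0$ gives $\|\rho_N^0\|_{L^1}=N$. The Hoffmann-Ostenhof-type inequality
\[
\int_{\R^3}|\nabla\sqrt{\rho_N^0}|^2\,\rd\br \;\le\; N\int_{\R^{3N}}|\nabla_{\br_1}\Psi_N^0|^2\,\rd\br_1\cdots\rd\br_N
\]
follows from Cauchy-Schwarz applied to $\nabla\rho_N^0 = 2N\,\Re\int \overline{\Psi_N^0}\nabla_{\br_1}\Psi_N^0$, and the right-hand side is finite since $\Psi_N^0\in H^2(\R^{3N})$. This yields $\sqrt{\rho_N^0}\in H^1(\R^3)$, hence $\rho_N^0\in L^1\cap L^3$ by Sobolev embedding. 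The $L^\infty$ bound and continuity are obtained by invoking Kato's regularity result: $\Psi_N^0$ has a continuous (in fact locally Lipschitz) representative on $\R^{3N}$, and by the elliptic equation satisfied by $\Psi_N^0$ away from nuclei one can bound $\Psi_N^0$ locally uniformly; integrating over the $N-1$ remaining variables (which happens on any bounded set by decay estimates on Schrödinger eigenfunctions, $|\Psi_N^0(\bx)|\le C \re^{-\alpha|\bx|}$ by Agmon) gives continuity and boundedness of $\rho_N^0$. Positivity everywhere comes from strong unique continuation for Schrödinger operators with Coulomb singularities: if $\rho_N^0(\br_0)=0$ then $\Psi_N^0(\br_0,\cdot)\equiv 0$, which (after antisymmetrization and standard UCP arguments) forces $\Psi_N^0\equiv 0$, contradicting normalization. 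I expect this part to be the second hardest because the $L^\infty$ and continuity claims are the ones most often sloppily stated in the literature.

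For (3) and (4), the identification of $\gamma_N^0$ as an operator in $\cK_N$ is classical: $\gamma_N^0 = \gamma_N^{0\,*}$ and $0\le\gamma_N^0\le 1$ are standard consequences of its definition from the antisymmetric $\Psi_N^0$; $\Tr\gamma_N^0=N$ is obtained by integrating the diagonal; the kinetic energy trace $\Tr(|\nabla|\gamma_N^0|\nabla|)=N\int|\nabla_{\br_1}\Psi_N^0|^2$ is finite since $\Psi_N^0\in H^2$. The identity $\langle f|\gamma_N^0|g\rangle=\langle\Psi_N^0|a^\dagger(g)a(f)|\Psi_N^0\rangle$ follows by a direct computation, writing $a(f)\Psi_N^0 = \sqrt{N}\int\overline{f}(\br)\Psi_N^0(\br,\cdot)\,\rd\br$ and using the definition of $a^\dagger(g)=a(g)^*$. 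The pointwise estimate in (4) is exactly Cauchy-Schwarz applied to the kernel representation (\ref{eq:def-DM}):
\[
|\gamma_N^0(\br,\br')|^2 \le N^2\left(\int|\Psi_N^0(\br,\underline{\br})|^2\,\rd\underline{\br}\right)\left(\int|\Psi_N^0(\br',\underline{\br})|^2\,\rd\underline{\br}\right) = \rho_N^0(\br)\rho_N^0(\br'),
\]
with $\underline{\br}=(\br_2,\ldots,\br_N)$.

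For (5), the kernel $\rho_{N,2}^0(\br,\br')$ is real, symmetric and nonnegative, so the induced integral operator is formally self-adjoint and positive. The key computation is the marginal identity
\[
\int_{\R^3}\rho_{N,2}^0(\br,\br')\,\rd\br' \;=\; \frac{N-1}{2}\rho_N^0(\br),
\]
obtained by integrating out one more variable in (\ref{def:2body}); by symmetry the same holds in $\br$. Applying Schur's test with bound $\|\tfrac{N-1}{2}\rho_N^0\|_{L^\infty}$ on both marginals then gives $\|\rho_{N,2}^0\|_{\cB(\cH_1)} \le \frac{N-1}{2}\|\rho_N^0\|_{L^\infty}$, and this bound is finite by (2). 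The main obstacle overall is the continuity and strict positivity in (2), which requires combining Kato's pointwise regularity theorem, Agmon decay, and strong unique continuation; the remaining steps are essentially algebraic manipulations.
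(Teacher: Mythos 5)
Your handling of items (1), (3), (4) and (5) is essentially the paper's proof: realness of $\Psi_N^0$ from the simplicity of $E_N^0$ together with the fact that $\Re \Psi_N^0$ and $\Im \Psi_N^0$ are themselves eigenfunctions, $H^2$ regularity because $H^2(\R^{3N})\cap\cH_N$ is the domain of $H_N$, the direct computation for \eqref{eq:recover_gamma_with_creation_annihilation}, Cauchy--Schwarz for the pointwise kernel bound, and, for (5), your Schur test based on the marginal identity $\int_{\R^3}\rho_{N,2}^0(\br,\br')\,\rd\br'=\frac{N-1}{2}\rho_N^0(\br)$ is the same estimate the paper obtains by applying Cauchy--Schwarz directly to the sesquilinear form $\langle f|\rho_{N,2}^0|g\rangle$. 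For item (2) the paper is deliberately terser than you: boundedness and continuity of $\rho_N^0$ are obtained by citing the regularity and exponential-decay literature, and the remaining claims (the Hoffmann--Ostenhof bound, strict positivity) are treated as classical.

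The one step of yours that does not hold as written is the strict positivity argument. From $\rho_N^0(\br_0)=0$ you correctly deduce $\Psi_N^0(\br_0,\cdot)\equiv 0$, i.e.\ vanishing of $\Psi_N^0$ on the codimension-three slice $\{\br_1=\br_0\}$ of $\R^{3N}$ (and on its permuted copies), but this is not a situation in which strong unique continuation applies: that principle requires vanishing on an open set or vanishing to infinite order at a point, and the zero set of an eigenfunction can perfectly well contain such lower-dimensional sets, so no contradiction with $\|\Psi_N^0\|=1$ is reached this way. The standard route is to work directly with $u=\sqrt{\rho_N^0}$: the Hoffmann--Ostenhof differential inequality shows that $u\in H^1(\R^3)$ is a nonnegative distributional supersolution of a one-body Schr\"odinger operator with Kato-class (Coulomb-type) potential, so Harnack's inequality (equivalently, the strong maximum principle for such operators) gives either $u\equiv 0$ or $u>0$ everywhere, and $\|u\|_{L^2}^2=N$ excludes the first alternative. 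Alternatively, simply cite the literature for positivity, which is exactly what the paper does.
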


Much finer regularity results on $\Psi_N^0$ are available~\cite{Fournais2002, Fournais2005, Yserentant2010}, but are not needed for our purpose. Similar results hold true if $v_\ext$ is replaced by a potential generated by smeared nuclei or pseudo-potentials.
\medskip

Our second assumption is concerned with the (discrete) convexity of $N \mapsto E_N^0$. We assume that $N \ge 1$, and that (with the convention $E_0^0 = 0$ in the case $N=1$)
\begin{equation*}% \label{eq:non-degenerate_GS}
  \boxed{\mbox{\textbf{Hyp. 2:}  $ E_N^0 - E_{N-1}^0 < E_{N+1}^0 - E_N^0$}. }
\end{equation*}
In this case, any real number $\mu$ such that $E_N^0 - E_{N-1}^0 < \mu < E_{N+1}^0 - E_N^0$ is an admissible chemical potential (Fermi level) of the electrons for the ground state of the reference system. The physical relevance of this assumption is discussed for instance in~\cite[Section~4.2]{Farid99}.
 
%---------------------------------------------------
\subsection{Green's functions} 
\label{subsec:GF} 

We begin our journey in the GW formalism with Green's functions. The GW method has been designed from the equation of motion for the time-ordered one-body Green's function $G$ \cite{Hedin1965}, which is the concatenation of two meaningful physical objects: the particle Green's function $G_{\rm p}$ and the hole Green's function $G_{\rm h}$.

\subsubsection{The particle Green's function $G_{\rm p}$}

\paragraph{Rigorous definition of the particle Green's function.}
The particle (or forward, or retarded) Green's function is formally defined by (see for instance~\cite[Section~7]{FetterWalecka})
\begin{equation} \label{eq:forwardGF_physics}
  \cG_{\rm p}(\br t,\br' t') := -\ri \Theta(t-t') \ \bra \Psi_N^0 | \Psi_{\rm H}(\br t) \Psi^\dagger_{\rm H}(\br't') |\Psi_N^0 \ket,
\end{equation}
where $\Theta$ is the Heaviside function~\eqref{eq:Heaviside}, and $\Psi_{\rm H}(\br t)$ and $\Psi^{\dagger}_{\rm H}(\br t)$ are the Heisenberg representations of the annihilation and creation field operators introduced in Section~\ref{sec:2ndQ}. As $\Psi_N^0 \in \cH_N$, we can replace $\Psi(\br t)$ and $\Psi^\dagger(\br t)$ by their expressions (\ref{eq:Psi_dagger}):
\begin{align*}
  \cG_{\rm p}(\br t,\br' t') & = -\ri \Theta(t-t') \ \bra \Psi_N^0 | \re^{\ri t H_{N}} \Psi(\br) \re^{- \ri (t-t') H_{N+1}} \Psi^\dagger(\br') \re^{- \ri t' H_{N}} | \Psi_{N}^0 \ket \\
  & =  -\ri \Theta(t-t') \ \bra \Psi_N^0 | \Psi(\br) \re^{- \ri (t-t') (H_{N+1} - E_N^0)} \Psi^\dagger(\br')  | \Psi_{N}^0 \ket.
\end{align*}
As $\cG_{\rm p}$ only depends on the time difference $t - t'$, it is sufficient to study the function $G_{\rm p} (\br, \br', \tau) := \cG_{\rm p} (\br \tau, \br' 0)$. We then notice that, for all $f \in \cH_1$,
\[
\int_{\R^3}   \Psi^\dagger(\br')   | \Psi_{N}^0\rangle  f(\br') \, \rd \br' =   a^\dagger(f) |\Psi_N^0\rangle.
\]
Introducing
\[
\begin{array}{llll}
  A_+^\ast : & \cH_1 & \to & \cH_{N+1} \\
  & f & \mapsto & a^\dagger(f) |\Psi_N^0\rangle
\end{array}
\]
and $A_+ = (A_+^\ast)^*$, we observe that $G_{\rm p}(\br,\br',\tau)$ is formally the kernel of the following one-body operator.

\begin{definition}[Particle Green's function]
  The particle Green's function is defined as
  \begin{equation}
    \label{eq:defGp}
    G_{\rm p}(\tau) := - \ri \Theta(\tau) \ A_+ \re^{- \ri \tau (H_{N+1} - E_N^0)} A_+^\ast.
\end{equation}
\end{definition}
%We will consider the above relation as the rigorous mathematical definition of the particle Green's function. 

\paragraph{First properties of the particle Green's function.}
The study of $G_{\rm p}$ can be decomposed into the study of the operators $A_+$ and $\re^{- \ri \tau (H_{N+1} - E_N^0)}$. The latter is clearly bounded on~$\cH_{N+1}$. As for the operator $A_+^\ast$, we deduce from (\ref{eq:anticommutation_a}) and~\eqref{eq:recover_gamma_with_creation_annihilation} that
\begin{equation*}
  \bra a^\dagger(f) \Psi_N^0 | a^\dagger(g) \Psi_N^0 \ket = \bra  \Psi_N^0 | a(f) a^\dagger(g) | \Psi_N^0 \ket = \bra f | g\ket  - \bra  \Psi_N^0 | a^\dagger(g) a(f) | \Psi_N^0 \ket =  \bra f | 1 - \gamma_N^0 | g \ket,
\end{equation*}
or equivalently,
\begin{equation} \label{eq:A+bounded}
  A_+ A_+^\ast = \mathds{1}_{\cH_1} - \gamma_N^0. 
\end{equation} 
Hence, $A_+^\ast$ is a bounded operator from $\cH_1$ to $\cH_{N+1}$, and $A_+$ is a bounded operator from $\cH_{N+1}$ to $\cH_1$. In fact, since
\[
\| A_+^\ast f\|^2_{\cH_{N+1}} = \left\langle f| (\mathds{1}_{\cH_1}-\gamma_N^0)|f \right\rangle = \left\| (\mathds{1}_{\cH_1}-\gamma_N^0)f \right\|_{\cH_1}^2,
\]
it holds $\| A_+^\ast \|_{\cB(\cH_1,\cH_{N+1})} = 1$. The following properties are obtained as a direct corollary of Proposition~\ref{prop:analytic_extension_causal_time_evolution}.

\begin{proposition}[Properties of the particle Green's function] 
  \label{lemma:forward_GF}
  The family $(G_{\rm p}(\tau))_{\tau\in \R}$ defines a bounded causal operator on $\cH_1$. The real and imaginary parts of its time-Fourier transform are in $H^{-s}(\R_\omega, \cB(\cH_1))$ for all $s > 1/2$, and are given by
  \begin{equation} 
    \label{eq:hat_Gp}
    \Re \widehat{G_\rp}  = A_+  \pv \left( \dfrac{1}{\cdot - (H_{N+1} - E_{N}^0)} \right) A^\ast_+ 
    \quad \text{and} \quad 
    \Im \widehat{G_\rp}  = - \pi A_+ P^{H_{N+1} - E_N^0} A_+^\ast.
  \end{equation}
  The analytic operator-valued function $\widetilde G_{\rm p}$ defined in the upper half-plane by
  \begin{equation}
    \label{eq:G_p_tilde}
    \forall z \in \UU, \quad \widetilde{G_{\rm p}}(z)  := A_+  \dfrac{1}{z - (H_{N+1} - E_{N}^0)} A_+^\ast
  \end{equation}
  is the Laplace transform of $G_{\rm p}$ and satisfies 
  \[
  \widehat{G_{\rm p}} = \lim\limits_{\eta \to 0^+} \widetilde{G_{\rm p}}(. + \ri \eta) \quad \mbox{in } H^{-s}(\R_\omega,\cB(\cH_1)) \quad \text{for all} \quad s > 1/2.
  \] 
\end{proposition}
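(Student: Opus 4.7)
The plan is to write $G_{\rm p}(\tau) = A_+ A_{\rm c}(\tau) A_+^*$, where $A_{\rm c}(\tau) := -\ri\Theta(\tau)\re^{-\ri\tau(H_{N+1} - E_N^0)}$ is the bounded causal operator on $\cH_{N+1}$ of the specific form covered by Proposition~\ref{prop:analytic_extension_causal_time_evolution}, and then transfer the conclusions of that proposition to $G_{\rm p}$ by conjugation with $A_+$ and $A_+^*$. For the bounded-causal-operator claim, since $H_{N+1} - E_N^0$ is self-adjoint on $\cH_{N+1}$ the propagator $\re^{-\ri\tau(H_{N+1} - E_N^0)}$ is unitary, and the bound $\|A_+\|_{\cB(\cH_{N+1}, \cH_1)} = \|A_+^*\|_{\cB(\cH_1, \cH_{N+1})} = 1$ following from~\eqref{eq:A+bounded} gives $\|G_{\rm p}(\tau)\|_{\cB(\cH_1)} \le 1$ for every $\tau \in \R$; together with $G_{\rm p}(\tau) = 0$ for $\tau < 0$, this is exactly Definition~\ref{def:particle_type_operator}, so Proposition~\ref{lem:particle-type} already ensures $\widehat{G_{\rm p}} \in H^{-s}(\R_\omega, \cB(\cH_1))$ for all $s > 1/2$.

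The analytic content comes from applying Proposition~\ref{prop:analytic_extension_causal_time_evolution} to $A_{\rm c}$ with the self-adjoint operator $H := H_{N+1} - E_N^0$ on $\cH_{N+1}$. This directly yields $\widetilde{A_{\rm c}}(z) = (z - (H_{N+1} - E_N^0))^{-1}$ on $\UU$, the convergence $\widetilde{A_{\rm c}}(\cdot + \ri\eta) \to \widehat{A_{\rm c}}$ in $H^{-s}(\R_\omega, \cB(\cH_{N+1}))$ as $\eta \to 0^+$ for every $s > 1/2$, and the identities $\Re\widehat{A_{\rm c}} = \pv\!\left(1/(\cdot - (H_{N+1} - E_N^0))\right)$ and $\Im\widehat{A_{\rm c}} = -\pi P^{H_{N+1} - E_N^0}$ in $H^{-1}(\R_\omega, \cB(\cH_{N+1}))$.

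To transfer these to $G_{\rm p}$, I would introduce the map $\Phi : X \mapsto A_+ X A_+^*$, which is linear and bounded with operator norm at most $1$ from $\cB(\cH_{N+1})$ to $\cB(\cH_1)$. By applying $\Phi$ pointwise in $\tau$ and composing with the inverse/direct Fourier transform, $\Phi$ extends to a bounded linear map $H^s(\R_\omega, \cB(\cH_{N+1})) \to H^s(\R_\omega, \cB(\cH_1))$ for every $s \in \R$ that by construction commutes with the time-Fourier transform; moreover $\Phi(X)^* = \Phi(X^*)$, so $\Phi$ commutes with the real/imaginary-part decomposition. Since $G_{\rm p}(\tau) = \Phi(A_{\rm c}(\tau))$, applying $\Phi$ to the three identities of the previous paragraph produces exactly~\eqref{eq:hat_Gp}, the identification of~\eqref{eq:G_p_tilde} as the Laplace transform of $G_{\rm p}$, and the convergence of $\widetilde{G_{\rm p}}(\cdot + \ri\eta)$ to $\widehat{G_{\rm p}}$ in $H^{-s}(\R_\omega, \cB(\cH_1))$ for $s > 1/2$.

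I do not expect any genuine obstacle; the only point that deserves care is verifying that the extension of $\Phi$ above really does commute with the distributional Fourier transform, which reduces to the scalar case via the identity $\langle f | \Phi(A_{\rm c}(\tau)) | g \rangle = \langle A_+^* f | A_{\rm c}(\tau) | A_+^* g \rangle_{\cH_{N+1}}$ tested against $\phi \in \sS(\R_\tau)$ and matrix elements $(f,g) \in \cH_1 \times \cH_1$.
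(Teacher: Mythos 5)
Your proposal is correct and matches the paper's intended argument: the paper obtains this proposition as a direct corollary of Proposition~\ref{prop:analytic_extension_causal_time_evolution}, precisely via the factorization $G_{\rm p}(\tau) = A_+ A_{\rm c}(\tau) A_+^\ast$ and conjugation by the contractions $A_+$, $A_+^\ast$ (using $\|A_+^\ast\|_{\cB(\cH_1,\cH_{N+1})}=1$), just as you do. The only small point to keep straight is that Proposition~\ref{prop:analytic_extension_causal_time_evolution} states the boundary convergence only in $H^{-1}$, so the convergence in $H^{-s}(\R_\omega,\cB(\cH_1))$ for every $s>1/2$ should be taken, as you already indicate, from Proposition~\ref{lem:particle-type}(ii) applied directly to the bounded causal operator $G_{\rm p}$ once its Laplace transform is identified with~\eqref{eq:G_p_tilde}.
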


The imaginary part of $\widehat{G_\rp}$ is related to the so-called spectral function $\cA_\rp$ (see Section \ref{sec:spectral_functions}).
 
\paragraph{Analytic continuation to the complex plane.}  
Let us introduce the particle optical excitation set
\begin{equation} 
  \label{eq:particle_opt_excitations}
  S_\rp := \sigma (H_{N+1} - E_N^0).
\end{equation}
We recall that the operator $H_{N+1} - E_N^0$ with domain  $\cH_{N+1} \cap H^2(\R^{3(N+1)})$ is self-adjoint on $\cH_{N+1}$. Its essential spectrum is of the form $\sigma_\ess(H_{N+1} - E_N^0) = [\Sigma_{N+1}, \infty)$, and there are possibly infinitely many eigenvalues below $\Sigma_{N+1}$ that can only accumulate at $\Sigma_{N+1}$. According to the HVZ theorem \cite{Hunziker1966, vanWinter1964, Zhislin1960}, $\Sigma_{N+1} = E_N^0 - E_{N}^0 = 0$. In particular, $S_\rp$ is the union of a discrete negative part, and the half-line $[0, + \infty)$.

We next infer from~\eqref{eq:G_p_tilde} that $\widetilde{G_{\rm p}}(z)$ can be extended to an analytic function from $\C \setminus S_\rp$ to $\cB(\cH_1)$. This is of particular interest for the following reason. The operator-valued distribution $\widehat{G_\rp}(\omega)$ is highly peaked and irregular (for instance, its imaginary part is a sum of Dirac measures on the discrete part of $S_\rp$). Instead of studying $\widehat{G_\rp}(\omega)$ on the real axis, we will study its analytic continuation $\widetilde{G_{\rm p}}(z)$ (defined \textit{a priori} only in the upper-half plane, but actually on $\C \setminus S_\rp$) \emph{on the imaginary axis} $\mu + \ri \R$, where $\mu < E_{N+1}^0 - E_N^0 \leq 0$ is an admissible chemical potential (see \textbf{Hyp. 2}). The set $S_\rp$ can be recovered from $\omega \mapsto \widetilde{G_{\rm p}}(\mu + \ri \omega)$ by locating the singularities of $\widehat{G_{\rm p}}$, obtained from $\widetilde{G_{\rm p}}$ either by analytic continuation, or by fitting some parameters~\cite{Rojas1995}. We do not address this interesting numerical reconstruction problem in the present article.
\begin{center}
  \begin{figure}[h]
    \begin{center}
      \begin{tikzpicture}[scale =1]
	\draw (-7,0)--(5,0);
	\draw(0, -3) -- (-0,3);
	\node at (0.3, -0.3) {$0$};
	\draw[red, very thick] (0, 0) -- (5, 0);
	\draw[red] (0, -0.2) -- (0, 0.2);
	\node[red] at (3, -0.5) {$\sigma_\ess(H_{N+1} - E_N^0)$};
	\draw[red] (-1.2, -0.2) -- (-0.8, 0.2);
	\draw[red] (-0.8, -0.2) -- (-1.2, 0.2);
	\node[red] at (-1, -0.5) {$E_{N+1}^0 - E_N^0$};
        \draw[red] (-0.7, -0.2) -- (-0.3, 0.2);
	\draw[red] (-0.3, -0.2) -- (-0.7, 0.2);
	\draw[blue, very thick] (-2.5, -3) -- (-2.5, 3);
	\node[blue] at (-2.7, -0.3) {$\mu$};
	\node at (2.5, 0.5) {$\omega \mapsto \widehat{G_\rp}(\omega)$};
	\draw[->] (1, 0.2) -> (4, 0.2);
	\node at (-4, 1) {$\omega \mapsto \widetilde{G_\rp}(\mu + \ri \omega)$};
	\draw[->] (-2.7, 0.5) -> (-2.7, 2.2);
	\draw[->, very thick] (1, 0.5) arc(20:90:3);
	\node at (2, 2) {analytic continuation};
      \end{tikzpicture}
    \end{center}
    \caption{Illustration of the analytic continuation: from $\omega \mapsto \widehat{G_\rp}(\omega)$ to $\omega \mapsto \widetilde{G_\rp}(\mu + \ri \omega)$.}
    \label{fig:forward_GF}
  \end{figure}
\end{center}

The following lemma makes precise the behavior of the Green's function on the vertical axis $\mu + \ri \RR$. It is a direct consequence of the representation~\eqref{eq:G_p_tilde}.
\begin{lemma}
  \label{lem:ReGp}
  Consider $\mu < E_{N+1}^0 - E_N^0$. Then the function $\omega \mapsto \widetilde{G_\rp}(\mu + \ri \omega)$ is real analytic from $\R_\omega$ to $\cB(\cH_1)$ and is in $L^p(\R_\omega, \cB(\cH_1))$ for all $p > 1$. Moreover, for all $\omega \in \R$,
  \[
  \Re \widetilde{G_\rp}(\mu + \ri \omega) = - A_+ \dfrac{H_{N+1} - E_N^0 - \mu}{\omega^2 + (H_{N+1} - E_N^0 - \mu)^2} A_+^\ast
  \]
  is a negative, bounded, self-adjoint operator on~$\cH_1$ which enjoys the following symmetry property:
  \[
  	\forall \omega \in \R_\omega, \quad  \Re \widetilde{G_\rp}(\mu + \ri \omega) = \Re \widetilde{G_\rp}(\mu - \ri \omega).
  \]
  For any $f\in \cH_1$, the function $\omega \mapsto \bra f | \Re \widetilde{G_\rp}(\mu + \ri \omega) | f \ket$ is non-positive, in $L^1(\R_\omega)$, and
  \begin{equation} \label{eq:fGpf}
  \int_{-\infty}^{+\infty} \left\bra f \left|  \Re \widetilde{G_\rp}(\mu + \ri \omega) \right| f \right\ket \, \rd \omega = -\pi \bra f | (\mathds{1}_{\cH_1} - \gamma_N^0) | f \ket.
  \end{equation}
\end{lemma}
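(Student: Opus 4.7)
The plan is to start from the explicit resolvent representation $\widetilde{G_\rp}(z) = A_+(z - K)^{-1} A_+^*$ in~\eqref{eq:G_p_tilde}, where I set $K := H_{N+1} - E_N^0$. Since $\sigma(K) \subset [E_{N+1}^0 - E_N^0, +\infty)$ and $\mu < E_{N+1}^0 - E_N^0$, the operator $K-\mu$ is self-adjoint and uniformly bounded below by the strictly positive constant $\delta := E_{N+1}^0 - E_N^0 - \mu$. Real analyticity of $\omega \mapsto \widetilde{G_\rp}(\mu + \ri\omega)$ as a $\cB(\cH_1)$-valued map follows directly from the complex analyticity of the resolvent on $\C \setminus \sigma(K)$, which contains an open neighborhood of the vertical line $\mu + \ri \R$.

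Next I would apply the spectral theorem to $K$ to split
\[
(\mu + \ri\omega - K)^{-1} = -\frac{K-\mu}{(K-\mu)^2+\omega^2} - \ri \, \frac{\omega}{(K-\mu)^2+\omega^2}
\]
and, sandwiching by $A_+$ and $A_+^*$, to read off the claimed formula for $\Re \widetilde{G_\rp}(\mu + \ri\omega)$. The functional calculus makes $(K-\mu)\big((K-\mu)^2+\omega^2\big)^{-1}$ bounded, self-adjoint, and non-negative on $\cH_{N+1}$, so sandwiching preserves boundedness, self-adjointness and negativity of the real part. The symmetry $\omega \leftrightarrow -\omega$ is immediate since the formula depends on $\omega$ only through $\omega^2$. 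For $L^p$-integrability with $p>1$, I would combine the operator-norm bound $\|(\mu+\ri\omega-K)^{-1}\| \le (\delta^2 + \omega^2)^{-1/2}$ with $\|A_+^*\| \le 1$ (a consequence of $A_+A_+^*=\mathds{1}_{\cH_1}-\gamma_N^0$ from~\eqref{eq:A+bounded}), noting that $\omega\mapsto (\delta^2+\omega^2)^{-p/2}$ is integrable on $\R$ exactly when $p>1$.

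For the final assertion, I would fix $f \in \cH_1$ and introduce the positive Borel measure $\nu_f$ on $\sigma(K)$ given by the spectral theorem applied to $K$ and the vector $A_+^* f \in \cH_{N+1}$. By~\eqref{eq:A+bounded}, its total mass equals $\|A_+^* f\|_{\cH_{N+1}}^2 = \bra f \,|\, \mathds{1}_{\cH_1} - \gamma_N^0 \,|\, f \ket$. The scalar function then takes the form
\[
-\bra f \,|\, \Re \widetilde{G_\rp}(\mu + \ri\omega) \,|\, f \ket \;=\; \int_{\sigma(K)} \frac{\lambda - \mu}{(\lambda-\mu)^2 + \omega^2}\, \rd\nu_f(\lambda) \;\ge\; 0,
\]
which already gives the non-positivity. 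Since the integrand is non-negative, Tonelli's theorem permits exchanging the integrals in $\omega$ and $\lambda$; the elementary identity $\int_\R a\,(a^2+\omega^2)^{-1}\,\rd\omega = \pi$ (valid for any $a>0$, applied to $a = \lambda - \mu \ge \delta > 0$) collapses the inner integral to $\pi$ independently of $\lambda$, yielding simultaneously the $L^1$-integrability in $\omega$ and the formula~\eqref{eq:fGpf}. There is no serious obstacle here: the only care required is to keep exploiting the strict positivity of $K - \mu$ uniformly on $\sigma(K)$ when manipulating the spectral representation, which justifies the Fubini-Tonelli step and the resolvent bound.
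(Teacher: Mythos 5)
Your proposal is correct and takes essentially the same route as the paper: starting from the representation~\eqref{eq:G_p_tilde}, applying the spectral theorem to $H_{N+1}-E_N^0$ with the vector $A_+^\ast f$, using $A_+A_+^\ast = \mathds{1}_{\cH_1}-\gamma_N^0$ from~\eqref{eq:A+bounded}, and concluding with the identity $\int_{\R} E(\omega^2+E^2)^{-1}\,\rd\omega = \pi$ for $E>0$. Your explicit Tonelli justification and the resolvent bound $\|(\mu+\ri\omega-K)^{-1}\|\le(\delta^2+\omega^2)^{-1/2}$ simply spell out the steps the paper leaves implicit.
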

The last assertion comes from the spectral theorem,~\eqref{eq:A+bounded}, and the equality
\[
\forall E > 0, \qquad \int_{-\infty}^{+\infty} \dfrac{E}{\omega^2 + E^2} \, \rd \omega = \pi.
\]

\begin{remark}
  Unfortunately, although $\Re \widetilde{G_\rp}(\mu + \ri \cdot)$ has a sign and~\eqref{eq:fGpf} is satisfied for all $f \in \cH_1$, the function
  $
  	\omega \mapsto \left\| \Re \widetilde{G_\rp}(\mu + \ri \cdot) \right\|_{\cB(\cH_1)}
$
does not belong to $L^1(\R_\omega)$.  This is essentially due to the fact that
  \[
  \sup_{E \ge 0} \left( \dfrac{E}{\omega^2 + E^2}\right) = \dfrac{1}{2 \omega} \notin L^1(\R_\omega).
  \]
\end{remark}

Note that the imaginary part of $\widetilde{G_\rp}(\mu + \ri \omega)$,
\[
\Im \widetilde{G_\rp}(\mu + \ri \omega) = -A_+ \dfrac{\omega}{\omega^2 + (H_{N+1} - E_N^0 - \mu)^2} A_+^\ast,
\]
has no definite sign on~$\R_\omega$, and that, for a generic $f \in \cH_1$, the function $\omega \mapsto \left\bra f \left| \Im \widetilde{G_\rp} (\mu + \ri \omega) \right| f \right\ket$ does not belong to $L^1(\R_\omega)$. It will therefore be more convenient in general to work with the real part of $\widetilde{G_\rp}(\ri \omega)$ only, especially since the imaginary part can be recovered from the real part (see Lemma~\ref{lem:plemelj_iR} below). Indeed, the operator-valued functions $\widetilde{g_{\rp,\eta}}: \omega \mapsto \widetilde{G_\rp}(\mu-\eta + \ri \omega)$ are in $L^2(\R, \cB(\cH_1))$ for any $\eta > 0$, and converge to $\widetilde{g_\rp}: \omega \mapsto \widetilde{G_\rp} (\mu + \ri \omega)$ in $L^2(\R, \cB(\cH_1))$ as $\eta \to 0^+$. We can therefore apply Titchmarsh's theorem (see Theorem~\ref{th:Titchmarsh_Linfty}), which gives the following result.

\begin{lemma}
  \label{lem:plemelj_iR}
  Let $\mu < E_{N+1}^0 - E_N^0$. The function $\widehat{g_\rp} (\omega) := \widetilde{G_\rp}(\mu + \ri \omega)$ is the Fourier transform of the causal function 
  \begin{equation}
    \label{eq:def_g_p}
    g_\rp(\tau) = - \Theta(\tau) A_+ \re^{-\tau(H_{N+1}-E_N^0 - \mu)} A_+^\ast, 
  \end{equation}
  which belongs to $L^2(\R_\tau, \cS(\cH_1))$. In particular, the Plemelj formulae hold true:
  \[
  \Re \widehat{g_\rp} = - \fH \left( \Im \widehat{g_\rp} \right) \quad \mathrm{and} \quad \Im \widehat{g_\rp} = \fH \left( \Re \widehat{g_\rp} \right) \quad \mathrm{in} \ L^2(\R_\omega, \cB(\cH_1)).
  \]
  Moreover, the function $\tau \mapsto \| g_\rp(\tau) \|_{\cB(\cH_1)}$ is exponentially decreasing as $|\tau| \to +\infty$.
\end{lemma}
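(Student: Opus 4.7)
The plan is to verify each assertion of the lemma in order — exponential decay first, then $L^2(\cS(\cH_1))$ membership, then matching the Fourier transform with $\widetilde{G_\rp}(\mu + \ri\cdot)$, and finally the Plemelj formulae via Theorem~\ref{th:Titchmarsh_Linfty} applied to matrix elements.

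\textbf{Step 1: Spectral gap and exponential decay.} The key observation is that, by the HVZ analysis recalled after~\eqref{eq:particle_opt_excitations}, the operator $H_{N+1} - E_N^0$ on $\cH_{N+1}$ is bounded below by $E_{N+1}^0 - E_N^0$. Since by assumption $\mu < E_{N+1}^0 - E_N^0$, the self-adjoint operator $H_{N+1} - E_N^0 - \mu$ is bounded below by the strictly positive constant $\alpha := E_{N+1}^0 - E_N^0 - \mu > 0$. Functional calculus then yields $\left\|\re^{-\tau(H_{N+1}-E_N^0-\mu)}\right\|_{\cB(\cH_{N+1})} \leq \re^{-\alpha\tau}$ for $\tau \geq 0$, and combining this with $\|A_+\|_{\cB(\cH_{N+1},\cH_1)} = \|A_+^\ast\|_{\cB(\cH_1,\cH_{N+1})} = 1$ (from~\eqref{eq:A+bounded}, since $\mathds{1}_{\cH_1}-\gamma_N^0$ has operator norm at most $1$), I obtain
\[
\forall \tau \in \R, \qquad \|g_\rp(\tau)\|_{\cB(\cH_1)} \le \Theta(\tau)\,\re^{-\alpha\tau},
\]
which proves the exponential decay claim and, since each $g_\rp(\tau) = -\Theta(\tau) A_+ \re^{-\tau(H_{N+1}-E_N^0-\mu)} A_+^\ast$ is manifestly self-adjoint on $\cH_1$, shows that $g_\rp \in L^2(\R_\tau, \cS(\cH_1))$ and is causal.

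\textbf{Step 2: Identification of the Fourier transform.} A direct Bochner-integral computation using the bounded functional calculus for the self-adjoint operator $H_{N+1} - E_N^0 - \mu \geq \alpha > 0$ gives, for every $\omega \in \R$,
\[
\widehat{g_\rp}(\omega) = -\int_0^{+\infty} A_+ \,\re^{-\tau(H_{N+1}-E_N^0-\mu - \ri\omega)}\,A_+^\ast\, \rd\tau = -A_+ \bigl(H_{N+1}-E_N^0-\mu-\ri\omega\bigr)^{-1} A_+^\ast,
\]
which, after factoring a minus sign and rearranging, coincides exactly with $A_+\bigl((\mu+\ri\omega) - (H_{N+1}-E_N^0)\bigr)^{-1} A_+^\ast = \widetilde{G_\rp}(\mu+\ri\omega)$ as defined in~\eqref{eq:G_p_tilde}. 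The integral converges in $\cB(\cH_1)$ because of the exponential factor from Step~1.

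\textbf{Step 3: Plemelj formulae via matrix elements.} Since the operator-valued Titchmarsh theorem stated in this paper is Proposition~\ref{lem:particle-type} (for $L^\infty$ data with conclusions in $H^{-s}$), and we want $L^2$ Plemelj formulae, I reduce to the scalar case. For any $f, g \in \cH_1$, the scalar function $\tau \mapsto \bra f | g_\rp(\tau) | g \ket$ inherits causality and, by the $L^2(\cS(\cH_1))$ bound, belongs to $L^2(\R_\tau)$. Applying the scalar Titchmarsh Theorem~\ref{th:Titchmarsh_Linfty}, items (iii)–(iv), to this function yields
\[
\Re \bra f | \widehat{g_\rp} | g \ket = -\fH\bigl(\Im \bra f | \widehat{g_\rp} | g \ket\bigr), \qquad \Im \bra f | \widehat{g_\rp} | g \ket = \fH\bigl(\Re \bra f | \widehat{g_\rp} | g \ket\bigr)
\]
in $L^2(\R_\omega)$. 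By Definition~\ref{def:HT-FDO} applied in $H^0(\R_\omega,\cB(\cH_1)) = L^2(\R_\omega,\cB(\cH_1))$, these scalar identities, valid for all $(f,g) \in \cH_1 \times \cH_1$, are exactly the definition of the Plemelj formulae at the operator level, completing the proof.

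\textbf{Anticipated obstacle.} Nothing in this argument is genuinely hard; the one place requiring some care is Step~3, where the operator-valued $L^2$ Plemelj formulae are not stated verbatim anywhere in the paper, so I must take the short detour through matrix elements and invoke Definition~\ref{def:HT-FDO} to lift them. The exponential decay from Step~1 is what makes everything clean, in particular avoiding the need for a limiting argument from $\widetilde{G_\rp}(\mu - \eta + \ri\cdot) \to \widetilde{G_\rp}(\mu + \ri \cdot)$ (which was actually used just before the lemma's statement to motivate it, but turns out to be unnecessary because $g_\rp$ is \emph{directly} $L^2$ in time thanks to the strict inequality $\mu < E_{N+1}^0 - E_N^0$).
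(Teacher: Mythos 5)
Your proposal is correct, and it takes a more direct route than the paper. The paper obtains the lemma from criterion (ii) of Theorem~\ref{th:Titchmarsh_Linfty}: it observes (in the paragraph preceding the statement) that $\omega \mapsto \widetilde{G_\rp}(\mu-\eta+\ri\omega)$ lies in $L^2(\R_\omega,\cB(\cH_1))$ for $\eta>0$ and converges in that space to $\widetilde{G_\rp}(\mu+\ri\cdot)$ as $\eta\to 0^+$, and then invokes Titchmarsh to get causality of the inverse Fourier transform together with the Plemelj formulae. You instead start from the explicit candidate~\eqref{eq:def_g_p}, use the spectral bound $H_{N+1}-E_N^0-\mu\ge E_{N+1}^0-E_N^0-\mu=\alpha>0$ to obtain $\|g_\rp(\tau)\|_{\cB(\cH_1)}\le\Theta(\tau)\,\re^{-\alpha\tau}$, compute the Fourier transform exactly as $\widetilde{G_\rp}(\mu+\ri\cdot)$, and then apply the implication (i)$\Rightarrow$(iii),(iv) of the same theorem to matrix elements. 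This bypasses the $\eta\to0^+$ limiting argument entirely; it is the same underlying theorem used in the opposite direction, and your Steps 1--2 in addition furnish the exponential decay and the $L^2(\R_\tau,\cS(\cH_1))$ membership that the paper's sketch leaves implicit.

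One point in Step 3 is stated too quickly: for $f\neq g$, the scalar real part of $\bra f|\widehat{g_\rp}(\omega)|g\ket$ is not $\bra f|\Re\widehat{g_\rp}(\omega)|g\ket$, since the operator real part involves the adjoint, i.e.\ the matrix element $\overline{\bra g|\widehat{g_\rp}(\omega)|f\ket}$; so the scalar Plemelj identities for all pairs $(f,g)$ are not literally the operator-level identities. The gap closes in one line: either restrict to $f=g$, where $\Re\bra f|A|f\ket=\bra f|\Re A|f\ket$ holds for any bounded $A$, and recover the operator identities by polarization; or combine the two scalar formulae into $\widehat{a_{f,g}}=\ri\,\fH\big(\widehat{a_{f,g}}\big)$ for all $f,g$ (with $a_{f,g}(\tau)=\bra f|g_\rp(\tau)|g\ket$), which says $\widehat{g_\rp}=\ri\,\fH\big(\widehat{g_\rp}\big)$ weakly, and take adjoints, using that $\fH$ commutes with pointwise adjoints, to split this back into the two operator-level Plemelj formulae. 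The same identification is used implicitly by the paper in the proof of Proposition~\ref{lem:particle-type}, so this is a presentational refinement rather than a flaw in your approach.
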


\begin{remark}
  The exponential decay of $g_\rp$ is consistent with the analyticity of its Fourier transform. This property is of interest when calculating numerically convolutions on the imaginary axis $\mu+\ri\R$, since convolutions can be replaced, up to a Fourier transform, with point-wise multiplications of causal functions which are exponentially decreasing. This approach was advocated in~\cite{RSWRG99}, and is now routinely used in GW computations.
\end{remark}

%%%%%%%%%%%%%%%%%%%%%%%%%%%%%%%%%%%%%%%%%%%
\subsubsection{The hole (backward) Green's function $G_{\rm h}$}
\label{sec:hole_GF}

\paragraph{Definition and first properties of the hole Green's function.}
Together with the particle Green's function, we introduce the hole (or backward, or advanced) Green's function, formally defined within the second quantization formalism by
\[
\cG_{\rm h}(\br t, \br' t') := \ri \Theta(t' - t) \ \bra \Psi_N^0 | \Psi_{\rm H}^\dagger (\br' t') \Psi_{\rm H}(\br t) | \Psi_N^0 \ket.
\]
%This Green's function describes the amplitude that a hole created (or similarly a particle annihilated) at $\br$ at time $t$ can be recovered at $\br'$ at time $t' > t$. We perform the same work as for the particle Green's function. 
Observing that
\[
\int_{\R^3} \Psi(\br) |\Psi_N^0 \rangle f(\br)  \, d\br = a(\overline{f}) |\Psi_N^0\rangle,
\]
we introduce
\[
\begin{array}{llll}
  A_- : & \cH_1 & \to & \cH_{N-1} \\
  & f & \mapsto & a(\overline{f}) |\Psi_N^0\rangle.
\end{array}
\]
Similarly as before, we note that $\cG_{\rm h}(\br t, \br' t')$ only depends on the time difference $t - t'$. Introducing $G_{\rm h}(\br, \br', \tau) := \cG_{\rm h}(\br \tau, \br' 0)$, we see that $G_{\rm h}(\br, \br', \tau)$ is formally the kernel of the following one-body operator.

\begin{definition}
  The hole Green's function is defined as
  \begin{equation}\label{eq:defGh}
    G_{\rm h}(\tau) := \ri \Theta(-\tau) A_-^\ast \re^{ \ri \tau (H_{N-1} - E_N^0)} A_-.
  \end{equation}
\end{definition}

Similarly as in (\ref{eq:A+bounded}), it holds that
\[
A_-^\ast A_- = \gamma_N^0. 
\]
Hence, $A_-$ is a bounded operator from $\cH_1$ to $\cH_{N-1}$, $A_-^\ast$ is a bounded operator from $\cH_{N-1}$ to $\cH_1$, and it holds $\| A_- \|_{\cB(\cH_1,\cH_N)}= \| A_-^\ast \|_{\cB(\cH_{N-1}, \cH_1)} \leq 1$. The properties of the hole Green's function are quite similar to the properties of the particle Green's function (compare with Proposition~\ref{lemma:forward_GF}). 

\begin{proposition}[Properties of the hole Green's function] 
  \label{lemma:backward_GF}
  The family $(G_{\rm h}(\tau))_{\tau \in \R}$ defines a bounded anti-causal operator on $\cH_1$.  The real and imaginary parts of its time-Fourier transform are in $H^{-s}(\R_\omega, \cB(\cH_1))$ for all $s > 1/2$, and are given by
  \begin{equation} \label{eq:hat_Gh}
    \Re \widehat{G_\rh}  = A_-^\ast  \pv \left( \dfrac{1}{\cdot - (E_N^0 - H_{N-1})} \right) A_- \quad \text{and} \quad 
    \Im \widehat{G_\rh}  = \pi A_-^\ast P^{E_N^0 - H_{N-1}}  A_-.
  \end{equation}
  The analytic operator-valued function $\widetilde {G_{\rh}}$ defined in the lower half-plane by
  \begin{equation}
    \label{eq:G_h_tilde}
    \forall z \in \LL, \qquad \widetilde{G_{\rh}}(z)  := A_-^\ast  \dfrac{1}{z - (E_{N}^0-H_{N-1})} A_-  
  \end{equation}
  is the Laplace transform of $G_{\rh}$ and satisfies
  \[ 
  \widehat{G_{\rm h}} = \lim_{\eta \to 0^+} \widetilde{G_{\rm h}}(. - \ri \eta) \quad \mbox{in } H^{-s}(\R_\omega,\cB(\cH)) \quad \text{for all} \quad s > 1/2.
  \]
\end{proposition}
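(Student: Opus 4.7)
The strategy is to mirror the proof sketch of Proposition~\ref{lemma:forward_GF}, by factoring $G_\rh$ as the sandwich of a standard anti-causal time propagator between the bounded, $\tau$-independent operators $A_-^*$ and $A_-$, and then to deduce every statement from Propositions~\ref{lem:hole-type} and~\ref{prop:resolvent_anti-causal}. Concretely, setting $A_\ra(\tau) := \ri \Theta(-\tau)\re^{\ri\tau(H_{N-1}-E_N^0)}$, which is the anti-causal propagator associated with the self-adjoint operator $H := H_{N-1}-E_N^0$ on $\cH_{N-1}$, the definition~\eqref{eq:defGh} reads $G_\rh(\tau) = A_-^* A_\ra(\tau) A_-$.

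First I would verify that $G_\rh$ is a bounded anti-causal operator on $\cH_1$. Anti-causality is immediate from the factor $\Theta(-\tau)$. For boundedness, the identity $A_-^* A_- = \gamma_N^0$ together with $0 \le \gamma_N^0 \le 1$ (Proposition~\ref{prop:Psi_N^0}) gives $\|A_-\|_{\cB(\cH_1,\cH_{N-1})}^2 = \|A_-^* A_-\|_{\cB(\cH_1)} \le 1$, hence also $\|A_-^*\|_{\cB(\cH_{N-1},\cH_1)} \le 1$; combined with the unitarity of $\re^{\ri\tau(H_{N-1}-E_N^0)}$ on $\cH_{N-1}$, this yields $\|G_\rh(\tau)\|_{\cB(\cH_1)} \le 1$ for all $\tau \in \R$. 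In particular, $G_\rh \in L^\infty(\R_\tau, \cB(\cH_1))$, and Lemma~\ref{lem:FourierA} then shows that $\widehat{G_\rh} \in H^{-s}(\R_\omega,\cB(\cH_1))$ for every $s > 1/2$, so that its real and imaginary parts also lie in this space.

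Next I would exploit the sandwiching map $\Phi : T \mapsto A_-^* T A_-$, which is a contraction from $\cB(\cH_{N-1})$ to $\cB(\cH_1)$ and, being $\tau$-independent and linear, commutes with the time-Fourier transform, with the Laplace transform, and with the decomposition into real and imaginary parts; it extends by the contraction estimate $\|\Phi(T)\|_{\cB(\cH_1)} \le \|T\|_{\cB(\cH_{N-1})}$ to a bounded map $H^{-s}(\R_\omega, \cB(\cH_{N-1})) \to H^{-s}(\R_\omega, \cB(\cH_1))$. Applying $\Phi$ to Proposition~\ref{prop:resolvent_anti-causal} for the choice $H = H_{N-1}-E_N^0$ then yields the strong analyticity on $\LL$ of
\[
z \mapsto \widetilde{G_\rh}(z) = A_-^*\,(z + H_{N-1} - E_N^0)^{-1}\,A_- = A_-^*\,(z - (E_N^0 - H_{N-1}))^{-1}\,A_-,
\]
identifying it as~\eqref{eq:G_h_tilde}; the identities~\eqref{eq:hat_Gh} for $\Re\widehat{G_\rh}$ and $\Im\widehat{G_\rh}$; and the strong convergence $\widetilde{G_\rh}(\cdot-\ri\eta) \to \widehat{G_\rh}$ in $H^{-s}(\R_\omega,\cB(\cH_1))$ as $\eta \to 0^+$, for all $s > 1/2$.

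There is no genuine obstacle here: the entire content of the proposition is inherited from the corresponding results on anti-causal propagators via the bounded sandwiching operation, and the only small point to verify carefully is that $\Phi$ is indeed continuous in each relevant topology so that the limit $\eta \to 0^+$ can be exchanged with the sandwiching.
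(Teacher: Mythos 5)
Your proposal is correct and follows essentially the same route as the paper: the statement is obtained as a direct corollary of Proposition~\ref{prop:resolvent_anti-causal} (and Proposition~\ref{lem:hole-type}) applied to the anti-causal propagator $\ri\Theta(-\tau)\re^{\ri\tau(H_{N-1}-E_N^0)}$, sandwiched between the bounded operators $A_-^*$ and $A_-$ with $A_-^*A_-=\gamma_N^0$, exactly as the particle case is deduced from Proposition~\ref{prop:resolvent}. Your explicit check that the sandwiching map is a contraction commuting with Fourier/Laplace transforms and with real and imaginary parts (the latter because $A_-$ and $A_-^*$ form an adjoint pair) is the only point needing care, and you have handled it correctly.
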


\paragraph{Analytic continuation into the complex plane.}
The hole optical excitation set is defined as
\begin{equation} \label{eq:hole_opt_excitations}
  S_\rh := \sigma(E_N^0 - H_{N-1}).
\end{equation}
It is clear from~\eqref{eq:G_h_tilde} that the operator-valued function $\widetilde{G_{\rh}}$ can be analytically continued to $\C \setminus S_\rh$. Instead of studying the highly irregular distribution $\omega \mapsto \widehat{G_\rh}(\omega)$, it is more convenient to study its analytical continuation $\widetilde{G_{\rh}}$ on the imaginary axis $ \mu + \ri \R$, with $\mu > E_N^0 - E_{N-1}^0$.

\begin{center}
  \begin{figure}[h]
    \begin{center}
      \begin{tikzpicture}[scale =1]
        \draw (-7,0)--(5,0);
	\draw(0, -3) -- (-0,3);
	\node at (0.3, -0.3) {$0$};
	\draw[red, very thick] (-7, 0) -- (-3, 0);
	\draw[red] (-3, -0.2) -- (-3, 0.2);
	\node[red] at (-5, 0.4) {$\sigma_\ess( E_N^0 - H_{N-1})$};
	\draw[red] (-2.1, -0.2) -- (-1.7, 0.2);
	\draw[red] (-1.7, -0.2) -- (-2.1, 0.2);
	\node[red] at (-1.9, 0.5) {$E_{N}^0 - E_{N-1}^0$};
        \draw[red] (-2.7, -0.2) -- (-2.3, 0.2);
	\draw[red] (-2.3, -0.2) -- (-2.7, 0.2);
	\draw[blue, very thick] (-1, -3) -- (-1, 3);
	\node[blue] at (-1.2, -0.3) {$\mu$};
	\node at (-5.5, -0.5) {$\omega \mapsto \widehat{G_\rh}(\omega)$};
	\draw[->] (-6.5, -0.2) -> (-4, -0.2);
	\node at (0.5, -2) {$\omega \mapsto \widetilde{G_\rh}(\mu + \ri \omega)$};
	\draw[->] (-0.8, -3) -> (-0.8, -0.5);
	\draw[->, very thick] (-4, -0.5) arc(200:270:3);
	\node at (-5, -2) {analytic continuation};
      \end{tikzpicture}
    \end{center}
    \caption{Illustration of the analytic continuation: from $\omega \mapsto \widehat{G_\rh}(\omega)$ to $\omega \mapsto \widetilde{G_\rh}(\mu + \ri \omega)$.}
    \label{fig:forward_GF_hole}
  \end{figure}
\end{center}

We can state a result similar to Lemma~\ref{lem:ReGp}.

\begin{lemma} 
  \label{lem:ReGh}
  Consider $\mu > E_N^0 - E_{N-1}^0$. Then the function $\omega \mapsto \widetilde{G_\rh}(\mu + \ri \omega)$ is real analytic from $\R_\omega$ to $\cB(\cH_1)$ and is in $L^p(\R_\omega, \cB(\cH_1))$, for all $p> 1$. Moreover, for all $\omega \in \R$,
  \[
  \Re \widetilde{G_\rh}(\mu + \ri \omega) = A_-^\ast \dfrac{H_{N-1} + \mu - E_N^0}{\omega^2 + (E_N^0 - H_{N-1}-\mu)^2} A_-
  \]
  is a positive, bounded, self-adjoint operator, which enjoys the following symmetry property:
  \[
  	\forall \omega \in \R_\omega, \quad \Re \widetilde{G_\rh}(\mu + \ri \omega) = \Re \widetilde{G_\rh}(\mu - \ri \omega).
  \]
  For any $f \in \cH_1$, the function $\omega \mapsto \left\bra f \left|| \Re \widetilde{G_\rh}(\mu + \ri \omega) \right| f \right\ket$ is non-negative, in $L^1(\R_\omega)$, and
  \[
  \int_{-\infty}^{+\infty} \left\bra f \left| \Re \widetilde{G_\rh}(\mu + \ri \omega) \right| f \right\ket \, \rd \omega = \pi \bra f | \gamma_N^0 | f \ket.
  \]
\end{lemma}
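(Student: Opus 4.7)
\textbf{Proof plan for Lemma~\ref{lem:ReGh}.}

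The strategy is to mirror the proof of Lemma~\ref{lem:ReGp} for the particle Green's function, exploiting the formula~\eqref{eq:G_h_tilde} for the Laplace transform in the lower half-plane together with its analytic continuation to $\C\setminus S_\rh$. The first step is to verify that $\mu+\ri\R$ lies in the domain of analyticity. By \textbf{Hyp.~2} and the HVZ theorem, $S_\rh=\sigma(E_N^0-H_{N-1})\subset(-\infty,\,E_N^0-E_{N-1}^0]$, so the self-adjoint operator
\[
K := H_{N-1}+\mu-E_N^0
\]
on $\cH_{N-1}$ satisfies $K\ge c\,\mathds{1}_{\cH_{N-1}}$ with $c:=E_{N-1}^0+\mu-E_N^0>0$. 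Hence the resolvent identity allows one to rewrite, for all $\omega\in\R$,
\[
\widetilde{G_\rh}(\mu+\ri\omega)=A_-^\ast(K+\ri\omega)^{-1}A_-=A_-^\ast\frac{K-\ri\omega}{K^2+\omega^2}A_-,
\]
which immediately yields the claimed explicit form of $\Re\widetilde{G_\rh}(\mu+\ri\omega)$ by separating real and imaginary parts.

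Positivity and self-adjointness of $\Re\widetilde{G_\rh}(\mu+\ri\omega)$ follow from the functional calculus: the function $\lambda\mapsto\lambda/(\lambda^2+\omega^2)$ is non-negative on $[c,+\infty)\supset\sigma(K)$, so $K/(K^2+\omega^2)\ge 0$ is a bounded self-adjoint operator on~$\cH_{N-1}$, and conjugation by $A_-\in\cB(\cH_1,\cH_{N-1})$ preserves these properties. The symmetry $\Re\widetilde{G_\rh}(\mu+\ri\omega)=\Re\widetilde{G_\rh}(\mu-\ri\omega)$ is then immediate since the expression depends only on $\omega^2$.

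For real analyticity and $L^p$-integrability, the key estimate is the spectral bound
\[
\left\|(K+\ri\omega)^{-1}\right\|_{\cB(\cH_{N-1})}\le\sup_{\lambda\ge c}\frac{1}{\sqrt{\lambda^2+\omega^2}}=\frac{1}{\sqrt{c^2+\omega^2}},
\]
which, combined with the boundedness of $A_-$ and $A_-^\ast$, gives $\|\widetilde{G_\rh}(\mu+\ri\omega)\|_{\cB(\cH_1)}\le\|A_-\|^2/\sqrt{c^2+\omega^2}$, an element of $L^p(\R_\omega)$ for every $p>1$. Real analyticity follows from the fact that $\omega\mapsto(K+\ri\omega)^{-1}$ is a strongly (hence norm) analytic $\cB(\cH_{N-1})$-valued function on $\R$, since $K$ is bounded below away from zero.

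Finally, to compute the integral, denote by $\mu^K_{A_-f,A_-f}$ the scalar spectral measure of~$K$ associated with the vector $A_-f\in\cH_{N-1}$, which is supported in $[c,+\infty)$ and has total mass $\|A_-f\|_{\cH_{N-1}}^2$. The spectral theorem yields
\[
\left\langle f\left|\Re\widetilde{G_\rh}(\mu+\ri\omega)\right|f\right\rangle=\int_c^{+\infty}\frac{\lambda}{\lambda^2+\omega^2}\,\rd\mu^K_{A_-f,A_-f}(\lambda)\ge 0,
\]
and Tonelli's theorem together with $\int_\R\lambda(\lambda^2+\omega^2)^{-1}\,\rd\omega=\pi$ for $\lambda>0$ gives
\[
\int_\R\left\langle f\left|\Re\widetilde{G_\rh}(\mu+\ri\omega)\right|f\right\rangle\rd\omega=\pi\,\|A_-f\|_{\cH_{N-1}}^2=\pi\langle f|A_-^\ast A_-|f\rangle=\pi\langle f|\gamma_N^0|f\rangle,
\]
where the last equality uses the identity $A_-^\ast A_-=\gamma_N^0$ recalled in Section~\ref{sec:hole_GF}. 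No step presents a real obstacle; the only point requiring care is the strict positivity of $c$, which is exactly the role played by \textbf{Hyp.~2}.
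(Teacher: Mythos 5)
Your proposal is correct and follows essentially the same route as the paper: the paper proves the analogous Lemma~\ref{lem:ReGp} via the spectral theorem, the identity $A_+A_+^\ast=\mathds{1}_{\cH_1}-\gamma_N^0$, and $\int_\R E(\omega^2+E^2)^{-1}\rd\omega=\pi$, and then obtains Lemma~\ref{lem:ReGh} by the same argument with $A_-^\ast A_-=\gamma_N^0$ and the representation~\eqref{eq:G_h_tilde}, exactly as you do with $K=H_{N-1}+\mu-E_N^0\ge c>0$. Your write-up simply makes explicit the spectral bound giving the $L^p$ estimate and the Tonelli step, both of which are consistent with the paper's sketch.
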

 
%---------------------------------------------------------------------
\paragraph{The Galitskii-Migdal formula.} The hole Green's function is of particular interest, as it contains useful information on the $N$-body ground state. For instance, from the identity $A_-^\ast A_- = \gamma_N^0$, we directly obtain $G_{\rm h}(0^-) = \ri \gamma_N^0$, so that the expectation value in the ground state of any one-body operator $\sum_{i=1}^N C_{\br_i}$ (for $C \in \cB(\cH_1)$) can be evaluated via
\[
\left\langle \Psi_N^0 \left| \sum_{i=1}^N C_{\br_i} \right| \Psi_N^0 \right\rangle = \Tr_{\cH_1} \left( C \gamma_N^0 \right) = - \ri \, \Tr_{\cH_1} (C G_{\rm h}(0^-)).
\]
This calculation is valid only for one-body operators. It is not possible to obtain the expectation value in the ground state of a generic two-body operator from the one-body Green's function. This is however the case for the ground state energy (the expectation value of the two-body Hamiltonian $H_N$ in the ground state), as was first shown by Galiskii and Migdal~\cite{Galiskii1958}. Alternative formulae for the ground state energy are provided by the Luttinger-Ward formula~\cite{Luttinger1960} and the Klein's formula~\cite{Klein1961}.

\begin{theorem}[Galitskii-Migdal formula] \label{thm:GM_formula} 
  For all $N \ge 2$, the ground state energy can be recovered as
  \begin{align}
	  E_N^0 & =   \dfrac12 \Tr_{\cH_1} \left( - A_-^\ast \left( H_{N-1} - E_N^0 \right) A_- + \left( - \frac12 \Delta + v_\ext \right) A_-^\ast A_- \right) \label{eq:GMglobal} \\
	  	& = \dfrac12 \Tr_{\cH_1} \left[  \left( \frac{d}{d\tau} - \ri  \left( - \frac12 \Delta + v_\ext \right)  \right)  G_{\rm h}(\tau) \Big|_{\tau = 0^-} \right]. \label{eq:GM1}
  \end{align}
\end{theorem}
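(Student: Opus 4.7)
The plan is to prove the two equalities in turn, first reducing~\eqref{eq:GM1} to~\eqref{eq:GMglobal}, and then proving~\eqref{eq:GMglobal} by computing the commutator of the Fock-space Hamiltonian with the annihilation operator.

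The equivalence of~\eqref{eq:GMglobal} and~\eqref{eq:GM1} is an immediate rewriting. Differentiating $G_\rh(\tau) = \ri\Theta(-\tau) A_-^* \re^{\ri\tau(H_{N-1}-E_N^0)} A_-$ for $\tau < 0$, I obtain $\frac{d G_\rh}{d\tau}(0^-) = -A_-^*(H_{N-1}-E_N^0)A_-$, while $G_\rh(0^-) = \ri A_-^* A_- = \ri \gamma_N^0$. Plugging into the left-hand side of~\eqref{eq:GM1} converts it into the right-hand side of~\eqref{eq:GMglobal}.

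To prove~\eqref{eq:GMglobal}, I would set $h := -\tfrac{1}{2}\Delta + v_\ext$ and split $H_N = \hat h + \hat W$ into its one- and two-body parts. The standard representation of one- and two-body observables in terms of reduced density operators yields
\[
E_N^0 = \langle \Psi_N^0, H_N \Psi_N^0\rangle_{\cH_N} = \Tr_{\cH_1}(h\gamma_N^0) + \langle \Psi_N^0, \hat W \Psi_N^0\rangle_{\cH_N}.
\]
The heart of the proof is then the identity
\[
\Tr_{\cH_1}\bigl(A_-^*(H_{N-1}-E_N^0)A_-\bigr) = -\Tr_{\cH_1}(h\gamma_N^0) - 2\,\langle\Psi_N^0, \hat W \Psi_N^0\rangle_{\cH_N},
\]
from which, combined with $\Tr_{\cH_1}(h A_-^* A_-) = \Tr_{\cH_1}(h\gamma_N^0)$, the right-hand side of~\eqref{eq:GMglobal} equals $\Tr_{\cH_1}(h\gamma_N^0) + \langle\Psi_N^0, \hat W\Psi_N^0\rangle_{\cH_N} = E_N^0$.

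To establish this identity, pick any orthonormal basis $\{\phi_i\}$ of $\cH_1$. Since $a(\overline{\phi_i})\Psi_N^0 \in \cH_{N-1}$, $H_{N-1}$ coincides with $\HH$ on this vector, and the eigenvalue equation $\HH\Psi_N^0 = E_N^0 \Psi_N^0$ gives
\[
\Tr_{\cH_1}(A_-^* H_{N-1} A_-) = E_N^0 \sum_i \|a(\overline{\phi_i})\Psi_N^0\|^2 + \sum_i \left\langle a(\overline{\phi_i})\Psi_N^0,\, [\HH, a(\overline{\phi_i})]\,\Psi_N^0\right\rangle.
\]
The first sum equals $E_N^0 \Tr_{\cH_1}(\gamma_N^0) = N E_N^0$. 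The commutator is computed from the canonical anti-commutation relations~\eqref{eq:anticommutation_a} via the standard second-quantization identity
\[
[\HH, a(f)] = -a(hf) - \int \overline{f}(\br)\, w(\br,\br')\, \Psi^\dagger(\br')\Psi(\br')\Psi(\br) \,\rd\br\,\rd\br', \qquad w(\br,\br') := \frac{1}{|\br-\br'|}.
\]
Summing the resulting expression over $i$ and using the completeness relation $\sum_i \phi_i(\br)\overline{\phi_i(\br')} = \delta(\br-\br')$, together with the representation $\langle f|\gamma_N^0|g\rangle = \langle \Psi_N^0|a^\dagger(g)a(f)|\Psi_N^0\rangle_{\cH_N}$ from~\eqref{eq:recover_gamma_with_creation_annihilation} and $\hat W = \tfrac{1}{2}\iint w\, \Psi^\dagger\Psi^\dagger\Psi\Psi$, collapses the double sum to $-\Tr_{\cH_1}(h\gamma_N^0) - 2\langle\Psi_N^0, \hat W\Psi_N^0\rangle_{\cH_N}$, which is the desired identity.

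The main obstacle is making the formal field-operator manipulations rigorous, namely the exchange of the summation over $i$ with integrals in $\br,\br'$, and the trace-class character of $A_-^* H_{N-1} A_-$ and of $h A_-^* A_-$. Both points are controlled by the regularity of $\Psi_N^0$ provided by Proposition~\ref{prop:Psi_N^0} (in particular $\Psi_N^0 \in H^2(\R^{3N})$ and the boundedness of $\rho_N^0$). An alternative route, avoiding field operators altogether, is to compute the kernel of $A_-^* H_{N-1} A_-$ directly from $A_-\phi = \sqrt{N}\int \phi(\br)\Psi_N^0(\br,\cdot)\rd\br$, then use the pointwise identity $H_{N-1}^{(2,\ldots,N)}\Psi_N^0(\br,\cdot) = [E_N^0 - h(\br)]\Psi_N^0(\br,\cdot) - \sum_{j=2}^N |\br-\br_j|^{-1}\Psi_N^0(\br,\cdot)$ deduced from $H_N \Psi_N^0 = E_N^0 \Psi_N^0$, and exploit the antisymmetry of $\Psi_N^0$ to identify the two-body contribution with $2\langle\Psi_N^0, \hat W\Psi_N^0\rangle_{\cH_N}$.
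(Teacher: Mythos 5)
Your proposal is correct, and its skeleton (reduce \eqref{eq:GM1} to \eqref{eq:GMglobal} by differentiating $G_\rh$ at $\tau=0^-$, then prove the key identity $\Tr_{\cH_1}\bigl(A_-^\ast(H_{N-1}-E_N^0)A_-\bigr)=-\Tr_{\cH_1}(h\gamma_N^0)-2\langle\Psi_N^0,\hat W\Psi_N^0\rangle$ and combine it with $E_N^0=\Tr_{\cH_1}(h\gamma_N^0)+\langle\Psi_N^0,\hat W\Psi_N^0\rangle$) is exactly the paper's. The difference is in how the key identity is obtained: you work in Fock space, writing $\Tr(A_-^\ast H_{N-1}A_-)=\sum_i\langle a(\overline{\phi_i})\Psi_N^0,\HH\, a(\overline{\phi_i})\Psi_N^0\rangle$ and computing $[\HH,a(f)]$ from the CAR, then resumming with the completeness relation and \eqref{eq:recover_gamma_with_creation_annihilation}; the paper instead stays in first quantization, using the eigenvalue equation to write $H_{N-1}A_-f=E_N^0A_-f-A_-(h f)-\sqrt{N}\sum_{i=1}^{N-1}\int f(\br)\,|\br-\br_i|^{-1}\Psi_N^0(\br,\cdot)\,\rd\br$, so that $A_-^\ast(H_{N-1}-E_N^0)A_-=-\gamma_N^0 h-K_N$ with an explicit integral kernel $K_N(\br,\br')$, whose trace is evaluated as $\int K_N(\br,\br)\,\rd\br$ (this step uses continuity and exponential decay of $\Psi_N^0$ together with a standard criterion for trace-class integral operators) and equals $2\iint \rho_{N,2}^0(\br,\br')|\br-\br'|^{-1}$. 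This is precisely the ``alternative route'' you sketch in your last sentence, and it is the one that makes rigor cheapest: trace-classness follows from $(1-\Delta_{N-1})A_-\in\fS_2(\cH_1,\cH_{N-1})$ (a direct consequence of $\Psi_N^0\in H^2(\R^{3N})$), and no distribution-valued field operators, $\delta$-kernels or interchanges of $\sum_i$ with $\br,\br'$-integrals need to be justified. Your commutator route buys a cleaner algebraic bookkeeping (the appearance of $-\Tr(h\gamma_N^0)$ and of the factor $2$ in front of $\langle\hat W\rangle$ is transparent), but the residual work you flag --- domains for $[\HH,a(f)]\Psi_N^0$, summability over the basis of each of the three pieces separately, and the justification of the completeness-relation step --- is genuinely where the effort lies, and carrying it out essentially amounts to re-deriving the kernel computation of the paper; so either finish those estimates or switch to the kernel route for the write-up.
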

 
The proof of this theorem can be read in Section~\ref{sec:proof:thm:GM_formula}. Formula~\eqref{eq:GM1} is one way to obtain the right-hand side of ~\eqref{eq:GMglobal}, and is the one found in the original article~\cite{Galiskii1958}. There are however other ways to obtain~\eqref{eq:GMglobal} from the hole Green's function, without the use of derivative (which are cumbersome to evaluate numerically). One can for instance use the following equality, that we do not prove for the sake of brevity,
\[
	\Tr_{\cH_1} \Big( A_-^\ast \left( H_{N-1} + \mu - E_N^0 \right) A_- \Big) = \lim_{\omega \to \infty} \omega^2 \Tr_{\cH_1} \left(\Re \widetilde{ G_\rh} (\mu + \ri \omega) \right).
\]

%%%%%%%%%%%%%%%%%%%%%%%%%%%%%%%%%%%%%%%%%

\subsubsection{The time-ordered Green's function $G$}

It is often claimed in the physics literature that the main object of interest is neither the particle nor the hole Green's function, but the function
\[
\cG (\br t, \br' t') = \cG_{\rm p}(\br t, \br' t') + \cG_{\rm h}(\br t, \br' t'),
\]
called the time-ordered Green's function, which can be seen as a convenient way to concatenate the information contained in the particle and hole Green's functions. Obviously, the time-ordered Green's function only depends on the time difference $\tau = t - t'$ and $\cG (\br t, \br' t') = \cG_{\rm p}(\br (t-t'), \br' 0) + \cG_{\rm h}(\br (t-t'), \br' 0)$. In view of (\ref{eq:defGp}) and (\ref{eq:defGh}), our definition of the time-ordered Green's function therefore is the following.

\begin{definition}[Green's function]
  The (time-ordered) Green's function is the family of bounded operators $(G(\tau))_{\tau \in \RR}$ defined as
  \[
  G(\tau) = G_\rp(\tau) + G_\rh(\tau) = - \ri \Theta(\tau) \, A_+ \re^{- \ri \tau (H_{N+1} - E_N^0)} A_+^\ast + \ri \Theta(-\tau) \, A_-^\ast \re^{ \ri \tau (H_{N-1} - E_N^0)} A_-.
  \]
\end{definition}  

The following results straightforwardly follow from Propositions \ref{lemma:forward_GF} and \ref{lemma:backward_GF}, as well as Lemmas~\ref{lem:ReGp} and~\ref{lem:ReGh}. We recall that $\mu$ is a chemical potential of the electrons for the ground state $\Psi_N^0$ of the reference system, and that $E_N^0 - E_{N-1}^0 < \mu < E_{N+1}^0 - E_N^0$. In the following, we introduce some $C^\infty(\R_\omega)$ cut-off functions $\phi_\pm$ satisfying $0\le \phi_\pm \le 1$, $\phi_+ + \phi_-=1$, ${\rm Supp}(\phi_+) \subset (E_N^0 - E_{N-1}^0,+\infty)$ and ${\rm Supp}(\phi_-) \subset (-\infty,E_{N+1}^0 - E_{N}^0)$ (see Figure~\ref{fig:cut_off}). These cut-off functions allow us to write properties of the Green's function in the time representation without specifying whether $\tau$ is positive or negative.

\begin{figure}[!h]
  \begin{center}
    \begin{tikzpicture}[scale =1]
      \draw (-7,0)--(7,0);
      \draw(0, -3) -- (0,3);
      \draw[red, very thick] (0, 0) -- (7, 0);
      \draw[red] (0, -0.2) -- (0, 0.2);
      \draw[red] (-1.2, -0.2) -- (-0.8, 0.2);
      \draw[red] (-0.8, -0.2) -- (-1.2, 0.2);
      \node[red] at (-2.4, -1.1) {$E_N^0 - E_{N-1}^0$};
      \draw[red] (-0.7, -0.2) -- (-0.3, 0.2);
      \draw[red] (-0.3, -0.2) -- (-0.7, 0.2);
      \draw[red, very thick] (-3, 0) -- (-7, 0);
      \draw[red] (-3, -0.2) -- (-3, 0.2);
      \draw[red] (-2, -0.2) -- (-2.4, 0.2);
      \draw[red] (-2.4, -0.2) -- (-2, 0.2);
      \node[red] at (-0.5, -0.5) {$E_{N+1}^0 - E_{N}^0$};
      \draw[blue, very thick] (-1.65, -0.3) -- (-1.65, 0.3);
      \node[blue] at (-1.65, -0.6) {$\mu$};
      \draw[blue] (-6,2) -- (-2.3, 2) node[above] {$\phi_-$};
      \draw[blue]  (-2.3,2)	.. controls (-1.5,2) and (-1.8,0) .. (-1,0);
      \draw[blue] (-1,2) -- (6, 2) node[above] {$\phi_+$};
      \draw[blue]  (-2.3,0)	.. controls (-1.5,0) and (-1.8,2) .. (-1,2);
    \end{tikzpicture}
    \caption{\label{fig:cut_off} The cut-off functions $\phi_{\pm}$.}
  \end{center}
\end{figure}

\begin{proposition}[Properties of the Green's function]
  \label{lemma:time-ordered_GF}
  The Fourier transform $\widehat{G} = \widehat{G_{\rm p}} +  \widehat{G_{\rm h}}$ is in $H^{-s}(\R_\omega,\cB(\cH_1))$ for any $s > 1/2$. The operator-valued analytic function $\widetilde G$ defined on the physical Riemann sheet  $\C \setminus \left( S_{\rm p} \cup S_{\rm h} \right)$ by  
  \begin{equation}
    \label{eq:widetilde_G}
  \forall z \in \C \setminus \left( S_{\rm p} \cup S_{\rm h} \right), 
  \qquad		
  \widetilde{G}(z)  := A_+ \dfrac{1}{ z - (H_{N+1} - E_{N}^0)} A_+^\ast  + A_-^\ast \dfrac{1}{ z - (E_N^0-H_{N-1}) } A_- 
  \end{equation}
  is such that 
  \[
  \lim\limits_{\eta \to 0+} \phi_\pm \widetilde{G}(\cdot \pm \ri \eta) =  \phi_\pm \widehat{G} \quad \mathrm{in} \ H^{-s}(\R_\omega,\cB(\cH_1)) \quad \mathrm{for}\,\mathrm{all} \ s > 1/2.
  \]
  The function $\omega \mapsto \widetilde{G}(\mu + \ri \omega)$ is real analytic from $\R_\omega$ to $\cB(\cH_1)$, and is in $L^p(\R_\omega, \cB(\cH_1))$ for all $p > 1$. Moreover, it satisfies the symmetry property
  \[
  	\forall \omega \in \R_\omega, \quad \Re \widetilde{G}(\mu + \ri \omega) = \Re \widetilde{G}(\mu - \ri \omega).
  \]
  For any $f \in \cH_1$, the function $\omega \mapsto \bra f | \Re \widetilde{G}(\mu + \ri \omega)| f \ket$ is in $L^1(\R_\omega)$, and
  \[
  \int_{-\infty}^{+\infty} \left\bra f \left| \Re \widetilde{G}(\mu + \ri \omega) \right| f \right\ket \, \rd \omega = -\pi \bra f | ( \mathds{1}_{\cH_1} - 2 \gamma_N^0) | f \ket.
  \]
 \end{proposition}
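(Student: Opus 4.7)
The plan is to reduce every assertion to the already-established results for $G_{\rm p}$ and $G_{\rm h}$ separately (Propositions~\ref{lemma:forward_GF} and~\ref{lemma:backward_GF}, Lemmas~\ref{lem:ReGp} and~\ref{lem:ReGh}), using that $G = G_{\rm p} + G_{\rm h}$ by definition. The $H^{-s}$ regularity of $\widehat{G}$ for $s > 1/2$ is then immediate by linearity. For the analytic continuation, the resolvent expressions~\eqref{eq:G_p_tilde} and~\eqref{eq:G_h_tilde} each define operator-valued analytic functions on $\C \setminus S_{\rm p}$ and $\C \setminus S_{\rm h}$ respectively, via the standard spectral calculus on $H_{N+1}-E_N^0$ and $E_N^0-H_{N-1}$; summing gives the formula~\eqref{eq:widetilde_G} for $\widetilde{G}$ on $\C \setminus (S_{\rm p} \cup S_{\rm h})$.

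The slightly delicate point is the convergence $\phi_\pm \widetilde{G}(\cdot \pm \ri \eta) \to \phi_\pm \widehat{G}$ in $H^{-s}$. I will use the HVZ theorem together with Hyp.~2 to note that $S_{\rm p} \subset [E_{N+1}^0 - E_N^0, +\infty)$ and $S_{\rm h} \subset (-\infty, E_N^0 - E_{N-1}^0]$, so that by construction of $\phi_\pm$ one has $\mathrm{Supp}(\phi_+) \cap S_{\rm h} = \emptyset$ and $\mathrm{Supp}(\phi_-) \cap S_{\rm p} = \emptyset$. Splitting
\[
\phi_+ \widetilde{G}(\cdot + \ri \eta) = \phi_+ \widetilde{G_{\rm p}}(\cdot + \ri \eta) + \phi_+ \widetilde{G_{\rm h}}(\cdot + \ri \eta),
\]
the first term converges to $\phi_+ \widehat{G_{\rm p}}$ in $H^{-s}$ by Proposition~\ref{lemma:forward_GF}, while the second converges to $\phi_+ \widehat{G_{\rm h}}$ with much better regularity (in $H^{s}$ for any $s \in \R$) because $\widetilde{G_{\rm h}}$ is analytic in a complex neighborhood of $\mathrm{Supp}(\phi_+)$ and therefore real-analytic there (and coincides with $\widehat{G_{\rm h}}$ on that support). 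The case of $\phi_-$ is symmetric, exchanging the roles of $G_{\rm p}$ and $G_{\rm h}$ and using Proposition~\ref{lemma:backward_GF}.

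For the vertical axis $\mu + \ri \R$: Hyp.~2 places $\mu$ strictly in the resolvent set of both $H_{N+1}-E_N^0$ and $E_N^0 - H_{N-1}$, so $\mu + \ri \R \subset \C \setminus (S_{\rm p} \cup S_{\rm h})$ and $\omega \mapsto \widetilde{G}(\mu + \ri \omega)$ is real analytic with values in $\cB(\cH_1)$. The $L^p$ integrability for $p > 1$ and the symmetry $\Re \widetilde{G}(\mu + \ri \omega) = \Re \widetilde{G}(\mu - \ri \omega)$ follow termwise from Lemmas~\ref{lem:ReGp} and~\ref{lem:ReGh}. Finally, adding the two integral identities of these lemmas yields
\[
\int_{\R} \left\langle f \left| \Re \widetilde{G}(\mu + \ri\omega) \right| f \right\rangle \rd\omega = -\pi \langle f | \mathds{1}_{\cH_1} - \gamma_N^0 | f\rangle + \pi \langle f | \gamma_N^0 | f\rangle = -\pi \langle f | \mathds{1}_{\cH_1} - 2\gamma_N^0 | f\rangle.
\]

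The only nontrivial step is the cut-off argument of the second paragraph, where analyticity of the ``wrong side'' piece on a complex neighborhood of the real line (away from the relevant spectrum) is used to upgrade the mere $H^{-s}$ convergence into a genuinely smooth one, so that the cut-off multiplication and the $\eta \to 0^+$ limit can be harmlessly interchanged. Everything else is bookkeeping on top of the results already proved for $G_{\rm p}$ and $G_{\rm h}$.
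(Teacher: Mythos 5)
Your proposal is correct and follows exactly the route the paper intends: the paper offers no separate argument beyond stating that the proposition ``straightforwardly follows'' from Propositions~\ref{lemma:forward_GF} and~\ref{lemma:backward_GF} and Lemmas~\ref{lem:ReGp} and~\ref{lem:ReGh}, and your reduction by linearity is precisely that. The one point the paper leaves implicit -- that $\mathrm{Supp}(\phi_+)$ (resp.\ $\mathrm{Supp}(\phi_-)$) stays at positive distance from $S_{\rm h}$ (resp.\ $S_{\rm p}$), so the ``wrong-side'' piece is analytic near the support and its boundary value is reached with no loss of regularity -- is exactly what you supply, so nothing is missing.
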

 
%-----------------------------------------------------------------------
\subsubsection{The spectral functions $\cA_{\rp}$, $\cA_{\rh}$ and $\cA$} 
\label{sec:spectral_functions}

Spectral functions are essential tools to study many-body effects since they are concentrated on (subsets of) the particle and hole excitation sets. %The total spectral function is the sum of the particle and hole spectral functions. 
%The particle spectral function is related to discontinuities of the particle Green's function across the real line. 
%From the relation $\widetilde{G_\rp}( \overline{z} ) = \widetilde{G_\rp}(z )^*$ (directly obtained from~\eqref{eq:G_p_tilde}), we deduce
%\[
%\lim_{\eta \to 0^+} \dfrac{\widetilde{G_\rp} (\omega + \ri \eta) - \widetilde{G_\rp} (\omega + \ri \eta)^*}{2\ri} 
%= \lim_{\eta \to 0^+} \dfrac{\widetilde{G_\rp} (\omega + \ri \eta) - \widetilde{G_\rp} (\omega - \ri \eta)}{2\ri}  \quad \mbox{in } H^{-1}(\R_\omega,\cB(\cH_1)).
%\]
%We next remark that the following limit holds in $H^{-1}(\R_\omega,\cB(\cH_1))$,
%\[
%\lim_{\eta \to 0^+} \dfrac{\widetilde{G_\rp} (\omega + \ri \eta) - \widetilde{G_\rp} (\omega + \ri \eta)^*}{2\ri} = \Im \widehat{G_\rp},
%\]
%with the imaginary part of $\widehat{G_\rp}$ being given by~\eqref{eq:hat_Gp}. This motivates the following definition.

\begin{definition}[Spectral functions]
The \emph{particle spectral function} is the operator-valued Borel measure on $\R_\omega$ defined by
\begin{equation} \label{eq:def_particle_SF}
  \forall b \in \mathscr{B}(\R_\omega), \qquad \cA_\rp(b) = -\dfrac{1}{\pi} \Im \widehat{G_\rp}(b) = A_+ P_b^{H_{N+1} - E_N^0} A_+^\ast.
\end{equation}
The \emph{hole spectral function} is similarly defined:
\[
\forall b \in \mathscr{B}(\R_\omega), \qquad \cA_\rh(b)  = \dfrac{1}{\pi} \Im \widehat{G_\rh}(b) = A_-^\ast P_b^{E_N^0 - H_{N-1}} A_-.
\]
The \emph{time-ordered spectral function} is then obtained as $\cA = \cA_\rp + \cA_\rh$.
\end{definition}

With those definitions, the following lemma is straightforward, and is usually referred to as the sum-rule for spectral functions (see for instance~\cite[Section~4.5]{Farid99}).

\begin{proposition} 
  \label{prop:spectral_G}
  The spectral functions $\cA_\rp$, $\cA_\rh$ and $\cA$ are $\cS(\cH_1)$-valued Borel measures on~$\R_\omega$, with supports contained in $S_\rp$, $S_\rh$ and $S_\rp \cup S_\rh$ respectively.
  For all $b \in \mathscr{B}(\R_\omega)$, $\cA_\rp(b)$, $\cA_\rh(b)$ and $\cA(b)$ are bounded positive self-adjoint operators on $\cH_1$ with norms lower or equal to $1$. Moreover, $0 \leq \cA_\rp(b_1) \leq \cA_\rp(b_2)$ as self-adjoint operators when $b_1 \subset b_2$ (and similar inequalities for $\cA_\rh$ and $\cA$), and it holds 
  \[
  \cA_\rp(\R_\omega) = \mathds{1}_{\cH_1} - \gamma_N^0, \qquad \cA_\rh(\R_\omega) = \gamma_N^0, \qquad \cA(\R_\omega) = \mathds{1}_{\cH_1}.
  \]
\end{proposition}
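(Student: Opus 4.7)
The plan is to derive every claim directly from the two explicit formulas $\cA_\rp(b) = A_+ P_b^{H_{N+1}-E_N^0} A_+^\ast$ and $\cA_\rh(b) = A_-^\ast P_b^{E_N^0 - H_{N-1}} A_-$, combined with the two already-established identities $A_+ A_+^\ast = \mathds{1}_{\cH_1} - \gamma_N^0$ and $A_-^\ast A_- = \gamma_N^0$. I would first recall that $b \mapsto P_b^H$ is a projection-valued measure on $\cB(\cH)$, countably additive in the strong operator topology, and that conjugation by a fixed bounded operator preserves this additivity (in the weak, and in fact strong, operator topology when the total measure is bounded). Hence $\cA_\rp$ and $\cA_\rh$ are $\cB(\cH_1)$-valued countably additive set functions, and self-adjointness follows because $P_b^H$ is a self-adjoint projection and conjugation $T\mapsto A_+TA_+^\ast$ preserves self-adjointness.

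Next I would handle the support statement: if $b\cap S_\rp=\emptyset$, then $b\cap\sigma(H_{N+1}-E_N^0)=\emptyset$, so $P_b^{H_{N+1}-E_N^0}=0$ by the spectral theorem, whence $\cA_\rp(b)=0$. The same argument gives $\mathrm{Supp}(\cA_\rh)\subset S_\rh$, and $\mathrm{Supp}(\cA)\subset S_\rp\cup S_\rh$ follows by additivity. For positivity and monotonicity I would use the crucial fact that $P_b^H$ is an orthogonal projection, so $P_b^H = (P_b^H)(P_b^H)^\ast$, giving
\[
\cA_\rp(b) = (A_+ P_b^{H_{N+1}-E_N^0})(A_+ P_b^{H_{N+1}-E_N^0})^\ast \ge 0,
\]
and likewise $\cA_\rh(b)\ge 0$. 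For $b_1\subset b_2$ the inequality $P_{b_1}^H\le P_{b_2}^H$ yields $\cA_\rp(b_1)\le\cA_\rp(b_2)$ by conjugation.

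Finally, to obtain the sum rules and the norm bounds simultaneously, I would evaluate at $b=\R_\omega$. Since $P_{\R_\omega}^H = \mathds{1}$ on the relevant Fock sector, one gets $\cA_\rp(\R_\omega) = A_+ A_+^\ast = \mathds{1}_{\cH_1}-\gamma_N^0$ from~\eqref{eq:A+bounded} and, by the analogous identity, $\cA_\rh(\R_\omega)= A_-^\ast A_- = \gamma_N^0$; adding these gives $\cA(\R_\omega)=\mathds{1}_{\cH_1}$. Monotonicity then implies $0\le \cA_\rp(b)\le \mathds{1}_{\cH_1}-\gamma_N^0\le\mathds{1}_{\cH_1}$ and $0\le \cA_\rh(b)\le\gamma_N^0\le\mathds{1}_{\cH_1}$, so each of these positive self-adjoint operators has norm $\le 1$; since $\cA(b)\le\cA(\R_\omega)=\mathds{1}_{\cH_1}$, the norm bound for $\cA$ follows as well.

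There is no real obstacle: the only slightly delicate point is the precise sense of ``operator-valued Borel measure''. I would take it to mean $\sigma$-additivity in the weak operator topology, which is immediate from the strong $\sigma$-additivity of $P^H$ together with the uniform bound $\|P_b^H\|\le 1$ and boundedness of $A_\pm$. All other steps reduce to one-line applications of the spectral theorem and the two identities $A_+A_+^\ast=\mathds{1}_{\cH_1}-\gamma_N^0$ and $A_-^\ast A_-=\gamma_N^0$.
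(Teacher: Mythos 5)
Your proof is correct and follows exactly the route the paper has in mind: the paper states this proposition without proof, calling it straightforward from the definitions $\cA_\rp(b)=A_+P_b^{H_{N+1}-E_N^0}A_+^\ast$, $\cA_\rh(b)=A_-^\ast P_b^{E_N^0-H_{N-1}}A_-$ together with $A_+A_+^\ast=\mathds{1}_{\cH_1}-\gamma_N^0$ and $A_-^\ast A_-=\gamma_N^0$, which is precisely what you use. Your treatment of positivity via $P_b=P_bP_b^\ast$, monotonicity by conjugation, the support argument via the spectral theorem, and the norm bounds deduced from the sum rules all check out.
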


Finally, the Plemelj formulae (\ref{eq:Plemelj_formulae}) allow us to recover the real part of the Green's functions from the spectral functions: $\Re \widehat{G_{\rp}} = \pi \fH(\cA_\rp)$ and $\Re \widehat{G_{\rh}} = \pi \fH(\cA_\rh)$. It therefore holds $\Re \widehat{G} = \pi \fH \cA$.
%, which can be rewritten in terms of the time-ordered spectral function (without any explicit decomposition into particle and hole parts) as
%\[
%\Re \widehat{G} = \pi \fH \Big( (\phi_+ - \phi_-) \cA \Big),
%\]
%where $\phi_{\pm}$ are the $\cC^\infty(\R)$ functions defined in Proposition~\ref{lemma:time-ordered_GF}. 

%---------------------------------------------------------------------
\subsection{Linear response operators} 
\label{sec:linear_response}

We study in this section the reducible polarizability operator $\chi$, which can be defined from the so-called charge-fluctuation operator introduced in Section~\ref{sec:def_rho_H}. We give a precise mathematical meaning to $\chi$ in Section~\ref{ssec:chi}, and prove Johnson's sum-rule \cite{Johnson1974} for $\chi$ in Section~\ref{sec:sum_rule}. We finally define the dynamically screened Coulomb interaction operator (see Section~\ref{subsec:W}).

\subsubsection{The charge-fluctuation operator $\rho_{\rm H}$}
\label{sec:def_rho_H}

The charge-fluctuation operator is defined, within the second quantization formalism, by (see~\cite[Equation~(97)]{Farid99})
\begin{equation*} %\label{eq:rhoH}
  \rho_{\rm H}(\br t) := \Psi_{\rm H}^\dagger(\br t) \Psi_{\rm H}(\br t) -  \rho_N^0(\br),
\end{equation*}
so that the action of this operator on the $N$-body ground state is
\begin{equation}
  \label{eq:action_rho_H_PsiN}
  \begin{aligned}
    \rho_{\rm H}(\br t) | \Psi_N^0 \ket & =  \left( \re^{\ri t (H_N - E_N^0)} \right) \Psi^\dagger(\br) \Psi(\br) |\Psi_N^0 \ket - \rho_N^0(\br) | \Psi_N^0 \ket \\
    & = \left( \re^{\ri t (H_N - E_N^0)} \right) \left( \Psi^\dagger(\br) \Psi(\br) - \rho_N^0(\br) \right) | \Psi_N^0 \ket.
  \end{aligned}
\end{equation}
In order to define more rigorously $\rho_{\rm H}$, we need to introduce functional spaces of charge densities (the Coulomb space) and electrostatic potentials. The complex-valued Coulomb space
\begin{equation} \label{def:cC}
  \cC := \left\{ f \in \mathscr{S}'(\R^3,\C) \; \left| \; \widehat f \in L^1_\loc(\R^3,\C), \; | \cdot |^{-1} \widehat f(\cdot) \in L^2(\R^3,\C) \right.\right\},
\end{equation}
is endowed with the inner product
\[
\langle f_1 | f_2 \rangle_{\cC} = 4\pi \int_{\R^3} \frac{\overline{\widehat{f_1}(\bk)}\widehat{f_2}(\bk)}{|\bk|^2} \, \rd \bk,
\]
where the normalization condition for the space-Fourier transform is chosen such that its restriction to $L^2(\R^3, \C)$ is a unitary operator. The space~$\cC$ is a Hilbert space, and $L^{6/5}(\R^3,\C) \hookrightarrow \cC$ thanks to the Hardy-Littlewood-Sobolev inequality (upon rewriting the products in Fourier space as convolutions). The dual of $\cC$ (taking $L^2(\R^3, \C)$ as a pivoting space) is
\begin{equation} \label{def:cC'}
  \cC' := \left\{ v \in L^6(\R^3,\C) \; \left| \; \nabla v \in \left( L^2(\R^3,\C) \right)^3 \right. \right\},
\end{equation}
endowed with the inner product
\[
\langle V_1 |V_2 \rangle_{\cC'} := \frac{1}{4\pi} \int_{\R^3} \overline{\nabla V_1} \cdot \nabla V_2 = 
\frac{1}{4\pi}\int_{\R^3} |\bk|^2\overline{\widehat{V}_1(\bk)} \, \widehat{V}_2(\bk)\, \rd \bk.
\]

We also introduce the Coulomb operator $\vc$, defined as the multiplication operator by $4\pi |\bk|^{-2}$ in the Fourier representation, and its square root $\vc^{1/2}$, defined as the multiplication operator by $(4\pi)^{1/2} |\bk|^{-1}$ in the Fourier representation. The following result, whose proof is a straightforward consequence of the above definitions, will be repeatedly used throughout this article.   

\begin{lemma} 
  \label{lem:vc}
  The operator $\vc$ defines a unitary operator from $\cC$ to $\cC'$.  The operator $\vc^{1/2}$ defines a unitary operator from $\cC$ to $\cH_1$, as well as a unitary operator from $\cH_1$ to $\cC'$.
\end{lemma}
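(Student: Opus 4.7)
The strategy is to work entirely on the Fourier side, where the three Hilbert spaces admit parallel characterizations: $\cC$ corresponds to $\{\widehat f \text{ with } |\cdot|^{-1}\widehat f \in L^2\}$, $\cH_1=L^2$ corresponds to $\{\widehat g \in L^2\}$, and $\cC'$ corresponds to $\{\widehat V \text{ with } |\cdot|\,\widehat V \in L^2\}$ (together with the distributional constraint $\widehat f \in L^1_{\loc}$ for elements of $\cC$). On these characterizations, multiplication by the symbols $4\pi|\bk|^{-2}$ and $\sqrt{4\pi}|\bk|^{-1}$ clearly intertwines the three spaces, and the definitions of the scalar products were precisely chosen to make these maps isometric.

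First I would handle $v_c^{1/2}\colon \cC \to \cH_1$, since it is the simplest. For $f \in \cC$, set $g := v_c^{1/2}f$, so that $\widehat g(\bk) = \sqrt{4\pi}|\bk|^{-1}\widehat f(\bk)$. The condition $|\cdot|^{-1}\widehat f \in L^2$ gives $\widehat g \in L^2$, hence $g \in \cH_1$ with $\|g\|_{\cH_1}^2 = 4\pi\int |\widehat f(\bk)|^2/|\bk|^2\,\rd\bk = \|f\|_\cC^2$. Conversely, given $g \in \cH_1$, define $f$ via $\widehat f(\bk) = (4\pi)^{-1/2}|\bk|\,\widehat g(\bk)$; then $|\cdot|^{-1}\widehat f \in L^2$, and $\widehat f \in L^1_{\loc}$ because on any bounded ball $B$, $|\bk|$ is bounded and $\widehat g \in L^2(B) \hookrightarrow L^1(B)$. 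This gives the inverse, so $v_c^{1/2}$ is a unitary $\cC \to \cH_1$.

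Next I would treat $v_c^{1/2}\colon \cH_1 \to \cC'$. For $g \in \cH_1$, let $V := v_c^{1/2}g$ so that $|\bk|\widehat V = \sqrt{4\pi}\,\widehat g \in L^2$; this immediately yields $\nabla V \in (L^2)^3$ and $\|V\|_{\cC'} = \|g\|_{\cH_1}$. The only non-trivial point is membership in $L^6$, which follows from the homogeneous Sobolev embedding $\dot H^1(\R^3) \hookrightarrow L^6(\R^3)$, i.e.\ $\|V\|_{L^6}\lesssim \|\nabla V\|_{L^2}$. Surjectivity is obtained by inverting: given $V \in \cC'$, set $\widehat g = (4\pi)^{-1/2}|\bk|\widehat V$, which belongs to $L^2$ by definition of $\cC'$. (Note that $\widehat V$ itself is locally integrable thanks to $|\bk|\widehat V \in L^2$ combined with $|\bk|^{-1}\in L^2_{\loc}(\R^3)$, so everything makes sense as tempered distributions.)

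Finally, the statement about $v_c\colon \cC \to \cC'$ follows by composition: $v_c = v_c^{1/2}\circ v_c^{1/2}$ where the first factor is unitary $\cC\to\cH_1$ and the second is unitary $\cH_1\to\cC'$, so $v_c$ is unitary $\cC\to\cC'$. Alternatively, one verifies it directly in Fourier: for $f\in\cC$, $\widehat{v_cf}(\bk) = 4\pi|\bk|^{-2}\widehat f(\bk)$ so $|\bk|\widehat{v_cf} = 4\pi|\bk|^{-1}\widehat f \in L^2$, and
\[
\|v_c f\|_{\cC'}^2 = \frac{1}{4\pi}\int_{\R^3}|\bk|^2|\widehat{v_cf}(\bk)|^2\,\rd\bk = 4\pi\int_{\R^3}\frac{|\widehat f(\bk)|^2}{|\bk|^2}\,\rd\bk = \|f\|_\cC^2.
\]
The only genuine obstacle in the whole proof is the verification that elements produced via the inverse map lie in the correct functional space of tempered distributions (in particular the $L^6$ membership of $\cC'$ and the $L^1_{\loc}$ condition for $\widehat f$); both are handled by the Sobolev embedding $\dot H^1(\R^3)\hookrightarrow L^6(\R^3)$ and the local $L^2$-integrability of $1/|\bk|$ in three dimensions.
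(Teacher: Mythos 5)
Your proof is correct and follows the same route the paper has in mind when it calls the lemma ``a straightforward consequence of the above definitions'': a direct Fourier-side verification that multiplication by $(4\pi)^{1/2}|\bk|^{-1}$ (resp.\ $4\pi|\bk|^{-2}$) is an isometric bijection between the relevant spaces, with the scalar products matching by construction. You also correctly isolate the only non-automatic points (the $L^1_{\loc}$ condition on $\widehat f$, and membership of $\vc^{1/2}g$ in $L^6$ via the homogeneous Sobolev embedding $\dot H^1(\R^3)\hookrightarrow L^6(\R^3)$), which is exactly what is needed to make the surjectivity claims rigorous.
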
 
It follows that the adjoint of the unitary operator $\vc : \cC \rightarrow \cC'$ is the unitary operator $\vc^* = \vc^{-1}: \cC' \rightarrow \cC$.

We are now able to reformulate the charge-fluctuation operator in the ground state as a well defined bounded operator. For $v \in C^\infty_c(\R^3, \C)$, it formally holds
%\[
%\int_{\R^3} \Psi^\dagger(\br) \Psi(\br) \, v(\br) \, \rd\br = \sum_{i,j=1}^{+\infty} \langle \phi_i,v\phi_j\rangle a^\dagger(\phi_i) a(\phi_j) = \sum_{j=1}^{+\infty} a^\dagger(v\phi_j)a(\phi_j),
%\]
%so that 
\[
\left( \int_{\R^3} \left( \Psi^\dagger(\br) \Psi(\br) - \rho_N^0(\br) \right) | \Psi_N^0 \ket \, v(\br) \rd \br \right) (\br_1, \ldots, \br_N) = \left[ \left( \sum_{i=1}^N v(\br_i) \right) - \int_{\R^3} v \rho_N^0 \right] \Psi_N^0 (\br_1, \ldots \br_N).
\]
In order to rewrite more rigorously this equality, we introduce the operator
\begin{equation} \label{eq:B0}
  \begin{array}{rrcl}
    B : & \cC' & \to & \cH_N \\
    & v & \mapsto & \dps \left[\left( \sum_{i=1}^N v(\br_i) \right) - \bra v,  \rho_N^0 \ket_{\cC',\cC} \right] | \Psi_N^0 \ket,
  \end{array}
\end{equation}
which is well defined since $\rho_N^0 \in L^{6/5}(\R^3, \R)$ by Proposition~\ref{prop:Psi_N^0}. In fact, as made clear in Lemma~\ref{lem:B0} below, $B$ is bounded. In view of~\eqref{eq:action_rho_H_PsiN}, we can finally define the application to $\Psi_N^0$ of the charge-fluctuation operator $\rho_{\rm H}(t)$ as follows:
\begin{equation} 
  \label{eq:rhoH}
  \rho_{\rm H}(t) | \Psi_N^0 \ket = \re^{\ri t (H_N - E_N^0)} B.
\end{equation}

Let us conclude this section by giving some properties of the operators introduced above (see Section~\ref{sec:proof:lem:B0} for the proof). 

\begin{lemma} 
  \label{lem:B0}
  The operator $B$ defined by (\ref{eq:B0}) is a bounded operator from $\cC'$ to $\cH_N$. Its adjoint $B^*$ is a bounded operator from $\cH_N$ to $\cC'$ which satisfies $B^* | \Psi_N^0 \ket = 0$. As a consequence, $\rho_{\rm H}| \Psi_N^0 \ket \in L^\infty(\R_t,\cB(\cC',\cH_N))$, and $ \left( \rho_{\rm H}| \Psi_N^0 \ket \right)^* \in L^\infty(\R_t,\cB(\cH_N,\cC'))$.
\end{lemma}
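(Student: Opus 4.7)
The plan is to obtain the boundedness of $B$ by a direct $L^2$ expansion of $\|Bv\|_{\cH_N}^2$, to derive $B^*\Psi_N^0 = 0$ as an immediate consequence of the same computation (which is the whole point of the $\langle v,\rho_N^0\rangle$ subtraction in the definition of $B$), and to transfer the estimates to the time-dependent operator $\rho_{\rm H}|\Psi_N^0\rangle$ via unitarity of the Heisenberg evolution on $\cH_N$.

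\textbf{Boundedness of $B$.} Set $T_v := \sum_{i=1}^{N} v(\br_i)$ and $a := \langle v,\rho_N^0\rangle_{\cC',\cC}$, so that $Bv = (T_v - a)\Psi_N^0$. Using that $\rho_N^0$ is real and $\|\Psi_N^0\|_{\cH_N} = 1$, a short computation gives $\langle T_v\Psi_N^0,\Psi_N^0\rangle_{\cH_N} = \overline{a}$, hence
\[
\|Bv\|_{\cH_N}^2 \;=\; \|T_v\Psi_N^0\|_{\cH_N}^2 - |a|^2 \;\leq\; \|T_v\Psi_N^0\|_{\cH_N}^2.
\]
I would then expand $\|T_v\Psi_N^0\|_{\cH_N}^2 = \sum_{i,j}\int \overline{v(\br_i)}v(\br_j)|\Psi_N^0|^2$, split into diagonal ($i=j$) and off-diagonal ($i\ne j$) terms, and use the permutation symmetry of $|\Psi_N^0|^2$: the diagonal sum equals $\int|v|^2\rho_N^0$, while each off-diagonal integral equals $\tfrac{2}{N(N-1)}\langle v|\rho_{N,2}^0|v\rangle_{L^2}$. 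The pointwise bound $|\overline{v(\br)}v(\br')| \leq \tfrac12(|v(\br)|^2+|v(\br')|^2)$ combined with the symmetry of $|\Psi_N^0|^2$ dominates the off-diagonal contribution by the diagonal one, giving $\|T_v\Psi_N^0\|_{\cH_N}^2 \leq N\int |v|^2\rho_N^0$. Since $\rho_N^0 \in L^1\cap L^\infty \subset L^{3/2}(\R^3)$ by Proposition~\ref{prop:Psi_N^0} and $\cC' \hookrightarrow L^6(\R^3)$ by Sobolev embedding, H\"older's inequality yields $\int|v|^2\rho_N^0 \leq \|\rho_N^0\|_{L^{3/2}}\|v\|_{L^6}^2 \leq C\|v\|_{\cC'}^2$. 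Antisymmetry of $Bv$ follows from that of $\Psi_N^0$ because $T_v - a$ is a symmetric multiplication operator, so $Bv \in \cH_N$ and $B \in \cB(\cC',\cH_N)$.

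\textbf{Kernel of $B^*$ and time-dependent statements.} The Hilbert adjoint $B^* \in \cB(\cH_N,\cC')$ exists by the previous step. Specializing $\langle v,B^*\Psi\rangle_{\cC'} = \langle Bv,\Psi\rangle_{\cH_N}$ to $\Psi = \Psi_N^0$, the identity $\langle T_v\Psi_N^0,\Psi_N^0\rangle = \overline{a}$ just derived gives $\langle v, B^*\Psi_N^0\rangle_{\cC'} = \overline{a} - \overline{a} = 0$ for every $v\in\cC'$, hence $B^*\Psi_N^0 = 0$ in $\cC'$. Formula~\eqref{eq:rhoH} reads $\rho_{\rm H}(t)|\Psi_N^0\rangle = \re^{\ri t(H_N - E_N^0)} B$; since the Heisenberg exponential is unitary on $\cH_N$, this composition has operator norm independent of $t$ and equal to $\|B\|_{\cB(\cC',\cH_N)}$, which proves $\rho_{\rm H}|\Psi_N^0\rangle \in L^\infty(\R_t,\cB(\cC',\cH_N))$. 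Taking adjoints and again using unitarity yields the companion $L^\infty$ bound for $(\rho_{\rm H}|\Psi_N^0\rangle)^*$.

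\textbf{Expected main obstacle.} The only delicate step is controlling $\langle v|\rho_{N,2}^0|v\rangle_{L^2}$ in terms of $\|v\|_{\cC'}^2$. The naive operator estimate suggested by Proposition~\ref{prop:Psi_N^0}(5), namely $|\langle v|\rho_{N,2}^0|v\rangle| \leq \|\rho_{N,2}^0\|_{\cB(\cH_1)}\|v\|_{L^2}^2$, is useless because elements of $\cC'$ need not lie in $L^2(\R^3)$. The key trick is to absorb the off-diagonal into the diagonal term $\int|v|^2\rho_N^0$ via pointwise Cauchy--Schwarz on the non-negative density $|\Psi_N^0|^2$, which reduces the whole estimate to the $L^6$ integrability actually provided by $\cC'$.
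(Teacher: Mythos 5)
Your proposal is correct and takes essentially the same route as the paper: the heart of the argument is the bound $\bigl\| \sum_{i=1}^N v(\br_i)\,\Psi_N^0 \bigr\|_{\cH_N}^2 \le N \int_{\R^3} |v|^2 \rho_N^0$ (your diagonal/off-diagonal splitting with pointwise AM--GM is just the Cauchy--Schwarz inequality $|\sum_i v(\br_i)|^2 \le N \sum_i |v(\br_i)|^2$ used in the paper), followed by H\"older with $\rho_N^0 \in L^{3/2}$ and the embedding $\cC' \hookrightarrow L^6$, and the identity $\bra \Psi_N^0 | \sum_i v(\br_i) | \Psi_N^0 \ket_{\cH_N} = \bra v, \rho_N^0\ket_{\cC',\cC}$ giving $B^*\Psi_N^0 = 0$ and, via unitarity of $\re^{\ri t (H_N - E_N^0)}$, the time-uniform bounds. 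The only (harmless) variation is your variance identity $\|Bv\|_{\cH_N}^2 = \|T_v \Psi_N^0\|_{\cH_N}^2 - |a|^2$, which lets you skip the paper's separate estimate of the constant term through $\rho_N^0 \in L^{6/5} \hookrightarrow \cC$.
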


%---------------------------------------------------------
\subsubsection{The (symmetrized) reducible polarizability operator $\chi$} 
\label{ssec:chi}

\paragraph{Definition of the reducible polarizability operator.}
The reducible polarizability operator $\chi(t,t')$ is the operator giving the response of the density of the system to perturbations of the external potential. It is formally defined by its kernel (see~\cite[Equation~(96)]{Farid99})
\begin{equation} \label{eq:chi}
  \chi(\br t, \br' t') := - \ri \left\bra\left. \Psi_N^0 \right| \mathcal{T} \left\{ \rho_{\rm H}(\br t) \rho_{\rm H}(\br' t') \right\} \left| \Psi_N^0 \right.\right\ket_{\cH_N}.
\end{equation}
In the above equation, $\rho_{\rm H}$ is the charge-fluctuation operator whose action on $\Psi_N^0$ is defined by~(\ref{eq:rhoH}), and $\mathcal{T}$ stands for the bosonic time-ordering operator:
\[
\mathcal{T} \left\{ A_1(t) A_2(t') \right\} = \left| 
\begin{array}{l} 
A_1(t) A_2(t') \quad \mathrm{if} \ t' < t, \\ 
A_2(t') A_1(t) \quad \mathrm{if} \ t' > t.
\end{array} \right.
\]
In view of~\eqref{eq:rhoH}, the definition~\eqref{eq:chi} of the kernel is formally equivalent to the following identity, stated for $t' < t$ (a similar equality being true for $t' > t$):
\[
\begin{aligned}
\int_{\R^3} \overline{f} \left( \chi(t, t') g \right) & = -\ri \int_{\R^3}\int_{\R^3} \overline{f}(\br) \left\bra\left. \Psi_N^0 \right| \rho_{\rm H}(\br t) \rho_{\rm H}(\br' t') \left| \Psi_N^0 \right.\right\ket_{\cH_N} g(\br') \, \rd\br \, \rd\br' \\
& = -\ri\left\bra\left. \int_{\R^3} f(\br) \rho_{\rm H}(\br t) \Psi_N^0 \, \rd\br \, \right| \int_{\R^3} g(\br') \rho_{\rm H}(\br' t') \Psi_N^0 \, \rd\br'\right\ket_{\cH_N} \\
& = -\ri \left\bra \left. \re^{\ri t(H_N - E_N^0)} Bf \, \right| \re^{\ri t'(H_N - E_N^0)} Bg \right\ket_{\cH_N} \\
& = -\ri \left\bra  f \left| B^* \re^{-\ri (t-t')(H_N - E_N^0)} Bg \right.\right\ket_{\cC'}.
\end{aligned}
\]
In order to interpret $\chi$ as giving the variation of the ground state density (an element of~$\cC$) generated by a variation of the external potential (an element of~$\cC'$), we rewrite the scalar product in~$\cC'$ as a duality braket between $\cC'$ and $\cC$:
\begin{equation}
\label{eq:from_cC'_to_duality_cC_cC'}
\left\bra f_1 \left| f_2 \right.\right\ket_{\cC'} = \left\bra \overline{f_1}, \vc^{-1} f_2 \right\ket_{\cC',\cC}.
\end{equation}
This motivates defining $\chi(t, t')$ as the bounded operator from $\cC'$ to $\cC$ given by
\[
\chi( t, t') = - \ri  \vc^{-1} B^* \re^{- \ri | t - t' | (H_N - E_N^0)} B.
\]
In particular, $\chi(t,t')$ only depends on the time difference $t - t'$, and we write in the sequel $\chi(\tau) :=  \chi(\tau, 0)$:
\begin{equation} 
  \label{def:chi2}
  \chi(\tau) = - \ri  \vc^{-1} B^* \re^{- \ri | \tau | (H_N - E_N^0)} B.
\end{equation}

It turns out to be useful to symmetrize the action of the polarizability operator using appropriate Coulomb operators. Recall that $B\vc^{1/2} \in \cB(\cH_1,\cH_N)$ while $(B\vc^{1/2})^* = \vc^{-1/2} B^* \in \cB(\cH_N,\cH_1)$.

\begin{definition}
The symmetrized reducible polarizability operator $\chi_{\rm sym} \in L^\infty(\R_\tau, \cB(\cH_1))$ is defined by
\[
	\forall \tau \in \R_\tau, \quad \chi_{\rm sym}(\tau) = \vc^{1/2}  \chi(\tau) \vc^{1/2}  = - \ri \vc^{-1/2} B^* \re^{- \ri | \tau | (H_N - E_N^0)} B\vc^{1/2}.
\]
\end{definition}

It is convenient to decompose the symmetrized reducible polarizability operator into two parts, namely its causal part and its anti-causal part:
\begin{equation}
  \label{eq:chi+-}
  \chi_\sym(\tau) = \chi_\sym^+(\tau) + \chi_\sym^-(\tau) \quad \text{with} \quad \chi_\sym^{\pm} (\tau) = \Theta(\pm \tau) \left( - \ri \vc^{-1/2}  B^* \re^{- \ri | \tau | (H_N - E_N^0)} B \vc^{1/2} \right).
\end{equation}
In the above expression, the Hamiltonian $H_N$ can be replaced by 
\begin{equation*}% \label{eq:HNsharp}
	H_N^\sharp := H_N \big|_{\{\Psi_N^0\}^\perp} .%= H_N - E_N^0 | \Psi_N^0 \ket \bra \Psi_N^0 |.
\end{equation*}
This is a consequence of Lemma~\ref{lem:B0} which shows that ${\rm Ran} \left(B\right) \subset \left\{ \Psi_N^0 \right\}^\perp$. Note that $H_N^\sharp - E_N^0 \ge E_N^1 - E_N^0$.

\paragraph{Properties of the symmetrized reducible polarizability operator.}
As rigorously stated below, the symmetrized polarizability operator has singularities at the energy differences corresponding to excitation energies for a system with a fixed number $N$ of electrons, called neutral excitations in~\cite[Section~8]{Farid99}. We therefore introduce the neutral excitation set
\[
S_0^+ := \sigma(H_N -E_N^0) \setminus \{ 0 \} = \sigma \left( H_N^\sharp - E_N^0 \right),
\]
its reflection $S_0^-:=-S_0^+$ and $S_0 := S_0^+ \cup S_0^-$. Note that $S_0^+ \subset [E_N^1-E_N^0,+\infty)$ so that $S_{0}^- \cap S_{0}^+ = \emptyset$.

As for Proposition~\ref{lemma:time-ordered_GF}, it turns out to be convenient to introduce appopriate cut-off functions. Consider $\phi^1_\pm$ such that $\phi^1_-$ and $\phi^1_+$ are in $C^\infty(\R_\omega)$ and satisfy $0\le \phi^1_\pm \le 1$, $\phi^1_+ + \phi^1_-=1$, ${\rm Supp}(\phi^1_+) \subset (-(E_N^1 - E_N^0),+\infty)$ and ${\rm Supp}(\phi^1_-) \subset (-\infty,E_{N}^1 - E_{N}^0)$ (see Figure~\ref{fig:cut_off_fct_chi}).

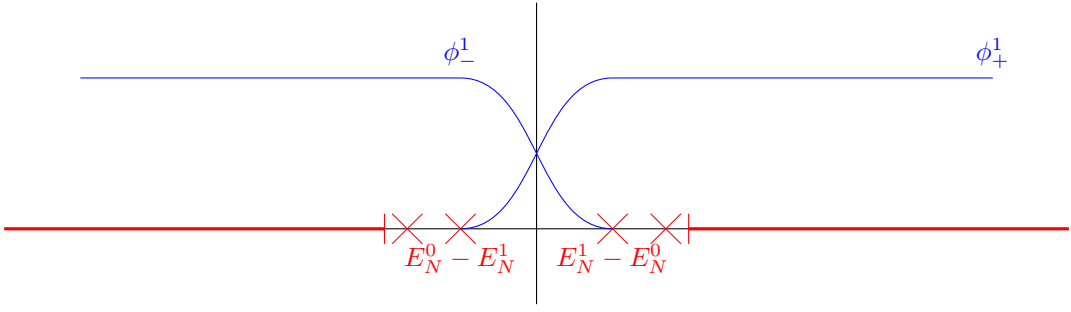
\begin{figure}[!h]
  \begin{center}
    \begin{tikzpicture}[scale =1]
      \draw (-7,0)--(7,0);
      \draw(0, -1) -- (0,3);
      \draw[red, very thick] (2, 0) -- (7, 0);
      \draw[red] (2, -0.2) -- (2, 0.2);
      \draw[red] (1.2, -0.2) -- (0.8, 0.2);
      \draw[red] (0.8, -0.2) -- (1.2, 0.2);
      \draw[red] (1.5, -0.2) -- (1.9, 0.2);
      \draw[red] (1.9, -0.2) -- (1.5, 0.2);
      \node[red] at (1, -0.4) {$E_N^1 - E_N^0$};
      \draw[red, very thick] (-2, 0) -- (-7, 0);
      \draw[red] (-2, -0.2) -- (-2, 0.2);
      \draw[red] (-1.2, -0.2) -- (-0.8, 0.2);
      \draw[red] (-0.8, -0.2) -- (-1.2, 0.2);
      \draw[red] (-1.5, -0.2) -- (-1.9, 0.2);
      \draw[red] (-1.9, -0.2) -- (-1.5, 0.2);
      \node[red] at (-1, -0.4) {$E_N^0 - E_N^1$};
      \draw[blue] (-6,2) -- (-1, 2) node[above] {$\phi^1_-$};
      \draw[blue]  (-1,2)	.. controls (0,2) and (0,0) .. (1,0);
      \draw[blue] (1,2) -- (6, 2) node[above] {$\phi^1_+$};
      \draw[blue]  (-1,0)	.. controls (0,0) and (0,2) .. (1,2);
    \end{tikzpicture}
    \caption{\label{fig:cut_off_fct_chi}The cut-off functions $\phi^1_{\pm}$.}
  \end{center}
\end{figure}

\begin{proposition} 
  \label{lem:chi^pm} 
  The symmetrized reducible polarizability operator $\chi_{\rm sym}$ satisfies the following properties:
  \begin{enumerate}[(1)]
  \item  $(\chi^+_{\rm sym}(\tau))_{\tau \in \R}$ is a bounded causal operator on $\cH_1$ while $(\chi_{\rm sym}^-(\tau))_{\tau \in \R}$ is a bounded anti-causal operator on $\cH_1$. They satisfy the following symmetry properties:
    \begin{equation} \label{chi12}
      \forall \tau \in \R, \quad \chi_\sym(-\tau) = \chi_\sym(\tau) \quad \mbox{and} \quad \chi_\sym^+(\tau) = \chi_\sym^-(-\tau);
    \end{equation} 
  \item the real and imaginary parts of the time-Fourier transforms of $\chi^+_{\rm sym}, \chi^-_{\rm sym}$ are respectively given by
    \[
    \Re \widehat{ \chi_{\rm sym}^\pm} = \pm \vc^{-1/2} B^* \pv \left(  \dfrac{1}{\cdot \mp (H_N^\sharp - E_N^0)} \right) B\vc^{1/2},
    \]
    and
    \[
    \Im \widehat{\chi_{\rm sym}^{\pm}} =  - \pi \vc^{-1/2} B^* P^{\pm(H_N^\sharp - E_N^0)} B \vc^{1/2}.
    \]
    In particular, $\mbox{\rm Supp}\left(\Im \widehat{\chi_{\rm sym}^{\pm}}\right) \subset S_0^\pm$ and $\mbox{\rm Supp}\left(\Im \widehat{\chi_{\rm sym}}\right) \subset S_0$;
  \item consider the $\cB(\cH_1)$-valued analytic functions $\widetilde{\chi_{\rm sym}^+}$, $\widetilde{\chi_{\rm sym}^-}$ and $\widetilde {\chi_{\rm sym}}$ respectively defined by
    \[
    \forall  z \in \C \setminus S_0^\pm, \qquad  \widetilde{\chi_{\rm sym}^\pm} (z)  := \pm \vc^{-1/2} B^* \dfrac{1}{z\mp(H_N^\sharp - E_N^0)} B\vc^{1/2},
    \]
    and 
    \begin{equation}
      \label{eq:chi_tilde_sym}
      \forall  z \in \C \setminus S_0, \qquad  \widetilde{\chi_{\rm sym}}(z)  := \widetilde{\chi_{\rm sym}^+} (z) + \widetilde{\chi_{\rm sym}^-} (z)=  - \vc^{-1/2} B^* \frac{2(H_N^\sharp - E_N^0)}{(H_N^\sharp - E_N^0)^2-z^2} B\vc^{1/2}. 
    \end{equation}
    It holds 
    \[
    \forall z \in \C \setminus S_0^+, 
    \qquad 
    \widetilde{\chi^+_{\rm sym}}(z) = \widetilde{\chi^-_{\rm sym}}(-z) = \left( \widetilde{\chi^+_{\rm sym}}(\overline{z})\right)^*
    \]
    and
    \[
    \forall z \in \C \setminus S_0, \quad \widetilde{\chi_{\rm sym}}(z) = \widetilde{\chi_{\rm sym}}(-z) = \left (\widetilde{\chi_{\rm sym}}(\overline{z}) \right)^\ast.
    \] 
    The functions $\widetilde{\chi_{\rm sym}^+}|_\UU$ and $\widetilde{\chi_{\rm sym}^-}|_\LL$ are respectively the Laplace transforms of $ \chi_{\rm sym}^+ $ and $ \chi_{\rm sym}^- $, and the following convergences hold in $H^{-s}(\R_\omega,\cB(\cH_1))$ for all $s > 1/2$:
    \[
    \lim\limits_{\eta \to 0^+}  \widetilde{\chi_{\rm sym}^\pm}(\cdot \pm \ri \eta) = \widehat{\chi_{\rm sym}^\pm}, 
    \qquad
    \lim_{\eta \to 0^+} \phi^1_\pm \widetilde{\chi_{\rm sym}}(\cdot \pm \ri \eta)  = \phi^1_\pm \widehat{\chi_{\rm sym}}; 
    \]
  \item for all $\omega \in \left( -(E_N^1 - E_N^0), E_N^1 - E_N^0 \right)$,  $\widetilde{\chi_{\rm sym}}(\omega)= \widehat{\chi_{\rm sym}}(\omega)$ is a negative bounded self-adjoint operator on $\cH_1$;
  \item for all $\omega \in \R$, $ \widetilde{\chi_{\rm sym}}(\ri \omega)$ is a negative bounded self-adjoint operator on $\cH_1$.
  \end{enumerate}
\end{proposition}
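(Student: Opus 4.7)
The plan is to start from the explicit representation~\eqref{eq:chi_tilde_sym} and evaluate it at the purely imaginary point $z = \ri\omega$. Since $(\ri\omega)^2 = -\omega^2$, this yields
\[
\widetilde{\chi_{\rm sym}}(\ri\omega) = -\vc^{-1/2} B^* \frac{2(H_N^\sharp - E_N^0)}{(H_N^\sharp - E_N^0)^2 + \omega^2} B\vc^{1/2},
\]
and the strategy is to recognize the right-hand side as minus an operator of the form $M^*M$, which immediately yields the three desired properties (bounded, self-adjoint, negative).

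First, I would set $T := H_N^\sharp - E_N^0$, viewed as a self-adjoint operator on $\{\Psi_N^0\}^\perp$. By \textbf{Hyp.~1}, $E_N^0$ is a simple discrete eigenvalue of $H_N$, so $T \geq E_N^1 - E_N^0 > 0$. The function $\varphi_\omega(t) := 2t/(t^2+\omega^2)$ is positive and bounded on $[E_N^1-E_N^0,+\infty) \supset \sigma(T)$ (with sup norm at most $\max(1/|\omega|,2/(E_N^1-E_N^0))$), so by the spectral theorem $\varphi_\omega(T)$ is a bounded positive self-adjoint operator on $\{\Psi_N^0\}^\perp$, and its square root $\varphi_\omega(T)^{1/2}$ is well defined, bounded and self-adjoint.

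Second, I would introduce $A := B\vc^{1/2}$. By Lemma~\ref{lem:vc}, $\vc^{1/2}$ is a unitary operator from $\cH_1$ to $\cC'$, and by Lemma~\ref{lem:B0}, $B$ is a bounded operator from $\cC'$ to $\cH_N$ whose adjoint satisfies $B^*\Psi_N^0 = 0$, so that $\mathrm{Ran}(B) \subset \{\Psi_N^0\}^\perp$. Consequently $A \in \cB(\cH_1, \{\Psi_N^0\}^\perp)$, its adjoint is $A^* = \vc^{-1/2}B^* \in \cB(\{\Psi_N^0\}^\perp, \cH_1)$, and the formula above rewrites as $\widetilde{\chi_{\rm sym}}(\ri\omega) = -A^* \varphi_\omega(T) A$.

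Third, factoring through the square root gives
\[
\widetilde{\chi_{\rm sym}}(\ri\omega) = -\bigl(\varphi_\omega(T)^{1/2} A\bigr)^* \bigl(\varphi_\omega(T)^{1/2} A\bigr),
\]
which is manifestly a negative bounded self-adjoint operator on $\cH_1$. No step constitutes a real obstacle; the only subtle point is to work on the reduced subspace $\{\Psi_N^0\}^\perp$ where $T$ is strictly positive, which is what makes $\varphi_\omega(T)$ (and hence its square root) well-defined and bounded for every $\omega \in \R$, including $\omega = 0$. Hypothesis~\textbf{Hyp.~1} (simplicity of $E_N^0$) is precisely what guarantees the spectral gap needed here.
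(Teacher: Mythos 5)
There is a genuine gap: your argument only establishes item~(5) of the proposition (and, with a trivial modification, item~(4)), while taking as given precisely the content of items~(1)--(3). You "start from the explicit representation~\eqref{eq:chi_tilde_sym}", but the fact that $\widetilde{\chi_{\rm sym}}$ admits this resolvent representation, that $\widetilde{\chi^+_{\rm sym}}|_\UU$ and $\widetilde{\chi^-_{\rm sym}}|_\LL$ are the Laplace transforms of $\chi^\pm_{\rm sym}$, and that the boundary values converge in $H^{-s}(\R_\omega,\cB(\cH_1))$ with the stated real and imaginary parts, is the bulk of what the proposition asserts. To close the gap you would need to argue as the paper does for the Green's functions: starting from the time-domain definition $\chi^\pm_\sym(\tau)=\mp\ri\,\Theta(\pm\tau)\,\vc^{-1/2}B^*\re^{-\ri|\tau|(H_N^\sharp-E_N^0)}B\vc^{1/2}$ (with $\mathrm{Ran}(B)\subset\{\Psi_N^0\}^\perp$ by Lemma~\ref{lem:B0}), show boundedness via Lemmas~\ref{lem:vc} and~\ref{lem:B0}, read off causality/anti-causality and the symmetries~\eqref{chi12} from the $|\tau|$-dependence and the reality of $\Psi_N^0$, and then apply Propositions~\ref{prop:resolvent} and~\ref{prop:resolvent_anti-causal} (analytic extension of causal and anti-causal time propagators) with $H=H_N^\sharp-E_N^0$ to obtain items~(2)--(3), the cut-off functions $\phi^1_\pm$ being needed to state the concatenated convergence for $\widehat{\chi_{\rm sym}}$. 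None of this machinery appears in your proposal, and item~(4) additionally uses the identification $\widetilde{\chi_{\rm sym}}=\widehat{\chi_{\rm sym}}$ on the gap, which rests on those convergence results.

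Within its limited scope, your argument for item~(5) is correct and is exactly the route the paper intends (it declares the last two assertions "direct consequences of~\eqref{eq:chi_tilde_sym}"): writing $\widetilde{\chi_{\rm sym}}(\ri\omega)=-A^*\varphi_\omega(T)A$ with $A=B\vc^{1/2}$, $T=H_N^\sharp-E_N^0\ge E_N^1-E_N^0>0$, and factoring through $\varphi_\omega(T)^{1/2}$ is a clean way to get boundedness, self-adjointness and negativity, and your remark that the reduction to $\{\Psi_N^0\}^\perp$ is what keeps $\varphi_\omega(T)$ bounded at $\omega=0$ is accurate. But as a proof of the full proposition it is incomplete.
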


We omit the proof of Proposition~\ref{lem:chi^pm} since the first three assertions are similar to those of Lemma~\ref{lemma:time-ordered_GF}, while the last two ones are direct consequences of~\eqref{eq:chi_tilde_sym}. 
%Let us also mention that spectral functions for the reducible polarizability operator may be defined similarly as for the Green's functions (see Section~\ref{sec:spectral_functions}). \commentaire{cette remarque est elle bien utile ? Si oui, donner des refs pertinentes}

%-----------------------------------------------------------------------
\paragraph{On the integrability of $\widetilde{\chi_\sym}(\ri \omega)$.}
As for the Green's function, $\omega \mapsto \widehat{\chi_\sym}(\omega)$ is difficult to study on the real-axis, and it is more convenient to study its analytical continuation $\widetilde{\chi_\sym}$ on the imaginary axis $\ri \R$. This is possible thanks to the existence of the gap $( - (E_N^1 - E_N^0), E_N^1 - E_N^0)$ around~$0$. The representation provided in Proposition~\ref{lem:chi^pm} allows one to directly deduce the integrability properties of the functions $\omega \mapsto \widetilde{\chi_\sym}(\ri \omega)$ (as in Lemma~\ref{lem:ReGp}). 

\begin{corollary} 
  \label{lem:chi_L1}
  The functions $\omega \mapsto \widetilde{\chi^\pm_\sym} (\ri \omega)$ are real-analytic from $\R_\omega$ to $\cS(\cH_1)$, and are in $L^p(\R_\omega, \cS(\cH_1))$ for all $p > 1$. For any $f \in \cH_1$, the function $\omega \mapsto \bra f | \widetilde{\chi_\sym}(\ri \omega) | f \ket$ is non-positive and in $L^1(\R_\omega)$, and it holds
    \begin{equation} \label{eq:int_chi}
      \int_{-\infty}^{+\infty} \bra f | \widetilde{\chi_\sym} (\ri \omega) | f \ket \rd \omega = - 2 \pi \left\| B \vc^{1/2} f \right\|_{\cH_N}^2.
    \end{equation}
\end{corollary}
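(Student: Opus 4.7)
The strategy is to start from the explicit representation~\eqref{eq:chi_tilde_sym} of $\widetilde{\chi_\sym}$ given in Proposition~\ref{lem:chi^pm}, and exploit the fact that $H_N^\sharp - E_N^0 \geq E_N^1 - E_N^0 > 0$ so that, after restriction to the imaginary axis, only bounded, well-behaved operators appear. Setting $E := E_N^1 - E_N^0 > 0$, the spectral calculus gives, for $A := H_N^\sharp - E_N^0$ and any $\omega \in \R$,
\[
\widetilde{\chi_\sym}(\ri\omega) = -2\,\vc^{-1/2} B^* \, \frac{A}{A^2 + \omega^2}\, B\vc^{1/2},
\]
and analogously $\widetilde{\chi_\sym^\pm}(\ri\omega) = \pm\vc^{-1/2}B^*(\ri\omega \mp A)^{-1}B\vc^{1/2}$. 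Real analyticity of $\omega \mapsto \widetilde{\chi_\sym^\pm}(\ri\omega)$ and of $\omega \mapsto \widetilde{\chi_\sym}(\ri\omega)$ as maps into $\cB(\cH_1)$ then follows from the analyticity (and in fact entirety) of $\omega \mapsto (\ri\omega \mp A)^{-1}$ in $\cB(\cH_N^\sharp)$, since $\sigma(A) \subset [E,+\infty)$ stays uniformly away from $\ri\R$; self-adjointness in $\cS(\cH_1)$ comes from the third item of Proposition~\ref{lem:chi^pm}, using that $\overline{\ri\omega} = -\ri\omega$.

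For the $L^p$ integrability with $p > 1$, one uses the bound
\[
\left\| (\ri\omega \mp A)^{-1} \right\|_{\cB(\cH_N^\sharp)} \leq \frac{1}{\sqrt{\omega^2 + E^2}},
\]
obtained from the spectral theorem. Combining this with the boundedness of $B\vc^{1/2} \in \cB(\cH_1,\cH_N)$ (and of its adjoint) from Lemma~\ref{lem:B0} and Lemma~\ref{lem:vc} yields
\[
\left\| \widetilde{\chi_\sym^\pm}(\ri\omega) \right\|_{\cB(\cH_1)} \leq \frac{\left\| B\vc^{1/2} \right\|^2_{\cB(\cH_1,\cH_N)}}{\sqrt{\omega^2 + E^2}},
\]
which lies in $L^p(\R_\omega)$ for every $p > 1$. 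The same bound (up to a factor $2$) holds for $\widetilde{\chi_\sym}(\ri\omega)$.

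It remains to prove the non-positivity and the sum-rule~\eqref{eq:int_chi}. For any $f \in \cH_1$, introducing the vector $\Phi := B\vc^{1/2} f \in \cH_N$ and the real spectral measure $\mu_\Phi^A$ of $A$ against $\Phi$, the spectral theorem gives
\[
\left\bra f \left| \widetilde{\chi_\sym}(\ri\omega) \right| f \right\ket = -2 \left\bra \Phi \left| \frac{A}{A^2+\omega^2} \right| \Phi \right\ket_{\cH_N} = -2\int_E^{+\infty} \frac{\lambda}{\lambda^2+\omega^2}\, d\mu_\Phi^A(\lambda),
\]
which is manifestly non-positive since $\lambda \geq E > 0$ on the support of $\mu_\Phi^A$. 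Integrability in $\omega$ and the value of the integral follow by Tonelli's theorem applied to the non-negative integrand $\lambda/(\lambda^2+\omega^2)$, together with the elementary identity $\int_{\R}\frac{\lambda}{\lambda^2+\omega^2}\,d\omega = \pi$ valid for all $\lambda > 0$:
\[
\int_{-\infty}^{+\infty} \left\bra f \left| \widetilde{\chi_\sym}(\ri\omega) \right| f \right\ket d\omega = -2\int_E^{+\infty}\!\!\int_{\R}\frac{\lambda}{\lambda^2+\omega^2}\,d\omega\, d\mu_\Phi^A(\lambda) = -2\pi \, \mu_\Phi^A([E,+\infty)) = -2\pi \| \Phi \|_{\cH_N}^2,
\]
which is the desired formula~\eqref{eq:int_chi}. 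No step presents a genuine obstacle; the only mild care required is the Tonelli swap, which is licit precisely because we work on the imaginary axis $\ri\R$, away from the singular set $S_0$.
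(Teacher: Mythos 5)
Your argument is essentially the paper's own: the paper disposes of this corollary in one line by pointing to the representation \eqref{eq:chi_tilde_sym} of Proposition~\ref{lem:chi^pm} and arguing ``as in Lemma~\ref{lem:ReGp}'', i.e.\ the spectral theorem for $A := H_N^\sharp - E_N^0 \ge E_N^1 - E_N^0 > 0$, the resolvent bound giving the decay $(\omega^2+E^2)^{-1/2}$ hence $L^p$ for $p>1$, and the identity $\int_\R \lambda(\lambda^2+\omega^2)^{-1}\,\rd\omega=\pi$ combined with Tonelli; your computation reproduces exactly this, with the correct constant $-2\pi\|B\vc^{1/2}f\|_{\cH_N}^2$ (and the Tonelli swap is indeed licit, the spectral measure of $\Phi=B\vc^{1/2}f$ being finite with total mass $\|\Phi\|^2$ since ${\rm Ran}(B)\subset\{\Psi_N^0\}^\perp$).

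One caveat: your justification of self-adjointness of $\widetilde{\chi^\pm_\sym}(\ri\omega)$ via item (3) of Proposition~\ref{lem:chi^pm} does not work, since that item gives $\big(\widetilde{\chi^+_\sym}(\ri\omega)\big)^* = \widetilde{\chi^+_\sym}(-\ri\omega) = \widetilde{\chi^-_\sym}(\ri\omega)$, not $\widetilde{\chi^+_\sym}(\ri\omega)$. Indeed $\widetilde{\chi^\pm_\sym}(\ri\omega) = \pm\vc^{-1/2}B^*(\ri\omega\mp A)^{-1}B\vc^{1/2}$ has the nonzero anti-self-adjoint part $\mp\ri\omega\,\vc^{-1/2}B^*(A^2+\omega^2)^{-1}B\vc^{1/2}$ for $\omega\neq 0$, so the individual causal and anti-causal pieces are only in $\cB(\cH_1)$ (the statement's ``$\cS(\cH_1)$'' is itself imprecise on this point, and the paper's one-line proof does not address it either); only the sum $\widetilde{\chi_\sym}(\ri\omega)$ is self-adjoint, which is all that your sign and sum-rule arguments actually use.
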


%%%%%%%%%%%%%%%%%%%%%%%%%%%%%%%%%%%%%%%%%%%
\subsubsection{The sum-rule for the reducible polarizability operator $\chi$} 
\label{sec:sum_rule}

The behavior of the reducible polarizability operator in the high imaginary-frequency regime is well understood. This asymptotic behavior is given by the so-called Johnson's sum-rule~\cite{Johnson1974} or $f$-sum rule, the latter terminology being motivated in~\cite[Section~8.8]{Farid99} by the fact that it can formally be seen as some equality involving the {\em f}irst moment of $\mathrm{Im}\widehat{\chi_{\rm sym}}$. Knowing the large-$\omega$ behavior of $\widetilde{\chi_\sym}$ is important to design appropriate approximate operators, used in plasmon-pole models to avoid the numerical inversion of the dielectric operator (which is computationally expensive). %In plasmon-pole approximations, the exact reducible polarizability operator $\chi$ is replaced by a simpler operator satisfying several requirements, one of them being that it satisfies Johnson's sum rule. 

The fifth point of Proposition~\ref{lem:chi^pm} implies that for all $\omega \in \R_\omega$, the operator $-\widetilde \chi(\ri \omega) :=-\vc^{-1/2}\widetilde{\chi_{\rm sym}} (\ri \omega)\vc^{-1/2}$ defines a symmetric, continuous, non-negative sesquilinear form on $\cC'$:
\[
\forall (f,g) \in \cC' \times \cC', \qquad \langle \overline{f}, -\widetilde \chi(\ri \omega) g \rangle_{\cC',\cC} = \bigg\langle Bf \bigg| \frac{2(H_N^\sharp - E_N^0)}{(H_N^\sharp - E_N^0)^2+\omega^2} \bigg| Bg \bigg\rangle_{\cH_N},
\]
so that, formally, 
\[
\lim_{\omega \to \pm \infty}  \langle \overline{f}, -\omega^2 \widetilde \chi(\ri \omega) g \rangle_{\cC',\cC}  
	= 2 \bra Bf | H_N^\sharp - E_N^0 | Bg \ket_{\cH_N}
	= 2 \left\langle \overline{f}, \vc^{-1}B^*\left(H_N^\sharp - E_N^0\right)Bg \right\rangle_{\cC',\cC}.
\]
The following theorem, whose proof is postponed until Section~\ref{sec:proof:lem:sumrule_chi}, confirms that this limit exists and allows one to identify it. 

\begin{theorem}[Johnson's sum rule] 
  \label{lem:sumrule_chi}
  The operator $2 \vc^{-1} B^* (H_N^\sharp - E_N^0) B$ is bounded from $\cC'$ to $\cC$, and $2 \vc^{-1} B^* (H_N^\sharp - E_N^0) B = - \div \left( \rho_N^0 \nabla  \cdot \right)$. Moreover, the following weak convergence holds:
  \[
  \forall (f,g) \in \cC' \times \cC', 
  \qquad 
  \lim_{\omega \to \pm \infty} \left\bra \overline{f}, - \omega^2 \widetilde{\chi}(\ri \omega)  g \right\ket_{\cC',\cC} =  \left\bra \overline{f}, -\div(\rho_N^0 \nabla g) \right\ket_{\cC',\cC} =  \int_{\R^3} \rho_N^0  \overline{\nabla f} \cdot \nabla g.
  \]
  For all $g \in \cC'$ such that $\Delta g \in L^2(\R^3)$, the following strong convergence holds:
  \[
  \lim_{\omega \to \pm \infty} \omega^2 \widetilde{\chi}(\ri \omega)  g =  \div\left(\rho_N^0 \nabla g\right) \quad \mathrm{in} \ \cC.
  \]
\end{theorem}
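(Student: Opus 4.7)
The plan is to first identify $2\vc^{-1} B^\ast (H_N^\sharp - E_N^0) B$ with $-\div(\rho_N^0 \nabla \cdot)$ via a double-commutator computation, and then pass to the limit $\omega \to \pm\infty$ in the spectral representation of $\widetilde{\chi}(\ri\omega)$ provided by~\eqref{eq:chi_tilde_sym}.

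\textbf{Identification via double commutator.} For $f,g\in C^\infty_c(\R^3,\C)$ (dense in $\cC'$), I would establish
\[
\left\langle Bf, (H_N^\sharp - E_N^0)\, Bg\right\rangle_{\cH_N} = \frac{1}{2}\int_{\R^3} \rho_N^0\,\overline{\nabla f}\cdot\nabla g.
\]
The starting point is $(H_N - E_N^0)Bg = [H_N,\widetilde V_g]\Psi_N^0$ with $\widetilde V_g = \sum_i g(\br_i)$, valid because $H_N\Psi_N^0 = E_N^0\Psi_N^0$ and the constant correction in the definition of $B$ commutes with $H_N$. Only the kinetic part contributes, yielding $[H_N,\widetilde V_g] = -\tfrac12 \sum_i (\Delta g)(\br_i) - \sum_i (\nabla g)(\br_i)\cdot \nabla_{\br_i}$. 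For real $v$, the standard double-commutator identity $\langle Bv,(H_N-E_N^0)Bv\rangle = \tfrac12\langle \Psi_N^0, [\widetilde V_v,[H_N,\widetilde V_v]]\Psi_N^0\rangle$ combined with the direct computation $[\widetilde V_v,[H_N,\widetilde V_v]] = \sum_i |\nabla v|^2(\br_i)$ gives $\tfrac12\int|\nabla v|^2\rho_N^0$. Since $\Psi_N^0$ may be chosen real (Proposition~\ref{prop:Psi_N^0}), the sesquilinear form $(f,g)\mapsto\langle Bf,(H_N^\sharp-E_N^0)Bg\rangle$ is real on real arguments, so complex polarization extends the identity. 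The form $(f,g)\mapsto \tfrac12\int\rho_N^0\overline{\nabla f}\cdot\nabla g$ is continuous on $\cC'\times\cC'$ (using $\rho_N^0\in L^\infty$), and via the duality $\langle \bar f,\vc^{-1}h\rangle_{\cC',\cC} = \langle f,h\rangle_{\cC'}$ it is represented by the operator $-\tfrac12\div(\rho_N^0\nabla\cdot):\cC'\to\cC$. Density then yields both the boundedness of $2\vc^{-1}B^\ast(H_N^\sharp-E_N^0)B$ and its identification with $-\div(\rho_N^0\nabla\cdot)$.

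\textbf{Weak limit.} From~\eqref{eq:chi_tilde_sym}, the unitarity of $\vc^{\pm 1/2}$ (Lemma~\ref{lem:vc}) and the identity $\langle \bar f,\vc^{-1/2}h\rangle_{\cC',\cC} = \langle \vc^{-1/2}f,h\rangle_{\cH_1}$, one obtains
\[
\left\langle \bar f, -\omega^2\widetilde{\chi}(\ri\omega)g\right\rangle_{\cC',\cC} = \left\langle Bf,\, \frac{2\omega^2(H_N^\sharp-E_N^0)}{(H_N^\sharp-E_N^0)^2+\omega^2}\, Bg\right\rangle_{\cH_N}.
\]
Writing $\tfrac{\omega^2 E}{E^2+\omega^2} = E\cdot\tfrac{\omega^2}{E^2+\omega^2}$, the second factor is bounded by $1$ and converges to $1$ pointwise as $\omega\to\pm\infty$. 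The spectral theorem together with dominated convergence applied to the spectral measure of $H_N^\sharp-E_N^0$ paired with $(H_N^\sharp-E_N^0)^{1/2}Bg\in\cH_N$ (which is in $\cH_N$ by the previous step) gives convergence to $2\langle Bf,(H_N^\sharp-E_N^0)Bg\rangle_{\cH_N} = \int\rho_N^0\overline{\nabla f}\cdot\nabla g$, establishing the weak limit.

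\textbf{Strong limit and main obstacle.} Using $\tfrac{\omega^2 E}{E^2+\omega^2}-E = -\tfrac{E^3}{E^2+\omega^2}$,
\[
-\omega^2\widetilde\chi(\ri\omega) g + \div(\rho_N^0\nabla g) = -2\vc^{-1}B^\ast\, \frac{(H_N^\sharp-E_N^0)^3}{(H_N^\sharp-E_N^0)^2+\omega^2}\, Bg.
\]
By unitarity of $\vc^{-1}:\cC'\to\cC$ (Lemma~\ref{lem:vc}) and boundedness of $B^\ast:\cH_N\to\cC'$, the $\cC$-norm of the right-hand side is bounded by $C\big\|\tfrac{E^2}{E^2+\omega^2}(H_N^\sharp-E_N^0)Bg\big\|_{\cH_N}$. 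The main obstacle is showing that $(H_N^\sharp-E_N^0)Bg$ actually lies in $\cH_N$ (not merely in the form domain) when $\Delta g\in L^2$: using the commutator expression of the first paragraph,
\[
(H_N^\sharp-E_N^0)Bg = -\frac12\sum_i (\Delta g)(\br_i)\Psi_N^0 - \sum_i (\nabla g)(\br_i)\cdot\nabla_{\br_i}\Psi_N^0,
\]
the first sum is in $L^2(\R^{3N})$ thanks to $\rho_N^0\in L^\infty$, while the second requires an $L^\infty$-type bound on the marginal kinetic energy density $\br\mapsto \int |\nabla_\br \Psi_N^0|^2\,\rd\br_2\cdots\rd\br_N$, which follows from $\Psi_N^0\in H^2(\R^{3N})$ (Proposition~\ref{prop:Psi_N^0}) together with standard elliptic regularity and exponential decay results for Coulombic bound states. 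Once this regularity is secured, $\tfrac{E^2}{E^2+\omega^2}\leq 1$ with pointwise limit $0$ yields, via dominated convergence on the spectral measure of $(H_N^\sharp-E_N^0)Bg$, the announced strong convergence in $\cC$; all other steps are routine commutator algebra, polarization, and spectral calculus.
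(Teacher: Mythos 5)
Your proposal follows essentially the same strategy as the paper's proof: identification of $2\vc^{-1}B^*(H_N^\sharp-E_N^0)B$ with $-\div(\rho_N^0\nabla\cdot)$ on the core $C^\infty_c(\R^3,\C)$ by a commutator computation (your double-commutator-plus-polarization variant is equivalent to the paper's direct integration by parts, and is legitimate since $\Psi_N^0$ is real and ${\rm Ran}(B)\perp\Psi_N^0$), extension by density using $\rho_N^0\in L^\infty$, the weak limit via the spectral theorem and dominated convergence with dominating function $2\lambda$, and the strong limit by writing the error as $-2\vc^{-1}B^*\,\frac{(H_N^\sharp-E_N^0)^2}{(H_N^\sharp-E_N^0)^2+\omega^2}\,(H_N^\sharp-E_N^0)Bg$ and letting the uniformly bounded middle factor tend strongly to zero.

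The one place where your justification is weaker than it should be is the regularity step for the strong limit. You claim that $(H_N^\sharp-E_N^0)Bg\in\cH_N$ requires an $L^\infty$ bound on the marginal kinetic energy density $\br\mapsto N\int_{\R^{3(N-1)}}|\nabla_{\br}\Psi_N^0(\br,\br_2,\dots,\br_N)|^2\,\rd\br_2\cdots\rd\br_N$, and you assert this follows from $\Psi_N^0\in H^2$ plus ``standard elliptic regularity and exponential decay''. Such a pointwise bound is not contained in Proposition~\ref{prop:Psi_N^0}, would require Coulombic regularity/decay results well beyond those the paper invokes, and — more importantly — is not needed, because you under-use the hypothesis $\Delta g\in L^2$: since $g\in\cC'$ and $\Delta g\in L^2$, one has $\nabla g\in H^1(\R^3)\hookrightarrow L^6(\R^3)$, so by H\"older (exactly in the spirit of the proof of Lemma~\ref{lem:B0}) it suffices that the kinetic density belong to $L^{3/2}(\R^3)$; this follows from $\Psi_N^0\in H^2(\R^{3N})$, since each component of $\nabla_{\br_1}\Psi_N^0$ lies in $H^1(\R^{3N})$ and hence its one-body marginal is in $L^1\cap L^3$ by the Hoffmann-Ostenhof/Sobolev argument used for densities. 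This is what the paper means by ``a proof similar to that of Lemma~\ref{lem:B0}''. A second, minor point: in the weak limit for general $f,g\in\cC'$ you use that $Bg$ lies in the form domain of $H_N^\sharp-E_N^0$, whereas the identification step only provides this for $g\in C^\infty_c$; close this either by approximating ($Bg_n\to Bg$ in $\cH_N$ with uniformly bounded form values, then lower semicontinuity of the closed form), or by reducing to $f=g\in C^\infty_c$ using density together with the uniform form bound $0\le -\omega^2\widetilde{\chi_\sym}(\ri\omega)\le \vc^{1/2}\left(-\div(\rho_N^0\nabla\cdot)\right)\vc^{1/2}$, which is the route the paper takes implicitly. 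With these two repairs your argument matches the paper's proof.
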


%Note that we do not know whether $\omega^2 \widetilde{\chi}(\ri \omega)$ converges to $\div(\rho_N^0 \nabla \cdot)$ in the operator norm on $\cB(\cC',\cC)$. 

%----------------------------------------------------------------
\subsubsection{The dynamically screened interaction operator $W$} 
\label{subsec:W}

As the name indicates, the two key operators in the GW method are on the one hand, the time-ordered Green's function $G$, and on the other hand, the so-called \textit{dynamically screened interaction} operator $W$. The latter operator is defined as
\begin{equation} 
  \label{eq:W2}
  W(\tau) = \vc \delta_0(\tau) + \vc^{1/2} \chi_\sym(\tau) \vc^{1/2},
\end{equation} 
where $\vc$ is the Coulomb operator introduced in Lemma \ref{lem:vc}. It is convenient to split $W$ into a local-in-time exchange contribution $\vc \delta_0(\tau)$ (although this is not obvious at this stage, \eqref{eq:Sigma_x} below shows that $\vc \delta_0(\tau)$ can be interpreted as an exchange term), and a nonlocal-in-time correlation contribution:
\begin{equation}
\label{eq:def_W}
W(\tau) = \vc \delta_0(\tau) + W_\rc(\tau) \quad \text{with} \quad W_\rc(\tau) := \vc \chi(\tau)  \vc = \vc^{1/2} \chi_{\rm sym}(\tau) \vc^{1/2}.
\end{equation}
The properties of the operator $W_c(\tau) \in \cB(\cC,\cC')$ therefore readily follow from the properties of the operators $\vc^{1/2}$ and $\chi_{\rm sym}(\tau)$ established in Lemma~\ref{lem:vc} and Proposition~\ref{lem:chi^pm}.

%-----------------------------------------------------------------------
\subsection{The self-energy operator $\Sigma$} 
\label{subsec:SE} 

We give in this section the definition of the self-energy operator $\Sigma$ using the Dyson equation (see~\eqref{eq:exact_Sigma} below). Let us emphasize that, while the Dyson equation provides a definition of~$\Sigma$ in terms of Green's functions, numerical methods work the other way round: an approximation of the Green's function~$G$ is obtained from the Dyson equation (\ref{eq:exact_Sigma}), using an approximation of the self-energy operator~$\Sigma$. This approach is made precise in Section \ref{sec:GW_finite}.

%%%%%%%%%%%%%%%%%%%%%%%%%%%%%%%%%%%%%%%%%%%
\subsubsection{The non-interacting Hamiltonian $H_0$ and associated Green's function $G_{0}$} \label{subsec:G0} 

The self-energy operator is defined as the difference between the inverse of the exact Green's function~$G$ and the inverse of some reference Green's function~$G_0$. The reference Green's function is the resolvent of a mean-field non-interacting Hamiltonian. There are several possible choices for this operator, discussed in Remark~\ref{rem:KS} below. In order to remain as general as possible, we introduce a one-body operator~$h_1$ acting on~$\cH_1$, with domain~$H^2(\R^3)$, real-valued (in the sense that $h\psi$ is real-valued whenever $\psi$ is real-valued), and such that $\sigma_\ess(h_1) = [0, \infty)$. The corresponding effective non-interacting $N$-body Hamiltonian is defined on~$\cH_N$ by
\[
H_{0,N} = \sum_{i=1}^N h_1(\br_i).
\]
We define 
$$
\varepsilon_{k} := \inf_{V_k \subset {\cal V}_k}  \sup_{v \in V_k \setminus \left\{0\right\}} \frac{\langle v | h_1 | v \rangle}{\langle v|v\rangle},
$$
where ${\cal V}_k$ is the set of the subspaces of $H^1(\R^3)$ of dimension $k$. Recall that $\varepsilon_k \le 0$ and that if $\varepsilon_k < 0$, then $h_1$ has at least $k$ negative eigenvalues (counting multiplicities) and $\varepsilon_k$ is the $k^{\rm th}$ smallest eigenvalue of $h_1$ (still counting multiplicities). We make the following assumption in the sequel.
\begin{equation*}
\boxed{ \text{\textbf{Hyp. 3}: The one-body Hamiltonian $h_1$ has at least $N$ negative eigenvalues, and $\varepsilon_N < \varepsilon_{N+1}$}.}\\
\end{equation*}
This assumption implies that there is a gap between the $N^{\rm th}$ eigenvalue and the $(N+1)^{\rm st}$ eigenvalue (or the bottom of the essential spectrum if $h_1$ has only $N$ non-positive eigenvalues). 

Let us denote by $(\phi_1,\cdots,\phi_N)$ an orthonormal family of eigenvectors of $h_1$ associated with the eigenvalues $\varepsilon_1, \cdots, \varepsilon_N$. Without loss of generality, we can assume that the $\phi_k$'s are real-valued. The ground state energy of $H_{0,N}$ is $E_{0,N}^0 =  \varepsilon_1 + \ldots + \varepsilon_N$. The condition $\varepsilon_{N} < \varepsilon_{N+1}$ ensures that $E_{0,N}^0$ is a non-degenerate eigenvalue of~$H_{0,N}$ and that the normalized ground state $\Phi_N^0 = \phi_1 \wedge \cdots \wedge \phi_N$ of $H_{0,N}$ is unique up to a global phase. We introduce the one-body mean-field density matrix
\begin{equation} \label{eq:gamma0N0}
  \gamma_{0,N}^0 (\br, \br') := \sum_{k=1}^N \phi_k(\br) { \phi_k(\br')}.
\end{equation}
This function can be seen as the kernel of the spectral projector $\1_{(-\infty,\mu_0)}(h_1)$, where $\mu_0$ is  any real number in  the range $(\varepsilon_N, \varepsilon_{N+1})$ (it is an admissible Fermi level for the ground state of the non-interacting effective Hamiltonian $H_{0,N}$). The density of the non-interacting system is denoted by $\rho_{0,N}^0$. Results similar to the ones stated in Proposition~\ref{prop:Psi_N^0} for $\rho_{0,N}^0$, $\gamma_{0,N}^0$, ... hold true. Finally, similarly as in Section \ref{subsec:GF}, we introduce
\[
A_{0,+}^* (f) = a^\dagger(f) | \Phi_N^0 \ket \quad \text{and} \quad A_{0,-}(f) = a(\overline{f}) | \Phi_N^0 \ket.
\]

\begin{definition}[Reference non-interacting Green's functions]
\label{def:G0}
The reference particle, hole and time-ordered non-interacting Green's functions are respectively defined as
\[
G_{0, \rp}(\tau) = - \ri \Theta(\tau) A_{0,+} \re^{ - \ri \tau (H_{0,N+1}- E_{0,N}^0) } A_{0,+}^*,
\qquad
G_{0, \rh}(\tau) = \ri \Theta(-\tau) A_{0,-}^* \re^{ \ri \tau (H_{0,N-1} - E_{0,N}^0)} A_{0,-},
\]
and $G_0(\tau) = G_{0, \rp}(\tau) + G_{0, \rh}(\tau)$.
\end{definition}

Results similar to Propositions~\ref{lemma:forward_GF}, \ref{lemma:backward_GF} and \ref{lemma:time-ordered_GF} hold for these operators, but we do not write them explicitly for the sake of brevity. However, it should be noted that, in the non-interacting case, the Green's functions have simple explicit expressions in terms of $h_1$ (see Section~\ref{sec:proof:lemma:time-ordered_G0F} for the proof).

\begin{proposition} 
  \label{lemma:time-ordered_G0F}
  It holds
  \[
  G_{0, \rp}(\tau) = - \ri \Theta(\tau) \left( \mathds{1}_{\cH_1} - \gamma_{0,N}^0 \right) \re^{-\ri \tau h_1 } \quad \text{and} \quad
  G_{0, \rh}(\tau) =  \ri \Theta(-\tau) \gamma_{0,N}^0 \re^{-\ri \tau h_1 }.
  \]
  In particular, for any $z \in \mathbb{C} \setminus \sigma(h_1)$,
  \begin{equation} \label{eq:G0+h1}
  \widetilde{G_{0, \rp}}(z) = \left( \mathds{1}_{\cH_1} - \gamma_{0,N}^0 \right) (z - h_1)^{-1} \quad \text{and} \quad
  \widetilde{G_{0, \rh}}(z) =  \gamma_{0,N}^0 (z - h_1)^{-1}.
  \end{equation}
  Hence,
  \begin{equation} \label{eq:G0h1}
    \widetilde{G_{0}}(z) = (z - h_1)^{-1}
  \end{equation}
  is the resolvent of the one-body operator $h_1$.
\end{proposition}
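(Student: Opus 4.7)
The key observation is that, since $H_{0,N} = \sum_{i=1}^N h_1(\br_i)$ is the non-interacting second quantization of $h_1$, it intertwines with creation and annihilation operators: for $f$ in a dense subspace of $\cH_1$,
\[
[H_{0,N+1}, a^\dagger(f)] = a^\dagger(h_1 f),
\qquad
[H_{0,N-1}, a(f)] = -a(h_1 f),
\]
viewed as identities between operators from (a dense subspace of) $\cH_N$ to $\cH_{N\pm 1}$. The cleanest derivation expands $h_1$ in its eigenbasis $(\phi_i)$ so that on the Fock space the many-body Hamiltonian takes the second-quantized form $\sum_{i,j} \langle \phi_i | h_1 | \phi_j \rangle a^\dagger(\phi_i)a(\phi_j)$; combining this with the fermionic identity $[AB,C] = A\{B,C\} - \{A,C\}B$ and the canonical anticommutation relations~\eqref{eq:anticommutation_a}, both commutators reduce to a single application of the (anti)linearity of $a^\dagger$ (resp.\ $a$).

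Applying these commutation relations at $\Phi_N^0$ and using $H_{0,N}\Phi_N^0 = E_{0,N}^0 \Phi_N^0$ yields the intertwining identities
\[
(H_{0,N+1} - E_{0,N}^0)\, A_{0,+}^* f = A_{0,+}^* (h_1 f),
\qquad
(H_{0,N-1} - E_{0,N}^0)\, A_{0,-} f = -A_{0,-}(h_1 f),
\]
where the reality of $h_1$ (so that $h_1 \overline{f} = \overline{h_1 f}$) is used to absorb the antilinearity of $f \mapsto \overline{f}$ in the second identity. By the spectral theorem these propagate through the unitary groups:
\[
\re^{-\ri\tau(H_{0,N+1}-E_{0,N}^0)} A_{0,+}^* = A_{0,+}^* \re^{-\ri\tau h_1},
\qquad
\re^{\ri\tau(H_{0,N-1}-E_{0,N}^0)} A_{0,-} = A_{0,-} \re^{-\ri\tau h_1}.
\]
Substituting into Definition~\ref{def:G0} and invoking the non-interacting analogs of~\eqref{eq:A+bounded}, namely $A_{0,+} A_{0,+}^* = \mathds{1}_{\cH_1} - \gamma_{0,N}^0$ and $A_{0,-}^* A_{0,-} = \gamma_{0,N}^0$, yields the two time-domain formulas.

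The Laplace-transform identities~\eqref{eq:G0+h1} then follow by applying Proposition~\ref{prop:analytic_extension_causal_time_evolution} on $\UU$ and its anti-causal counterpart on $\LL$, together with the observation that $\gamma_{0,N}^0 = \1_{(-\infty,\mu_0)}(h_1)$ commutes with $h_1$ and hence with the resolvent $(z-h_1)^{-1}$. Both right-hand sides extend by analyticity to $\C\setminus\sigma(h_1)$, and summing them trivially gives $\widetilde{G_0}(z) = (z-h_1)^{-1}$, establishing~\eqref{eq:G0h1}. The only real care is in the rigorous justification of the commutator identities when $h_1$ is unbounded: one proves them first on a convenient core (such as antisymmetrized tensor products of Schwartz functions) and extends by closability. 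This is a standard second-quantization exercise and presents no conceptual obstacle.
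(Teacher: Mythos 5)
Your proposal is correct and rests on the same key ingredient as the paper's proof, namely the intertwining relations $H_{0,N+1}A_{0,+}^\ast f = A_{0,+}^\ast(h_1 f) + E_{0,N}^0 A_{0,+}^\ast f$ and $H_{0,N-1}A_{0,-}f = E_{0,N}^0 A_{0,-}f - A_{0,-}(h_1 f)$ combined with $A_{0,+}A_{0,+}^\ast = \mathds{1}_{\cH_1}-\gamma_{0,N}^0$ and $A_{0,-}^\ast A_{0,-} = \gamma_{0,N}^0$. The only (immaterial) difference is the order of the argument: the paper passes the intertwining to the resolvents to establish~\eqref{eq:G0+h1}--\eqref{eq:G0h1} first and then recovers the time-domain formulas from the injectivity of the Laplace transform, whereas you exponentiate the intertwining to the unitary groups to get the time-domain formulas directly and then take Laplace transforms.
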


\begin{remark}[On the choice of $G_0$] \label{rem:KS}
There are several possible choices for the one-body operator~$h_1$, although this choice is not really properly discussed
in the literature to our knowledge. The first option, which is used in the original derivation of the GW method~\cite{Hedin1965},
consists in choosing
\begin{equation} \label{eq:h1}
h_1 = - \frac12 \Delta + v_\ext + \rho_N^0 \ast | \cdot |^{-1},
\end{equation}
where $\rho_N^0$ is the exact ground state density. Another option (see for instance~\cite[page~112]{Farid99}) is to consider a one-body operator whose associated ground state density is (as close as possible to) the exact ground state density~$\rho_N^0$. The motivation is that, in this case, the self-energy should be smaller. The Kohn-Sham~\cite{KS65} model formally satisfies this requirement. The associated one-body operator reads
\begin{equation} \label{eq:h1KS}
h_1 = -\frac12 \Delta + v_\ext + \rho_N^0 \ast | \cdot |^{-1} + v_{\xc}\left[\rho_N^0\right],
\end{equation}
where $v_\xc$ is the (exact) exchange-correlation potential.
%As the Kohn-Sham operator is a more accurate mean-field Hamiltonian that the Hartree operator, the above equation is in practice rewritten as
%\begin{equation} \label{eq:exact_Sigma_KS}
%  \widetilde{H}(z)  := h_{\rm KS} +  \widetilde{\Sigma}'(z),
%\end{equation}
%with
%\qquad 
%\widetilde{\Sigma}'(z):=\widetilde{\Sigma}(z) - v_{\KS}\left[\rho_N^0\right],
%\]
%where $v_{\rm KS}\left[\rho_N^0\right]$ is the Kohn-Sham potential associated with the density~$\rho_N^0$. In this rewriting, the modified self-energy operator $\widetilde{\Sigma}'(z)$ is the correction term to be added to the Kohn-Sham operator constructed with the exact ground state density in order to obtain the exact dynamical mean-field Hamiltonian. 
In practice, approximations of $\rho_N^0$ and $v_{\xc}\left[\rho_N^0\right]$ are computed by means of a Kohn-Sham LDA or GGA calculation~\cite{KS65, Perdew1996}. This is believed to provide a sufficiently accurate approximation of the exact ground state density which does not spoil the results subsequently obtained by GW calculations.
%Idealement on devrait choisir la densite pour que la self energy soit la plus petite possible, pex decroisse en $1/\omega$ a l'infini au lieu d'avoir un terme constant, voir autour Eq.(42) de~\cite{Farid99}  
\end{remark}
    
%---------------------------------------------------------------------
\subsubsection{The dynamical Hamiltonian $\widetilde H(z)$}

In view of~\eqref{eq:G0h1}, it is natural to introduce the inverse of the time-ordered Green's function, which will correspond to some dynamical one-body Hamiltonian. More precisely, we would like to define, at least for each $z \in \C \setminus \R$, a one-body operator $\widetilde H(z)$ such that
\[
\widetilde G(z) := \left( z - \widetilde H(z) \right)^{-1}, \quad \text{or equivalently,} \quad  \widetilde H(z) = z - \left( \widetilde G(z) \right)^{-1}.
\]
The following proposition, proved in Section~\ref{sec:proof:lemma:Hz}, shows that such a definition makes sense.

\begin{proposition} \label{lemma:Hz}
Let $z \in \C \setminus \R$. The operator $\widetilde G(z)$ is an invertible operator from $\cH_1$ onto some vector subspace $\widetilde D(z)$ of $\cH_1$. Moreover, $\widetilde D(z)$ is dense in $\cH_1$, $\widetilde D(z) \subset H^2(\R^3)$, and $\widetilde H(z)$ is a well-defined closed operator with domain $\widetilde D(z)$.
\end{proposition}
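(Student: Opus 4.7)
The plan is to establish, in order, (a) injectivity of $\widetilde G(z)$, (b) density of its range, (c) the regularity inclusion $\widetilde D(z) \subset H^2(\R^3)$, and (d) closedness of $\widetilde H(z)$. The key tool for (a) is that, when $\Im z \neq 0$, the imaginary part of $\widetilde G(z)$ has a definite sign. Indeed, using the elementary identity $(2\ri)^{-1}\bigl[(z - H)^{-1} - (\overline{z} - H)^{-1}\bigr] = -(\Im z)\,|z - H|^{-2}$ valid for any self-adjoint $H$, and computing $\widetilde G(z)^* = \widetilde G(\overline z)$ directly from the representation~\eqref{eq:widetilde_G}, I obtain
\[
\Im \widetilde G(z) = -(\Im z) \left( A_+ |z - (H_{N+1}-E_N^0)|^{-2} A_+^* + A_-^* |z - (E_N^0 - H_{N-1})|^{-2} A_- \right).
\]
If $\widetilde G(z) f = 0$, then $\langle f, \Im \widetilde G(z) f\rangle = 0$; since $\Im z \neq 0$ and both terms above carry the same sign, both partial norms must vanish. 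As $|z-H|^{-1}$ is injective for any self-adjoint $H$, this forces $A_+^* f = 0$ and $A_- f = 0$. Combining with $A_+ A_+^* = \mathds{1}_{\cH_1} - \gamma_N^0$ and $A_-^* A_- = \gamma_N^0$ gives $(\mathds{1}-\gamma_N^0)f = 0$ and $\gamma_N^0 f = 0$, hence $f = 0$.

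For the density of $\widetilde D(z) = \mathrm{Ran}(\widetilde G(z))$, I would invoke $\widetilde G(z)^* = \widetilde G(\overline{z})$, which has the same structural form as $\widetilde G(z)$ and is therefore also injective by the above argument; density then follows from $\overline{\mathrm{Ran}(\widetilde G(z))}^{\perp} = \mathrm{Ker}(\widetilde G(z)^*) = \{0\}$. For the inclusion $\widetilde D(z) \subset H^2(\R^3)$, I would treat the two summands of $\widetilde G(z) f$ separately. For the particle term, Kato's theorem ensures that $H_{N+1}$ is self-adjoint on $\cH_{N+1} \cap H^2(\R^{3(N+1)})$, so the resolvent maps $\cH_{N+1}$ into this $H^2$-domain with norm controlled by $|\Im z|^{-1}$; then differentiating under the integral sign in the partial scalar product defining $A_+\Phi$ against $\overline{\Psi_N^0}$ and applying Cauchy--Schwarz yields $\|\Delta_{\br} A_+\Phi\|_{L^2(\R^3)} \leq \sqrt{N+1}\, \|\Psi_N^0\|_{\cH_N}\, \|\Delta_{\br_1}\Phi\|_{L^2(\R^{3(N+1)})}$. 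For the hole term, the regularity comes for free from Proposition~\ref{prop:Psi_N^0}, which gives $\Psi_N^0 \in H^2(\R^{3N})$: transferring the Laplacian onto $\Psi_N^0$ in the partial scalar product defining $A_-^*$ yields $\|\Delta_{\br} A_-^*\Phi\|_{L^2(\R^3)}^2 \leq N \|\Delta_{\br_1}\Psi_N^0\|^2_{L^2(\R^{3N})}\,\|\Phi\|_{\cH_{N-1}}^2$, so $A_-^*$ already maps $\cH_{N-1}$ (with no regularity assumption on $\Phi$) into $H^2(\R^3)$.

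Closedness of $\widetilde H(z)$ is then essentially free: the inverse of any bounded injective operator is a closed (possibly unbounded) operator, so $\widetilde G(z)^{-1}$ is closed with domain $\widetilde D(z)$, and $\widetilde H(z) = z\,\mathrm{Id} - \widetilde G(z)^{-1}$ inherits closedness from the bounded perturbation by the scalar~$z$. The only non-trivial technical point is the $H^2$-regularity of the particle term: one must carefully justify the interchange of the Laplacian in $\br$ with the partial integration against $\overline{\Psi_N^0}$, which is the step where the many-body structure of $\Psi_N^0$ and the domain characterization $D(H_{N+1}) = \cH_{N+1}\cap H^2(\R^{3(N+1)})$ enter crucially. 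All remaining steps are either elementary resolvent identities or direct consequences of previously established properties of $A_\pm$, $\gamma_N^0$, and $\Psi_N^0$.
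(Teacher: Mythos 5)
Your proof is correct and follows essentially the same route as the paper: injectivity from the sign of $\Im\widetilde G(z)$ (your resolvent-identity formula is just a repackaging of the paper's spectral-measure computation), density via $\widetilde G(z)^*=\widetilde G(\overline z)$, and $H^2$-regularity by treating the two terms of~\eqref{eq:widetilde_G} through the explicit kernels of $A_+$ and $A_-^\ast$ together with $\Psi_N^0\in H^2(\R^{3N})$ and the mapping of the resolvent of $H_{N+1}$ into its $H^2$ domain. Your explicit closedness argument (inverse of a bounded injective operator is closed, plus a bounded perturbation) fills in a step the paper leaves implicit, but this is a minor addition rather than a different approach.
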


\begin{remark}
 We do not know whether the equality $\widetilde{D}(z) = H^2(\R^3)$ is true, nor do we know whether $\widetilde{D}(z_1) = \widetilde{D}(z_2)$ for $z_1 \neq z_2$.
 \end{remark}
 
%---------------------------------------------------------------------
\subsubsection{Definition of the self-energy operator $\Sigma$ from the Dyson equation} \label{subsubsec:SE}

We are now able to define the exact self-energy operator $\widetilde{\Sigma}$ via the Dyson equation. Note that we do not define the self-energy in the time domain, but consider only $\widetilde{\Sigma}(z)$ (as in~\cite[Section~5.1]{Farid99}).

\begin{definition}[Self-energy] \label{def:self_energy}
  The self-energy operator is defined as
  \begin{equation} \label{eq:exact_Sigma}
    \forall z \in \C \setminus \R, \qquad \widetilde{\Sigma}(z) := \widetilde{G_0}(z)^{-1} - \widetilde{G}(z)^{-1} = (z - h_1) - \left(z- \widetilde{H}(z)\right) = \widetilde{H}(z) - h_1,
  \end{equation}
  where $h_1$ is the one-body mean-field Hamiltonian introduced in Section~\ref{subsec:G0}.
\end{definition}

The operator $ \widetilde{\Sigma}(z)$ is the difference between the one-body dynamical Hamiltonian and the reference one-body mean-field Hamiltonian $h_1$. With this writing, $\widetilde\Sigma(z)$ can be seen as the correction term to be added to the reference one-body Hamiltonian in order to obtain the dynamical mean-field one-body Hamiltonian:
\[
\widetilde{H}(z) = h_1+\widetilde{\Sigma}(z).
\]

%%%%%%%%%%%%%%%%%%%%%%%%%%%%%%%%%%

\section{The GW approximation for finite systems}
\label{sec:GW_finite}

\subsection{G$_0$W$^0$, self-consistent $\GW^0$, self-consistent $\GW$, and all that}

\subsubsection{The GW equations}

We now turn to the GW approximation for finite systems. The purpose of the GW approximation is to estimate the time-ordered Green's function $G$ via the Dyson formula (\ref{eq:exact_Sigma}). Instead of using (\ref{eq:exact_Sigma}) to define the self-energy $\widetilde{\Sigma}(z)$, we use this equation \textit{with some approximation $\widetilde{\Sigma}^{\rm GW}(z)$ of $\widetilde{\Sigma}(z)$} to obtain an approximation $\widetilde{G}^{\rm GW}(z)$ of the time-ordered Green's function via
\begin{equation} \label{eq:Dyson}
  \left(\widetilde{G}^{\rm GW}\right)^{-1}(z)  = z - \left(h_1+\widetilde\Sigma^{\rm GW}(z)\right).
\end{equation}
Using the Dyson equation to {\em define} the time-ordered Green's function is only possible if an alternative expression of the self-energy operator is available. Such an expression was formally obtained by Hedin in 1965 (see~\cite{Hedin1965}). The GW approximation consists in replacing the so-called vertex function in Hedin's equations by a tensor product of Dirac masses.~\\

The original GW equations were derived on the time domain and on the frequency domain. However, as noticed several times in Section~\ref{sec:operators}, the operators involved in the GW equations are not smooth on these axes. It turns out that it is formally possible to recast the equations on some imaginary axis using Theorem~\ref{thm:analytic_continuation_kernel_product}. This approach, first introduced by Rojas, Godby and Needs \cite{Rojas1995} (see also \cite{RSWRG99}), is now known under the name of the ``analytic continuation method''. For reasons that we will explain throughout this section, these equations are recast as follows within our mathematical framework.\\

\begin{definition}[$\GW$ equations on the imaginary axis of the frequency domain]~\\
Find $\widetilde{G^\GW}(\mu + \ri \cdot)  \in L^2(\R_\omega, \cB(\cH))$ solution of the system
\begin{subequations}
\label{eq:newHedin}
\begin{align}
  \widetilde{P_\sym^\GW}(\ri \omega)  &= \dfrac{1}{2 \pi} \vc^{1/2} \left( \int_{-\infty}^{\infty}  \widetilde{G^\GW} \big(\mu + \ri (\omega+\omega')\big) \odot \widetilde{G^\GW}(\mu + \ri \omega') \, \rd \omega' \right)\vc^{1/2}, \label{eq:GWa} \\
  \widetilde{\chi^{\GW}_\sym}(\ri \omega) & = \left( \mathds{1}_\cH - \widetilde{P_\sym^\GW}(\ri \omega) \right)^{-1} - \mathds{1}_{\cH_1}, \label{eq:GWb} \\
  \widetilde{W_c^\GW}(\ri \omega)   &= \vc^{1/2} \widetilde{\chi^{\GW}_\sym}(\ri \omega) \vc^{1/2}, \label{eq:GWc} \\
  \widetilde{\Sigma^\GW}(\mu + \ri \omega)  &= K_x -\dfrac{1}{2 \pi} \int_{-\infty}^{\infty} \widetilde{G^\GW}\big(\mu + \ri (\omega-\omega')\big) \odot \widetilde{W_c^\GW}(\ri \omega') \, \rd \omega', \label{eq:GWd} \\
  \widetilde{G^\GW}(\mu + \ri\omega) &= \left[\mu + \ri\omega-\left(h_1+\widetilde{\Sigma^\GW}(\mu + \ri \omega)\right)\right]^{-1}, \label{eq:GWe}
\end{align}
\end{subequations}
where $h_1$ is the one-body operator defined in~\eqref{eq:h1} and where $K_x$ is the integral operator on $\cH_1$ with kernel
$$
K_x(\br,\br'):= - \dfrac{\gamma_{0,N}^0(\br, \br')}{| \br - \br'|},
$$
where $\gamma_{0,N}^0$ was defined in~\eqref{eq:gamma0N0}.
\end{definition}

\begin{remark}
	In the $\GW$ equations~\eqref{eq:newHedin}, the chemical potential $\mu$ is supposed to be known~\textit{a priori}. 
\end{remark}

The GW equations~\eqref{eq:newHedin} would be the natural equations to work with from a mathematical viewpoint (they are formally equivalent to the original Hedin's GW equations). However, we were not able to study~\eqref{eq:newHedin} for reasons detailed in Remark~\ref{pb:pb_P} below.

As one can directly see, the equations involve quite a large number of operators, which all have a physical significance. The operator $\widetilde{P_\sym^{\GW}}$ is the $\GW$ approximation of the symmetric irreducible polarizability operator, the operator $\widetilde{\chi_\sym^\GW}$ is the $\GW$ approximation of the symmetric reducible polarizability operator, the operator $\widetilde{W^\GW}$ is the $\GW$ approximation of the dynamically screened Coulomb interaction operator, and finally $\widetilde{\Sigma^\GW}$ is the GW approximation of the self-energy operator. We recognize in Equation~\eqref{eq:GWe} the Dyson equation. The name ``GW'' comes from Equation~\eqref{eq:GWd}.

%%%%%%%%%%%%%%%%%%%%%%%%%%%%%%%%%%
\subsubsection{Different levels of GW approximation}
\label{ssec:levels}

As mentioned below (see Remark~\ref{pb:pb_P}), we were not able to study the full self-consistent problem~\eqref{eq:newHedin}. We will therefore restrict ourselves to the so-called G$_0$W$^0$ and $\GW^0$ approximations. We explain in this section how these different models are obtained. \\

(i) In the fully self-consistent GW (sc-GW) approximation, we assume that the full problem~\eqref{eq:newHedin} is well-posed, so that there exists a (unique) solution $\widetilde{G^\GW}$. It is then solved self-consistently: the idea is to start from some trial Green's function, and keep updating it with~\eqref{eq:newHedin} until convergence. This method is for instance used in~\cite{CRRRS12,CRRRS13, KFS14, RJT10, SDL06}. It was implemented only quite recently due to its high numerical cost (one needs to perform the inversion in~\eqref{eq:GWb} at each iteration).\\

(ii) In the so-called self-consistent $\GW^0$ approximation, or simply $\GW^0$ approximation, only the Green's function (and not the screened Coulomb operator) is updated in~\eqref{eq:GWd} (see for instance~\cite{SDL09, VonBarth1996}). This partial update not only speeds up the calculation (the inversion in~\eqref{eq:GWb} is only performed once), but is sometimes in better agreement with experimental results than the sc-GW approximation. This is the model that we study in Section~\ref{sec:GW0}. \\

(iii) Finally, most works simply consider the G$_0$W$^0$ approximation, where only one iteration of the sc-GW (or equivalently one iteration of $\GW^0$) is performed. This model is very popular due to its relatively low computational cost, and provides already very satisfactory results (see for instance~\cite{Blase2011}). \\

Let us also emphasize that it is unclear that a solution of the fully self-consistent GW model is a better approximation in any sense to the exact Green's function than a non self-consistent approximation such as the one obtained by the G$_0$W$^0$ approximation. This is discussed in~\cite[Section~9.8]{Farid99}, where the author also comments on the possibilities to update the effective one-body operator $h_1+K_x$ along the iterations. \\

\begin{remark} \label{pb:pb_P}
We do not know how to give a proper mathematical meaning to Equation~\eqref{eq:GWa}. More specifically, one would like to define, for a reasonable choice of Green's function $\widetilde{G^\app}$, the operator
\[
	\forall \omega \in \R_\omega, \quad \widetilde{P_\sym}[G^\app](\ri \omega) := \dfrac{1}{2 \pi} \vc^{1/2} \left( \int_{-\infty}^{\infty}  \widetilde{G^\app} \big(\mu + \ri (\omega+\omega')\big) \odot \widetilde{G^\app}(\mu + \ri \omega') \, \rd \omega' \right)\vc^{1/2},
\]
and we would like this operator to be a self-adjoint bounded negative operator on $\cH_1$. It is the case for instance when $\widetilde{G^\app}$ is the non-interacting Hamiltonian $\widetilde{G_0}$ defined in~\eqref{eq:G0h1} (see Proposition~\ref{prop:propertiesP0} and Remark~\ref{rem:transformation_contour_deformation}), or when $\widetilde{G^\app}$ is the exact Green's function defined in~\eqref{eq:widetilde_G} (this fact can be proved by adapting the arguments given in Section~\ref{ssec:analytical_continuation}). We were not able to obtain this result for a generic class of approximate Green's functions $\widetilde{G^\app}$, say $\widetilde{G^\app}$ of the form~\eqref{eq:GWe} with $\widetilde{\Sigma^\GW}(\mu + \ri \omega)$ in a small ball of $L^\infty(\R_\omega, \cB(\cH_1))$.
\end{remark}
For this reason, we will not study the self-consistent GW equation~\eqref{eq:newHedin}.

%%%%%%%%%%%%%%%%%%%%%%%%%%%%%%%%%%
\subsection{The operator $\widetilde{W^0}$ and the random phase approximation}
\label{sec:W0}

The remainder of this section is devoted to the study of the $\GW^0$ approximation (which includes the G$_0$W$^0$ approximation), which amounts to study the two equations~\eqref{eq:GWd}-\eqref{eq:GWe} with a specific fixed choice of the screening operator $W^0$. This approximation bypasses the difficulties mentioned in Remark~\ref{pb:pb_P}. In order to present and study the $\GW^0$ approximation, one must first define the operator $W^0$.

%%%%%%%%%%%%%%%%%
\subsubsection{The RPA irreducible polarizability operator $\widetilde{P^0}$}
\label{sec:irreducible_polarizability}

The GW approximation of the \textit{irreducible polarizability operator} $P$ is formally defined as
\begin{equation} \label{eq:phys_P}
P^{\rm GW}(\br, \br', \tau) = - \ri G(\br, \br', \tau) G(\br', \br, -\tau).
\end{equation}
When the Green's function $G$ is the non-interacting one $G_0$ defined in~\eqref{def:G0}, this also corresponds to the so-called random phase approximation of the reducible polarizability operator (compare for instance~\eqref{eq:explicit_tildeP0} with the expression in~\cite{Cances2012}). We therefore define
\[
	P^0(\br, \br', \tau) := - \ri G_0(\br, \br', \tau) G_0(\br', \br, - \tau).
\]
This operator is expected to have properties similar to the operator $\chi$ defined in Section~\ref{ssec:chi}. In particular, $P^0(\tau)$ is expected to be a bounded operator from $\cC'$ to $\cC$. It is therefore more convenient to work with its symmetrized counterpart $P^0_\sym(\tau) := \vc^{1/2} P^0(\tau) \vc^{1/2}$, which is expected to be a bounded operator on $\cH_1$. It is possible to decompose $P^0_\sym$ as $P^0_\sym = P^{0,+}_\sym + P^{0,-}_\sym$ where, using the kernel-product $\odot$ defined in Section \ref{ssection:kernel_multiplication}, and the explicit expressions of $G_{0, \rp}$ and $G_{0, \rh}$ given in Proposition~\ref{lemma:time-ordered_G0F},
\begin{align} \label{eq:def_P+}
	P^{0,+}_\sym(\tau) 
		& = - \ri \Theta(\tau) \vc^{1/2} G_{0,\rp}(\tau) \odot G_{0, \rh}(- \tau) \vc^{1/2} \\
		& = - \ri \Theta(\tau) \vc^{1/2} \left( \left( \mathds{1}_{\cH_1} - \gamma_{0,N}^0 \right)\re^{ - \ri \tau h_1} \odot \gamma_{0,N}^0 \re^{\ri \tau h_1} \right) \vc^{1/2}
\end{align}
and
\begin{align*}
	P^{0,-}_\sym(\tau) & = - \ri \Theta(- \tau) \vc^{1/2} G_{0, \rh}(\tau) \odot G_{0, \rp}(-\tau) \vc^{1/2} \\
	& = - \ri \Theta(- \tau) \vc^{1/2} \left(  \gamma_{0,N}^0 \re^{-\ri \tau h_1} \odot\left( \mathds{1}_{\cH_1} - \gamma_{0,N}^0 \right) \re^{\ri \tau h_1} \right) \vc^{1/2}.
\end{align*}
Actually, with this definition, we were not able to give a meaning to $P_\sym^{0,-}$ (it may not be a bounded operator on $\cH_1$). We therefore prefer to use the modified kernel-product~$\widetilde{ \odot}$ defined in Remark~\ref{rem:odot}. Our correct mathematical definition for $P^{0,-}_\sym$ then is
\begin{align} \label{eq:def_P-}
	P^{0,-}_\sym(\tau) & = - \ri \Theta(- \tau) \vc^{1/2} G_{0, \rh}(\tau) \, \widetilde{\odot} \, G_{0, \rp}(-\tau) \vc^{1/2}  \\
	& = - \ri \Theta(- \tau) \vc^{1/2} \left(  \gamma_{0,N}^0 \re^{ -\ri \tau h_1} \widetilde{\odot} \left( \mathds{1}_{\cH_1} - \gamma_{0,N}^0 \right) \re^{\ri \tau h_1} \right) \vc^{1/2}.
\end{align}
As will be shown in Lemma~\ref{lem:P0}, this amounts to defining $P^{0,-}(\tau) = P^{0,+}(-\tau)$. We recall that $\gamma_{0, N}^0$ is the orthogonal projector on the vector space spanned by the eigenvectors of $h_1$ associated with the lowest $N$ eigenvalues (see~\eqref{eq:gamma0N0}), so that
\begin{equation} \label{eq:gamma0_projector}
	\gamma_{0,N}^0 = \sum_{k=1}^N | \phi_k \ket \bra \phi_k |,
\end{equation}
where $h_1 \phi_k = \varepsilon_k \phi_k$, and the eigenfunctions $\phi_k$ are real-valued and orthonormal. The following result shows that our definitions make sense, and gives explicit formulae for $P^{0, +}$ (see Section~\ref{proof:lemP0} for the proof).

\begin{lemma} \label{lem:P0}
	The family $\left(P^{0, +}_\sym(\tau) \right)_{\tau \in \R_\tau}$ defined by~\eqref{eq:def_P+} is a bounded causal operator on $\cH_1$, while $\left(P^{0, -}_\sym(\tau) \right)_{\tau \in \R_\tau}$ defined by~\eqref{eq:def_P-} is a bounded anti-causal operator on $\cH_1$. It holds $P^{0,-}_\sym(\tau) = P^{0, +}_\sym (-\tau)$ and
	\begin{equation} \label{eq:explicitP0}
		P^{0, +}_\sym(\tau) = - \ri \Theta(\tau) \sum_{k=1}^N \vc^{1/2} \phi_k \left( \mathds{1}_{\cH_1} - \gamma_{0,N}^0 \right) \re^{-\ri \tau (h_1 - \varepsilon_k)} \left( \mathds{1}_{\cH_1} - \gamma_{0,N}^0 \right) \phi_k \vc^{1/2}.
	\end{equation}
\end{lemma}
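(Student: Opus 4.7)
The strategy is to compute both kernel products $\odot$ and $\widetilde{\odot}$ explicitly by exploiting the spectral resolution $\gamma_{0,N}^0 = \sum_{k=1}^N |\phi_k\rangle\langle\phi_k|$, which makes one of the two factors in each product finite-rank. Once the explicit formula~\eqref{eq:explicitP0} is established, boundedness on $\cH_1$ will follow term by term from the boundedness of $\vc^{1/2}\phi_k$ on $\cH_1$ (a Coulomb-type estimate in the spirit of Lemma~\ref{lem:Avcf}, using $\phi_k \in H^2(\R^3) \hookrightarrow L^\infty(\R^3)$) combined with the contractivity of $\mathds{1}_{\cH_1}-\gamma_{0,N}^0$ and the unitarity of $\re^{-\ri\tau(h_1-\varepsilon_k)}$. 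Causality (resp. anti-causality) is immediate from the $\Theta$-factors.

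For the causal part, I set $A_\tau := (\mathds{1}_{\cH_1}-\gamma_{0,N}^0)\re^{-\ri\tau h_1}$ and $B_\tau := \gamma_{0,N}^0 \re^{\ri\tau h_1}$. Since the $\phi_k$ are real-valued eigenfunctions of $h_1$ with eigenvalues $\varepsilon_k$, one has $B_\tau = \sum_k \re^{\ri\tau\varepsilon_k}|\phi_k\rangle\langle\phi_k|$. The definition~\eqref{eq:defAodotB2} of $\odot$ then gives, for every $f,g\in\cH_1$,
\[
\langle f | A_\tau \odot B_\tau | g\rangle = \Tr_{\cH_1}(A_\tau g B_\tau \overline{f}) = \sum_{k=1}^N \re^{\ri\tau\varepsilon_k}\Tr_{\cH_1}\bigl( A_\tau g |\phi_k\rangle\langle\phi_k| \overline{f}\bigr).
\]
Each summand is the trace of the rank-one operator $|A_\tau(g\phi_k)\rangle\langle f\phi_k|$ (using that $\phi_k$ is real, so that $\langle\phi_k|\overline{f} = \langle f\phi_k|$), and hence equals $\langle f\phi_k,A_\tau(g\phi_k)\rangle = \langle f | M_{\phi_k} A_\tau M_{\phi_k} | g\rangle$. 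Summing over $k$, absorbing $\re^{\ri\tau\varepsilon_k}\re^{-\ri\tau h_1}$ into $\re^{-\ri\tau(h_1-\varepsilon_k)}$, and using that $\gamma_{0,N}^0$ commutes with $h_1$ (which allows an extra projector $\mathds{1}_{\cH_1}-\gamma_{0,N}^0$ to be inserted on the right of the propagator without changing the operator since $(\mathds{1}_{\cH_1}-\gamma_{0,N}^0)^2 = \mathds{1}_{\cH_1}-\gamma_{0,N}^0$), I obtain~\eqref{eq:explicitP0}.

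For the anti-causal part, the same strategy applied to $\widetilde{\odot}$ yields an analogous formula. With $A'_\tau := \gamma_{0,N}^0\re^{-\ri\tau h_1}$ and $B'_\tau := (\mathds{1}_{\cH_1}-\gamma_{0,N}^0)\re^{\ri\tau h_1}$, one can either repeat the trace calculation (this time with the finite-rank operator $A'_\tau$ on the left of $\Tr(\overline{f}A'_\tau g B'_\tau)$), or work directly at the level of kernels, using that the formal kernel of $A\widetilde{\odot}B$ coincides with $A(r_1,r_2)B(r_2,r_1)$ whenever the operator is well-defined. In either case, the reality and self-adjointness of $h_1$ imply the crucial symmetry $\re^{\ri\tau h_1}(r_1,r_2) = \re^{\ri\tau h_1}(r_2,r_1)$ of the propagator kernel (since $\re^{\ri\tau h_1}(r_1,r_2) = \overline{\re^{-\ri\tau h_1}(r_2,r_1)} = \overline{\overline{\re^{\ri\tau h_1}(r_2,r_1)}}$), which matches the two expressions and produces
\[
P^{0,-}_\sym(\tau) = -\ri\Theta(-\tau)\sum_{k=1}^N \vc^{1/2}\phi_k(\mathds{1}_{\cH_1}-\gamma_{0,N}^0)\re^{\ri\tau(h_1-\varepsilon_k)}(\mathds{1}_{\cH_1}-\gamma_{0,N}^0)\phi_k\vc^{1/2}.
\]
Comparison with~\eqref{eq:explicitP0} evaluated at $-\tau$ gives $P^{0,-}_\sym(\tau) = P^{0,+}_\sym(-\tau)$ immediately, and boundedness in $L^\infty(\R_\tau,\cB(\cH_1))$ follows as before. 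The main conceptual obstacle is precisely the one signalled in Remark~\ref{rem:odot}: the kernel $A'_\tau(r_1,r_2)B'_\tau(r_2,r_1)$ does not \emph{a priori} define a bounded operator through $\odot$ because the finite-rank factor ends up on the wrong side of the product, which is exactly why the definition~\eqref{eq:def_P-} must use $\widetilde{\odot}$; once this choice is made, the spectral expansion turns an otherwise ill-defined kernel-product into a manifestly bounded finite sum of rank-controlled operators.
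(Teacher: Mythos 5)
Your proposal is correct and follows essentially the same route as the paper: you expand $\gamma_{0,N}^0=\sum_k|\phi_k\rangle\langle\phi_k|$ so that the $\odot$ and $\widetilde{\odot}$ products reduce to traces of rank-one operators, which yields~\eqref{eq:explicitP0}, and you read off uniform boundedness from the boundedness of $\phi_k\vc^{1/2}$ and $\vc^{1/2}\phi_k$ on $\cH_1$ (the paper gets this from $\phi_k\in L^3$ and the embedding $L^{6/5}\hookrightarrow\cC$ rather than a Kato--Seiler--Simon bound, but the ingredient is the same). Your symmetric-kernel observation for the propagator, based on the reality and self-adjointness of $h_1$ (together with the reality of the $\phi_k$ and of the kernel of $\vc^{1/2}$), is exactly the mechanism the paper uses, in operator form, via the identity $\overline{\left(P^{0,+}_\sym(-\tau)\right)^\ast\overline{g}}=P^{0,+}_\sym(-\tau)g$ to pass from $\bra\overline{g}|P^{0,+}_\sym(-\tau)|\overline{f}\ket$ to $\bra f|P^{0,+}_\sym(-\tau)|g\ket$ and conclude $P^{0,-}_\sym(\tau)=P^{0,+}_\sym(-\tau)$.
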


\begin{remark} \label{rem:phik}
	For $1 \le k \le N$, the notation $\phi_k$ in~\eqref{eq:explicitP0} refers to the multiplication operator by the function $\phi_k$. It is a bounded operator from $\cC'$ to $\cH_1$, and from $\cH_1$ to $\cC$ (see the proof of Lemma~\ref{lem:P0}). The operator $\phi_k v_c^{1/2}$ is bounded on $\cH_1$, and one can check that its adjoint on~$\cH_1$ is~$(\phi_k v_c^{1/2})^* := v_c^{1/2} \phi_k$.
\end{remark}

The properties of the Laplace and Fourier transforms of $P^{0, +}_\sym$ are easily deduced from~\eqref{eq:explicitP0} using Proposition~\ref{prop:resolvent} and Lemma~\ref{lemma:Im_positive_particle}.

\begin{proposition}
  \label{prop:ppties_P_GW}
  The function $z \mapsto \widetilde{P_\sym^{0,+}}(z)$ is analytic on the upper half-plane~$\mathbb{U}$, and can be analytically continued to the lower half-plane~$\mathbb{L}$ through the semi-real line $(-\infty, \varepsilon_{N+1} - \varepsilon_N)$. For all $z \in \C \setminus [\varepsilon_{N+1} - \varepsilon_N, \infty)$,
  \begin{equation} \label{eq:explicit_tildeP+}
  	\widetilde{P^{0,+}_\sym}(z) = \sum_{k=1}^N \vc^{1/2} \phi_k \left( \dfrac{ \mathds{1}_{\cH_1} - \gamma_{0,N}^0 }{z - h_1 + \varepsilon_k} \right) \phi_k \vc^{1/2}.
  \end{equation}
  Moreover $\widetilde{P^{0,+}_\sym}( \cdot + \ri \eta)$ converges to $\widehat{P^{0,+}_\sym}$ in $H^{-1}(\R_\omega, \cB(\cH_1))$ as $\eta \to 0^+$, with
  \[
  	\Re \widehat{P^{0,+}_\sym} = \pv \left( \sum_{k=1}^N \vc^{1/2} \phi_k \left( \dfrac{ \mathds{1}_{\cH_1} - \gamma_{0,N}^0 }{\cdot - h_1 + \varepsilon_k} \right) \phi_k \vc^{1/2} \right)
\]
and
\[
	\Im \widehat{P^{0,+}_\sym} = - \pi \left( \sum_{k=1}^N \vc^{1/2} \phi_k \left( \mathds{1}_{\cH_1} - \gamma_{0,N}^0\right) P^{h_1 - \varepsilon_k} \phi_k \vc^{1/2} \right).
  \]
  It also holds 
  \[
  	\forall z \in \C \setminus [\varepsilon_{N+1} - \varepsilon_N, \infty), \quad \widetilde{P^{0,-}_\sym}(z) = \widetilde{P^{0,+}_\sym}(-z),
  \]
  so that, for $z \in \UU \cup \LL \cup (- (\varepsilon_{N+1} - \varepsilon_N), \varepsilon_{N+1} - \varepsilon_N)$,
  \begin{equation} \label{eq:explicit_tildeP0}
  	\widetilde{P^{0}_\sym}(z) = - 2\sum_{k=1}^N v_c^{1/2} \phi_k (\mathds{1}_{\cH_1} - \gamma_{0,N}^0) \left( \dfrac{h_1 - \varepsilon_k }{(h_1 - \varepsilon_k)^2 - z^2} \right) (\mathds{1}_{\cH_1} - \gamma_{0,N}^0) \phi_k \vc^{1/2}.
  \end{equation}
\end{proposition}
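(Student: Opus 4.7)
The plan is to derive every assertion of Proposition~\ref{prop:ppties_P_GW} directly from the explicit formula~\eqref{eq:explicitP0} of Lemma~\ref{lem:P0} by applying Proposition~\ref{prop:resolvent} (Proposition~\ref{prop:analytic_extension_causal_time_evolution}) term by term. Because $\gamma_{0,N}^0$ is the spectral projection of $h_1$ associated with its lowest $N$ eigenvalues, it commutes with $h_1$, so for each $k \in \{1,\dots,N\}$ the operator
\[
A_k := (\mathds{1}_{\cH_1} - \gamma_{0,N}^0)(h_1 - \varepsilon_k)(\mathds{1}_{\cH_1} - \gamma_{0,N}^0)
\]
is self-adjoint on $\mathrm{Ran}(\mathds{1}_{\cH_1} - \gamma_{0,N}^0)$ with spectrum contained in $[\varepsilon_{N+1}-\varepsilon_k,+\infty) \subset [\varepsilon_{N+1}-\varepsilon_N,+\infty)$, the last inclusion using \textbf{Hyp.~3}. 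Moreover, by Remark~\ref{rem:phik}, $\vc^{1/2}\phi_k \in \cB(\cH_1)$ and $(\vc^{1/2}\phi_k)^* = \phi_k \vc^{1/2}$, so that each summand in~\eqref{eq:explicitP0} reads $B_k^* \bigl(-\ri\Theta(\tau)\re^{-\ri\tau A_k}\bigr) B_k$ with $B_k := (\mathds{1}_{\cH_1} - \gamma_{0,N}^0)\phi_k \vc^{1/2} \in \cB(\cH_1)$.

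Applying Proposition~\ref{prop:resolvent} to each $k$, the Laplace transform on $\UU$ of the $k$-th summand is $B_k^* (z-A_k)^{-1} B_k$, which, using the commutation of $\gamma_{0,N}^0$ with $h_1$, is exactly the $k$-th summand of~\eqref{eq:explicit_tildeP+} (the fraction being understood as $(\mathds{1}_{\cH_1} - \gamma_{0,N}^0)(z - h_1 + \varepsilon_k)^{-1}$ acting on $\mathrm{Ran}(\mathds{1}_{\cH_1}-\gamma_{0,N}^0)$; this is well defined even when $z - h_1 + \varepsilon_k$ is singular on $\mathrm{Ran}(\gamma_{0,N}^0)$). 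Since the resolvent $(z - A_k)^{-1}$ admits a strongly analytic extension to $\C \setminus [\varepsilon_{N+1}-\varepsilon_k,+\infty)$, the intersection of these domains over $k=1,\dots,N$ yields analyticity on $\C \setminus [\varepsilon_{N+1}-\varepsilon_N, +\infty)$; in particular, the full lower half-plane and the semi-real line $(-\infty,\varepsilon_{N+1}-\varepsilon_N)$ belong to this domain, which gives the analytic continuation claim. The convergence $\widetilde{P^{0,+}_\sym}(\cdot+\ri\eta) \to \widehat{P^{0,+}_\sym}$ in $H^{-1}(\R_\omega,\cB(\cH_1))$ and the explicit formulae for $\Re\widehat{P^{0,+}_\sym}$ and $\Im\widehat{P^{0,+}_\sym}$ follow summand-by-summand from the corresponding statements in Proposition~\ref{prop:resolvent}, noting that left- and right-multiplication by the bounded operators $B_k$, $B_k^*$ is continuous on $H^{-1}(\R_\omega,\cB(\cH_1))$ and commutes with $\pv$ and with spectral projections.

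For the anti-causal part, the identity $P^{0,-}_\sym(\tau) = P^{0,+}_\sym(-\tau)$ from Lemma~\ref{lem:P0} implies, by the change of variable $\tau \mapsto -\tau$ in the defining Laplace integral and Proposition~\ref{lem:hole-type}, that $\widetilde{P^{0,-}_\sym}(z) = \widetilde{P^{0,+}_\sym}(-z)$ on the domain where both sides are defined. Adding the two contributions and using the elementary algebraic identity
\[
\frac{1}{z - A_k} + \frac{1}{-z - A_k} = \frac{-2 A_k}{A_k^2 - z^2},
\]
applied on $\mathrm{Ran}(\mathds{1}_{\cH_1} - \gamma_{0,N}^0)$ where $A_k$ coincides with $h_1-\varepsilon_k$, delivers~\eqref{eq:explicit_tildeP0}. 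No step presents a serious obstacle; the only point requiring care is the correct interpretation of the resolvent in~\eqref{eq:explicit_tildeP+} when $z \in \sigma(h_1 - \varepsilon_k)$, which is handled by the commutation of $\gamma_{0,N}^0$ with $h_1$ as explained above.
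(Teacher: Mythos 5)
Your argument is correct and is essentially the paper's own proof: the paper explicitly states that Proposition~\ref{prop:ppties_P_GW} is "easily deduced from~\eqref{eq:explicitP0} using Proposition~\ref{prop:resolvent}", i.e.\ precisely the term-by-term application of the causal-propagator result to the explicit formula of Lemma~\ref{lem:P0}, with the resolvent understood on $\mathrm{Ran}(\mathds{1}_{\cH_1}-\gamma_{0,N}^0)$ as you do. Your handling of the sandwiching by the bounded operators $(\mathds{1}_{\cH_1}-\gamma_{0,N}^0)\phi_k\vc^{1/2}$, of the spectral lower bound $\varepsilon_{N+1}-\varepsilon_k\ge\varepsilon_{N+1}-\varepsilon_N>0$, and of the identity $\widetilde{P^{0,-}_\sym}(z)=\widetilde{P^{0,+}_\sym}(-z)$ matches the intended reasoning.
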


The properties of $\widehat{P^{0,+}_\sym}$ and of $\widehat{P^{0,-}_\sym}$ can be directly read off from the previous expressions. For instance, we see that $\Im \widehat{P^{0,+}_\sym}$ and $\Im \widehat{P^{0,-}_\sym}$ are negative operator-valued measures, with support in $(\varepsilon_{N+1} - \varepsilon_N, \infty)$ and $(-\infty, -(\varepsilon_{N+1} - \varepsilon_N))$ respectively. For $\omega$ in the real gap $(- (\varepsilon_{N+1} - \varepsilon_N), \varepsilon_{N+1} - \varepsilon_N)$, we see that $\widehat{P^{0, \pm}_\sym}(\omega) = \Re \widehat{P^{0, \pm}_\sym}(\omega)$ is a negative bounded self-adjoint on $\cH_1$.\\
 For our purpose, we only need to know the behavior of $\widetilde{P^0_\sym}$ on the imaginary axis $\ri \R_\omega$. We summarize the corresponding most important results in the following proposition (see Section~\ref{proof:propertiesP0} for the proof).

\begin{proposition} \label{prop:propertiesP0}
	It holds
	\begin{equation} \label{eq:explicit_widetildeP}
		\forall \omega \in \R_\omega, \quad \widetilde{P_\sym^0}(\ri \omega) 
		= - 2 \sum_{k=1}^N \vc^{1/2} \phi_k \left( \mathds{1}_{\cH_1} - \gamma_{0,N}^0 \right) \left( \dfrac{ h_1 - \varepsilon_k}{\omega^2 + (h_1 - \varepsilon_k)^2} \right) \left( \mathds{1}_{\cH_1} - \gamma_{0,N}^0 \right) \phi_k \vc^{1/2}.
	\end{equation}
	In particular, for all $\omega \in \R_\omega$, the operator $\widetilde{P^0_\sym}(\ri \omega)$ is a negative, self-adjoint bounded operator on $\cH_1$ satisfying $\widetilde{P_\sym^0}(-\ri \omega) = \widetilde{P_\sym^0}(\ri \omega)$. In addition, the function $\omega \mapsto \widetilde{P^0_\sym}(\ri \omega)$ is analytic from $\R_\omega$ to $\cS(\cH_1)$, and is in $L^p(\R_\omega, \cS(\cH_1))$ for all $p > 1$. For any $f \in \cH_1$, the function $\omega \mapsto \left\bra f \Big| \widetilde{P^0_\sym}(\ri \omega) \Big| f \right\ket$ is non-positive, in $L^1(\R_\omega)$, and
	\begin{align} \label{eq:int_fP0f}
		\int_{-\infty}^{+\infty}  \left\bra f \Big| \widetilde{P^0_\sym}(\ri \omega) \Big| f \right\ket \rd \omega 
		& =- 2 \pi \left\bra f \Big| \vc^{1/2} \left( \left( \mathds{1}_{\cH_1} - \gamma_{0,N}^0 \right) \odot \gamma_{0,N}^0 \right) \vc^{1/2} \Big| f \right\ket \nonumber \\
		& = - 2 \pi \left\bra f \left| \sum_{k=1}^N \vc^{1/2} \phi_k \left( \mathds{1}_{\cH_1} - \gamma_{0,N}^0 \right) \phi_k \vc^{1/2} \right| f \right\ket .
	\end{align} 
	Finally, there exists a constant $C \in \R^+$ such that
	\begin{equation} \label{eq:ineq_P0}
		\forall \omega \in \R_\omega, \quad 0 \le - \widetilde{P^0_\sym}(\ri \omega) \le \dfrac{C}{(\omega^2 + 1)^{1/2}} \left( \vc^{1/2} \rho_{0,N}^0 \vc^{1/2} \right),
	\end{equation}
	where $\rho_{0,N}^0$ is the multiplication operator by the (real-valued) function $\rho_{0,N}^0$, the latter operator being bounded from $\cC'$ to $\cC$.
\end{proposition}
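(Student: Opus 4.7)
My plan is to start from the explicit formula~\eqref{eq:explicit_tildeP0} established in Proposition~\ref{prop:ppties_P_GW}, which holds on $\UU \cup \LL \cup (-(\varepsilon_{N+1}-\varepsilon_N), \varepsilon_{N+1}-\varepsilon_N)$ and therefore at every $z = \ri\omega$ with $\omega \in \R$. Substituting $-z^2 = \omega^2$ immediately produces~\eqref{eq:explicit_widetildeP}. I then introduce, for $1 \leq k \leq N$, the bounded self-adjoint operator-valued function
\[
B_k(\omega) := (\mathds{1}_{\cH_1}-\gamma_{0,N}^0)\frac{h_1-\varepsilon_k}{\omega^2+(h_1-\varepsilon_k)^2}(\mathds{1}_{\cH_1}-\gamma_{0,N}^0),
\]
and invoke {\bf Hyp. 3} (which yields $h_1 \geq \varepsilon_{N+1} > \varepsilon_k$ on the range of $\mathds{1}_{\cH_1}-\gamma_{0,N}^0$) together with the functional calculus to conclude $B_k(\omega) \geq 0$. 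Using Remark~\ref{rem:phik} (so that $\phi_k v_c^{1/2} \in \cB(\cH_1)$ with adjoint $v_c^{1/2}\phi_k$), each summand of $-\widetilde{P_\sym^0}(\ri\omega)$ takes the form $(\phi_k v_c^{1/2})^* B_k(\omega)(\phi_k v_c^{1/2})$, which settles self-adjointness, negativity, and boundedness on $\cH_1$. The reflection symmetry $\widetilde{P_\sym^0}(-\ri\omega) = \widetilde{P_\sym^0}(\ri\omega)$ is immediate since $\omega$ enters only through $\omega^2$.

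For analyticity and integrability, I use the factorization $\omega^2 + (h_1-\varepsilon_k)^2 = (h_1-\varepsilon_k-\ri\omega)(h_1-\varepsilon_k+\ri\omega)$ to view each $B_k(\omega)$ as a product of two resolvents of $h_1$ at non-real points $\varepsilon_k \pm \ri\omega$, hence real-analytic from $\R$ to $\cB(\cH_1)$. Functional calculus yields the pointwise bound
\[
\|B_k(\omega)\|_{\cB(\cH_1)} \leq \sup_{\lambda \geq \varepsilon_{N+1}-\varepsilon_k} \frac{\lambda}{\lambda^2 + \omega^2} \leq \frac{C_k}{(1+\omega^2)^{1/2}},
\]
by combining $\lambda/(\lambda^2+\omega^2) \leq 1/(\varepsilon_{N+1}-\varepsilon_k)$ for small $|\omega|$ with $\lambda/(\lambda^2+\omega^2) \leq 1/(2|\omega|)$ for large $|\omega|$, from which $L^p(\R_\omega,\cS(\cH_1))$ membership for all $p > 1$ follows.

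The integral identity~\eqref{eq:int_fP0f} then follows from the spectral theorem and Fubini's theorem: for each $k$ and $f \in \cH_1$,
\[
\int_\R \left\langle f \bigl| v_c^{1/2}\phi_k B_k(\omega) \phi_k v_c^{1/2}\bigr| f\right\rangle \rd\omega = \pi\,\left\langle f \bigl| v_c^{1/2}\phi_k (\mathds{1}_{\cH_1}-\gamma_{0,N}^0)\phi_k v_c^{1/2}\bigr| f\right\rangle,
\]
since $\int_\R \lambda(\lambda^2+\omega^2)^{-1}\rd\omega = \pi$ for every $\lambda > 0$ and the spectral support of $(h_1-\varepsilon_k)(\mathds{1}_{\cH_1}-\gamma_{0,N}^0)$ lies in $[\varepsilon_{N+1}-\varepsilon_k,\infty) \subset (0,\infty)$. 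Summing over $k$ with the prefactor $-2$ produces the right-hand side of the second equality in~\eqref{eq:int_fP0f}. To identify it with $-2\pi\langle f|v_c^{1/2}((\mathds{1}_{\cH_1}-\gamma_{0,N}^0)\odot\gamma_{0,N}^0)v_c^{1/2}|f\rangle$, I expand~\eqref{eq:defAodotB2} using $\gamma_{0,N}^0 = \sum_k|\phi_k\rangle\langle\phi_k|$ with real eigenfunctions, and trace cyclicity gives $\langle g|(\mathds{1}_{\cH_1}-\gamma_{0,N}^0)\odot\gamma_{0,N}^0|g\rangle = \sum_k \langle g|\phi_k(\mathds{1}_{\cH_1}-\gamma_{0,N}^0)\phi_k|g\rangle$.

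Finally, to obtain~\eqref{eq:ineq_P0}, I combine the functional-calculus bound $B_k(\omega) \leq C(1+\omega^2)^{-1/2}(\mathds{1}_{\cH_1}-\gamma_{0,N}^0)$ with the pointwise operator inequality $\phi_k(\mathds{1}_{\cH_1}-\gamma_{0,N}^0)\phi_k \leq \phi_k^2$ (between multiplication operators), sum over $k$, and invoke $\rho_{0,N}^0 = \sum_k \phi_k^2$ together with the boundedness of $\rho_{0,N}^0$ as a multiplication operator from $\cC'$ to $\cC$, which is inherited from $\rho_{0,N}^0 \in L^1(\R^3)\cap L^\infty(\R^3)$ (the non-interacting counterpart of Proposition~\ref{prop:Psi_N^0}) by the same Hardy--Littlewood--Sobolev argument underlying Lemma~\ref{lem:vc}. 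The main subtle point I anticipate is the identification step linking $\sum_k \phi_k(\mathds{1}_{\cH_1}-\gamma_{0,N}^0)\phi_k$ with the kernel product $(\mathds{1}_{\cH_1}-\gamma_{0,N}^0)\odot\gamma_{0,N}^0$, which relies crucially on the real-valuedness of the eigenfunctions $\phi_k$ and on carefully tracking the adjoint $(\phi_k v_c^{1/2})^* = v_c^{1/2}\phi_k$ on $\cH_1$.
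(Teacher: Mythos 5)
Your proposal is correct and follows essentially the same route as the paper's own proof: deduce \eqref{eq:explicit_widetildeP} from the explicit formula of Proposition~\ref{prop:ppties_P_GW}, use the spectral gap $(\mathds{1}_{\cH_1}-\gamma_{0,N}^0)(h_1-\varepsilon_k)\ge \varepsilon_{N+1}-\varepsilon_N>0$ together with Remark~\ref{rem:phik} for boundedness, self-adjointness and negativity, the identity $\int_\R \lambda(\lambda^2+\omega^2)^{-1}\,\rd\omega=\pi$ (as in Lemma~\ref{lem:ReGp}) for \eqref{eq:int_fP0f}, and the bounds $\phi_k(\mathds{1}_{\cH_1}-\gamma_{0,N}^0)\phi_k\le \phi_k^2$ with $\sum_{k=1}^N\phi_k^2=\rho_{0,N}^0$ for \eqref{eq:ineq_P0}. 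One cosmetic point: at $\omega=0$ the points $\varepsilon_k\pm\ri\omega$ are eigenvalues of $h_1$, so your resolvent-factorization argument for analyticity should be stated for $h_1$ restricted to the range of $\mathds{1}_{\cH_1}-\gamma_{0,N}^0$, where $h_1-\varepsilon_k\ge\varepsilon_{N+1}-\varepsilon_k>0$ makes the factors boundedly invertible for all real $\omega$.
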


\paragraph{The sum-rule for the operator $\widetilde{P^0}$.}
We end this section with the sum-rule for the operator $\widetilde{P^0} = \vc^{-1/2} \widetilde{P_\sym^0} \vc^{-1/2}$, which goes from $\cC'$ to $\cC$. We postpone the proof until Section~\ref{proof:sumrule_P0}.
\begin{theorem} \label{th:sumrule_P0}
	The operator $2 \sum_{k=1}^N \phi_k (\mathds{1}_{\cH_1} - \gamma_{0,N}^0) (h_1 - \varepsilon_k) \phi_k$ is bounded from $\cC'$ to $\cC$, and it holds
	\[
		2  \sum_{k=1}^N \phi_k (\mathds{1}_{\cH_1} - \gamma_{0,N}^0) (h_1 - \varepsilon_k) \phi_k = \div( \rho_{0,N}^0 \nabla \cdot).
	\]
	Moreover, the following weak-convergence holds:
	\[
		\forall (f,g) \in \cC' \times \cC', \quad \lim_{\omega \to \pm \infty} \left\bra \overline{f}, - \omega^2 \widetilde{P^0}(\ri \omega) g \right\ket_{\cC', \cC} = \left\bra \overline{f}, - \div \left( \rho_{0,N}^0 \nabla g \right) \right\ket_{\cC', \cC} = \int_{\R^3} \rho_{0, N}^0 \overline{\nabla f} \cdot \nabla g.
	\]
	Finally, for all $g \in \cC'$ such that $\Delta g \in L^2(\R^3)$, the following strong convergence holds:
	\[
		\lim_{\omega \to \pm \infty} \omega^2 \widetilde{P^0}(\ri \omega) g = \div \left( \rho_{0,N}^0 \nabla g \right) \quad \text{in } \cC.
	\]
\end{theorem}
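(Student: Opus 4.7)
The plan mirrors the proof of Johnson's sum rule (Theorem~\ref{lem:sumrule_chi}) but is much more explicit thanks to the closed-form expression~\eqref{eq:explicit_widetildeP}. The starting point is an operator identity established on smooth test functions whose proof relies only on the eigenvalue equation $h_1\phi_k = \varepsilon_k\phi_k$ and a combinatorial cancellation. Writing $h_1 = -\tfrac12\Delta + v$, for $g\in C_c^\infty(\R^3)$ a direct Leibniz computation yields $(h_1-\varepsilon_k)(\phi_k g) = -\nabla\phi_k\cdot\nabla g - \tfrac12\phi_k\Delta g$, since the potential term cancels against $-\tfrac12(\Delta\phi_k)g$ by the eigenvalue equation. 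Multiplying by $\phi_k$ on the left and using $2\phi_k\nabla\phi_k = \nabla(\phi_k^2)$ gives $\phi_k(h_1-\varepsilon_k)\phi_k g = -\tfrac12\div(\phi_k^2\nabla g)$, and summing over $k$ with $\rho_{0,N}^0 = \sum_{k=1}^N \phi_k^2$ produces $2\sum_k \phi_k(h_1-\varepsilon_k)\phi_k g = -\div(\rho_{0,N}^0\nabla g)$. The projector $\gamma_{0,N}^0$ may be inserted for free since
\[
\sum_{k=1}^N \phi_k\,\gamma_{0,N}^0(h_1-\varepsilon_k)\phi_k g = \sum_{j,k=1}^N (\varepsilon_j-\varepsilon_k)\,\langle\phi_j\phi_k,g\rangle\,\phi_j\phi_k,
\]
which vanishes by antisymmetry of the prefactor against symmetry of the remainder under the $(j,k)$-swap. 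The resulting operator identity extends from $C_c^\infty(\R^3)$ to a bounded equality $\cC'\to\cC$ because the right-hand side factors continuously as $\cC'\xrightarrow{\nabla}(L^2)^3\xrightarrow{\rho_{0,N}^0\cdot}(L^2)^3\xrightarrow{\div}\cC$, using $\rho_{0,N}^0\in L^\infty$ and the Fourier bound $\|\div u\|_\cC^2 = 4\pi\int|k\cdot\widehat{u}|^2/|k|^2\,\rd k\le 4\pi\|u\|_{L^2}^2$.

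For the weak convergence, set $B_k := (\mathds{1}_{\cH_1}-\gamma_{0,N}^0)(h_1-\varepsilon_k)(\mathds{1}_{\cH_1}-\gamma_{0,N}^0)$. Since $h_1$ commutes with $\gamma_{0,N}^0$ and $\mathrm{Ran}(\gamma_{0,N}^0)$ is spanned by eigenvectors of $h_1$ with eigenvalues at most $\varepsilon_N<\varepsilon_{N+1}$, $B_k$ is a non-negative self-adjoint operator on $\cH_1$ whose form domain contains $H^1(\R^3)$. For any $f,g\in\cC'$, the products $\phi_k f,\phi_k g$ lie in $H^1(\R^3)$ by elementary H\"older estimates using $\phi_k\in L^\infty$, $\nabla\phi_k\in L^3$ (both inherited from $\phi_k\in H^2(\R^3)$), $f,g\in L^6(\R^3)$, and $\nabla f,\nabla g\in L^2(\R^3)$. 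Using $\widetilde{P^0}(\ri\omega) = \vc^{-1/2}\widetilde{P^0_\sym}(\ri\omega)\vc^{-1/2}$ in~\eqref{eq:explicit_widetildeP} together with the functional-calculus identity $(\mathds{1}_{\cH_1}-\gamma_{0,N}^0)(h_1-\varepsilon_k)\bigl(\omega^2+(h_1-\varepsilon_k)^2\bigr)^{-1}(\mathds{1}_{\cH_1}-\gamma_{0,N}^0) = B_k/(\omega^2+B_k^2)$, the pairing rewrites as
\[
\left\langle \overline f,-\omega^2\widetilde{P^0}(\ri\omega)g\right\rangle_{\cC',\cC} = 2\sum_{k=1}^N \left\langle \phi_k f,\frac{\omega^2 B_k}{\omega^2+B_k^2}\phi_k g\right\rangle_{\cH_1}.
\]
Since $\omega\mapsto \tfrac{\omega^2\lambda}{\omega^2+\lambda^2}$ is increasing on $[0,+\infty)$ with limit $\lambda$, monotone convergence on the spectral integral of $B_k$ combined with the polarization identity yields $\lim_{\omega\to\pm\infty}\langle\phi_k f,(\omega^2 B_k/(\omega^2+B_k^2))\phi_k g\rangle_{\cH_1} = q_k(\phi_k f,\phi_k g)$, where $q_k$ is the closed form of $B_k$. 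Summing over $k$ and invoking the identity of the previous paragraph (followed by integration by parts) gives the weak limit $\int_{\R^3}\rho_{0,N}^0\,\overline{\nabla f}\cdot\nabla g$.

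For the strong convergence, assume $g\in\cC'$ and $\Delta g\in L^2(\R^3)$. The crucial point is to show $\phi_k g\in H^2(\R^3)=D(h_1)$: in the Leibniz expansion $\Delta(\phi_k g)=\phi_k\Delta g+2\nabla\phi_k\cdot\nabla g+(\Delta\phi_k)g$, the first two terms lie in $L^2$ by standard embeddings (using $\phi_k\in L^\infty$, $\nabla\phi_k\in L^3$, and $\nabla g\in L^2\cap L^6$ via Sobolev since $\Delta g\in L^2$), while for the third the eigenvalue equation identifies $\Delta\phi_k=2(v-\varepsilon_k)\phi_k$, whose singularity near each nucleus $R_j$ is of Coulomb type $|\br-R_j|^{-1}$. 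Hardy's inequality $\int|g|^2/|\br-R_j|^2\,\rd\br\le 4\|\nabla g\|_{L^2}^2$ controls $(\Delta\phi_k)g$ in $L^2$ near each nucleus, while far from the nuclei $\Delta\phi_k$ decays exponentially. Hence $(\mathds{1}_{\cH_1}-\gamma_{0,N}^0)\phi_k g\in D(h_1)$ (the correction $\gamma_{0,N}^0\phi_k g$ is a finite linear combination of $H^2$ eigenfunctions), and dominated convergence in the spectral integral produces the strong $\cH_1$-limit
\[
\omega^2\frac{h_1-\varepsilon_k}{\omega^2+(h_1-\varepsilon_k)^2}(\mathds{1}_{\cH_1}-\gamma_{0,N}^0)\phi_k g\longrightarrow (h_1-\varepsilon_k)(\mathds{1}_{\cH_1}-\gamma_{0,N}^0)\phi_k g.
\]
Composing with the bounded multiplication $\phi_k\colon\cH_1\to\cC$ (from $\phi_k\in L^3$ and the embedding $L^{6/5}\hookrightarrow\cC$) and summing over $k$ yields the desired strong $\cC$-limit $\div(\rho_{0,N}^0\nabla g)$. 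The only nontrivial technical point is the $H^2$-regularity of $\phi_k g$ near the Coulombic nuclei, which is precisely where Hardy's inequality is decisive.
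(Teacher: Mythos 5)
Your proof is correct and follows the same skeleton as the paper's: the antisymmetry cancellation to drop $\gamma_{0,N}^0$, the Leibniz/eigenvalue computation $(h_1-\varepsilon_k)(\phi_k g)=-\tfrac12\phi_k\Delta g-\nabla\phi_k\cdot\nabla g$ on the core $C^\infty_c$, the boundedness of $\div(\rho_{0,N}^0\nabla\cdot)$ from $\cC'$ to $\cC$ via $\rho_{0,N}^0\in L^\infty$, and spectral-measure convergence for the two limits. The one genuine divergence is the strong-convergence step: you establish $\phi_k g\in H^2(\R^3)$ by expanding $\Delta(\phi_k g)$ and controlling $(\Delta\phi_k)g$ through the Coulombic form of the potential, exponential decay of $\phi_k$, and Hardy's inequality. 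This is sound, but it is heavier than needed and ties the argument to the specific form~\eqref{eq:h1} of $h_1$; the paper's template (the proof of Theorem~\ref{lem:sumrule_chi}, invoked verbatim) exploits the commutator structure so that the potential never appears: the identity you already derived shows that the distributional image $(h_1-\varepsilon_k)(\phi_k g)=-\tfrac12\phi_k\Delta g-\nabla\phi_k\cdot\nabla g$ lies in $L^2$ (using $\phi_k\in L^\infty$, $\nabla\phi_k\in L^3$, $\nabla g\in H^1\hookrightarrow L^6$), and since $h_1$ is self-adjoint with a Kato potential this already gives $\phi_k g\in D(h_1)$ --- no Hardy inequality, no decay estimates. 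On the weak limit, your monotone-convergence-plus-polarization argument works directly for all $f,g\in\cC'$ (a small advantage over running the paper's dominated-convergence variant on $C^\infty_c$ and then using density with a uniform bound), but you should make explicit the final continuity/density remark identifying $2\sum_k q_k(\phi_k f,\phi_k g)$ with $\int_{\R^3}\rho_{0,N}^0\,\overline{\nabla f}\cdot\nabla g$ for general $f,g\in\cC'$; your estimate $\|\phi_k f\|_{H^1}\le C\|f\|_{\cC'}$ supplies exactly the required continuity. Finally, note that your sign, $2\sum_k\phi_k(\mathds{1}_{\cH_1}-\gamma_{0,N}^0)(h_1-\varepsilon_k)\phi_k=-\div(\rho_{0,N}^0\nabla\cdot)$, is the one consistent with the stated weak and strong limits and with the paper's own proof; the displayed operator identity in the theorem statement carries a sign misprint.
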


This sum-rule automatically leads to a sum-rule for the reducible polarizability operator in the random phase approximation $\chi^0$ (see Theorem~\ref{th:sumrule_chi0}).

%%%%%%%%%%%%%%%%%%%%%%%%%%%%%%%%%%

\subsubsection{The analytical continuation method}
\label{ssec:analytical_continuation}

In this section, we explain why~\eqref{eq:GWa} can be thought of as a natural reformulation of the usual physical definition~\eqref{eq:phys_P}, and why problems arise with Definition~\eqref{eq:GWa} (see Problem~\ref{pb:pb_P}). This section also serves as a guideline to understand why~\eqref{eq:GWd} is a natural reformulation of the usual physical definition of $\Sigma^{\GW}$ (see~\eqref{eq:phys_Sigma} below). In the previous section, we gave the properties of~$\widetilde{P^0}$ using the explicit expression of $P^0$ given in~\eqref{eq:explicitP0}. While this approach simplifies the proofs, it somehow hides some structural properties that we highlight in this section.\\

Recall that $P_\sym^0 = P_\sym^{0,+} + P_\sym^{0,-}$ with
\[
	P_\sym^{0,+} (\tau) = - \ri \Theta(\tau) \vc^{1/2} G_{0,\rp}(\tau) \odot G_{0, \rh}(- \tau) \vc^{1/2}
\]
and
\[
	P_\sym^{0,-}(\tau) = - \ri \Theta(-\tau) \vc^{1/2} G_{0,\rh}(\tau) \ \widetilde{\odot} \ G_{0, \rp}(- \tau) \vc^{1/2},
\]
where
\[
G_{0, \rp}(\tau) = - \ri \Theta(\tau) A_{0,+} \re^{ - \ri \tau (H_{0,N+1}- E_{0,N}^0) } A_{0,+}^*,
\qquad
G_{0, \rh}(\tau) = \ri \Theta(-\tau) A_{0,-}^* \re^{ \ri \tau (H_{0,N-1} - E_{0,N}^0)} A_{0,-}.
\]
The idea is to use the results of Theorem~\ref{thm:analytic_continuation_kernel_product}. We first consider $P_\sym^{0,+}$, and prove that the hypotheses of Theorem~\ref{thm:analytic_continuation_kernel_product} are satisfied. This is given by the following lemma.
\begin{lemma}
	There exists a constant $C \in \R^+$ such that, for any $f \in \cH_1$, it holds $A_{0,-} \left( \vc^{1/2} f \right) \in \fS_2(\cH_1)$ with
	\[
		\left\| A_{0,-} \left( \vc^{1/2} f \right) \right\|_{\fS_2(\cH_1)} \le C \| f \|_{\cH_1}.
	\]
	Moreover, $H_{0,N+1}- E_{0,N}^0 \ge \varepsilon_{N+1}$ and $H_{0,N-1} - E_{0,N}^0 \ge - \varepsilon_{N}$.
\end{lemma}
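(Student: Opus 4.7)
My plan is to treat the two claims of the lemma separately. For the Hilbert--Schmidt bound I would proceed by an explicit kernel computation, reducing everything to the boundedness of $\rho_{0,N}^0$ as a multiplier between the Coulomb spaces. For the spectral bounds I would invoke the classical Slater-determinant/min--max argument.

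First step. Following the convention used in Theorem~\ref{thm:analytic_continuation_kernel_product} and Lemma~\ref{lem:L2Linfty_kernel}, the notation $A_{0,-}(\vc^{1/2}f)$ stands for the composition of $A_{0,-}:\cH_1\to\cH_{N-1}$ with the operator of multiplication by the function $\vc^{1/2}f\in\cC'$. From the formula $A_{0,-}(h)=a(\overline h)|\Phi_N^0\rangle$ of Section~\ref{sec:hole_GF} together with~\eqref{eq:annihilation_op}, I read off that the integral kernel of $A_{0,-}$ is $\sqrt{N}\,\Phi_N^0(\br,\br_1,\ldots,\br_{N-1})$, so that the kernel of $A_{0,-}(\vc^{1/2}f)$ is the same quantity multiplied by $(\vc^{1/2}f)(\br)$. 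Computing the squared Hilbert--Schmidt norm directly and integrating out $\br_1,\ldots,\br_{N-1}$ using the definition of $\rho_{0,N}^0$ would yield the identity
\[
\left\|A_{0,-}(\vc^{1/2}f)\right\|_{\fS_2(\cH_1)}^2 = \int_{\R^3} \rho_{0,N}^0(\br)\,\left|(\vc^{1/2}f)(\br)\right|^2 \, \rd\br.
\]

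Second step. I would then bound the right-hand side by $C^2\|f\|_{\cH_1}^2$ by showing that $\rho_{0,N}^0$ is a bounded multiplier from $\cC'$ into $\cC$. Indeed, since the non-interacting analogue of Proposition~\ref{prop:Psi_N^0} (explicitly mentioned in Section~\ref{subsec:G0}) gives $\rho_{0,N}^0\in L^1(\R^3)\cap L^\infty(\R^3)$, interpolation yields $\rho_{0,N}^0\in L^{3/2}(\R^3)$. Combining the embedding $\cC'\hookrightarrow L^6(\R^3)$ with H\"older and the Hardy--Littlewood--Sobolev embedding $L^{6/5}(\R^3)\hookrightarrow\cC$ recalled just after~\eqref{def:cC}, I would get $\rho_{0,N}^0\in\cB(\cC',\cC)$. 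Rewriting the integral above as the $\cC'$--$\cC$ duality bracket $\langle\overline{\vc^{1/2}f},\,\rho_{0,N}^0\vc^{1/2}f\rangle_{\cC',\cC}$ and invoking the isometry $\|\vc^{1/2}f\|_{\cC'}=\|f\|_{\cH_1}$ provided by Lemma~\ref{lem:vc} would close the estimate with $C^2=\|\rho_{0,N}^0\|_{\cB(\cC',\cC)}$.

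Third step. For the spectral bounds, the operator $H_{0,M}=\sum_{i=1}^M h_1(\br_i)$ on $\bigwedge^M\cH_1$ is a sum of identical commuting one-body Hamiltonians, so a standard fermionic min--max argument (filling the $M$ lowest levels of $h_1$ with Slater determinants) gives that its infimum equals the sum of the $M$ smallest values in the spectrum of $h_1$, counted with multiplicity and with the convention $\varepsilon_k:=0$ whenever the $k$-th min--max level reaches the bottom of the essential spectrum $[0,\infty)$ (as in Section~\ref{subsec:G0}). Combined with $E_{0,N}^0=\varepsilon_1+\cdots+\varepsilon_N$, this immediately yields $\min\sigma(H_{0,N+1})-E_{0,N}^0=\varepsilon_{N+1}$ and $\min\sigma(H_{0,N-1})-E_{0,N}^0=-\varepsilon_N$, which is exactly the claim. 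The only mildly non-mechanical ingredient is the multiplier property of $\rho_{0,N}^0$ in the second step; everything else is either unpacking of definitions or a classical min--max computation.
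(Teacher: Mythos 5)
Your proof is correct, but it takes a different route from the paper's for the Hilbert--Schmidt bound. The paper's one-line proof reduces the claim, via $A_{0,-}^\ast A_{0,-}=\gamma_{0,N}^0$ and the embedding $\cC'\hookrightarrow L^6$, to Lemma~\ref{lem:Avcf}, whose proof rests on the Kato--Seiler--Simon inequality ($ (1-\Delta)^{-1/2}h\in\fS_6$ for $h\in L^6$) together with $(1-\Delta)^{1/2}\gamma_{0,N}^0(1-\Delta)^{1/2}\in\fS_1$. You instead compute the Hilbert--Schmidt norm exactly from the explicit kernel $\sqrt{N}\,\Phi_N^0(\br,\br_1,\dots,\br_{N-1})(\vc^{1/2}f)(\br)$, obtaining $\left\|A_{0,-}(\vc^{1/2}f)\right\|_{\fS_2}^2=\int_{\R^3}\rho_{0,N}^0\,|\vc^{1/2}f|^2$, and then close with H\"older ($\rho_{0,N}^0\in L^{3/2}$, which follows from $\phi_k\in H^2\subset L^2\cap L^\infty$), $\cC'\hookrightarrow L^6$ and the unitarity of $\vc^{1/2}$ from Lemma~\ref{lem:vc}; this is essentially $\Tr(\overline{h}\,\gamma_{0,N}^0\,h)$ written out by hand, so the two arguments are close relatives, but yours is more elementary (no Schatten-class interpolation, no KSS), at the price of using the explicit Slater structure, whereas the paper's Lemma~\ref{lem:Avcf} is a reusable tool that also serves in the proof of Lemma~\ref{lem:P0}. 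Your detour through the multiplier property $\rho_{0,N}^0\in\cB(\cC',\cC)$ and the duality bracket is harmless but unnecessary: the direct H\"older estimate on $\int\rho_{0,N}^0|h|^2$ already suffices. For the spectral bounds (which the paper states without proof), your conclusion is the standard fact $\inf\sigma(H_{0,M})=\varepsilon_1+\dots+\varepsilon_M$; note only that ``filling the $M$ lowest levels with Slater determinants'' proves the upper bound on the infimum, while the inequality actually needed here is the lower bound, which follows from the equally standard one-body density matrix argument ($\langle\Psi|H_{0,M}|\Psi\rangle=\Tr(h_1\gamma_\Psi)$ with $0\le\gamma_\Psi\le 1$, $\Tr\gamma_\Psi=M$), with the convention you state when $\varepsilon_{N+1}$ hits the bottom of the essential spectrum.
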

\begin{proof}
The first point comes from the fact that $A_{0,-}^\ast A_{0,-} = \gamma_{0,N}^0$ and that $v_c^{1/2} f \in \cC' \hookrightarrow L^6$ whenever $f \in \cH_1$, together with Lemma~\ref{lem:Avcf}.  
\end{proof}
In particular, the hypotheses of Theorem~\ref{thm:analytic_continuation_kernel_product} are satisfied, and we deduce that for any $\nu' > \varepsilon_{N}$ and $\nu + \nu'< \varepsilon_{N+1}$,
	\begin{equation} \label{eq:Gp_odot_Gh}
		\forall \omega \in \R, \quad \widetilde{P_\sym^{0,+}}(\nu + \ri \omega) = \dfrac{1}{2 \pi} \int_{-\infty}^{+\infty} \vc^{1/2} \left( \widetilde{G_{0, \rp}} \big( \nu + \nu' + \ri (\omega + \omega') \big) \odot \widetilde{G_{0, \rh}}(\nu' + \ri \omega') \right) \vc^{1/2} \rd \omega'.
	\end{equation}
We treat $\widetilde{P_\sym^{0,-}}$ is a similar way, and find that for any $\nu' < \varepsilon_{N+1}$ and $\nu + \nu' > \varepsilon_N$,
\begin{equation} \label{eq:Gh_odot_Gp}
	\forall \omega \in \R, \quad\widetilde{P_\sym^{0,-}}(\nu + \ri \omega) 
		 = \dfrac{1}{2 \pi} \int_{-\infty}^{+\infty} \vc^{1/2} \left(\widetilde{G_{0, \rh}} \big( \nu + \nu' + \ri (\omega + \omega') \big) \ \widetilde{\odot} \ \widetilde{G_{0, \rp}}(\nu' + \ri \omega') \right) \vc^{1/2} \rd \omega'.
\end{equation}
Actually, the kernel-product $\widetilde{\odot}$ in the latter expression can be transformed into the kernel-product~$\odot$, thanks to the following lemma, whose proof is given in Section~\ref{proof:widetilde_odot_odot}.
\begin{lemma} \label{lem:widetilde_odot_odot}
		For any $\nu' < \varepsilon_{N+1}$, any $\nu + \nu' > \varepsilon_N$ and any $\omega, \omega' \in \R_\omega$, 
		\[
		\widetilde{G_{0, \rh}} \big( \nu + \nu' + \ri (\omega + \omega') \big) \ \widetilde{\odot} \ \widetilde{G_{0, \rp}}(\nu' + \ri \omega')
		=
		\widetilde{G_{0, \rh}}\big( \nu + \nu' + \ri (\omega + \omega') \big) \odot \widetilde{G_{0, \rp}}(\nu' + \ri \omega'),
		\]
		as bounded operators from $\cC'$ to $\cC$.
\end{lemma}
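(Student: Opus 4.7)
The strategy is to exploit the finite-rank structure of $\widetilde{G_{0,\rh}}(z_1)$. Because $\gamma_{0,N}^0$ is the orthogonal projector onto $\mathrm{span}(\phi_1,\ldots,\phi_N)$ and commutes with $h_1$, it holds
\[
\widetilde{G_{0,\rh}}(z_1) = \gamma_{0,N}^0 (z_1 - h_1)^{-1} = \sum_{k=1}^N \frac{|\phi_k\rangle\langle\phi_k|}{z_1 - \varepsilon_k},
\]
which is of rank at most $N$; the denominators do not vanish since $\Re z_1 = \nu + \nu' > \varepsilon_N \ge \varepsilon_k$ for every $k=1,\ldots,N$. Symmetrically, the condition $\Re z_2 = \nu' < \varepsilon_{N+1}$ combined with the spectral theorem applied on $\mathrm{Ran}(\mathds{1}_{\cH_1} - \gamma_{0,N}^0)$ (where $h_1 \ge \varepsilon_{N+1}$) shows that $B := \widetilde{G_{0,\rp}}(z_2) = (\mathds{1}_{\cH_1} - \gamma_{0,N}^0)(z_2 - h_1)^{-1}$ is a bounded operator on $\cH_1$.

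I would then insert this finite-rank decomposition into the two definitions of the kernel products and compute the traces explicitly. Setting $A := \widetilde{G_{0,\rh}}(z_1) = \sum_{k=1}^N \alpha_k |\phi_k\rangle\langle\phi_k|$ with $\alpha_k = (z_1 - \varepsilon_k)^{-1}$, one computes for test functions $f,g$ for which the operations below are justified,
\[
\Tr_{\cH_1}(AgB\overline{f}) = \sum_{k=1}^N \alpha_k \langle \overline{g}\phi_k, B\overline{f}\phi_k\rangle_{\cH_1} = \Tr_{\cH_1}(\overline{f}AgB),
\]
the two outer traces being the definitions of $\langle f|A \odot B|g\rangle$ and $\langle f|A \widetilde{\odot} B|g\rangle$ respectively, and the middle equality being a direct consequence of the cyclicity of the trace on the finite-rank (hence trace-class) operators $AgB\overline{f}$ and $\overline{f}AgB$.

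To extend the equality to arbitrary $f,g \in \cC'$ and thereby obtain an identity of bounded operators from $\cC'$ to $\cC$, one establishes the continuity of the sesquilinear form $(f,g) \mapsto \sum_k \alpha_k \langle \overline{g}\phi_k, B\overline{f}\phi_k\rangle_{\cH_1}$ on $\cC' \times \cC'$. This uses the Sobolev embedding $\cC' \hookrightarrow L^6(\R^3)$ and the regularity $\phi_k \in H^2(\R^3) \hookrightarrow L^\infty(\R^3) \cap L^2(\R^3) \subset L^3(\R^3)$; H\"older's inequality then yields $\|f\phi_k\|_{L^2} \le \|\phi_k\|_{L^3} \|f\|_{L^6} \lesssim \|\phi_k\|_{L^3} \|f\|_{\cC'}$, and analogously for $g$, so that the form is controlled by a constant times $\|f\|_{\cC'}\|g\|_{\cC'}$. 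A density argument using $\cC' \cap L^2$ in $\cC'$ then concludes. The main obstacle is precisely this last step: one must carefully extend the $L^2$-based definitions of $\odot$ and $\widetilde{\odot}$ to the $\cC'$-to-$\cC$ setting so that the cyclicity computation above becomes rigorous. Once the functional setting is fixed, the finite-rank structure of $\widetilde{G_{0,\rh}}(z_1)$ renders the cyclicity identity itself elementary.
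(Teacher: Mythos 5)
Your proposal is correct, and its overall architecture is the same as the paper's: prove the equality of the two traces on a dense class of test functions, establish a continuity estimate in $(f,g)$, and conclude by density. The difference lies in how the trace manipulation is legitimized. The paper keeps the hole Green's function abstract, uses $\gamma_{0,N}^0\in\fS_1(\cH_1)$ together with the Kato--Seiler--Simon bound $\|(1-\Delta)^{-1}(\vc^{1/2}g)\|_{\fS_6}\le C\|g\|_{\cH_1}$ and the comparison $c(1-\Delta)\le|\mu_0-h_1|\le C(1-\Delta)$ to show that both products $\overline{f_1}\,\widetilde{G_{0,\rh}}\,g_1\,\widetilde{G_{0,\rp}}$ and $\widetilde{G_{0,\rh}}\,g_1\,\widetilde{G_{0,\rp}}\,\overline{f_1}$ are trace class, and then invokes $\Tr(AB)=\Tr(BA)$; you instead expand $\widetilde{G_{0,\rh}}(z_1)=\sum_{k=1}^N(z_1-\varepsilon_k)^{-1}|\phi_k\rangle\langle\phi_k|$ and compute both traces explicitly, bounding the resulting sesquilinear form by H\"older ($\phi_k\in L^3$, $\cC'\hookrightarrow L^6$) and the spectral-theorem bound on $\widetilde{G_{0,\rp}}(\nu'+\ri\omega')$ coming from $\nu'<\varepsilon_{N+1}$. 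Your route is somewhat more elementary (no Schatten-class machinery beyond finite rank), and the estimates you get are equivalent to the paper's bound $\|BA\|_{\fS_1}\le C\|f\|_{\cH_1}\|g\|_{\cH_1}$ used for the density step. The one point where you are looser than the paper is the dense class on which the computation is ``justified'': the paper takes $f,g\in C^\infty_c(\R^3)$, so that $\vc^{1/2}f,\vc^{1/2}g\in\cC'\cap L^\infty$ and every multiplication operator involved is bounded on $\cH_1$, which makes the finite-rank trace computation immediate; with your class $\cC'\cap L^2$ the compositions such as $A g B\overline f$ are not a priori well defined on $\cH_1$ (multiplication by an $L^6$ or $L^2$ function is unbounded), so you would need an extra smoothing observation (e.g.\ that the range of the resolvent lies in $H^2(\R^3)\hookrightarrow L^\infty(\R^3)$) or, more simply, bounded test functions as in the paper. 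This is bookkeeping rather than a gap in the idea.
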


We can perform the same type of calculation for $G_\rh \odot G_\rh$. Following the proof of Theorem~\ref{thm:analytic_continuation_kernel_product}, we deduce from $G_\rh(\tau) \odot G_\rh(-\tau) = 0$ that, for any $\nu' > \varepsilon_N$ and $\nu + \nu' > \varepsilon_N$,
\begin{equation} \label{eq:Gh_odot_Gh}
	\forall \omega \in \R_\omega, \quad 
	\dfrac{1}{2 \pi} \int_{-\infty}^{+\infty} \vc^{1/2} \left( \widetilde{G_{0, \rh}} \big( \nu + \nu' + \ri (\omega + \omega') \big) \odot \widetilde{G_{0, \rh}}(\nu' + \ri \omega') \right) \vc^{1/2} \rd \omega' = 0.
\end{equation}
Similarly, from $G_\rp (\tau) \odot G_\rp(-\tau) = 0$, we deduce that, \textit{at least formally}, for any $\nu' < \varepsilon_{N+1}$, and any $\nu + \nu'< \varepsilon_{N+1} $,
\begin{equation} \label{eq:formal_Gp_odot_Gp}
	\forall \omega \in \R_\omega, \quad 
	\dfrac{1}{2 \pi} \int_{-\infty}^{+\infty} \vc^{1/2} \left( \widetilde{G_{0, \rp}} \big( \nu + \nu' + \ri (\omega + \omega') \big) \odot \widetilde{G_{0, \rp}}(\nu' + \ri \omega') \right) \vc^{1/2} \rd \omega' = 0.
\end{equation}
\begin{remark}
The last equality is formal, in the sense that the integrand $\widetilde{G_{0, \rp}} \odot \widetilde{G_{0, \rp}}$ is actually not well-defined: it does not define a bounded operator from $\cC'$ to $\cC$. However, we can proceed as follows. For $\omega \in \R_\omega$, let $\widetilde{P^{+,+}_\sharp}(\ri \omega)$ be the operator defined on the core $\cH_1 \cap \cC$ by
\begin{align*}
	\forall f, g \in \cH_1 \cap \cC, \quad & 
	\left\bra f \left| \widetilde{P^{+,+}_\sharp}(\ri \omega)  \right| g\right\ket  \\
	& \quad 
	:= \dfrac{1}{2 \pi}\int_{-\infty}^{+\infty} \Tr_{\cH_1} \left[ \widetilde{G_{0, \rp}}  \big( \nu + \nu' + \ri (\omega + \omega') \big) \left( \vc^{1/2} g\right)  \widetilde{G_{0, \rp}}(\nu' + \ri \omega') \left( \vc^{1/2} \overline{f} \right) \right] \rd \omega'.
\end{align*}
Noticing that $\vc^{1/2} \overline{f}$ and $\vc^{1/2} g$ are in $\cH_1$ since $f,g \in \cC$, and reasoning as in the proof of Lemma~\ref{lem:widetilde_odot_odot}, we can prove that the operator in the trace is indeed trace-class, with
\begin{align*}
	& \left| \Tr_{\cH_1} \left[ \widetilde{G_{0, \rp}}  \big( \nu + \nu' + \ri (\omega + \omega') \big) \left( \vc^{1/2} g\right)  \widetilde{G_{0, \rp}}(\nu' + \ri \omega') \left( \vc^{1/2} \overline{f} \right) \right] \right|
	 \le p_\omega(\omega') \| f \|_{\cC} \| g \|_{\cC},
\end{align*}
where $p_\omega$ is an integrable function independent of $f$ and $g$. Moreover, following the proof of Theorem~\ref{thm:analytic_continuation_kernel_product}, we can prove that, as expected,
\[
	\forall f, g \in \cH_1 \cap \cC, \quad \left\bra f \left| \widetilde{P^{+,+}_\sharp}(\ri \omega)  \right| g\right\ket = 0.
\]
The unique continuation on $\cH_1$ of $\widetilde{P^{+,+}_\sharp}(\ri \omega)$ therefore is the null operator. It is unclear to us how to extend a similar reasoning for a generic class of approximated Green's function $\widetilde{G^\app}$.

\end{remark}

By gathering~\eqref{eq:Gp_odot_Gh}, \eqref{eq:Gh_odot_Gp}, \eqref{eq:Gh_odot_Gh} and \eqref{eq:formal_Gp_odot_Gp}, we find that, for any $\nu' \in (\varepsilon_N, \varepsilon_{N+1})$ and $\nu + \nu' \in (\varepsilon_N , \varepsilon_{N+1} )$, 
\[
	\forall \omega \in \R_\omega, \quad \widetilde{P^0_\sym} (\nu + \ri \omega) = \dfrac{1}{2 \pi} \int_{-\infty}^{+\infty} \vc^{1/2} \left( \widetilde{G_0} \big( \nu + \nu' + \ri (\omega' + \omega) \big) \odot \widetilde{G_0}(\nu' + \ri \omega') \right) \vc^{1/2} \rd \omega'.
\]
In particular, this equality holds for the particular choice $\nu' = \mu_0$ and $\nu = 0$.

\begin{remark} \label{rem:transformation_contour_deformation}
To summarize the work performed in this section, we transformed the equation
\begin{equation} \label{eq:P0_before}
		P^0(\br, \br', \tau) := - \ri G_0(\br, \br', \tau) G_0(\br', \br, - \tau)
\end{equation}
into: for any $\nu' \in (\varepsilon_N, \varepsilon_{N+1})$ and $\nu \in (\varepsilon_N - \nu', \varepsilon_{N+1} - \nu')$
\begin{equation} \label{eq:contour_deformation_P0}
	\widetilde{P^0} (\nu + \ri \cdot) = \dfrac{1}{2 \pi} \int_{-\infty}^{+\infty} \left( \widetilde{G_0} \big( \nu + \nu' + \ri (\omega' + \cdot) \big) \odot \widetilde{G_0}(\nu' + \ri \omega') \right) \rd \omega'.
\end{equation}
Note that the manipulations performed in this section to transform~\eqref{eq:P0_before} into~\eqref{eq:contour_deformation_P0} are possible since the two operators involved in the kernel-product (here, both are equal to $\widetilde{G^0}(z)$) are analytic on some common domain $\UU \cup \LL \cup (a,b)$ with $a < b$ (the presence of a gap is important to deform the contour as in Theorem~\ref{thm:analytic_continuation_kernel_product}).
\end{remark}

%%%%%%%%%%%%%%%%%%%%%%%%%%%%%%%%%%
\subsubsection{The RPA reducible polarizability operator $\chi^0$}

In order to calculate the GW approximation of the self-energy, one needs the reducible polarizability operator $\chi$, defined in Section~\ref{ssec:chi}. Unfortunatly, the expression of $\chi$ is not accessible in practice. One needs to approximate this operator. The GW approximation, which amounts to approximating the so-called \textit{vertex function}, provides a natural approximation $\chi^\GW$ of $\chi$: in Equation~\eqref{eq:GWb}, $\chi^\GW$ is defined from $G^\GW$ (see also ~\cite[Equation~(103)]{Farid99} or~\cite[Equations~(A.20) and~(A.28)]{Hedin1965}). However, in view of Remark~\ref{pb:pb_P}, the definition of $\chi^\GW$ is not well-understood mathematically. In the $\GW^0$ framework, we use the RPA reducible polarizability operator $\chi^0$, which is itself defined in terms of the RPA irreducible polarizability $P^0$. The $\GW^0$ approximation of the (symmetrized) reducible polarizability operator is usually defined in the frequency domain as
\[
\widehat{\chi_\sym^0}(\omega) := \left( \mathds{1}_{\cH_1} - \widehat{P^{0}_\sym}(\omega) \right)^{-1} - \mathds{1}_{\cH_1}.
\]
The formal analytic continuation of the above definitions is (see~\cite[Equation~(139)]{Farid99})
\begin{equation} 
  \label{eq:chi0}
  \widetilde{\chi_\sym^0}(z) := \left( \mathds{1}_{\cH_1} - \widetilde{P^{0}_\sym}(z) \right)^{-1} - \mathds{1}_{\cH_1}.
\end{equation}
Note that we use the ``tilde'' notation in $\widetilde{\chi_\sym^0}$, although it is unclear that this operator-valued function is indeed the Laplace transform of some operator-valued function in the time domain. Also, it is \textit{a priori} unclear whether the operators $\mathds{1}_{\cH_1} - \widehat{P^{\GW}_\sym}(\omega)$ or $\mathds{1}_{\cH_1} - \widetilde{P^{\GW}_\sym}(z)$ are invertible. This is however the case for appropriate values of $z$, as shown by the following lemma.

\begin{lemma}
  For $z \in ( - (\varepsilon_{N+1} - \varepsilon_N), \varepsilon_{N+1} - \varepsilon_N)$ and $z \in \ri\R$, the operator $\mathds{1}_{\cH_1} - \widetilde{P^{0}_\sym}(z)$ is invertible.
\end{lemma}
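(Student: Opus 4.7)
The plan is very short because the heavy lifting has already been done in the preceding propositions. The key observation is that in both regimes under consideration, $\widetilde{P^0_\sym}(z)$ is a bounded self-adjoint negative operator on~$\cH_1$. Invertibility of $\mathds{1}_{\cH_1} - \widetilde{P^0_\sym}(z)$ then follows immediately from the functional calculus, because its spectrum lies in $[1,+\infty)$.

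More precisely, I would split the argument into two cases according to the nature of~$z$. For $z = \omega$ real in the gap $(-(\varepsilon_{N+1}-\varepsilon_N),\varepsilon_{N+1}-\varepsilon_N)$, Proposition~\ref{prop:ppties_P_GW} gives $\widetilde{P_\sym^{0,\pm}}(\omega) = \Re\widehat{P_\sym^{0,\pm}}(\omega)$ as bounded negative self-adjoint operators on~$\cH_1$, so that their sum $\widetilde{P^0_\sym}(\omega)$ is also bounded negative self-adjoint. For $z = \ri\omega$ with $\omega \in \R$, Proposition~\ref{prop:propertiesP0} directly states that $\widetilde{P^0_\sym}(\ri\omega)$ is a bounded negative self-adjoint operator on~$\cH_1$.

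In either case, writing $A := -\widetilde{P^0_\sym}(z) \in \cS(\cH_1)$ with $A \ge 0$, we have
\[
\mathds{1}_{\cH_1} - \widetilde{P^0_\sym}(z) = \mathds{1}_{\cH_1} + A \ge \mathds{1}_{\cH_1},
\]
so the bounded self-adjoint operator $\mathds{1}_{\cH_1}-\widetilde{P^0_\sym}(z)$ has spectrum contained in $[1, 1+\|A\|_{\cB(\cH_1)}]$. By the spectral theorem for bounded self-adjoint operators, it is invertible, and the inverse satisfies
\[
\left\| \left(\mathds{1}_{\cH_1} - \widetilde{P^0_\sym}(z)\right)^{-1} \right\|_{\cB(\cH_1)} \le 1.
\]

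There is no genuine obstacle here: the only mild subtlety is to make sure the negativity of $\widetilde{P^0_\sym}(z)$ on the real gap really follows from what has already been proved (it does, via the $\pm$ decomposition in Proposition~\ref{prop:ppties_P_GW} and the fact that $\widetilde{P^0_\sym}=\widetilde{P^{0,+}_\sym}+\widetilde{P^{0,-}_\sym}$), so the same coercivity argument applies uniformly in both cases.
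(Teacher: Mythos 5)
Your proposal is correct and follows essentially the same route as the paper: both rest on the fact that the explicit representation of $\widetilde{P^0_\sym}(z)$ (equivalently, the $\pm$ decomposition of Proposition~\ref{prop:ppties_P_GW} together with Proposition~\ref{prop:propertiesP0}) shows $\widetilde{P^0_\sym}(z)$ is a bounded negative self-adjoint operator for the values of $z$ considered, so that $\mathds{1}_{\cH_1}-\widetilde{P^0_\sym}(z)\ge \mathds{1}_{\cH_1}$ and invertibility follows from the spectral theorem. The only cosmetic difference is that the paper invokes the single combined formula~\eqref{eq:explicit_tildeP0} while you treat the real gap and the imaginary axis through the causal/anti-causal pieces separately, which changes nothing of substance.
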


This result is a direct consequence of the explicit formula~\eqref{eq:explicit_tildeP0} for $\widetilde{P^0}$, which ensures that $\widetilde{P^{0}_\sym}(z)$ is a bounded self-adjoint negative operator for the values of~$z$ under consideration. Let us deduce some extra properties of $\widetilde{\chi^0}$.

\begin{lemma}
  For any $\omega \in \R$, the operator $\widetilde{\chi^0_\sym}(\ri\omega)$ is a bounded, negative, self-adjoint operator on~$\cH_1$, satisfying $\widetilde{\chi^0_\sym}(- \ri \omega) = \widetilde{\chi^0_\sym}(\ri \omega)$, and such that
  \begin{equation}
    \label{eq:P_iomega_leq_chi_iomega}
    \widetilde{P_\sym^0}(\ri \omega) \le \widetilde{\chi_\sym^0}(\ri \omega) \le 0.
  \end{equation}
  The function $\omega \mapsto \widetilde{\chi^0_\sym}(\ri\omega)$ is analytic from $\R_\omega$ to $\cS(\cH_1)$ and is in $L^p(\R_\omega,\cS(\cH_1))$ for all $p > 1$. Finally, there exists a constant $C \in \R^+$ such that
  \begin{equation} \label{eq:estimate_chi0sym}
  	0 \le - \widetilde{\chi^0_\sym}(\ri \omega) \le \dfrac{C}{(\omega^2 + 1)^{1/2}} \left( \vc^{1/2} \rho_{0,N}^0 \vc^{1/2} \right).
  \end{equation}
\end{lemma}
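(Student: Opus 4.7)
The proof will follow by directly exploiting the functional calculus applied to the self-adjoint operator $\widetilde{P^0_\sym}(\ri\omega)$, whose properties are completely catalogued in Proposition~\ref{prop:propertiesP0}. Writing $A(\omega) := -\widetilde{P^0_\sym}(\ri\omega)$, which is a bounded positive self-adjoint operator on $\cH_1$, the defining relation~\eqref{eq:chi0} rewrites as
\[
\widetilde{\chi^0_\sym}(\ri\omega) = (\mathds{1}_{\cH_1} + A(\omega))^{-1} - \mathds{1}_{\cH_1} = - A(\omega)\,(\mathds{1}_{\cH_1}+A(\omega))^{-1}.
\]
Since $A(\omega) \geq 0$, the operator $\mathds{1}_{\cH_1}+A(\omega)$ is bounded below by $\mathds{1}_{\cH_1}$, hence invertible with $0 < (\mathds{1}_{\cH_1}+A(\omega))^{-1} \leq \mathds{1}_{\cH_1}$. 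This immediately yields that $\widetilde{\chi^0_\sym}(\ri\omega)$ is a bounded self-adjoint operator on $\cH_1$ with $-\mathds{1}_{\cH_1} \leq \widetilde{\chi^0_\sym}(\ri\omega) \leq 0$. The symmetry $\widetilde{\chi^0_\sym}(-\ri\omega) = \widetilde{\chi^0_\sym}(\ri\omega)$ follows directly from the corresponding symmetry $A(-\omega) = A(\omega)$ stated in Proposition~\ref{prop:propertiesP0}.

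The two-sided inequality $\widetilde{P^0_\sym}(\ri\omega) \leq \widetilde{\chi^0_\sym}(\ri\omega) \leq 0$ is the heart of the comparison step. The right inequality has already been obtained; for the left one, I compute by functional calculus
\[
\widetilde{\chi^0_\sym}(\ri\omega) - \widetilde{P^0_\sym}(\ri\omega) = -A(\omega)(\mathds{1}_{\cH_1}+A(\omega))^{-1} + A(\omega) = A(\omega)^2 (\mathds{1}_{\cH_1}+A(\omega))^{-1},
\]
which is a positive self-adjoint operator since the three commuting factors are individually non-negative. This establishes~\eqref{eq:P_iomega_leq_chi_iomega}.

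For the regularity and integrability, I would observe that $\omega \mapsto A(\omega)$ is real-analytic from $\R_\omega$ into $\cS(\cH_1)$ (by Proposition~\ref{prop:propertiesP0}) and takes values in the open set of operators $B$ with $B \geq 0$, on which $B \mapsto (\mathds{1}_{\cH_1}+B)^{-1}$ is a $\cB(\cH_1)$-analytic map. Composition preserves real-analyticity, giving the claimed analyticity of $\omega \mapsto \widetilde{\chi^0_\sym}(\ri\omega)$. The $L^p$ integrability and the final pointwise bound~\eqref{eq:estimate_chi0sym} both follow from~\eqref{eq:P_iomega_leq_chi_iomega}: since $0 \leq -\widetilde{\chi^0_\sym}(\ri\omega) \leq -\widetilde{P^0_\sym}(\ri\omega) = A(\omega)$ and both sides are positive self-adjoint operators, we have $\|\widetilde{\chi^0_\sym}(\ri\omega)\|_{\cB(\cH_1)} \leq \|\widetilde{P^0_\sym}(\ri\omega)\|_{\cB(\cH_1)}$, so $L^p$-integrability transfers from the bound in Proposition~\ref{prop:propertiesP0}; and inequality~\eqref{eq:estimate_chi0sym} is inherited verbatim from~\eqref{eq:ineq_P0}.

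There is no real obstacle here: the only mildly delicate point is the correct manipulation of operator inequalities, and in particular the fact that the comparison $0 \leq -\widetilde{\chi^0_\sym}(\ri\omega) \leq A(\omega)$ of positive self-adjoint operators transfers to an inequality on operator norms (which uses the standard identity $\|B\|_{\cB(\cH_1)} = \sup_{\|f\|=1} \langle f,Bf\rangle$ for $B \geq 0$ self-adjoint). Everything else reduces to scalar functional calculus applied pointwise in $\omega$.
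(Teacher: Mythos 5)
Your proof is correct and follows essentially the same route as the paper, which deduces the lemma from the definition~\eqref{eq:chi0}, the scalar inequality $x \le (1-x)^{-1}-1 \le 0$ for $x \le 0$ (your operator identity $\widetilde{\chi^0_\sym}-\widetilde{P^0_\sym}=A^2(\mathds{1}_{\cH_1}+A)^{-1}\ge 0$ is exactly its functional-calculus form), and Proposition~\ref{prop:propertiesP0}. The only slip is cosmetic: the set $\{B\ge 0\}$ is not open in $\cS(\cH_1)$; what you need (and what holds) is that $B\mapsto(\mathds{1}_{\cH_1}+B)^{-1}$ is analytic on the open set where $\mathds{1}_{\cH_1}+B$ is invertible, which contains a neighbourhood of the trajectory since $\mathds{1}_{\cH_1}+A(\omega)\ge\mathds{1}_{\cH_1}$.
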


This result is deduced from the definition~\eqref{eq:chi0}, the inequality $x \le (1 - x)^{-1} - 1 \le 0$ for $x \leq 0$, and Proposition~\ref{prop:propertiesP0}.

\paragraph{Sum-rule for $\widetilde{\chi^0}$.} From the sum-rule stated in Theorem~\ref{th:sumrule_P0}, we readily deduce the sum-rule for $\widetilde{\chi^0} := \vc^{-1/2} \widetilde{\chi^0_\sym} \vc^{-1/2}$, which is a bounded operator from $\cC'$ to $\cC$. Indeed, from the equality $(1 - x)^{-1} -1 = x + x^2 (1 - x)^{-1}$, we obtain
%for $|\omega|$ sufficiently large, $\left\| \widetilde{P^0_\sym}(\ri \omega) \right\|_{\cB(\cH_1)} < 1$, so that
\[
	\forall \omega \in \R_\omega, \quad 
	\widetilde{\chi^0_\sym}(\ri \omega)  = \widetilde{P^0_\sym}(\ri \omega) + \left(\widetilde{P^0_\sym}(\ri \omega) \right)^2 \left( \mathds{1}_{\cH_1} - \widetilde{P^0_\sym}(\ri \omega) \right)^{-1}.
	%\widetilde{P^0_\sym}(\ri \omega) + \sum_{k=2}^\infty \left( \widetilde{P^0_\sym}(\ri \omega) \right)^k.
\]
In particular,
\[
	\forall \omega \in \R_\omega, \quad
	\omega^2 \widetilde{\chi^0}(\ri \omega) = \omega^2 \widetilde{P^0}(\ri \omega) + \dfrac{1}{\omega^2} \left( \omega^2 \widetilde{P^0}(\ri \omega) \right) \left( v_c^{1/2} \left( \mathds{1}_{\cH_1} - \widetilde{P^0_\sym}(\ri \omega) \right)^{-1} v_c^{1/2} \right)  \left( \omega^2 \widetilde{P^0}(\ri \omega) \right).
\]
This shows that the asymptotic behavior of $\widetilde{\chi^0}(\ri \omega)$ is, at dominant order, the same as for $\widetilde{P^0}(\ri \omega)$.
Taking the limit $\omega \to \pm \infty$ leads to a theorem similar to Theorem~\ref{th:sumrule_P0}, whose proof is skipped here for the sake of brevity.
\begin{theorem} \label{th:sumrule_chi0}
 The following weak-convergence holds:
	\[
		\forall (f,g) \in \cC' \times \cC', \quad \lim_{\omega \to \pm \infty} \left\bra \overline{f}, - \omega^2 \widetilde{\chi^0}(\ri \omega) g \right\ket_{\cC', \cC} = \left\bra \overline{f}, - \div \left( \rho_{0,N}^0 \nabla g \right) \right\ket_{\cC', \cC} = \int_{\R^3} \rho_{0, N}^0 \overline{\nabla f} \cdot \nabla g.
	\]
For all $g \in \cC'$ such that $\Delta g \in L^2(\R^3)$, the following strong convergence holds:
	\[
		\lim_{\omega \to \pm \infty} \omega^2 \widetilde{\chi^0}(\ri \omega) g = \div \left( \rho_{0,N}^0 \nabla g \right) \quad \text{in } \cC.
	\]
\end{theorem}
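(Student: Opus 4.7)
The plan is to exploit the decomposition already displayed in the paragraph preceding the theorem, namely
\[
\omega^2 \widetilde{\chi^0}(\ri \omega) = \omega^2 \widetilde{P^0}(\ri \omega) + \frac{1}{\omega^2}\left(\omega^2 \widetilde{P^0}(\ri\omega)\right) \left(\vc^{1/2}(\mathds{1}_{\cH_1} - \widetilde{P^0_\sym}(\ri\omega))^{-1} \vc^{1/2}\right) \left(\omega^2 \widetilde{P^0}(\ri\omega)\right).
\]
The first summand will yield the claimed limit through Theorem~\ref{th:sumrule_P0}, and the strategy is to show that the remainder vanishes in the operator norm $\cB(\cC',\cC)$ as $|\omega|\to\infty$. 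Once this is established, the weak convergence for arbitrary $f,g\in\cC'$ follows directly from the corresponding statement for $\widetilde{P^0}$, and the strong convergence when $\Delta g \in L^2(\R^3)$ is obtained by summing the strong convergence $\omega^2 \widetilde{P^0}(\ri\omega) g \to \div(\rho_{0,N}^0\nabla g)$ in $\cC$ with the vanishing of the remainder applied to $g$.

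To handle the sandwich $\vc^{1/2}(\mathds{1}_{\cH_1} - \widetilde{P^0_\sym}(\ri\omega))^{-1} \vc^{1/2}$, I would invoke Proposition~\ref{prop:propertiesP0}: since $\widetilde{P^0_\sym}(\ri\omega)\leq 0$ as a self-adjoint operator on $\cH_1$, the operator $\mathds{1}_{\cH_1}-\widetilde{P^0_\sym}(\ri\omega)$ dominates $\mathds{1}_{\cH_1}$, so its inverse is bounded self-adjoint with norm at most $1$ on $\cH_1$, uniformly in $\omega\in\R$. Lemma~\ref{lem:vc} then gives that $\vc^{1/2}\colon\cC\to\cH_1$ and $\vc^{1/2}\colon\cH_1\to\cC'$ are both unitary, so the sandwich defines a bounded operator from $\cC$ to $\cC'$ with operator norm at most $1$, uniformly in $\omega$.

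The key (and only really nontrivial) step is to establish the uniform bound $M := \sup_{\omega \in \R}\|\omega^2 \widetilde{P^0}(\ri\omega)\|_{\cB(\cC',\cC)} < \infty$. I would obtain this from the Banach-Steinhaus theorem: for every pair $(f,g) \in \cC' \times \cC'$, the weak convergence stated in Theorem~\ref{th:sumrule_P0} implies that the family $\{\langle \overline{f}, \omega^2 \widetilde{P^0}(\ri\omega) g\rangle_{\cC',\cC}\}_\omega$ is bounded as $|\omega|\to\infty$, while continuity in $\omega$ (a consequence of the analyticity statement in Proposition~\ref{prop:propertiesP0}) handles any compact range. Viewing these quantities as a family of bounded sesquilinear forms on $\cC'\times\cC'$ with pointwise bounded values, and applying uniform boundedness (twice, or directly to the sesquilinear form), yields the required operator-norm bound $M$.

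Combining the two estimates, the remainder is bounded in $\cB(\cC',\cC)$ by $M^2/\omega^2$, which tends to $0$ as $|\omega|\to\infty$. This closes the argument. The only delicate point is the Banach-Steinhaus step used to secure $M$; once it is in place, both the weak and strong convergence statements are obtained by adding to the sum rule for $\widetilde{P^0}$ a remainder term that tends to $0$ in operator norm, which is essentially formal.
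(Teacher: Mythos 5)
Your proposal is correct, and it follows exactly the route the paper sketches just before the statement (the paper omits the detailed proof): isolate the dominant term $\omega^2\widetilde{P^0}(\ri\omega)$, which carries the limit by Theorem~\ref{th:sumrule_P0}, and show the remainder $\frac{1}{\omega^2}\bigl(\omega^2\widetilde{P^0}\bigr)\bigl[\vc^{1/2}(\mathds{1}_{\cH_1}-\widetilde{P^0_\sym}(\ri\omega))^{-1}\vc^{1/2}\bigr]\bigl(\omega^2\widetilde{P^0}\bigr)$ vanishes in $\cB(\cC',\cC)$. Your treatment of the sandwich is right: $\widetilde{P^0_\sym}(\ri\omega)\le 0$ gives $\|(\mathds{1}_{\cH_1}-\widetilde{P^0_\sym}(\ri\omega))^{-1}\|_{\cB(\cH_1)}\le 1$, and Lemma~\ref{lem:vc} transfers this to a uniform bound in $\cB(\cC,\cC')$. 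The one place where you take a genuinely different path is the uniform bound $M=\sup_\omega\|\omega^2\widetilde{P^0}(\ri\omega)\|_{\cB(\cC',\cC)}$: your Banach--Steinhaus argument is valid ($\cC'$ is a Hilbert space, each form is bounded, and pointwise boundedness follows from the limits at $\pm\infty$ plus continuity on compacts given by Proposition~\ref{prop:propertiesP0}), but it is a soft argument where a one-line quantitative bound is available from the explicit representation~\eqref{eq:explicit_widetildeP}: since $\omega^2 E/(\omega^2+E^2)\le E$ for $E\ge 0$, one gets $0\le -\omega^2\widetilde{P^0_\sym}(\ri\omega)\le \vc^{1/2}\bigl(-\div(\rho_{0,N}^0\nabla\cdot)\bigr)\vc^{1/2}$ uniformly in $\omega$, i.e.\ $M\le 4\pi\|\rho_{0,N}^0\|_{L^\infty}$, which is presumably what the authors had in mind (mirroring the proof of Theorem~\ref{th:sumrule_P0}). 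Note that you were right not to rely on the decay estimate~\eqref{eq:ineq_P0}, which only gives $\omega^2\widetilde{P^0_\sym}(\ri\omega)=O(|\omega|)$ and would not make the remainder vanish; either your uniform-boundedness step or the direct spectral bound is genuinely needed, and with it the rest of your argument (weak convergence by duality, strong convergence by adding the $O(\omega^{-2})$ remainder to the strong sum rule for $\widetilde{P^0}$) is complete.
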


By comparing Theorems~\ref{th:sumrule_chi0} and~\ref{lem:sumrule_chi}, we see why using~\eqref{eq:h1KS} instead of~\eqref{eq:h1} for the definition of $h_1$ may lead to better approximations, since $\rho_{0,N}^0 = \rho_N^0$ in this case, so that the $\GW$ approximation $\chi^{\GW}$ of $\chi$ becomes exact in the high imaginary frequency domain. \\

Theorem~\ref{th:sumrule_chi0} is useful for the design of the so-called \textit{Plasmon-Pole models} (PPM) \cite{Hybertsen1986, VonderLinden1988, Godby1989, Engel1993}. Since the definition~\eqref{eq:chi0} requires the computation of a resolvent, the calculation of $\widetilde{\chi^0}(z)$ is numerically very expensive in practice. Some authors suggested to approximate $\widetilde{\chi^0}$ by an operator $\widetilde{\chi^\PPM}$ which is computationally less expensive. In practice, $\widetilde{\chi^\PPM}$ has a prescribed functional form, with adjustable parameters. Different approaches are taken in order to tune these parameters, and the previous sum-rule provides a standard way to fit some of them. This is done for instance in the PPM by Hybersten and Louie~\cite{Hybertsen1986} and in the PPM by Engel and Farid~\cite{Engel1993}. In the later article, the authors extensively comment on the fact that this sum-rule is an important requirement to be satisfied for a PPM.

%%%%%%%%%%%%%%%%%%%%%%%%%%%%%%%%%%
%%%%%%%%%%%%%%%%%%%%%%%%%%%%%%%%%%
\subsubsection{The RPA dynamically screened operator $W^0$}
\label{ssec:W0}

From the approximation $\chi^0$ of $\chi$, we directly deduce the approximation $W^0$ of $W$. Following the path taken in Section~\ref{subsec:W}, we define
\begin{equation} \label{eq:decomposition_W0}
	\widetilde{W^0}(z) := \vc + \widetilde{W_c^0}(z) 
	\quad \text{with} \quad
	\widetilde{W^0_c}(z) := \vc^{1/2} \widetilde{\chi_\sym^0}(z) \vc^{1/2}.
\end{equation}
This operator, when well-defined (say on the gap $( - (\varepsilon_{N+1} - \varepsilon_N), \varepsilon_{N+1} - \varepsilon_N)$ or on the imaginary axis $\ri \R$) is a bounded operator from $\cC$ to $\cC'$. The properties of $\widetilde{W^0}$ are directly deduced from the ones of $\widetilde{\chi^0_\sym}$, so we do not repeat them here for brevity.

%%%%%%%%%%%%%%%%%%%%%%%%%%%%%%%%%%

\subsection{A mathematical study of the $\GW^0$ approximation}
\label{sec:GW0}

\subsubsection{The G$_0$W$^0$ approximation of the self-energy}
\label{ssec:self_energy}

In this section, we study the G$_0$W$^0$ approximation as a preliminary step to the study of the self-consistent $\GW^0$ approximation. This will help us understand some technical points to address in the analysis of the $\GW^0$ method. \\

The G$_0$W$^0$ approximation of the self-energy operator is formally defined as
\begin{equation} \label{eq:phys_Sigma}
  \Sigma^{00}(\br, \br', \tau) := \ri G_0(\br, \br', \tau) W^0(\br, \br', -\tau^+).
\end{equation}
Here, $G_0$ represents the Green's function of the non-interacting system introduced in Definition~\ref{def:G0}, and $W^0$ is the random phase approximation of the dynamically screened operator defined in Section~\ref{ssec:W0}. Already one difficulty arises: in Section~\ref{ssec:W0}, we only defined the function $\widetilde{W^0}(z)$ on the complex frequency domain, but we did not define some operator-valued function on the time-domain.
In this section, we assume that the function $\widetilde{W^0}(z)$ is indeed the Laplace transform of some operator $W^0(\tau)$. This will allow us to transform~\eqref{eq:phys_Sigma} into a formally equivalent definition that only involves $\widetilde{W^0}$. The resulting definition will be our starting point for the $\GW^0$ approximation.\\

With the kernel-product defined in Section~\ref{sec:kernel_products}, the definition~\eqref{eq:phys_Sigma} can be recast as
\[
	\Sigma^{00}(\tau) = \ri G_0(\tau^-) \odot W^0(-\tau).
\]
In view of the decomposition provided in~\eqref{eq:decomposition_W0}, it is natural to split $\Sigma^{00}$ into an exchange part $\Sigma_x^{00}$ and a correlation part $\Sigma_c^{00}$ (the terminology is motivated below):
\[
\Sigma ^{00}= \Sigma_x^{00} + \Sigma_c^{00} \quad \text{with} \quad  \Sigma_x^{00}(\tau)  = \ri G_{0,\rh}(0^-) \odot \vc \delta_0(\tau)
\quad \text{and} \quad \Sigma_c^{00}(\tau) = \ri G_0(\tau) \odot W_c (-\tau).
\]
Let us first consider the exchange part. As $\ri G_{0,\rh}(0^-)=-\gamma_{0,N}^0$, we obtain
\begin{equation} \label{eq:Sigma_x}
  \Sigma_x^{00}(\tau) =  K_x \delta_0(\tau),
\end{equation}
where $K_x$ is the integral operator on $\cH_1$ with kernel
\begin{equation} \label{eq:Kx}
K_x(\br,\br'):= - \dfrac{\gamma_{0,N}^0(\br, \br')}{| \br - \br'|}.
\end{equation}
We recover the usual Fock exchange operator associated with $\gamma_{0,N}^0$, which justifies the terminology ``exchange part'' for $\Sigma_x^{00}$. 
Let us now consider the correlation part. Observing that
\begin{itemize}
	\item $\widetilde{G_{0}}$ is analytic on $\UU \cup \LL \cup (\varepsilon_N,  \varepsilon_{N+1})$ (hence has a gap around $\mu_0$) ;
	\item $\widetilde{W^{0}}$ is analytic on $\UU \cup \LL \cup ( - (\varepsilon_{N+1} - \varepsilon_N), \varepsilon_{N+1} - \varepsilon_N)$ (hence has a gap around $0$),
\end{itemize}
we can use the same ideas as in Section~\ref{ssec:analytical_continuation}. \textit{By analogy} with Remark~\ref{rem:transformation_contour_deformation}, we recast the physical definition of $\widetilde{\Sigma^{00}}$ in~\eqref{eq:phys_Sigma} in a \textit{formally equivalent} definition in the complex frequency plane. This reformulation was first given by Rojas, Godby and Needs \cite{Rojas1995} (see also \cite{RSWRG99}), and is now known as the ``contour deformation'' technique. 

\begin{definition}[G$_0$W$^0$ approximation of the self-energy]
The exchange part of the self-energy in the G$_0$W$^0$ approximation is defined in the complex frequency domain by
\[
	\forall z \in \C, \qquad \widetilde{\Sigma_x^{00}}(z) = K_x,
\]
while the correlation part is defined, for $\nu' \in (-(\varepsilon_{N+1} - \varepsilon_N), \varepsilon_{N+1} - \varepsilon_N)$ and $\nu + \nu' \in (\varepsilon_N , \varepsilon_{N+1} )$ by
\[
	\forall \omega \in \R_\omega, \quad \widetilde{\Sigma_c^{00}}(\nu + \ri \omega) = - \dfrac{1}{2 \pi} \int_{-\infty}^{+\infty} \widetilde{G_0} \left( \nu + \nu' + \ri ( \omega + \omega') \right) \odot \widetilde{W_c^0}(\nu' + \ri \omega') \ \rd \omega.
\]
\end{definition}

The fact that the above quantity is independent of the choice of $\nu'$ comes from the analyticity of the integrand on the region of interest. 
%We again observe a widening of the gap, in the sense that $\widetilde{\Sigma_c^{00}}$ is at least well-defined on the (large) interval $(2 \varepsilon_N - \varepsilon_{N+1}, 2 \varepsilon_{N+1} - \varepsilon_N)$. \\
In practice, we will focus on the case $\nu' = 0$ and $\nu = \mu_0$, and therefore consider the function $\R_\omega \ni \omega \mapsto \widetilde{\Sigma_0^{00}}(\mu_0 + \ri \omega)$ defined by
\begin{equation} \label{eq:widetilde_Sigma00}
	\forall \omega \in \R_\omega, \quad \widetilde{\Sigma_c^{00}}(\mu_0 + \ri \omega) 
	= - \dfrac{1}{2 \pi} \int_{-\infty}^{+\infty} \widetilde{G_0} \left( \mu_0 + \ri (\omega + \omega') \right) \odot \widetilde{W^0_c}(\ri \omega') \ \rd \omega.
\end{equation}

The next proposition shows that the above definition makes sense.
\begin{proposition} \label{prop:Sigma00}
	The operator $K_x$ arising in the exchange part $\Sigma_c^{00}$ of the self-energy is a negative Hilbert-Schmidt operator on $\cH_1$. Furthermore, for any $\omega \in \R_\omega$, the operator $\widetilde{\Sigma_c^{00}}(\mu_0 + \ri \omega)$ is a bounded operator on $\cH_1$, and satisfies $\widetilde{\Sigma_c^{00}}(\mu_0 - \ri \omega) = \widetilde{\Sigma_c^{00}}(\mu_0 + \ri \omega)^\ast$.
	The function $\omega \mapsto \widetilde{\Sigma_c^{00}}(\mu_0 + \ri \omega)$ is analytic from $\R_\omega$ to $\cB(\cH_1)$ and is in $L^p(\R_\omega, \cB(\cH_1))$ for all $p > 1$.
\end{proposition}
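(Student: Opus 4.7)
The proof naturally splits into four parts, one per claim. For the exchange kernel $K_x$, I would exploit the representation $K_x = -\sum_{k=1}^N M_{\phi_k}\, \vc\, M_{\phi_k}$, where $M_{\phi_k}$ denotes multiplication by the real eigenfunction $\phi_k$ (this rewriting follows immediately from~\eqref{eq:Kx} and~\eqref{eq:gamma0_projector}). Self-adjointness is then immediate, and negativity follows from the positivity of the Coulomb operator: for $f \in \cH_1$,
\[
\langle f, K_x f\rangle = -\sum_{k=1}^N \langle \phi_k f \,|\, \vc \,|\, \phi_k f\rangle \le 0.
\]
For the Hilbert--Schmidt property, I would combine the pointwise bound $|\gamma_{0,N}^0(\br,\br')|^2 \le \rho_{0,N}^0(\br) \rho_{0,N}^0(\br')$ (analogous to Proposition~\ref{prop:Psi_N^0}(4) for the non-interacting case) with the Hardy--Littlewood--Sobolev inequality applied to
\[
\|K_x\|_{\fS_2(\cH_1)}^2 = \int_{\R^3}\!\!\int_{\R^3} \frac{|\gamma_{0,N}^0(\br,\br')|^2}{|\br-\br'|^2}\, d\br\, d\br' \le \int_{\R^3}\!\!\int_{\R^3} \frac{\rho_{0,N}^0(\br)\rho_{0,N}^0(\br')}{|\br-\br'|^2}\, d\br\, d\br',
\]
noting that $\rho_{0,N}^0 \in L^{3/2}(\R^3)$ by interpolation between $L^1$ and $L^\infty$, so HLS with the kernel $|\br-\br'|^{-2}$ closes the estimate.

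For the boundedness of $\widetilde{\Sigma_c^{00}}(\mu_0 + \ri\omega)$, the two key pointwise-in-frequency bounds are $\|\widetilde{G_0}(\mu_0 + \ri\nu)\|_{\cB(\cH_1)} \le (d^2+\nu^2)^{-1/2}$, with $d=\min(\mu_0-\varepsilon_N,\varepsilon_{N+1}-\mu_0)>0$ (from the spectral theorem applied to $h_1$), and the operator inequality~\eqref{eq:ineq_P0} for $\widetilde{P^0_\sym}(\ri\omega')$, which propagates to $\widetilde{\chi^0_\sym}(\ri\omega')$ via~\eqref{eq:estimate_chi0sym}. Starting from the kernel-product definition~\eqref{eq:defAodotB2},
\[
\langle f | \widetilde{\Sigma_c^{00}}(\mu_0+\ri\omega) | g\rangle = -\frac{1}{2\pi} \int_\R \Tr_{\cH_1}\!\Big( \widetilde{G_0}(\mu_0+\ri(\omega+\omega'))\, g\, \widetilde{W^0_c}(\ri\omega')\, \overline{f} \Big)\, d\omega',
\]
I would factor $\widetilde{W^0_c}(\ri\omega') = \vc^{1/2}\widetilde{\chi^0_\sym}(\ri\omega')\vc^{1/2}$ and then, using~\eqref{eq:ineq_P0}, dominate $-\widetilde{\chi^0_\sym}(\ri\omega')$ by $C(\omega'^2+1)^{-1/2}\,\vc^{1/2}\rho_{0,N}^0\vc^{1/2}$. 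Absorbing the $\vc^{1/2}$ factors against the multiplication operators by $g$ and $\overline{f}$, cyclicity of the trace reduces the problem to controlling trace norms of operators of the form $(g\vc^{1/2})\rho_{0,N}^0(\vc^{1/2}\overline{f})$, which is exactly the type of estimate furnished by Lemma~\ref{lem:Avcf} (via the $\fS_2\cdot\cB\cdot\fS_2$ trace-class estimate, where the multiplication by $\rho_{0,N}^0 \in L^1\cap L^\infty$ sits in the middle).

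The conjugation identity $\widetilde{\Sigma_c^{00}}(\mu_0-\ri\omega) = \widetilde{\Sigma_c^{00}}(\mu_0+\ri\omega)^*$ follows from $\widetilde{G_0}(\mu_0-\ri\nu) = \widetilde{G_0}(\mu_0+\ri\nu)^*$ (self-adjointness of $h_1$), the symmetry $\widetilde{W^0_c}(-\ri\omega') = \widetilde{W^0_c}(\ri\omega')$ (itself coming from the corresponding property of $\widetilde{\chi^0_\sym}$), the change of variable $\omega' \mapsto -\omega'$, and Lemma~\ref{lem:adjoint_kernel_product}. Analyticity of $\omega \mapsto \widetilde{\Sigma_c^{00}}(\mu_0+\ri\omega)$ on $\R_\omega$ is obtained by viewing this function as the restriction to the real line of a $\cB(\cH_1)$-valued function defined on a horizontal strip containing $\R$, analytic by the analyticity of $z \mapsto \widetilde{G_0}(z)$ on $\C\setminus\sigma(h_1)$, with differentiation under the integral justified by the bounds of the previous paragraph (Morera-type argument). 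Finally, the $L^p$ integrability for $p>1$ follows from the scalar bound
\[
\|\widetilde{\Sigma_c^{00}}(\mu_0+\ri\omega)\|_{\cB(\cH_1)} \le C \int_\R \frac{d\omega'}{\sqrt{d^2+(\omega+\omega')^2}\,\sqrt{\omega'^2+1}},
\]
which decays like $|\omega|^{-1}\log|\omega|$ as $|\omega|\to\infty$ and is uniformly bounded. The main obstacle is the boundedness step: a naive $\fS_2\cdot\cB\cdot\fS_2$ decomposition fails because $g\vc^{1/2}$ is \emph{not} Hilbert--Schmidt for generic $g\in L^2(\R^3)$, so one must exploit the specific structure of the bound~\eqref{eq:ineq_P0} (rather than the mere operator-norm estimate on $\widetilde{\chi^0_\sym}$) to move the singularity of $\vc$ onto the density $\rho_{0,N}^0$, where $L^1\cap L^\infty$-integrability saves the day.
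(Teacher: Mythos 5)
Your proposal follows essentially the same route as the paper, which obtains Proposition~\ref{prop:Sigma00} as a special case of Proposition~\ref{prop:fg_fs}: the Hilbert--Schmidt property of $K_x$ via the pointwise bound on its kernel (the paper uses $\rho_{0,N}^0\ast|\cdot|^{-2}\in L^\infty$ where you use HLS with $\rho_{0,N}^0\in L^{3/2}$ --- both work, and you additionally supply the negativity, which the paper leaves implicit), and the correlation part via the domination of $-\widetilde{\chi^0_\sym}(\ri\omega')$ by $C(\omega'^2+1)^{-1/2}\vc^{1/2}\rho_{0,N}^0\vc^{1/2}$, combined with $\|\widetilde{G_0}(\mu_0+\ri\nu)\|_{\cB(\cH_1)}\le (d^2+\nu^2)^{-1/2}$ and Cauchy--Schwarz in $\omega'$; this is exactly the content of Lemma~\ref{lem:gW0f} and of the proof in Section~\ref{proof:fg_fs}. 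Your symmetry, analyticity and $L^p$ arguments also match what the paper invokes.

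Two local points in the boundedness step need fixing. First, once \eqref{eq:estimate_chi0sym} is inserted into $g\,\vc^{1/2}\bigl(-\widetilde{\chi^0_\sym}(\ri\omega')\bigr)\vc^{1/2}\overline f$, the half powers of the Coulomb operator combine, so the reduced operator is $g\,\vc\,\rho_{0,N}^0\,\vc\,\overline f$ and not $(g\vc^{1/2})\rho_{0,N}^0(\vc^{1/2}\overline f)$ as you wrote. This matters: the estimate closes through
$\bigl\|\sqrt{\rho_{0,N}^0}\,\vc\,\overline f\bigr\|_{\fS_2}^2=\int_{\R^3}\int_{\R^3} \rho_{0,N}^0(\br)\,|f(\br')|^2\,|\br-\br'|^{-2}\,\rd\br\,\rd\br'$, controlled by $\rho_{0,N}^0\ast|\cdot|^{-2}\in L^\infty$, whereas the operator you wrote would lead, by the analogous splitting, to kernels carrying $|\br-\br'|^{-4}$, which is not locally integrable in $\R^3$, so the computation as literally stated would not close. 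Second, Lemma~\ref{lem:Avcf} is not the right tool here: it concerns $\gamma_{0,N}^0 h$ for $h\in L^6$ via Kato--Seiler--Simon, not multiplication by $\rho_{0,N}^0$ sandwiched between Coulomb operators; the estimate you need is the direct kernel computation above, i.e.\ the argument of Lemma~\ref{lem:gW0f} (in the spirit of Lemma~\ref{lem:L2Linfty_kernel}). Finally, note that the operator inequality transfers directly to a trace-norm bound only in the diagonal case $f=g$ (positivity); for $f\ne g$ one factorizes through $\sqrt{-\widetilde{\chi^0_\sym}(\ri\omega')}$ and uses Cauchy--Schwarz in $\fS_2$, as in the proof of Lemma~\ref{lem:gW0f} --- a step your sketch glosses over.
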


The first statements of Proposition~\ref{prop:Sigma00} can be seen as a special case of Proposition~\ref{prop:fg_fs}, while the symmetry property for the adjoint and the $L^p$ integrability follow from the properties of $\widetilde{G^0}$ and $\widetilde{W^0_c}$.
%%%%%%%%%%%%%%%%%%%%%%%%%%%%%%%%%%
%
%%%%%%%%%%%%%%%%%%%%%%%%%%%%%%%%%%

\subsubsection{Well-posedness of the $\GW^0$ approximation in the perturbative regime.}

We finally study the $\GW^0$ approximation. Following our definition~\eqref{eq:widetilde_Sigma00} of the G$_0$W$^0$ approximation of the self-energy, we recast the $\GW^0$ equation as follows.

\begin{definition} [The $\GW^0$ problem on the imaginary axis in the frequency domain] ~\\
	Find $G^{\GW^0} \in L^\infty(\R_\omega, \cB(\cH_1))$ solution to the system
	\begin{equation*}
	(\GW^0) \quad 
	\left\{ \begin{aligned}
	\widetilde{\Sigma^{\GW^0}} (\mu_0 + \ri \omega) & = K_x - \dfrac{1}{2 \pi} \int_{-\infty}^{+\infty} \widetilde{G^{\GW^0}} \big( \mu_0 + \ri (\omega + \omega') \big) \odot \widetilde{W_c^0}(\ri \omega') \, \rd \omega',  \\
	\widetilde{G^{\GW^0}}(\mu_0 + \ri \omega) & = \left[ \mu_0 + \ri \omega - \left( h_1 + \widetilde{\Sigma^{\GW^0}}(\mu_0 + \ri \omega) \right) \right]^{-1},
		\end{aligned} \right.
	\end{equation*}
	where $h_1$ is the one-body mean-field Hamiltonian defined in~\eqref{eq:h1} and $K_x$ is the exchange operator defined by~\eqref{eq:gamma0N0}-\eqref{eq:Kx}.
\end{definition}

\begin{remark}
	We are looking for a solution in $L^\infty(\R_\omega, \cB(\cH_1))$. Note that the true Green's function~$\widetilde{G}(\mu + \ri \cdot)$ is in $L^p(\R_\omega, \cB(\cH_1))$ for all $p > 1$ (in particular for $p = \infty$). We chose to work with $L^\infty(\R_\omega, \cB(\cH_1)$ for simplicity, but it is possible to work with other spaces $L^p(\R_\omega, \cB(\cH_1))$ with~$p > 1$.
\end{remark}

Since this problem seems quite difficult to study mathematically, we will only study it in a perturbative regime. More specifically, seeing $\Sigma^{\GW}$ as a correction term (see the discussion after Definition~\ref{def:self_energy}), we propose to study the following problem.
\begin{definition} [The $\GW^0_\lambda$ problem on the imaginary axis on the frequency domain] ~\\
	Find $G^{\GW^0_\lambda} \in L^\infty(\R_\omega, \cB(\cH_1))$ solution of the system
	\begin{equation} \label{eq:GW0_lambda}
	(\GW^0_\lambda) \quad 
	\left\{ \begin{aligned}
	\widetilde{\Sigma^{\GW^0_\lambda}} (\mu_0 + \ri \omega) & = K_x - \dfrac{1}{2 \pi} \int_{-\infty}^{+\infty} \widetilde{G^{\GW^0_\lambda}} \big( \mu_0 + \ri (\omega + \omega') \big) \odot \widetilde{W_c^0}(\ri \omega') \, \rd \omega'  \\
	\widetilde{G^{\GW^0_\lambda}}(\mu_0 + \ri \omega) & = \left[ \mu_0 + \ri \omega - \left( h_1 + \lambda \widetilde{\Sigma^{\GW^0_\lambda}}(\mu_0 + \ri \omega) \right) \right]^{-1}.
		\end{aligned} \right.
	\end{equation}
\end{definition}

According to~\eqref{eq:G0h1}, the unique solution for $\lambda = 0$ is the Green's function for the non interacting system $\widetilde{G^{\GW^0_{\lambda = 0}}} = \widetilde{G_0}$. This fact will allow us to treat the equation perturbatively. The exact $\GW^0$ equations correspond to the case $\lambda = 1$. Of course, several other choices of perturbation can be used. For instance, we can put the parameter $\lambda$ in front of the correlation part of the self-energy only. This amounts to considering the Hartree-Fock Hamiltonian as the reference Hamiltonian (instead of the Hartree Hamiltonian). The theory that we develop here can be straightforwardly generalized to such other cases. \\

It is convenient for the mathematical analysis to introduce the functionals $\fs$ and $\ffg$ respectively defined as
\[
	 \begin{array}{rrcl}
    \fs : & L^2(\R_\omega, \cB(\cH_1)) & \to & L^\infty(\R_\omega, \cB(\cH_1)) \\
    & \widetilde{G^\app}(\mu_0 + \ri \cdot)  & \mapsto & \dps \fs \left[ \widetilde{G^\app} \right](\mu_0 + \ri \cdot) := K_x - \dfrac{1}{2 \pi} \int_{-\infty}^{+\infty} \widetilde{G^{\app}}(\mu_0 + \ri (\cdot + \omega')) \odot \widetilde{W_c^0}(\ri \omega') \, \rd \omega',
  \end{array}
\]
and
\[
	\begin{array}{rrcl}
    \ffg_\lambda : & L^\infty(\R_\omega, \cB(\cH_1)) & \to & L^2(\R_\omega, \cB(\cH_1)) \\
    & \widetilde{\Sigma^\app}(\mu_0 + \ri \cdot)  & \mapsto & \dps \ffg \left[ \widetilde{\Sigma^\app} \right](\mu_0 + \ri \cdot) := \left[ \mu_0 + \ri \cdot - \left( h_1 + \lambda \widetilde{\Sigma^{\app}}(\mu_0 + \ri \cdot) \right) \right]^{-1}.
  \end{array}
\]
With this notation, $\widetilde{G^{\GW^0_\lambda}}$ is a solution of the $\GW^0_\lambda$ equations~\eqref{eq:GW0_lambda} if and only if it is a fixed-point of $\ffg_\lambda \circ \fs$. The fact that these maps are indeed well-defined is proved in the following proposition (see Section~\ref{proof:fg_fs} for the proof).

\begin{proposition} \label{prop:fg_fs}
	The operator $\fs$ is a bounded linear operator from $L^2(\R_\omega, \cB(\cH_1))$ to $L^\infty(\R_\omega, \cB(\cH_1))$.
	On the other hand, for all $M > 0$, there exists $\lambda_M > 0$ and $C_M \in \R^+$ such that for all $0 \le \lambda < \lambda_M$, and all $\widetilde{\Sigma^\app}$ such that
	$\left\|  \widetilde{\Sigma^\app}(\mu_0 + \ri \cdot) \right\|_{L^\infty(\R_\omega, \cB(\cH_1))} \le M$, the function $\ffg_\lambda[\Sigma^\app](\mu_0 + \ri \cdot)$ is well-defined as an element of $L^2(\R_\omega, \cB(\cH_1)) \cap L^\infty(\R_\omega, \cB(\cH_1))$, with
	\[
		\left\| \ffg_\lambda \left[\widetilde{\Sigma^\app} \right](\mu_0 + \ri \cdot) \right\|_{L^2(\R_\omega, \cB(\cH_1))} + \left\| \ffg_\lambda \left[ \widetilde{\Sigma^\app} \right](\mu_0 + \ri \cdot) \right\|_{L^\infty(\R_\omega, \cB(\cH_1))}  \le C_M.
	\]
	Moreover, for all $\widetilde{\Sigma^\app_1}, \widetilde{\Sigma^\app_2} \in L^\infty(\R_\omega, \cB(\cH_1))$ such that $\left\| \widetilde{\Sigma^\app_{j}}(\mu_0 + \ri \cdot) \right\|_{L^\infty(\R_\omega, \cB(\cH_1))} \le M$ for $1 \le j \le 2$,
	\begin{equation} \label{eq:resolvent_sigma}
		\ffg_\lambda \left[ \widetilde{\Sigma^\app_1} \right] - \ffg_\lambda \left[ \widetilde{\Sigma^\app_2} \right] = \lambda \ffg_\lambda\left[ \widetilde{\Sigma^\app_1} \right] \left( \widetilde{\Sigma^\app_2} - \widetilde{\Sigma^\app_1} \right) \ffg_\lambda \left[ \widetilde{\Sigma^\app_2} \right].
	\end{equation}
\end{proposition}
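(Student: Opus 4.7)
The proposition splits naturally into three claims: the $L^2 \to L^\infty$ boundedness of $\fs$, the well-posedness and uniform bounds on $\ffg_\lambda$ for $\lambda$ small, and the first-order resolvent identity~\eqref{eq:resolvent_sigma}. These rest respectively on the decay of $\widetilde{W^0_c}$ on the imaginary axis, on a Neumann-series argument exploiting the gap of $h_1$ at $\mu_0$, and on a direct algebraic manipulation. I treat them in order.

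For the first claim, linearity of $\fs$ is immediate, and I decompose $\fs[\widetilde{G^\app}] = K_x + T[\widetilde{G^\app}]$, where $T$ stands for the convolution integral. The exchange operator $K_x$ is Hilbert-Schmidt by Proposition~\ref{prop:Sigma00}, so it contributes a fixed $\cB(\cH_1)$-bound. For $T$, I factor $\widetilde{W^0_c}(\ri\omega') = \vc^{1/2}\widetilde{\chi^0_\sym}(\ri\omega')\vc^{1/2}$ and combine the kernel-product estimate~\eqref{eq:condition_trace_A_Hadamard} (invoking Lemma~\ref{lem:Avcf} to handle the unboundedness of $\vc^{1/2}$ on $\cH_1$) with the operator inequality~\eqref{eq:estimate_chi0sym}, in order to derive a pointwise estimate of the form
\[
\left\| \widetilde{G^\app}(\mu_0 + \ri(\omega+\omega')) \odot \widetilde{W^0_c}(\ri\omega') \right\|_{\cB(\cH_1)} \le \frac{C}{\sqrt{1+\omega'^2}} \left\| \widetilde{G^\app}(\mu_0 + \ri(\omega+\omega')) \right\|_{\cB(\cH_1)},
\]
with $C$ depending only on $\rho_{0,N}^0$. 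Since $\omega' \mapsto (1+\omega'^2)^{-1/2}$ lies in $L^2(\R_\omega)$, Cauchy--Schwarz in $\omega'$ then yields $\sup_\omega \|T[\widetilde{G^\app}](\mu_0+\ri\omega)\|_{\cB(\cH_1)} \le C'\|\widetilde{G^\app}(\mu_0 + \ri\cdot)\|_{L^2(\R_\omega,\cB(\cH_1))}$, settling the first assertion.

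For the second claim, \textbf{Hyp.~3} ensures that $d_0 := \mathrm{dist}(\mu_0, \sigma(h_1)) > 0$, and since $h_1$ is self-adjoint, $\|(\mu_0 + \ri\omega - h_1)^{-1}\|_{\cB(\cH_1)} \le (\omega^2 + d_0^2)^{-1/2}$ for every $\omega \in \R$. Set $\lambda_M := d_0/(2M)$. For every $\lambda \in [0,\lambda_M)$ and every $\widetilde{\Sigma^\app}$ with $L^\infty$-norm at most $M$, one has $\lambda \|\widetilde{\Sigma^\app}(\mu_0+\ri\omega)\|\cdot\|(\mu_0+\ri\omega-h_1)^{-1}\| \le \lambda M/d_0 < 1/2$ uniformly in $\omega$, so the Neumann series
\[
\ffg_\lambda[\widetilde{\Sigma^\app}](\mu_0 + \ri\omega) = \left(\sum_{n\ge 0} \bigl[\lambda(\mu_0+\ri\omega-h_1)^{-1}\widetilde{\Sigma^\app}(\mu_0+\ri\omega)\bigr]^n\right)(\mu_0+\ri\omega - h_1)^{-1}
\]
converges in $\cB(\cH_1)$, with $\|\ffg_\lambda[\widetilde{\Sigma^\app}](\mu_0 + \ri\omega)\|_{\cB(\cH_1)} \le 2(\omega^2 + d_0^2)^{-1/2}$. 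This bound lies in $L^2(\R_\omega) \cap L^\infty(\R_\omega)$ with norms controlled by $d_0$, yielding the claimed estimate with a constant $C_M$ depending only on $d_0$ (hence on $M$ through $\lambda_M$). The third assertion is then a one-line computation: writing $A_j(\omega) := \mu_0 + \ri\omega - h_1 - \lambda\widetilde{\Sigma_j^\app}(\mu_0 + \ri\omega)$, one has $\ffg_\lambda[\widetilde{\Sigma_j^\app}] = A_j^{-1}$ and $A_1 - A_2 = \lambda(\widetilde{\Sigma_2^\app} - \widetilde{\Sigma_1^\app})$, so the standard first-order resolvent identity $A_1^{-1} - A_2^{-1} = A_1^{-1}(A_2 - A_1)A_2^{-1}$ yields~\eqref{eq:resolvent_sigma}.

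The main obstacle is the pointwise kernel-product estimate used in the first step. One must convert the operator inequality~\eqref{eq:estimate_chi0sym} into a bound on $\|\widetilde{G^\app}(\cdots) \odot \widetilde{W^0_c}(\ri\omega')\|_{\cB(\cH_1)}$ with $L^2$-decay in $\omega'$. This is delicate because $\vc^{1/2}$ is unbounded on $\cH_1$, so the naive Schatten factorization $\|gB\overline f\|_{\fS_1} \le \|g\vc^{1/2}\|_{\fS_2}\,\|\widetilde{\chi^0_\sym}\|\,\|\vc^{1/2}\overline f\|_{\fS_2}$ fails outright; one must exploit the structural bound $-\widetilde{\chi^0_\sym}(\ri\omega') \le C(1+\omega'^2)^{-1/2}\,\vc^{1/2}\rho_{0,N}^0\vc^{1/2}$ to absorb the Coulomb operators into a sandwich involving $\rho_{0,N}^0$, and then apply Lemma~\ref{lem:Avcf} to control the resulting trace norms by $\|f\|_{\cH_1}\|g\|_{\cH_1}$. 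Once this pointwise estimate is in hand, the remaining Neumann-series and algebraic steps are routine.
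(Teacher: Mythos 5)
Your proposal is correct and follows essentially the same route as the paper: the exchange part is handled by showing $K_x$ is Hilbert--Schmidt, the correlation part by converting \eqref{eq:estimate_chi0sym} into the trace-class bound $\left\| g\,\widetilde{W^0_c}(\ri\omega')\,\overline f \right\|_{\fS_1(\cH_1)} \le C(1+\omega'^2)^{-1/2}\|f\|_{\cH_1}\|g\|_{\cH_1}$ followed by Cauchy--Schwarz in $\omega'$, and the second and third assertions by the same gap/Neumann-series factorization at $\mu_0$ and the first-order resolvent identity. Two small remarks: citing Proposition~\ref{prop:Sigma00} for the Hilbert--Schmidt property of $K_x$ is circular in the paper's logical order (that fact is established inside this very proof, using $\rho_{0,N}^0\ast|\cdot|^{-2}\in L^\infty(\R^3)$), and the sandwiched trace norm $\left\|\sqrt{\rho_{0,N}^0}\,\vc\,\overline f\right\|_{\fS_2(\cH_1)}^2$ is controlled by the same direct kernel computation (plus a Schatten Cauchy--Schwarz through $\sqrt{-\widetilde{\chi^0_\sym}(\ri\omega')}$ for $f\neq g$) rather than by Lemma~\ref{lem:Avcf}.
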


To prove the existence of a fixed-point for $\ffg_\lambda \circ \fs$, we rely on Picard's fixed-point theorem. Since the solution of the $\GW^0_{\lambda = 0}$ equations~\eqref{eq:GW0_lambda} for $\lambda = 0$ is $\widetilde{G_0}$, we are lead to introduce, for $r >0$, the (closed) ball
\[
	\fB \left( \widetilde{G_0}, r \right) = \left\{ \widetilde{G^\app}(\mu_0 + \ri \cdot) \in L^2(\R_\omega, \cB(\cH_1))\, , \ \left\| \widetilde{G^\app}(\mu_0 + \ri \cdot) - \widetilde{G_0}(\mu_0 + \ri \cdot) \right\|_{L^2(\R_\omega, \cB(\cH_1))} \le r \right\}.
\]
The existence of a fixed-point is given by the following theorem (see Section~\ref{proof:contraction} for the proof).
\begin{theorem} \label{th:contraction}
	There exists $\lambda_\ast  > 0$ and $r > 0$ such that, 
	for all $0 \le \lambda \le \lambda_\ast$, there exists a unique element $\widetilde{G^{\GW^0_\lambda}} \in \fB \left( \widetilde{G_0}, r \right)$ solution to the $\GW^0_\lambda$ equations~\eqref{eq:GW0_lambda}, or equivalently to the fixed point equation
	\[
		\widetilde{G^{\GW^0_\lambda}} = \ffg_\lambda \circ \fs \left( \widetilde{G^{\GW^0_\lambda}} \right).
	\]
	In addition, for all $\omega \in \R_\omega$, $\widetilde{G^{\GW^0_\lambda}}(\mu_0 + \ri \omega)$ is an invertible operator, and
	\begin{equation} \label{eq:G_GW_vs_G0}
		\left\| \left(\widetilde{G^{\GW^0_\lambda}}(\mu_0 + \ri \cdot) \right)^{-1} - \left( \widetilde{G_0}(\mu_0 + \ri \cdot)\right)^{-1} \right\|_{L^\infty(\R_\omega, \cB(\cH_1))} < \infty.
	\end{equation}
	Finally, the iterative sequence $(\ffg_\lambda \circ \fs)^k \left[ \widetilde{G_0} \right]$ converges to $\widetilde{G^{\GW^0_\lambda}}$, and there exists $0 \le \alpha < 1$ and~$C \in \R^+$ such that
	\[
		\left\| \left( \widetilde{G^{\GW^0_\lambda}} - (\ffg_\lambda \circ \fs)^k \left[ \widetilde{G_0}\right] \right)(\mu_0 + \ri \cdot) \right\|_{L^\infty(\R_\omega, \cB(\cH_1))} \le C\alpha^k.
	\]
\end{theorem}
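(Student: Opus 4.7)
\medskip

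\noindent\textbf{Proof plan.} The strategy is a standard Banach fixed-point argument on the closed ball $\fB(\widetilde{G_0},r)$ of the Hilbert space $L^2(\R_\omega,\cB(\cH_1))$, using the two quantitative estimates on $\fs$ and $\ffg_\lambda$ supplied by Proposition~\ref{prop:fg_fs} together with the resolvent identity~\eqref{eq:resolvent_sigma}. The key observation is that $\widetilde{G_0} = \ffg_\lambda[0]$ for every $\lambda \ge 0$, so that deviations of $\ffg_\lambda \circ \fs$ from $\widetilde{G_0}$ can be quantified through~\eqref{eq:resolvent_sigma}. First, I would record that $\widetilde{G_0}(\mu_0+\ri\cdot) \in L^2(\R_\omega,\cB(\cH_1))$; this follows from the non-interacting analog of Lemma~\ref{lem:plemelj_iR}, since the gap hypothesis \textbf{Hyp.~3} ensures that $\tau \mapsto g_0(\tau)$ decays exponentially. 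Denote $\|\widetilde{G_0}(\mu_0+\ri\cdot)\|_{L^2} =: G_0^\star$.

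\medskip

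\noindent\textbf{Stability of the ball.} Let $r>0$ (to be fixed later) and take any $\widetilde{G^\app} \in \fB(\widetilde{G_0},r)$. By the first part of Proposition~\ref{prop:fg_fs}, $\fs$ is bounded from $L^2$ to $L^\infty$, so
\[
\left\| \fs[\widetilde{G^\app}](\mu_0+\ri\cdot) \right\|_{L^\infty(\R_\omega,\cB(\cH_1))} \le \|\fs\| \left( G_0^\star + r \right) =: M(r).
\]
Applying Proposition~\ref{prop:fg_fs} with this $M(r)$, there exists $\lambda_{M(r)}>0$ such that for $0\le\lambda<\lambda_{M(r)}$, $\ffg_\lambda[\fs[\widetilde{G^\app}]]$ lies in $L^2\cap L^\infty$ with norms bounded by some $C_{M(r)}$. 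Using~\eqref{eq:resolvent_sigma} with $\Sigma_1=\fs[\widetilde{G^\app}]$ and $\Sigma_2=0$ (and $\ffg_\lambda[0]=\widetilde{G_0}$), we get
\[
\ffg_\lambda\!\left[\fs[\widetilde{G^\app}]\right] - \widetilde{G_0} = -\lambda\, \ffg_\lambda\!\left[\fs[\widetilde{G^\app}]\right]\, \fs[\widetilde{G^\app}]\, \widetilde{G_0},
\]
whence, estimating the middle factor in $L^\infty$ and the outer ones in $L^2$ and $L^\infty$ respectively,
\[
\left\| \ffg_\lambda\!\left[\fs[\widetilde{G^\app}]\right] - \widetilde{G_0} \right\|_{L^2} \le \lambda\, C_{M(r)}\, M(r)\, G_0^\star.
\]
For $\lambda$ sufficiently small (depending on $r$), the right-hand side is $\le r$, so $\ffg_\lambda\circ\fs$ maps $\fB(\widetilde{G_0},r)$ into itself.

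\medskip

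\noindent\textbf{Contraction.} For $\widetilde{G_1^\app},\widetilde{G_2^\app}\in\fB(\widetilde{G_0},r)$, set $\Sigma_j := \fs[\widetilde{G_j^\app}]$. Again by~\eqref{eq:resolvent_sigma},
\[
\ffg_\lambda[\Sigma_1]-\ffg_\lambda[\Sigma_2] = \lambda\, \ffg_\lambda[\Sigma_1]\,(\Sigma_2-\Sigma_1)\, \ffg_\lambda[\Sigma_2],
\]
so that, estimating $\ffg_\lambda[\Sigma_1]$ in $L^\infty$, $\Sigma_2-\Sigma_1$ in $L^\infty$, and $\ffg_\lambda[\Sigma_2]$ in $L^2$,
\[
\left\| \ffg_\lambda\circ\fs[\widetilde{G_1^\app}] - \ffg_\lambda\circ\fs[\widetilde{G_2^\app}] \right\|_{L^2} \le \lambda\, C_{M(r)}^2\, \|\fs\|\, \left\| \widetilde{G_1^\app}-\widetilde{G_2^\app} \right\|_{L^2}.
\]
Shrinking $\lambda$ further if necessary, we obtain a contraction constant $\alpha := \lambda\, C_{M(r)}^2\,\|\fs\| < 1$. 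Fix such a pair $(\lambda_\ast, r)$; then for every $\lambda\in[0,\lambda_\ast]$, Banach's fixed-point theorem provides a unique $\widetilde{G^{\GW^0_\lambda}} \in \fB(\widetilde{G_0},r)$ solving $\widetilde{G^{\GW^0_\lambda}} = \ffg_\lambda\circ\fs(\widetilde{G^{\GW^0_\lambda}})$, with geometric convergence in $L^2$ of the Picard iterates starting from $\widetilde{G_0}$; passing through $\ffg_\lambda\circ\fs$ one more time upgrades the convergence to $L^\infty$, yielding the rate $C\alpha^k$ in the statement.

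\medskip

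\noindent\textbf{Invertibility and inverse bound.} By definition of $\ffg_\lambda$, for every $\omega\in\R_\omega$ the operator $\widetilde{G^{\GW^0_\lambda}}(\mu_0+\ri\omega)$ is invertible with inverse $\mu_0+\ri\omega-h_1-\lambda\widetilde{\Sigma^{\GW^0_\lambda}}(\mu_0+\ri\omega)$, where $\widetilde{\Sigma^{\GW^0_\lambda}} = \fs[\widetilde{G^{\GW^0_\lambda}}]$. Since $\widetilde{G_0}(\mu_0+\ri\omega)^{-1}=\mu_0+\ri\omega-h_1$, we obtain
\[
\left(\widetilde{G^{\GW^0_\lambda}}\right)^{-1} - \left(\widetilde{G_0}\right)^{-1} = -\lambda\, \widetilde{\Sigma^{\GW^0_\lambda}},
\]
and the $L^\infty$ bound on the right-hand side follows from boundedness of $\fs$ from $L^2$ into $L^\infty$ together with $\widetilde{G^{\GW^0_\lambda}} \in L^2$, establishing~\eqref{eq:G_GW_vs_G0}. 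The main delicacy in this proof is tracking in which space ($L^2$ or $L^\infty$) each factor must be estimated so that Proposition~\ref{prop:fg_fs} applies and the constants $C_{M(r)}$ depend in a controlled way on $r$; everything else is a routine application of the resolvent identity.
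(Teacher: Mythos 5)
Your proof is correct and follows essentially the same route as the paper: a Picard/Banach fixed-point argument on the ball $\fB\left(\widetilde{G_0},r\right)$ in $L^2(\R_\omega,\cB(\cH_1))$, with ball-invariance and contraction both extracted from Proposition~\ref{prop:fg_fs} and the resolvent identity~\eqref{eq:resolvent_sigma}. The only differences are cosmetic: you bound the distance of $\ffg_\lambda\circ\fs\left[\widetilde{G^\app}\right]$ to $\widetilde{G_0}$ directly via~\eqref{eq:resolvent_sigma} with $\widetilde{\Sigma^\app_2}=0$ (using $\ffg_\lambda[0]=\widetilde{G_0}$), whereas the paper splits this into the Lipschitz estimate of Lemma~\ref{lem:Lipschitz} plus the one-step deviation at $\widetilde{G_0}$, and you spell out the upgrade of the geometric convergence to the $L^\infty$ norm and the inverse bound~\eqref{eq:G_GW_vs_G0}, which the paper leaves implicit.
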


\begin{remark}
	It is not difficult to deduce from~\eqref{eq:G_GW_vs_G0} that the function $\omega \mapsto \widetilde{G^{\GW^0_\lambda}}(\mu + \ri \omega)$ is actually in $L^p(\R_\omega, \cB(\cH_1))$, for all $p > 1$.
\end{remark}

%%%%%%%%%%%%%%%%%
%%%%%%%%%%%%%%%%%%%%%%%%%%%%%%%%%%

\section{Conclusion}

This article is, to our knowledge, the first attempt to formalize with full mathematical rigor the GW theory for finite molecular systems derived by Hedin in his seminal work published in 1965~\cite{Hedin1965}. In Section~\ref{sec:operators}, we have provided a mathematical definition of some one-body operators arriving in many-body perturbation theory for electronic systems, namely the one-body Green's function $G$, the spectral function $\cA$, the reducible polarizability operator $\chi$, the dynamically screened interaction operator $W$, and the self-energy operator $\Sigma$. \\

In Section~\ref{sec:GW_finite}, we have worked out a mathematically consistent formulation of the $\GW^0$ approximation of the GW equations, and we have proved that the $\GW^0$  model has a solution in a perturbation regime. As a by-product, we have also shown that the widely used G$_0$W$^0$ approximation of the self-energy makes perfect mathematical sense.

\newpage

%------------------------------------------------------------------------------
%                               PREUVES 
%------------------------------------------------------------------------------

\section{Proofs} 
\label{sec:proofs}

%------------- TF L^\infty est dans H^s -------------------------------
\subsection{Proof of Lemma \ref{lem:image_Fourier_L_infty}}
\label{sec:image_Fourier_L_infty}

Let $s > 1/2$. For $f\in L^\infty(\R_\tau)$ and $\widehat \varphi \in \sS(\R_\omega)$,
\[
\begin{aligned}
  \left|\left\langle \mathcal{F}_T f, \widehat\varphi \right\rangle_{\sS',\sS}\right| 
  & = \left|\left\langle f, {\mathcal{F}_T{\widehat\varphi}} \right\rangle_{\sS',\sS}\right| 
  = 2\pi \left|\int_\R f(-\tau) {(\mathcal{F}_T^{-1}{\widehat\varphi})(\tau)}  \, \rd \tau\right| \\
  & = 2\pi  \left|\int_\R \frac{f(-\tau)}{(1+\tau^2)^{s/2}} \, (1+\tau^2)^{s/2} {(\mathcal{F}^{-1}_T{\widehat\varphi})(\tau)} \, \rd \tau\right| 
   \leq C_s \, \| f \|_{L^\infty} \| \widehat\varphi \|_{H^s},
\end{aligned}
\]
where we have used the Cauchy-Schwarz inequality in the last step. By density, $\mathcal{F}_T f$ can be extended to a linear form on~$H^s(\R)$. The equality case $\| \mathcal{F}_Tf \|_{H^{-s}} = C_s \| f\|_{L^\infty}$ is obtained for constant functions.

%%%%%%%%%%%%%%%%%%%%%%%%%%%%%%%%%%%%%%%%%%%%%%%%%%%%

\subsection{Proof of Theorem \ref{th:Tit_Linfty}}
\label{ssec:proof_Tit_Linfty}

\paragraph{Proof of~(i).} The analyticity  of $\widetilde g$ directly follows from the results of~\cite[Chapter~VIII]{Schwartz}.

\paragraph{Proof of~(ii).}
Let $s > 1/2$, and consider $\varphi \in \sS(\R)$. Relying on the fact that $\widetilde g(\cdot + \ri \eta)$ can be seen as the Fourier transform of~$\tau \mapsto g(\tau)\re^{-\eta \tau}$, we obtain
\begin{equation}
\label{eq:reformulation_duality_widetilde_g}
\begin{aligned}
\langle \widetilde g(\cdot + \ri \eta), \varphi \rangle_{H^{-s},H^s} - \langle \widehat g, \varphi \rangle_{H^{-s},H^s} 
	& =\left\langle g \, \re^{-\eta \tau}, \widehat{\varphi} \right\rangle_{\sS',\sS} - \left\langle g, \widehat{\varphi} \right\rangle_{\sS',\sS} \\
	& = \int_0^\infty \left( \dfrac{g(\tau)}{(1 + \tau^{2})^{s/2}} \right) (1 + \tau^{2})^{s/2} \widehat{\varphi}(\tau) \left( \re^{ - \eta \tau} - 1 \right) \rd \tau,
\end{aligned}
\end{equation}
where the integral makes sense since $\tau \mapsto g(\tau) (1 + \tau^2)^{-s/2}$ and $\tau \mapsto (1 + \tau^2)^{s/2} \widehat{\varphi}(\tau)$ are in $L^2(\R)$. It is then possible to extend the above formula to any $\varphi \in H^s(\R)$. Moreover, by the Cauchy-Schwarz inequality,
\[
\left| \langle \widetilde g(\cdot + \ri \eta), \varphi \rangle_{H^{-s},H^s} - \langle \widehat g, \varphi \rangle_{H^{-s},H^s} \right| \leq I_{\eta,s} \| \varphi \|_{H^s} \| g\|_{L^\infty},
\]
where 
\[
I_{\eta,s} = \left(2 \pi \int_0^{+\infty} \frac{(1-\re^{-\eta \tau})^2}{(1+\tau^2)^s}\, \rd \tau \right)^{1/2} < \infty.
\]
Therefore, $\left\| \widetilde g(\cdot + \ri \eta) - \widehat{g} \right\|_{H^{-s}} \leq \| g\|_{L^\infty} I_{\eta, s}$. By dominated convergence, $I_{\eta, s} \to 0$ as $\eta \to 0^+$, which allows us to conclude to the strong convergence of $\widetilde g(\cdot + \ri \eta)$ to $\widehat{g}$ in $H^{-s}(\R_\omega)$.

A similar computation shows that, for $0 < \eta_1 \leq \eta_2$ and $s \in \RR$,
\[
\left\| \widetilde{g}(\cdot + \ri \eta_1)-\widetilde{g}(\cdot + \ri \eta_2)\right\|_{H^{s}} \leq \| g\|_{L^\infty} \left(2 \pi \int_0^{+\infty} \re^{-2\eta_1 \tau} \, \left(1-\re^{-(\eta_2-\eta_1) \tau}\right)^2 (1+\tau^2)^s\, \rd \tau\right)^{1/2}, 
\]
where we crucially use that $\eta_1 > 0$ to ensure the convergence of the time integral when $s > -1/2$. The right-hand side goes to $0$ as $\eta_2$ goes to $\eta_1$ by dominated convergence. This allows one to conclude to the continuity of $\eta \mapsto \widetilde{g}(\cdot + \ri \eta)$ from $(0,+\infty)$ to $H^{s}(\RR)$. When $s < -1/2$, it is possible to pass to the limit $\eta_1 \to 0$ and obtain the uniform continuity from $[0,+\infty)$ to $H^{s}(\RR)$. 

\paragraph{Proof of~(iii).} We follow the approach used in~\cite{Taylor58} for instance. Fix $z_0 \in \UU$, and consider, for $R > 0$ and $0 < \alpha \leq \Im(z_0)/2$, the oriented contour~$\sC$ in the complex plane composed of the semi-circle $\ri\alpha + R\re^{\ri \theta}$ for $0 \leq \theta \leq \pi$ and the line $\ri\alpha + \omega$ for $-R \leq \omega \leq R$.
The value~$R$ is taken sufficiently large for $z_0$ to be inside the domain encircled by the contour (see Figure~\ref{fig:half_circle}).

\begin{figure}
\begin{center}
\begin{tikzpicture}[scale =1]

	\draw (-5,0) -- (5,0);
	\draw (0, -1) -- (0, 5.5);

	\draw (1, 0.8) -- (1.2, 1);
	\draw (1.2, 0.8) -- (1, 1);
	\node at (1.5, 0.9) {$z_0$};
	
	\draw (-0.1, 0.4) -- (0.1, 0.4);
	\node at (0.2, 0.2) {$\ri \alpha$};
	
	\draw (4.5,-0.1) -- (4.5, 0.1);
	\node at (4.6, 0.2) {$R$};
	
	%%%%%%%%%%%%%%%%%%%%%%%%%%%%%%%%%
	
	\draw[blue, very thick] (-4.5, 0.4) --(4.5, 0.4);
	\draw[blue, very thick] (3.8, 2.3) --(3.8, 2.8);
	\draw[blue, very thick] (4.2, 2.6) --(3.8, 2.8);
	
	\draw[blue, very thick] (4.5, 0.4) arc(0:180:4.5);
	\node[blue, very thick] at (3, 4.2) {$\sC$};
	
\end{tikzpicture}
	\caption{The contour $\sC$ used in the proof of~(iii).}
	\label{fig:half_circle}
\end{center}
\end{figure}
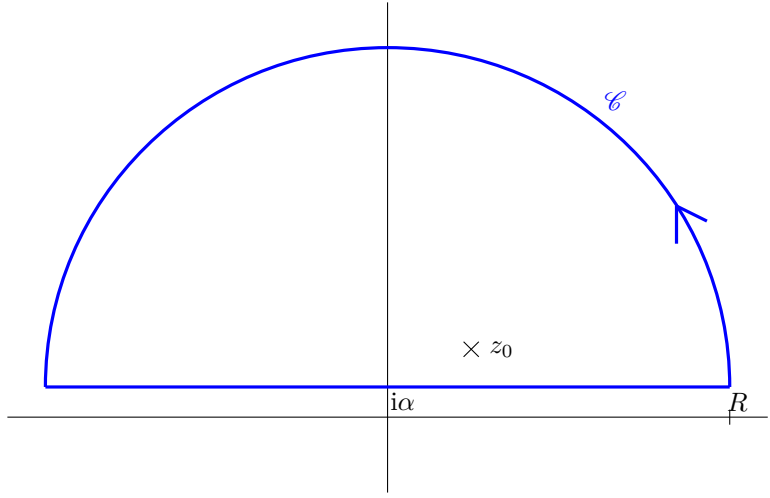

By Cauchy's residue theorem,
\begin{equation}
\label{eq:Cauchy_g_Linfty}
\widetilde{g}(z_0) = \frac{1}{2\ri\pi} \oint_\sC \frac{\widetilde{g}(z)}{z-z_0} \, \rd z = 
\frac{1}{2\ri\pi} \int_{-R}^R \frac{\widetilde{g}(\omega + \ri\alpha)}{\omega+\ri\alpha-z_0} \, \rd \omega + \frac{1}{2\pi}\int_0^\pi \widetilde{g}(\ri\alpha +R\re^{\ri\theta}) \frac{R\re^{\ri\theta}}{R\re^{\ri\theta}+\ri\alpha - z_0} \, \rd \theta .
\end{equation}
Now, for $z \in \UU$,
\[
\left| \widetilde{g}(z) \right| \leq \frac{\| g\|_{L^\infty}}{\Im(z)},
\]
so that 
\[
\left| \int_0^\pi \widetilde{g}(\ri\alpha +R\re^{\ri\theta}) \frac{R\re^{\ri\theta}}{R\re^{\ri\theta}+\ri\alpha - z_0} \, \rd \theta \right| \leq \| g\|_{L^\infty} \int_0^\pi \frac{R}{|\alpha +R\sin\theta|\,|R\re^{\ri\theta} + \ri\alpha -z_0|} \, \rd \theta,
\]
which, by dominated convergence, converges to~0 as $R\to +\infty$ when $\alpha$ is fixed. On the other hand, $\widetilde{g}(\cdot + \ri \alpha)$ belongs to $L^2(\R_\omega)$, while $(\cdot + \ri \alpha -z_0)^{-1}$ is in $H^1(\R_\omega)$, since $\ri\alpha-z_0$ has a non-zero imaginary part. Therefore, the limit $R \to +\infty$ can be taken in the first integral on the right-hand side of~\eqref{eq:Cauchy_g_Linfty}, which leads to
\[
\widetilde{g}(z_0) = \frac{1}{2\ri\pi} \int_{-\infty}^{+\infty} \frac{\widetilde{g}(\omega + \ri\alpha)}{\omega+\ri\alpha-z_0} \, \rd\omega  = \frac{1}{2\ri\pi} \left\langle \widetilde{g}(\cdot + \ri\alpha), (\cdot+\ri\alpha-z_0)^{-1} \right\rangle_{H^{-1},H^1}. 
\]
The conclusion now follows from the strong convergences of $(\cdot + \ri \alpha -z_0)^{-1}$ to $(\cdot -z_0)^{-1}$ in $H^1(\R_\omega)$ and of $\widetilde{g}(\cdot + \ri \alpha)$ to $\widehat{g}$ in $H^{-1}(\R_\omega)$ as $\alpha \to 0$.

\paragraph{Proof of~(iv).} Let $\varphi$ be a real-valued function in $\sS(\R_\omega)$. From (\ref{eq:gtildez}), we get
\[
\begin{aligned}
\int_\RR \widetilde{g}(\omega + \ri\eta) \varphi(\omega)\, \rd \omega & = \frac{1}{2\ri\pi} \int_\R \left\langle \widehat{g}, (\cdot-\omega-\ri\eta)^{-1} \right\rangle_{H^{-1},H^1} \varphi(\omega) \, \rd \omega .
%& = \frac{1}{2\ri\pi} \int_\R \left(\int_0^{+\infty} \frac{g(t)}{(1+t^2)^{1/2}} \, (1+t^2)^{1/2} \re^{(\ri \omega -\eta)t} \, \rd t\right)\varphi(\omega) \, \rd \omega,
\end{aligned}
\]
Taking the real parts of both sides, we obtain
\begin{equation}
\label{eq:reformulation_finite_eta_dispersion_relation}
\begin{aligned}
&\int_\RR \Re\left(\widetilde{g}(\omega + \ri\eta) \right) \varphi(\omega)\, \rd \omega \\
& \qquad = \frac{1}{2\pi}\int_\R \left(\left\langle \Im \widehat{g}, \frac{\cdot - \omega}{(\cdot-\omega)^2 + \eta^2} \right\rangle_{H^{-1},H^1} + \left\langle \Re \widehat{g}, \frac{\eta}{(\cdot-\omega)^2 + \eta^2} \right\rangle_{H^{-1},H^1}\right)\varphi(\omega) \, \rd \omega. 
\end{aligned}
\end{equation}
Consider now $\phi \in C^\infty(\RR^2)$, with support contained in $[-R,R] \times \RR$ for some finite $R > 0$. Then, Fubini's theorem for distributions (see~\cite[Chapter~IV, Theorem~IV]{Schwartz}) asserts that, for a given distribution $T \in \sS'(\RR)$ and $\varphi \in \sS(\R)$,
\[
\int_\RR \left\langle T,\phi(\cdot,\omega) \right\rangle_{\sS',\sS} \varphi(\omega) \, \rd\omega = \left\langle T,\int_\RR\phi(\cdot,\omega)\varphi(\omega) \, \rd\omega \right\rangle_{\sS',\sS}.
\]
When $T \in H^{-1}(\RR)$, the above linear form can be extended to functions in $H^1(\RR)$. Therefore, \eqref{eq:reformulation_finite_eta_dispersion_relation} can be rewritten as
\begin{equation}
\label{eq:reformulation_finite_eta_dispersion_relation_bis}
\begin{aligned}
\int_\RR \Re\left(\widetilde{g}(\omega + \ri\eta) \right) \varphi(\omega)\, \rd \omega & = \left\langle \Im \widehat{g},\frac{1}{2\pi}\int_\R \frac{\cdot - \omega}{(\cdot-\omega)^2 + \eta^2} \varphi(\omega) \, \rd\omega\right\rangle_{H^{-1},H^1} \\
& \ \ + \left\langle \Re \widehat{g},\frac{1}{2\pi}\int_\R \frac{\eta}{(\cdot-\omega)^2 + \eta^2} \varphi(\omega) \, \rd\omega\right\rangle_{H^{-1},H^1}.
\end{aligned}
\end{equation}
In view of the following strong convergences in $H^1(\RR)$,
% pass to Fourier space to see that \widehat{\varphi} \widehat{a}_\eta \to \widehat{\varphi} \widehat{a}_0 converges in L^2((1+\omega^2)^{1/2} d\omega)
\[
\frac{1}{2\pi}\int_\R \frac{\xi - \omega}{(\xi-\omega)^2 + \eta^2} \, \varphi(\omega) \, \rd\omega \xrightarrow[\eta\to 0]{} \frac12 (\fH \varphi)(\xi),
\qquad
\frac{1}{2\pi}\int_\R \frac{\eta}{(\xi-\omega)^2 + \eta^2} \, \varphi(\omega) \, \rd\omega \xrightarrow[\eta\to 0]{} \frac12 \varphi(\xi),
\]
the equality~\eqref{eq:reformulation_finite_eta_dispersion_relation_bis} leads, in the limit $\eta \to 0^+$, to
\[
\left\langle \Re \widehat{g}, \varphi\right\rangle_{H^{-1},H^1}  = \frac12 \left\langle \Im \widehat{g}, \fH(\varphi)\right\rangle_{H^{-1},H^1} + \frac12 \left\langle \Re \widehat{g}, \varphi\right\rangle_{H^{-1},H^1}. 
\]
The first equality in the statement of item~(iv) is finally obtained with the following lemma (recall that, according to Lemma~\ref{lem:hilbert_hm1}, $H^s(\RR)$ is stable by the Hilbert transform). The second equality follows by applying $\fH$ to both sides and remembering that $\fH^2 = -\mathrm{Id}$.

\begin{lemma}
\label{lem:fH_duality}
Let~$s \geq 0$. For any $T \in H^{-s}(\RR)$ and $\varphi \in H^s(\RR)$,
\[
\left\langle \fH(T), \varphi\right\rangle_{H^{-s},H^s} = -\left\langle T, \fH(\varphi)\right\rangle_{H^{-s},H^s}.
\]
\end{lemma}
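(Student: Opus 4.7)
The plan is to establish the identity first for $T,\varphi \in \sS(\R_\omega)$ and then extend to general $(T,\varphi) \in H^{-s}(\R_\omega)\times H^s(\R_\omega)$ by density. The Schwartz computation is where the sign actually gets produced; the rest is a routine continuity argument.

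For the Schwartz step, I would use the Fourier representation \eqref{eq:defHT2} of $\fH$, which states that $\fH = \cF_T \, M_{-\ri \sgn} \, \cF_T^{-1}$, and hence that $\cF_T^{-1}(\fH T)(\tau) = -\ri\,\sgn(\tau)\,(\cF_T^{-1} T)(\tau)$ for $T\in\sS(\R_\omega)$, and similarly for $\varphi$. Combining this with the elementary Fubini identity
\[
\langle T, \varphi\rangle_{\sS',\sS} = 2\pi \int_\R (\cF_T^{-1}T)(\tau)\,(\cF_T^{-1}\varphi)(-\tau)\,\rd\tau,
\qquad T,\varphi \in \sS(\R_\omega),
\]
both $\langle \fH T, \varphi\rangle_{\sS',\sS}$ and $-\langle T, \fH \varphi\rangle_{\sS',\sS}$ rewrite as the same quantity $-2\pi\ri\int_\R \sgn(\tau)(\cF_T^{-1}T)(\tau)(\cF_T^{-1}\varphi)(-\tau)\,\rd\tau$, the minus sign emerging from the relation $\sgn(-\tau) = -\sgn(\tau)$ when $\fH$ is transferred from one factor to the other. (An equivalent route would be to use the integral representation \eqref{eq:defHT1}: for $T,\varphi\in\sS(\R_\omega)$, one can swap the principal-value and ordinary integrals by a standard regularization argument, and antisymmetry of the Cauchy kernel $\frac{1}{\omega-\omega'}$ again produces the minus sign.)

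For the density step, I would argue one variable at a time. Fix $\varphi \in H^s(\R_\omega)$, and consider the two linear forms $T \mapsto \langle \fH T, \varphi\rangle_{H^{-s},H^s}$ and $T \mapsto -\langle T, \fH \varphi\rangle_{H^{-s},H^s}$ on $H^{-s}(\R_\omega)$. Both are continuous: the first because $\fH$ is bounded (in fact unitary) on $H^{-s}(\R_\omega)$ by Lemma~\ref{lem:hilbert_hm1}, and the second trivially. Since $\sS(\R_\omega)$ is dense in $H^{-s}(\R_\omega)$ and the $H^{-s}$--$H^s$ duality extends the $\sS',\sS$ pairing (by the very definition of $H^{-s}$ through $L^2=H^0$ as pivot space), the two forms agree on a dense subspace and therefore everywhere. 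Then I would fix $T\in H^{-s}(\R_\omega)$ and apply the same reasoning in the $\varphi$ variable, using density of $\sS(\R_\omega)$ in $H^s(\R_\omega)$ and boundedness of $\fH$ on $H^s(\R_\omega)$.

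I do not anticipate any genuine obstacle: the only subtlety worth stating cleanly is the compatibility of the Schwartz pairing with the $H^{-s}$--$H^s$ pairing, which is part of the definition.
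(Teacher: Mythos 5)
Your proposal is correct and follows essentially the same route as the paper: establish the identity for Schwartz functions by transferring the Fourier-side multiplier $-\ri\,\sgn(\cdot)$ from one factor to the other, then conclude by density using the boundedness of $\fH$ on $H^{\pm s}(\R_\omega)$ and the compatibility of the pairings. The only difference is cosmetic bookkeeping — in your bilinear Parseval identity the sign comes from the oddness of $\sgn$ under the reflection $\tau\mapsto-\tau$, whereas the paper works with the sesquilinear $L^2$ product and gets it from conjugating $-\ri\,\sgn$.
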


\begin{proof}
  Consider first the case when $T,\varphi \in \sS(\RR)$. Then, using Plancherel's formula, the duality product can be rewritten using a $L^2$-scalar product 
  \[
  \begin{aligned}
    \langle \fH T, \varphi \rangle_{\sS',\sS} & = ( \overline{\fH T}, \phi)_{L^2} = 2\pi\left(\overline{\cF^{-1}(\fH T)} , \cF^{-1} \varphi\right)_{L^2} = 2\pi\left({-\ri \, \sgn(\cdot) \cF^{-1} \overline{T}} , \cF^{-1} \varphi\right)_{L^2} \\ 
    & = 2\pi\left({\cF^{-1} \overline{T}} , \ri \, \sgn(\cdot)\cF^{-1} \varphi\right)_{L^2} = -\left( \overline{T}, \fH \varphi\right)_{L^2} = -\langle T, \fH \varphi \rangle_{\sS',\sS}.
  \end{aligned}
  \]
  The conclusion is obtained by a density argument.
\end{proof}

%---------------------------------------
\subsection{Proof of Proposition~\ref{lem:particle-type}}
\label{sec:proof_lem:particle-type}

The proof presented in Section~\ref{ssec:proof_Tit_Linfty} can be followed \emph{mutatis mutandis} upon introducing, for given elements $f,g \in \cH$, the bounded causal function
\[
a_{f,g}(\tau) = \left \langle f\left|T_{\rm c}(\tau)\right|g \right\rangle,
\]
and noting that $\| a_{f,g} \|_{L^\infty} \leq \| T_{\rm c}\|_{L^\infty(\cB(\cH))} \| f \| \| g\|$.

The only additional technical point is the strong analyticity property, which is however easily obtained from the following bound: for $z = \omega + \ri \eta \in \UU$, 
\[
\begin{aligned}
\left\| \frac{\rd \widetilde{T_{\rm c}}(z)}{\rd z} \right\|_{\cB(\cH)} & = \left\| \int_0^\infty T_{\rm c}(\tau) (\ri \tau) \re^{-\eta \tau} \re^{\ri \omega \tau} \, \rd \tau \right\|_{\cB(\cH)} \\
& \leq \| T_{\rm c} \|_{L^\infty(\R_\tau, \cB(\cH))} \int_0^\infty  \tau \re^{-\eta \tau} \, \rd \tau = \frac{\| T_{\rm c} \|_{L^\infty(\R_\tau,\cB(\cH))}}{\eta^2} < +\infty .
\end{aligned}
\]

%---------------------------------------
\subsection{Proof of Proposition~\ref{prop:analytic_extension_causal_time_evolution}}
\label{sec:proof:prop:analytic_extension_causal_time_evolution}

For $z \in \UU$, we have
\[
\widetilde{A}_{\rm c}(z) = \int_\R A_{\rm c}(\tau) \, \re^{\ri z \tau} \, \rd \tau = -\ri \int_0^{+\infty} \re^{-\ri \tau H} \re^{\ri z \tau} \, \rd\tau.
\]
A simple computation shows that 
\[
\widetilde{A}_{\rm c}(\omega + \ri\eta) = -\ri \int_0^{+\infty}  \int_\R \re^{\ri \tau (\omega + \ri \eta - \lambda)}\,\rd P_\lambda^H \, \rd\tau = \int_\R \frac{1}{\omega + \ri \eta - \lambda} \, \rd P_\lambda^H = (\omega+\ri\eta-H)^{-1}. 
\]
The series of equalities can be made rigorous by testing them against functions $f,g \in \cH$, and using Fubini's theorem to justify the exchange in the order of integration.

The strong convergence of $\widetilde{A}_{\rm c}(\cdot + \ri \eta)$ to $\widehat{A_{\rm c}}$ in $H^1(\R_\tau,\cB(\cH))$ is ensured by Proposition~\ref{lem:particle-type}. The Fourier transform can therefore be deduced from this limiting procedure. We consider the limit of $\Im \widetilde{A}_{\rm c}(\cdot + \ri \eta)$, the real part of $\widetilde{A}_{\rm c}(\cdot + \ri \eta)$ being obtained from~\eqref{eq:Plemelj_formulae} and Definition~\ref{def:HT-PV}.

Let $f\in \cH$ and $\varphi \in \sS(\R_\omega)$. Then, using Fubini's theorem,
\begin{equation}
\label{eq:expression_A_c}
\left \langle \left\langle f\left| \Im \widetilde{A_{\rm c}}(\cdot+\ri\eta)\right|f\right\rangle, \varphi \right\rangle_{\sS',\sS} = -\int_\R \int_\R \frac{\eta}{(\omega-\lambda)^2  + \eta^2} \, \varphi(\omega) \, \mu_{f}^H(\rd \lambda) \, \rd\omega
= -\int_\R t_\eta(\lambda) \, \mu_{f}^H(\rd \lambda),
\end{equation}
where the measure $\mu_f^H$ is defined by $\mu_f^H(b) = \langle f \left|P_b^H\right|f\rangle$ for any $b \in \mathscr{B}(\R)$, and 
\[
t_\eta(\lambda) = \int_\R \frac{\eta}{(\omega- \lambda)^2 + \eta^2} \, \varphi(\omega) \, \rd \omega = \int_\R \frac{1}{\xi^2 + 1} \, \varphi(\lambda + \eta\xi) \, \rd \xi.
\]
Note that 
\[
\left|t_\eta(\lambda) - \pi \varphi(\lambda)\right| \leq \int_\RR \frac{1}{\xi^2 + 1} \left|\varphi(\lambda + \eta\xi) - \varphi(\lambda) \right| \rd \xi \leq \sqrt{\eta} \| \varphi'\|_{L^2} \int_\RR \frac{\sqrt{\xi}}{1+\xi^2} \, \rd\xi,
\]
where the last bound is obtained by rewriting $\varphi(\lambda + \eta\xi) - \varphi(\lambda)$ as the integral of its derivative and using a Cauchy-Schwarz inequality. This also shows that $t_\eta$ is uniformly bounded as $\eta \to 0^+$. Since the measure $\mu_f^H$ is finite, \eqref{eq:expression_A_c} leads by dominated convergence to
\[
\left \langle \left\langle f\left| \Im \widetilde{A_{\rm c}}(\cdot+\ri\eta)\right|f\right\rangle, \varphi \right\rangle_{\sS',\sS} \xrightarrow[\eta\to 0]{} -\pi \int_\RR \varphi(\lambda) \, \mu_{f}^H(\rd \lambda),
\]
which shows that $\Im \widehat{A_{\rm c}} = -\pi \, P^H$.

%----------------------------------------------------------------------
\subsection{Proof of Lemma~\ref{lemma:Im_positive_particle}}
\label{sec:proof:lemma:Im_positive_particle}

Let us first assume that $\Im \widehat{T}_{\rm c} \geq 0$. The aim is to prove that $\Re \widehat{T}_{\rm c} \geq 0$ on~$(-\infty,\omega_0]$. Consider to this end $\varphi \in \sS(\R)$ with $\mathrm{Supp}(\varphi) \subset (-\infty,\omega_0]$ and $\varphi \geq 0$. Then, for any $\omega \geq \omega_0$ and $\omega' \le 0$, it holds $\varphi(\omega-\omega') = 0$, so that
  \begin{equation}
    \label{eq:positivity_hilbert_transform}
    \forall \omega \geq \omega_0, \qquad (\fH\varphi)(\omega) = \lim_{\varepsilon \to 0^+} \int_{\R \backslash [-\varepsilon,\varepsilon]} \frac{\varphi(\omega-\omega')}{\omega'} \, \rd \omega' = \lim_{\varepsilon \to 0^+} \int_\varepsilon^{+\infty} \frac{\varphi(\omega-\omega')}{\omega'} \, \rd \omega' \geq 0.
  \end{equation}
Let $f \in \cH$. In view of~\eqref{eq:Plemelj_formulae} and Lemma~\ref{lem:fH_duality},
\[
\begin{aligned}
\left\langle \left\langle f \left| \Re \widehat{T}_{\rm c}\right| f \right\rangle, \varphi \right\rangle_{H^{-1},H^1} & =
- \left\langle \left\langle f \left| \fH\left(\Im \widehat{T}_{\rm c}\right) \right| f \right\rangle, \varphi \right\rangle_{H^{-1},H^1} =
-\left\langle \fH \left( \left\langle f \left| \Im  \widehat{T}_{\rm c} \right| f \right\rangle\right), \varphi \right\rangle_{H^{-1},H^1} \\
& = \left\langle \left\langle f \left| \Im  \widehat{T}_{\rm c} \right| f \right\rangle, \fH\varphi \right\rangle_{H^{-1},H^1}.
\end{aligned}
\]
The latter quantity is non-negative since $\Im \widehat{T}_{\rm c} \geq 0$ and $\fH \varphi \geq 0$ on~$\mbox{\rm Supp}(\Im \widehat{T_{\rm c}}) \subset [\omega_0,+\infty)$ (by~\eqref{eq:positivity_hilbert_transform}).

Let us now assume that $\Re \widehat{T}_{\rm c} \geq 0$ on~$(-\infty,\omega_0]$. The aim is to prove that $\Im \widehat{T}_{\rm c} \geq 0$ on the support of this distribution, which is included in $[\omega_0,+\infty)$. Consider therefore $\varphi \in \sS(\R)$ with $\mathrm{Supp}(\varphi) \subset [\omega_0,+\infty)$ and $\varphi \geq 0$. Note that
\[
\forall \omega \leq \omega_0, \qquad (\fH\varphi)(\omega) = \lim_{\varepsilon \to 0^-} \int_{-\infty}^\varepsilon \frac{\varphi(\omega-\omega')}{\omega'} \, \rd \omega' \leq 0,
\]
and, for any $f \in \cH$,
\[
\left\langle \left\langle f \left| \Im \widehat{T}_{\rm c}\right| f \right\rangle, \varphi \right\rangle_{H^{-1},H^1} = -\left\langle \left\langle f \left| \Re  \widehat{T}_{\rm c} \right| f \right\rangle, \fH\varphi \right\rangle_{H^{-1},H^1} \geq 0.
\]
This gives the desired conclusion.
 
%%%%%%%%%%%%%%%%%%%%%%%%%%%%%%%%%%%%%%
\subsection{Proof of Lemma \ref{lem:L2Linfty_kernel}} 
\label{ssec:proof_L2Linfty_kernel}

The fact that $B_1 \in \cB(L^1(\R^2),L^2(\R^3))$ is a simple consequence of the following inequality: for $\varphi \in L^1(\R^3)$, it holds, for almost all $\br \in \R^3$, 
\[
|B_1 \varphi(\br)| = \left| \int_{\R^3} B_1 (\br, \br') \varphi(\br') \rd \br' \right| \leq \| B_1 (\br,\cdot)\|_{L^\infty(\R^3)} \, \| \varphi\|_{L^1(\R^3)}. 
\]
This shows that $B_1 \varphi \in L^2(\R^3)$ with 
\[
\|B_1 \varphi\|_{L^2(\R^3)} \leq \left( \int_{\R^3} \| B_1(\br,\cdot)\|^2_{L^\infty(\R^3)} \,\rd\br\right)^{1/2} \| \varphi\|_{L^1(\R^3)}.
\]

Now, for $f \in L^2(\R^3)$, it is easy to see that $B_1 f$ is an integral operator with kernel $B_1(\br, \br')f(\br')$. In addition,
\[
\begin{aligned} 
  \| B_1 f\|_{\fS_2(L^2(\R^3))}^2 & 
  = \int_{\R^3}\int_{\R^3} \left| B_1(\br, \br') f(\br')\right|^2 \rd \br \, \rd \br' \leq \int_{\R^3} \int_{\R^3} \| B_1(\br, \cdot)\|_{L^\infty(\R^3)}^2 |f(\br')|^2 \, \rd \br \, \rd\br'\\ 
  & = \left(\int_{\R^3} \| B_1(\br, \cdot) \|^2_{L^2(\R^3)} \rd\br \right) \| f \|_{L^2(\R^3)}^2.
\end{aligned}
\]
This gives the claimed result.

%----------------------------------------------------------------
\subsection{Proof of Theorem~\ref{thm:analytic_continuation_kernel_product}} 
\label{sec:thm:analytic_continuation_kernel_product}
 
Fix $0 < \eta < \omega$ and $f, g \in \cH$. We start from~\eqref{eq:laplace_transform_C+}, which we rewrite as
\begin{equation} 
  \label{eq:initial_laplace_tranform}
  \widetilde{C^+}(\nu + \ri\omega) = \frac{\ri}{2\pi}\int_{-\infty}^{+\infty} \widetilde{A^+}\big(\nu+\nu'-\omega'+\ri(\omega-\eta)\big)\odot \widetilde{B^-}(\nu'-\omega'-\ri\eta) \, \rd \omega'.
\end{equation}
By Proposition~\ref{prop:resolvent} and Proposition~\ref{prop:resolvent_anti-causal},
\begin{equation}
  \label{eq:widetilde_A+_B-}
  \widetilde{A^+}(z) = A_1^* (z-A_2)^{-1}A_1, \qquad \widetilde{B^-}(z) = B_1^* (z+B_2)^{-1}B_1.
\end{equation}
The poles of $z \mapsto \widetilde{A^+}\big(\nu+\nu'+\ri (\omega-\eta)-z\big)$ are located on the half-line $\ri(\omega-\eta) + (-\infty,\nu+\nu'-a)$, while those of $z \mapsto \widetilde{B^-}(\nu' - \ri\eta - z)$ are located on the half-line $-\ri \eta + (b + \nu',+\infty)$. For any closed contour not enclosing any point of those two half-lines, the integral of $\widetilde{A^+}\big(\nu+\nu'+\ri (\omega-\eta)-z\big)\odot \widetilde{B^-}(\nu' - \ri\eta - z)$ on this contour vanishes. Let us choose the contour $\mathscr{C}_L$ plotted in Figure~\ref{fig:contour_analytic_SigmaGW} and evaluate the contributions of the left-hand side of
\begin{equation}
\label{eq:integral_to_compute_analytic_SigmaGW}
\oint_{\mathscr{C}_L} \Tr_{\cH} \left[ \widetilde{A^+}\big(\nu+\nu'+\ri(\omega-\eta)-z\big)g \widetilde{B^-}(\nu'-\ri\eta-z) \overline{f} \right]\rd z = 0,
\end{equation}
on the various segments. Recall that we choose $\nu < a+b$ and $\nu' \in (-b,a-\nu)$, so that $\nu+\nu'-a<0<\nu'+b$. Let us also emphasize that the operators appearing in the integrand do not have singularities.

\begin{figure}[h]
	\begin{center}
  \begin{tikzpicture}[scale =1]
    \draw (-7,0) -- (7,0);
    \draw (0, -5) -- (0, 5);
    % Poles of W
    \draw[red, very thick] (2, -0.2) -- (7, -0.2);
    \node[red] at (4, -0.5) {poles of $\widetilde{B^-}(\nu' - \ri\eta - \cdot)$};
    \draw[red] (2, -0.4) -- (2, 0.);
    \draw[red] (1.2, -0.4) -- (0.8, 0.);
    \draw[red] (0.8, -0.4) -- (1.2, 0.);
    \draw[red] (1.5, -0.4) -- (1.9, 0.);
    \draw[red] (1.9, -0.4) -- (1.5, 0.);
    \draw (1, -0.2) -- (1, 0.2);
    \node at (1.05, -0.7) {$\nu'+b$};
    % Poles of G_p
    \draw[red, very thick] (-2, 2) -- (-7, 2);
    \node[red] at (-4, 2.5) {poles of $\widetilde{A^+}(\nu+\nu' + \ri (\omega-\eta) - \cdot)$};
    \draw[red] (-2, 1.8) -- (-2, 2.2);
    \draw[red] (-1, 1.8) -- (-1.4, 2.2);
    \draw[red] (-1.4, 1.8) -- (-1, 2.2);
    \draw (-1.2, -0.2) -- (-1.2, 0.2);
    \node at (-1.6, -0.5) {$\nu+\nu'-a$};	
    \draw(-0.2, 2) -- (0.2, 2) node[right] {$\ri (\omega-\eta)$};
    \draw[blue, very thick] (-6, 0.) --(6, 0.) -- (6, 4.2) -- (0, 4.2) -- (0, -4.2) -- (-6, -4.2) -- (-6, 0.);
    \draw[<->]  (6.5, -0.2) -- (6.5, 0) node[right] {$\eta$};
    \draw (6,0.1) -- (6, -0.3) node[below] {$L$};
    \draw (-6,0.1) -- (-6, -0.1);
    \node at (-6, 0.6) {$-L$};
    \node at (0.4, 4.6) {$L$};
    \node at (0.4, -4.2) {$-L$};
    \node[red, very thick] at (6.4, 2) {$\mathscr{C}_L$};
    
    %%%%%%%%%%%%%%%%%
    % The arrows
    \draw[blue] (2.8, 0.2) -- (3,0)  -- (2.8, -0.2);
    \draw[blue] (5.8, 2.8) -- (6, 3) -- (6.2, 2.8);
    \draw[blue] (3, 4.4) -- (2.8, 4.2) -- (3, 4);
    \draw[blue] (-0.2, 1.2) -- (0, 1) -- (0.2, 1.2);
    \draw[blue] (-3, -4.4) -- (-3.2, -4.2) -- (-3, -4);
    \draw[blue] (-6.2, -2) -- (-6, -1.8) -- (-5.8, -2);
    \draw[blue] (-3, 0.2) -- (-2.8, 0) -- (-3, -0.2);
  \end{tikzpicture}
  
  \caption{Contour $\mathscr{C}_L$ used to compute the integral~\eqref{eq:integral_to_compute_analytic_SigmaGW}, with $\eta>0$ small compared to~$\omega$. Note that the condition $\nu + \nu' - a < 0 < \nu' + b$ ensures that the central vertical part of the contour does not intersect the poles of the functions in the integrand.}
  \label{fig:contour_analytic_SigmaGW}
  
  \end{center}
\end{figure}
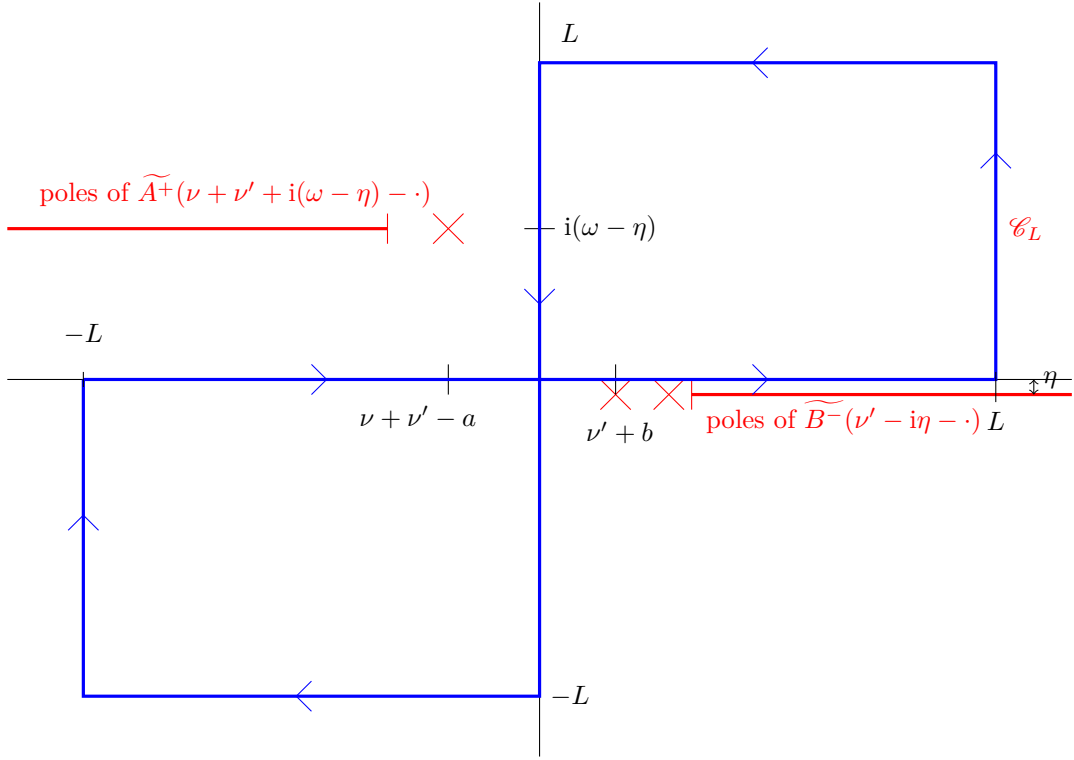

Let us first consider the part of the integral corresponding to the right side of the contour. Using~\eqref{eq:widetilde_A+_B-}, we obtain that for all $\omega' \in [0,L]$,
\[
\begin{aligned}
  & \left| \int_0^L \Tr_{\cH} \left[ \widetilde{A^+}\big(\nu+\nu' - L + \ri (\omega-\eta-\omega')\big) g \widetilde{B^-}\big(\nu'-L - \ri (\omega'+\eta)\big)\overline{f} \right] \rd \omega' \right| \\
  & \le \int_0^L \left|  \Tr_{\cH} \left[ A_1^* \dfrac{1}{\nu+\nu' - L + \ri (\omega-\eta-\omega') - A_2} A_1 g B_1^* \dfrac{1}{\nu'-L - \ri (\omega'+\eta) + B_2} B_1 \overline{f} \right] \right| \rd \omega' \\
  & \leq \left\| B_1 \overline{f} \right\|_{\fS_2(\cH,\cH_b)} \left\| B_1 \overline{g} \right\|_{\fS_2(\cH,\cH_b)} \| A_1 \|_{\cB(\cH,\cH_a)}^2  \\
	& \qquad \times \int_0^L \left\| \dfrac{1}{\nu+\nu' - L + \ri (\omega-\eta-\omega') - A_2}  \right\|_{\cB(\cH_a)} \left\| \dfrac{1}{\nu'-L - \ri (\omega'+\eta) +B_2 } \right\|_{\cB(\cH_b)}  \rd \omega' \\
  & \le C \| f \|_{\cH} \| g \|_{\cH} \dfrac{1}{L} \int_0^L \dfrac{\rd \omega'}{\omega' + \eta} = C \| f \|_\cH \| g \|_\cH \dfrac{1}{L} \log \left( \dfrac{L + \eta}{\eta} \right),
\end{aligned}
\]
where we have used $A_2 - (\nu+\nu') + L \geq a - (\nu + \nu') + L \geq L$. Similar estimates can be stated for the upper, lower and left parts of the contour. For the upper part for instance, for which the integration is performed from $z = L+\ri L$ to $z = \ri L$, we get
\begin{align*}
  & \left| \int_{L}^0 \Tr_{\cH} \left[ \widetilde{A^+}\big(\nu+\nu'-\omega'+\ri(\omega-\eta-L)\big)g \widetilde{B^-}\big(\nu'-\omega'-\ri(\eta+L)\big) \overline{f} \right] \rd \omega' \right| \\
  & \le  C \| f \|_{\cH} \| g \|_{\cH} \int_0^L \left\| \dfrac{1}{\nu+\nu' -\omega' + \ri (\omega-\eta-L) - A_2}  \right\|_{\cB(\cH_a)} \left\| \dfrac{1}{\nu'-\omega' + \ri (L+\eta) - B_2} \right\|_{\cB(\cH_b)}  \rd \omega' \\
  & \le C \| f \|_{\cH} \| g \|_{\cH} \int_0^L \left( \dfrac{1}{L+\eta} \right) \dfrac{1}{\omega' + a - (\nu+\nu')} \rd \omega' \\
  & = C \| f \|_{\cH} \| g \|_{\cH} \dfrac{\log \big(L+a - (\nu+\nu')\big) - \log \big(a - (\nu+\nu')\big)}{L+\eta},
\end{align*}
where we recall that $a - (\nu+\nu') > 0$. We then take the limit $L \to +\infty$, so that the contributions to the integral which are not on the imaginary axis or on the real axis vanish. We deduce that
\[
\begin{aligned}
& \int_{-\infty}^{\infty} \Tr_{\cH} \left[ \widetilde{A^+}\big(\nu+\nu'-\omega'+\ri(\omega-\eta)\big)g \widetilde{B^-}(\nu'-\omega'-\ri\eta) \overline{f} \right] \rd \omega' \\
& \qquad = \ri \int_{-\infty}^{+\infty}  \Tr_{\cH} \left[ \widetilde{A^+}\big(\nu+\nu'+\ri(\omega-\eta+\omega')\big)g \widetilde{B^-}\big(\nu'+\ri(\omega'-\eta)\big)  \overline{f} \right] \rd \omega' \\
& \qquad = \ri \int_{-\infty}^{+\infty}  \Tr_{\cH} \left[ \widetilde{A^+}\big(\nu+\nu'+\ri(\omega+\omega')\big)g \widetilde{B^-}(\nu'+\ri \omega')  \overline{f} \right] \rd \omega'. \\
\end{aligned}
\]
In view of~\eqref{eq:initial_laplace_tranform}, we finally obtain that
\begin{equation}
\label{eq:final_analytic_continuation_C+}
\widetilde{C^+}(\nu + \ri \omega) = -\dfrac{1}{2 \pi} \int_{-\infty}^{+\infty} \widetilde{A^+}\big(\nu+\nu'+\ri(\omega+\omega')\big) \odot \widetilde{B^-}(\nu'+\ri \omega') \, \rd \omega'.
\end{equation}
We next note that our choices for $\nu,\nu'$ ensure that the expressions on both sides are analytic for all $\omega > 0$, and can be extended analytically to all $\omega \in \R$. Therefore, the above equality also holds true for $\omega \le 0$.

In a similar fashion, we prove that, for all $\omega \in \R$,
\[
\widetilde{C^-}(\nu + \ri \omega) = -\dfrac{1}{2 \pi} \int_{-\infty}^{+\infty} \widetilde{A^-}\big(\nu+\nu' + \ri (\omega +\omega')\big) \odot \widetilde{B^+}(\nu'+\ri \omega') \rd \omega'.
\]
This equality is established as for~\eqref{eq:final_analytic_continuation_C+} by considering $\widetilde{C^-}(\nu - \ri \omega)$ for $\omega > 0$ and evaluating the various parts of the left-hand side of
\[
\oint_{\mathscr{C}_L} \Tr_{\cH} \left[ \widetilde{A^-}\big(\nu+\nu'-\ri(\omega-\eta)-z\big)g \widetilde{B^+}(\nu'+\ri\eta-z) \overline{f} \right]\rd z = 0.
\]
The poles of the integrand are on the half-lines $-\ri(\omega-\eta) + (\nu+\nu'+a,+\infty)$ and $\ri\eta + (-\infty,-b+\nu')$. The conditions $\nu > -(a+b)$ and $-a-\nu<\nu'<b$ ensure that $-b+\nu'<0<\nu+\nu'+a$, so that the integrand has no singularity on the imaginary axis.

Finally, since $A^+(\tau) \odot B^+(-\tau) = A^-(\tau) \odot B^-(-\tau) = 0$ for $\tau \neq 0$, we can concatenate $\widetilde{C^+}(\nu + \ri \omega)$ and $\widetilde{C^-}(\nu + \ri \omega)$, and obtain
\[
\widetilde{C}(\nu + \ri \omega) = -\dfrac{1}{2 \pi} \int_{-\infty}^{+\infty} \widetilde{A}\big(\nu+\nu' + \ri (\omega +\omega') \big) \odot \widetilde{B}(\nu' + \ri \omega') \, \rd \omega'.
\]

%-------------------------------------------------------
\subsection{Proof of Corollary~\ref{corr:analytic_C}}
\label{sec:corr:analytic_C}

The proof is based on the representation~\eqref{eq:general_formula_analytic_continuation_C+} with the choice $\omega = 0$. Consider $\nu < a+b$. It holds
\[
\begin{aligned}
\widehat{C^+}(\nu) = \widetilde{C^+}(\nu) 
& = -\dfrac{1}{2 \pi} \int_{-\infty}^{+\infty} \widetilde{A^+}(\nu+\nu' + \ri \omega') \odot \widetilde{B^-}(\nu' + \ri \omega') \, \rd \omega' \\
& = - \dfrac{1}{2\pi} \int_{-\infty}^{+\infty} \left[ A_1^*\frac{\nu+\nu'-A_2 - \ri\omega'}{(\nu+\nu'-A_2)^2+(\omega')^2}A_1 \right] \odot \left[ B_1^* \frac{\nu'+B_2-\ri\omega'}{(\nu'+B_2)^2+(\omega')^2}B_1 \right] \rd \omega'.
\end{aligned}
\]
The odd terms in $\omega'$ cancel out by symmetry, so that
\begin{equation}
\label{eq:widehat_C+_is_real}
\begin{aligned}
\widehat{C^+}(\nu) & 
= \dfrac{1}{2\pi} \int_{-\infty}^{+\infty} \left[ A_1^*\frac{A_2-(\nu+\nu')}{(\nu+\nu'-A_2)^2+(\omega')^2}A_1 \right] \odot \left[ B_1^* \frac{\nu'+B_2}{(\nu'+B_2)^2+(\omega')^2}B_1 \right] \rd \omega' \\
& \ \ + \dfrac{1}{2\pi} \int_{-\infty}^{+\infty} \left[ A_1^*\frac{\omega'}{(\nu+\nu'-A_2)^2+(\omega')^2}A_1 \right] \odot \left[ B_1^* \frac{\omega'}{(\nu'+B_2)^2+(\omega')^2}B_1 \right] \rd \omega'.
\end{aligned}
\end{equation}
This shows that this operator is positive and self-adjoint in view of Lemmas~\ref{lem:positivity_kernel_product} and~\ref{lem:adjoint_kernel_product}. As a result, $\Im \widehat{C^+}(\nu) = 0$ for $\nu < a+b$. This proves the first assertion in~\eqref{eq:Sigma_limit_real}. Also, we get from~\eqref{eq:widehat_C+_is_real} that $\widehat{C^+}(\nu) = \Re \widehat{C^+} \ge 0$ for $\nu < a+b$. Together with Lemma~\ref{lemma:Im_positive_particle}, this shows the first assertion of~\eqref{eq:imaginary_widehat_C}. The results concerning $\Im \widehat{C^-}$ are proved in a similar way.

%-------------------------------------------------------
\subsection{Proof of Proposition~\ref{prop:Psi_N^0}}
\label{sec:proof:prop:Psi_N^0}

The first assertion follows from the fact that the domain of $H_N$ is $H^2(\R^{3N})$ and that $H_N \Re (\Psi_N^0) = E_N^0 \Re (\Psi_N^0)$ and $H_N \Im (\Psi_N^0) = E_N^0 \Im ( \Psi_N^0)$.

The density $\rho_N^0$ is bounded since it decreases exponentially fast away from the nuclei and is continuous~\cite{Fournais2002}.

\medskip

In order to prove~\eqref{eq:recover_gamma_with_creation_annihilation}, we rely on~\eqref{eq:annihilation_op} and~\eqref{eq:def-DM} in order to write (recall that $\Psi_N^0$ is real valued)
\[
\begin{aligned}
& \left\langle \Psi_N^0 | a^\dagger(g) a(f) |\Psi_N^0 \right\rangle_{\cH_N} 
= \left\langle a(g)\Psi_N^0 \big| a(f) \Psi_N^0 \right\rangle_{\cH_N} \\
& \quad = N \int_{\R^{3(N-1)}} \left(\int_{\R^3} g(\br)\Psi_N^0(\br,\br_1,\dots,\br_{N-1}) \, \rd\br\right) \left(\int_{\R^3} \overline{f(\br')} \Psi_N^0(\br',\br_1,\dots,\br_{N-1}) \, \rd\br' \right) \rd \br_1\dots \rd \br_{N-1} \\
& \quad = \int_{\R^3} \int_{\R^3} {g(\br)} \gamma_N^0(\br,\br') \overline{f(\br')} \, \rd\br \, \rd\br' = \langle f | \gamma_N^0| g \rangle.
\end{aligned}
\]
To bound the kernel $| \gamma_N^0(\br, \br')|^2$, we write
\begin{align*}
  | \gamma_N^0(\br, \br')|^2 & = N^2 \left| \int_{\R^3} \ldots \int_{\R^3} {{\Psi_N^0(\br, \br_2, \ldots \br_N)}\Psi_N^0(\br', \br_2, \ldots \br_N) } \rd \br_2 \ldots \rd \br_N \right|^2 \\
  & \le  N^2 \left( \int_{\R^3} \ldots \int_{\R^3} \left| \Psi_N^0(\br, \br_2, \ldots \br_N) \right|^2  \rd \br_2 \ldots \rd \br_N  \right)\left( \int_{\R^3} \ldots \int_{\R^3} \left| \Psi_N^0(\br', \br_2, \ldots \br_N) \right|^2  \rd \br_2 \ldots \rd \br_N  \right) \\
  & \le \rho_N^0(\br) \rho_N^0(\br'),
\end{align*}
where we used the Cauchy-Schwarz inequality.

\medskip

Let us finally recall why $\rho_{N,2}^0$ defines a bounded integral operator. Note first that $\rho_{N,2}^0(\br,\br') \geq 0$. For $f,g \in \cH_1$, the Cauchy-Schwarz inequality then leads to
\[
\begin{aligned}
\left| \langle f | \rho_{N,2}^0 | g \rangle \right| & = \left| \int_{\R^3} \int_{\R^3} \overline{f(\br)} \rho_{N,2}^0(\br,\br') g(\br') \, \rd\br \, \rd\br' \right| \\
& \leq \left(\int_{\R^3} \int_{\R^3} |f(\br)|^2 \rho_{N,2}^0(\br,\br') \, \rd\br \, \rd\br' \right)^{1/2}\left(\int_{\R^3} \int_{\R^3} |g(\br')|^2 \rho_{N,2}^0(\br,\br') \, \rd\br \, \rd\br' \right)^{1/2} \\
& = \frac{(N-1)}{2} \left( \int_{\R^3} |f|^2 \, \rho_N^0 \right)^{1/2} \left( \int_{\R^3} |g|^2 \, \rho_N^0 \right)^{1/2} \\
& \leq \frac{(N-1)}{2} \left\| \rho_N^0 \right\|_{L^\infty} \| f \|_{\cH_1} \| g \|_{\cH_1}.
\end{aligned}
\]
This shows that $\rho_{N,2}^0$ defines a bounded operator on~$\cH_1$, with operator norm lower or equal to $(N-1)\left\| \rho_N^0 \right\|_{L^\infty}/2$.

%---------------------------------------------------------------------------------------------------

\subsection{Proof of Theorem~\ref{thm:GM_formula}} 
\label{sec:proof:thm:GM_formula}

Since
\[
A_- f(\br_1,\dots,\br_{N-1}) = \sqrt{N} \int_{\R^3} f(\br) \Psi_N^0(\br,\br_1,\dots,\br_{N-1})\, \rd\br,
\] 
and introducing
\[
\Delta_{N-1} = \sum_{i=1}^{N-1} \Delta_{\br_i},
\]
it is easily seen that $(1-\Delta_{N-1}) A_-$ is an integral operator with kernel $\left[(1-\Delta_{N-1}) \Psi_N^0\right](\br_1,\dots,\br_{N-1}; \br)$. As $\Psi_N^0 \in H^2(\R^{3N})$, it follows that $(1-\Delta_{N-1}) A_- \in \fS_2(\cH_1,\cH_{N-1})$. Therefore, any operator of the form $A_-^\ast B A_-$, where the operator $B$ on~$\cH_{N-1}$ is such that $(1-\Delta_{N-1})^{-1/2} B (1-\Delta_{N-1})^{-1/2} \in \cB(\cH_{1})$, is trace-class. In particular, the operator
\[
\partial_\tau G_{\rm h}(\tau) \big|_{\tau=0^-} = - A_-^\ast (H_{N-1} - E_N^0 ) A_-,
\]
is trace-class.

Let us now compute more explicitly the action of this operator. Let 
\[
	h_1 := - \dfrac12 \Delta + v_\ext.
\]
We use the definition~\eqref{eq:annihilation_op} of $a(\overline{f})$, and obtain
\begin{align}
  \left( \sum_{i=1}^{N-1}h_1 (\br_i) \right) \left[ a(\overline{f})\Psi_{N}^0 \right](\br_1, \ldots, \br_{N-1})
  & = \sqrt{N} \int_{\mathbb{R}^{3}} {f}(\br_{N}) \left( \sum_{i=1}^{N-1} h_1(\br_{i}) \right)\Psi_{N}^0( \br_{1}, \ldots, \br_{N}) \, \rd \br_{N} \nonumber \\
  & = \sqrt{N} \int_{\mathbb{R}^{3}} {f}(\br_{N}) \left(H_{0,N}\Psi_{N}^0\right)( \br_{1}, \ldots, \br_{N}) \, \rd \br_{N} \label{eq:commutation_A_h} \\
  & \ - \sqrt{N} \int_{\mathbb{R}^{3}} (h_1f)(\br_{N}) \Psi_{N}^0( \br_{1}, \ldots, \br_{N}) \, \rd \br_{N}, \nonumber 
\end{align}
so that 
\begin{align*}
	\left( H_{N-1} A_- f\right) (\br_1, \ldots, \br_{N-1}) & = E_N^0 \left(A_- f\right)(\br_1, \ldots, \br_{N-1}) - \left(A_- h_1 f\right)(\br_1, \ldots, \br_{N-1})  \\
		& \quad - \sqrt{N} \sum_{i=1}^{N-1} \int_{\R^3} f(\br)\frac{\Psi_N^0(\br,\br_1,\dots,\br_{N-1})}{|\br-\br_i|} \, \rd\br. 
\end{align*}
Moreover, it is easily seen that, for any $\Phi_{N-1} \in \cH_{N-1}$,
\[
\left(A_-^\ast \Phi_{N-1}\right)(\br) = \sqrt{N} \int_{\R^{3(N-1)}} {\Psi_N^0(\br,\br_1,\dots,\br_{N-1})} \Phi_{N-1}(\br_1,\dots,\br_{N-1}) \, \rd\br_1\dots\rd\br_{N-1}.
\]
Therefore, 
\[
\left(A_-^\ast \left(H_{N-1}-E_N^0\right) A_- f\right)(\br) = - \left(\gamma_N^0 h_1 f\right)(\br) -  \int_{\R^{3}} K_N(\br,\br') f(\br') \, \rd\br',
\]
with
\[
K_N(\br,\br') = N \sum_{i=1}^{N-1} \int_{\R^{3(N-1)}} \frac{{\Psi_N^0(\br,\br_1,\dots,\br_{N-1})}\Psi_N^0(\br',\br_1,\dots,\br_{N-1})}{|\br'-\br_i|} \, \rd\br_1\dots\rd\br_{N-1}.
\]
Since we already know that the integral operator $K_N$ on $\cH_1$, with kernel $K_N(\br, \br')$ is trace-class, and that $\Psi_N^0$ is continuous and decays exponentially fast (see \textit{e.g.}~\cite{Fournais2002}), we have~\cite[Theorem A.2]{Simon2005}
\[
\begin{aligned}
\Tr_{\cH_1} (K_N) = \int_{\R^3} K_N(\br,\br) \, \rd\br & = N(N-1) \int_{\R^{3(N-1)}} \frac{\left|\Psi_N^0(\br,\br_1,\dots,\br_{N-1})\right|^2}{|\br-\br_1|} \, \rd\br_1\dots\rd\br_{N-1} \\
& = 2 \int_{\R^3}\int_{\R^3} \dfrac{\rho_{N,2}^0(\br, \br')}{| \br - \br' |} \ \rd \br \ \rd \br',
\end{aligned}
\]
where we recall that $\rho_{N,2}^0$ is the two-body density matrix defined in~\eqref{def:2body}. 

Finally, 
\[
\begin{aligned}
  \Tr_{\cH_1} \left( \partial_\tau G_{\rm h}(\tau) \big|_{\tau=0^-} \right) & = - \Tr_{\cH_1} \left( A_-^\ast(H_{N-1} - E_N^0)A_- \right) \\
  & =  \Tr_{\cH_1} \left( \gamma_N^0 h_1 \right) + 2\int_{\R^3}\int_{\R^3} \dfrac{\rho_{N,2}^0(\br, \br')}{| \br - \br' |} \ \rd \br \ \rd \br',
\end{aligned}
\]
which gives the claimed result in view of the following representation of the ground state energy:
\begin{equation} 
  \label{eq:EN0}
  E_N^0 = \bra \Psi_N^0 | H_N | \Psi_N^0 \ket = \Tr_{\cH_1} \left( h_1 \gamma_N^0 \right) + \int_{\R^3}\int_{\R^3} \dfrac{\rho_{N,2}^0(\br, \br')}{| \br - \br' |} \ \rd \br \ \rd \br'.
\end{equation}

%-----------------------------------------------------------
\subsection{Proof of Lemma~\ref{lem:B0}} 
\label{sec:proof:lem:B0}

%We recall that $L^{6/5}(\R^3) \hookrightarrow \cC$ and $\cC' \hookrightarrow L^6(\R^3)$. Furthermore, 
Proposition~\ref{prop:Psi_N^0} shows that $\rho_N^0 \in L^p(\R^3)$ for $1 \le p \le +\infty$. This implies that $\left( \sum_{i=1}^N v(\br_i) \right) \Psi_N^0$ belongs to~$\cH_N$ for $v \in \cC'$ since, by H\"older's inequality, and the inequality $\left( \sum_{i=1}^N v(\br_i) \right)^2 \le N \sum_{i=1}^N v(\br_i)^2$, it holds that
\begin{align*}
\int_{\R^{3N}} \left( \sum_{i=1}^N v(\br_i) \right)^2 |\Psi_N^0(\br_1,\dots,\br_N)|^2 \, \rd \br_1\dots\rd\br_N 
	& \le N \int_{\R^3} v^2 \rho_N^0 \leq  N \| v\|_{L^6}^2 \| \rho_N^0 \|_{L^{3/2}} \\
	& \leq N C_{\cC'} \| v\|_{\cC'}^2 \| \rho_N^0 \|_{L^{3/2}},
\end{align*}
where we have used the embedding $\cC' \hookrightarrow L^6(\R^3)$. Moreover, $\rho_N^0 \in L^{6/5}(\R^3) \hookrightarrow \cC$, so that $\left|\bra v, \rho_N^0 \ket_{\cC', \cC}\right| \le \| v \|_{\cC'} \| \rho_N^0 \|_{\cC}$. We therefore deduce that $B$ is a bounded operator from $\cC'$ to $\cH_N$, whose norm satisfies
\[
\left\| B \right\|_{\mathcal{B}(\cC',\cH_N)} \leq \sqrt{N C_{\cC'} \| \rho_N^0 \|_{L^{3/2}}} + \| \rho_N^0 \|_{\cC}.
\]
We finally have, for $v \in \cC'$,
\[
\left\bra \Psi_N^0 \left|\sum_{i=1}^N v(\br_i) \right| \Psi_N^0 \right\ket_{\cH_N} = \int_{\R^3} v(\br)  \rho_N^0(\br)\, \rd \br = \bra v, \rho_N^0 \ket_{\cC', \cC}\,
\]
from which we deduce that $\bra \Psi_N^0 | B v \ket_{\cH_N} = 0$. Since $v$ was arbitrary, we conclude that $B^* \Psi_N^0 = 0$. 

%-----------------------------------------------------------
\subsection{Proof of Theorem~\ref{lem:sumrule_chi}} 
\label{sec:proof:lem:sumrule_chi}

Consider $f,g \in C^\infty_c(\R^3, \C)$ (that is $C^\infty$ with compact supports). Using the fact that $(H_N - E_N^0) | \Psi_N^0 \ket = 0$, we obtain
\begin{align*}
  \bra \overline{f}, \vc^{-1} B^* (H_N - E_N^0) B g \ket_{\cC',\cC} & = \bra f \left|B^* (H_N - E_N^0) B \right|g \ket_{\cC'} = \left\langle Bf \left| H_N - E_N^0 \right| Bg \right\rangle_{\mathcal{H}_N} \\
&  = \left\langle \Psi_N^0 \left| \left(\sum_{j=1}^N \overline{f(\br_j)}\right) (H_N - E_N^0) \left(\sum_{i=1}^N g(\br_i)\right) \right| \Psi_N^0 \right\rangle_{\mathcal{H}_N}.
\end{align*}
We next observe that
\begin{align*}
& (H_N - E_N^0) \left( \sum_{i=1}^N g(\br_i) \Psi_N^0(\br_1,\dots,\br_N) \right) = \\
& \quad =  \left( - \frac12 \sum_{i=1}^N \Delta g(\br_i) \right) \Psi_N^0(\br_1,\dots,\br_N) - \sum_{i=1}^N \nabla g(\br_i) \cdot \nabla_{\br_i}\Psi_N^0(\br_1,\dots,\br_N), % + \left( \sum_{i=1}^N g(\br_i) \right) \underbrace{ | (H_N - E_N^0) \Psi_N^0 \ket}_{=0},
\end{align*}
so that, using the fact that $\Psi_N^0$ is real-valued (see Proposition~\ref{prop:Psi_N^0}),
\begin{align*}
  \left\bra f \left| B^* (H_N - E_N^0) B  \right| g \right\ket_{\cC'} 
  & = \sum_{i,j=1}^N \int_{\mathbb{R}^{3N}} \overline{f(\br_j)} \left( - \dfrac12 \Delta g(\br_i) \right) | \Psi_N^0(\br_1, \ldots \br_N) |^2 \, \rd \br_1 \ldots \rd \br_N \\
  & \quad - \sum_{i,j=1}^N \int_{\mathbb{R}^{3N}} \overline{f(\br_j)} \Psi_N^0(\br_1, \ldots, \br_N) \nabla g (\br_i)  \cdot \nabla_{\br_i} \Psi_N^0(\br_1, \ldots, \br_N) \, \rd \br_1 \ldots \rd \br_N \\
  & =  \dfrac12 \sum_{i=1}^N \int_{\mathbb{R}^{3N}} \overline{\nabla f(\br_j)} \cdot \nabla g(\br_i)  | \Psi_N^0(\br_1, \ldots \br_N) |^2 \rd \br_1 \ldots \rd \br_N \\
  & \quad + \dfrac12 \sum_{i,j=1}^N \int_{\mathbb{R}^{3N}} \overline{f(\br_j)} \nabla g (\br_i) \cdot \nabla_{\br_i} \left( \left| \Psi_N^0(\br_1, \ldots, \br_N)\right|^2 \right) \rd \br_1 \ldots \rd \br_N \\
  & \quad - \sum_{i,j=1}^N \int_{\mathbb{R}^{3N}} \overline{f(\br_j)} \Psi_N^0(\br_1, \ldots, \br_N) \nabla g (\br_i) \cdot \nabla_{\br_i} \Psi_N^0(\br_1, \ldots, \br_N) \rd \br_1 \ldots \rd \br_N \\
  & =  \dfrac12 \int_{\mathbb{R}^{3}} \overline{\nabla f(\br)} \cdot \nabla g(\br) \rho_N^0(\br) \rd \br.
\end{align*}
We deduce that $2 \vc^{-1} B^* (H_N - E_N^0) B = 2 \vc^{-1} B^* (H_N^\sharp - E_N^0) B = -\div \left( \rho_N^0 \nabla \cdot \right)$ as operators on the core~$C^\infty_c(\R^3, \C)$. We next observe that $\div \left( \rho_N^0 \nabla \cdot \right)$ can be extended as a bounded operator from~$\cC'$ to~$\cC$. Indeed, for $f,g \in C^\infty_c(\R^3, \C)$,
\[
	\left| \left\bra g,\div \left( \rho_N^0 \nabla f\right) \right\ket_{\cC',\cC} \right| \leq \left\| \rho_N^0 \right\|_{L^\infty} \| \nabla f\|_{L^2} \| \nabla g\|_{L^2} = 4\pi \| \rho_N^0 \|_{L^\infty} \| f\|_{\cC'} \| g\|_{\cC'},
\]
which shows that 
\[
\left\| \div \left( \rho_N^0 \nabla \cdot \right) \right\|_{\mathcal{B}(\cC',\cC)} \leq 4\pi \left\| \rho_N^0 \right\|_{L^\infty}.
\]
Therefore, $2 \vc^{-1} B^* (H_N^\sharp - E_N^0) B = -\div \left( \rho_N^0 \nabla \cdot \right)$ as bounded operators from~$\cC'$ to~$\cC$.

For the second part of the proof, we first note that it is sufficient to check the convergence in the case when $f = g \in C^\infty_c(\R^3, \R)$. It holds:
\[
\begin{aligned}
  & \left\bra \overline{f}, \vc^{-1}B^* \dfrac{H_N^\sharp - E_N^0}{(H_N^\sharp - E_N^0)^2 + \omega^2} \omega^2 Bf \right\ket_{\cC',\cC} - \bra \overline{f}, \vc^{-1} B^* (H_N^\sharp - E_N^0) Bf \ket_{\cC',\cC}  \\
  & \qquad \qquad = -\left\bra \overline{f}, \vc^{-1} B^*  \left( \dfrac{ (H_N^\sharp -E_N^0)^3 }{(H_N^\sharp  - E_N^0)^2 + \omega^2} \right) Bf \right\ket_{\cC',\cC} = -\int_0^{+\infty} \left( \dfrac{\lambda^3}{\lambda^2 + \omega^2} \right) \rd\left\|  P_\lambda^{H_N^\sharp - E_N^0} (Bf) \right\|_{\cH_N}^2, 
\end{aligned}
\]
where we used the spectral decomposition of $H_N^\sharp - E_N^0$ in the last equality. The integrand of the last integral converges pointwise to $0$ when $|\omega|\to +\infty$. It is also non-negative and uniformly bounded by $\lambda \1_{[0,+\infty)}(\lambda)$, which is integrable since 
\[
\int_0^{+\infty} \lambda \, \rd \left\| P_\lambda^{(H_N^\sharp - E_N^0)} (Bf) \right\|_{\cH_N}^2 = \left\langle \overline{f}, \vc^{-1} B^* \left(H_N^\sharp - E_N^0\right) Bf \right\rangle_{\cC',\cC} = \dfrac12 \int_{\mathbb{R}^{3}} \rho_N^0 | \nabla f |^2 < \infty.
\]
The weak convergence therefore follows from the dominated convergence theorem.

To prove the strong convergence, we use the following rewriting for $g \in \cC'$:
\[
\vc^{-1} B^*  \left( \dfrac{ (H_N^\sharp -E_N^0)^3 }{(H_N^\sharp - E_N^0)^2 + \omega^2} \right) Bg = \vc^{-1} B^* A_\omega Mg,
\]
where
\[
A_\omega = \dfrac{ (H_N^\sharp -E_N^0)^2 }{(H_N^\sharp - E_N^0)^2 + \omega^2} 
\]
strongly converge to~0 on~$\cH_N$, and 
\[
Mg := (H_N^\sharp - E_N^0) B g =  \sum_{i=1}^N \left( -\frac12 \Delta g(\br_i) \Psi_N^0 - \nabla g(\br_i) \cdot \nabla_{\br_i} \Psi_N^0 \right).
\]
When $g \in \cC'$ is such that $\Delta g \in L^2$, it holds that $Mg \in \cH_N$ (by a proof similar to the one in Section~\ref{sec:proof:lem:B0}), which allows us to conclude.

%%%%%%%%%%%%%%%%%%%%%%%%%%%%%%%%%%%%%%%%%
\subsection{Proof of Proposition \ref{lemma:time-ordered_G0F}}
\label{sec:proof:lemma:time-ordered_G0F} 

We first prove~\eqref{eq:G0+h1} and~\eqref{eq:G0h1}. We start from an expression similar to the one provided by Lemma~\ref{lemma:time-ordered_GF}:
\[
\widetilde{G_0}(z)  := A_{0,+} \left( z - (H_{0,N+1} - E_{0,N}^0) \right)^{-1} A_{0,+}^\ast  + A_{0,-}^\ast \left( z + H_{0,N-1} - E_{0,N}^0 \right)^{-1} A_{0,-} .
\]
Then, we notice that, for $f \in \cH_1$, it holds
\begin{equation}
  \label{eq:commutator_A0dagger_h1}
  \sum_{i=1}^{N+1} h_1(\br_i) \left( A_{0,+}^\ast f \right) = \sum_{i=1}^{N+1} h_1(\br_i) \left( a^\dagger(f)  | \Phi_N^0 \ket \right) = a^\dagger (h_1 f)  \ | \Phi_N^0 \ket + a^\dagger(f)  \left| \left( \sum_{i=1}^{N} h_1(\br_i) \right) \Phi_N^0 \right\ket,
\end{equation}
or, equivalently,
\[
	H_{0,N+1} A_{0,+}^\ast (f) = A_{0,+}^\ast \left( h_1 f \right) + E_{0,N}^0 A_{0,+}^\ast (f);
\]
so that
\begin{equation} \label{eq:resolvent_particle}
  \left( z - (H_{0,N+1} - E_{0,N}^0) \right) A_{0,+}^\ast = A_{0,+}^\ast (z - h_1).
\end{equation}
Hence, the particle part of~\eqref{eq:G0+h1} is a consequence of the equality $A_{0, +} A_{0,+}^\ast = \mathds{1}_\cH - \gamma_{0,N}^0$ (similarly to~(\ref{eq:A+bounded})). To handle the hole part, we use computations similar to~\eqref{eq:commutation_A_h}, and find
\[
H_{0,N-1} A_{0,-} (f) = E_{0,N}^0 A_{0,-} (f) - A_{0,-} \left( h_1 f \right).
\]
We deduce that
\begin{equation} \label{eq:resolvent_hole}
  (z + H_{0,N-1} - E_{0,N}^0) A_{0,-} = A_{0,-} (z - h_1),
\end{equation}
and we conclude using the fact that $A_{0,-}^\ast A_{0,-}  = \gamma_{0,N}^0$.
Combining (\ref{eq:resolvent_particle}) and (\ref{eq:resolvent_hole}) leads to $\widetilde{G}_0(z) (z - h_1) = A_{0,-}^\ast A_{0,-} + A_{0,+} A_{0,+}^\ast = \mathds{1}_{\cH_1}$. Upon replacing $z$ by $\overline{z}$ and passing to adjoints, it also follows $(z - h_1) \widetilde{G}_0(z) = \mathds{1}_{\cH_1}$. This shows that $\widetilde{G_0}(z) = (z - h_1)^{-1}$. \\

To prove the first assertion of Proposition~\ref{lemma:time-ordered_G0F}, we notice that the operator-valued functions $\tau \mapsto - \ri \Theta(\tau) \left(\mathds{1}_{\cH_1} - \gamma_{0,N}^0 \right) \re^{- \ri \tau h_1}$ and $\tau \mapsto G_{0,\rp}(\tau)$ have the same Laplace transforms (see Proposition~\ref{prop:resolvent}). We conclude that the two operators coincide since the Laplace transform is one-to-one. The proof for $G_{0,\rh}$ is similar.
	
%%%%%%%%%%%%%%%%%%%%%%%%%%%%%%%%%%%%%%%%%%%%
\subsection{Proof of Proposition~\ref{lemma:Hz}}
\label{sec:proof:lemma:Hz}

Let $z \in \C \setminus \R$. Let us first prove that $\widetilde G(z)$ is a one-to-one operator. Let $f \in \cH_1$ be such that $\widetilde G(z) f = 0$. From
\begin{align} \label{eq:ineq_Hz}
  \Im \left( \left\bra f \left| \widetilde G(z) \right| f \right\ket \right) = & \ \Im \left( \left\bra A_+^\ast f \left| \left( z - (H_{N+1} - E_{N}^0) \right)^{-1} \right| A_+^\ast f \right\ket_{\cH_{N+1}} \right. \\
  	& \quad +  \left.
  \left\bra A_- f \left| \left( z - (E_N^0 - H_{N-1}) \right)^{-1} \right| A_-f \right\ket_{\cH_{N-1}} \right) \nonumber \\
  = & - \Im(z) \int_\R \left( (\Re(z) + E_N^0 - \lambda)^2 + \Im(z)^2 \right)^{-1} \rd \left \| P_\lambda^{H_{N+1}} \left(A_+^\ast f \right)\right\|_{\cH_{N+1}}^2 \nonumber \\
  & - \Im(z) \int_\R \left( (\Re(z) - E_N^0 + \lambda)^2 + \Im(z)^2 \right)^{-1} \rd \left\| P_\lambda^{H_{N-1}} \left(A_- f \right)\right\|_{\cH_{N-1}}^2,
\end{align}
we deduce that both terms in the right-hand side must vanish. In particular, since $\Im(z) \neq 0$, it must hold $A_+^\ast f = A_-f = 0$. In view of the identity $A_-^\ast A_- + A_+ A_+^\ast = \mathds{1}_{\cH_1}$, this implies $f = 0$. Hence, $\widetilde G(z)$ is one-to-one. 

As a consequence, $\widetilde G(z)$ is an invertible operator from $\cH_1$ to its image $\widetilde D(z)$. Since $\left( \widetilde G(z) \right)^* = \widetilde G(\overline{z})$ is also one-to-one, $\widetilde D(z)$ is dense in $\cH_1$. 

Let us finally prove that $\widetilde D(z) \subset H^2(\R^3)$. We use to this end the equality~\eqref{eq:widetilde_G}. Let us consider the first term in this equality. A simple computation shows that, for any $\Phi_{N+1} \in \cH_{N+1}$,
\[
\left(A_+\Phi_{N+1}\right)(\br) = \sqrt{N+1} \int_{\R^{3N}} \Phi_{N+1}(\br,\br_1,\dots,\br_N) \Psi_N^0(\br_1,\dots,\br_N) \, \rd\br_1\dots\rd\br_N,
\]
so that $A_+$ is a bounded operator from $H^2(\R^{3N})\cap\cH_N$ to $H^2(\R^3)$. Since $A_+^\ast$ is a bounded operator from $\cH_1$ to $\cH_{N+1}$ and $(z-H_{N+1}+E_N^0)^{-1}$ is a bounded operator from $\cH_{N+1}$ to $H^2(\R^{3N})\cap\cH_N$, we deduce that $A_+(z-H_{N+1}+E_N^0)^{-1}A_+^\ast$ is a bounded operator from~$\cH_1$ to $H^2(\R^3)$. Similarly, for any $\Phi_{N-1} \in \cH_{N-1}$,
\[
\left(A_-^\ast \Phi_{N-1}\right)(\br) = \sqrt{N} \int_{\R^{3(N-1)}} \Phi_{N-1}(\br_2,\dots,\br_N) \Psi_N^0(\br,\br_2,\dots,\br_N) \, \rd\br_2\dots\rd\br_N,
\]
so that $A_-^\ast$ is a bounded operator from $\cH_{N-1}$ to $H^2(\R^3)$. This allows us to prove that $A_-^\ast (z+H_{N-1}-E_N^0)^{-1}A_-$ is a bounded operator from~$\cH_1$ to $H^2(\R^3)$. Finally, $\widetilde G(z)$ is a bounded operator from~$\cH_1$ to $H^2(\R^3)$, which proves that $\widetilde D(z) \subset H^2(\R^3)$.

%%%%%%%%%%%%%%%%%%%%%%%%%%%%%%%%%%
\subsection{Proof of Lemma~\ref{lem:P0}}
\label{proof:lemP0}

Let us first prove that $\left( P^{0,+}_\sym(\tau)\right)_{\tau \in \R}$ defines a bounded causal operator. The proof is similar for $\left( P^{0,-}_\sym(\tau)\right)_{\tau \in \R}$. We rely on the following result.

\begin{lemma} \label{lem:Avcf}
  For all $h \in L^6(\R^3)$, the operator $\gamma_{0,N}^0 h$ is a Hilbert-Schmidt operator on $\cH_1$, and there exists $K \in \R^+$ such that
  \[
  	\forall h \in L^6(\R^3), \quad \left\| \gamma_{0,N}^0 h \right\|_{\fS_2(\cH_1)} \le K \| h \|_{L^6}.
  \]
\end{lemma}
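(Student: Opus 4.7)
The plan is to decompose $\gamma_{0,N}^0 = \sum_{k=1}^N |\phi_k\rangle\langle\phi_k|$, as in~\eqref{eq:gamma0_projector}, and use the triangle inequality for the Hilbert--Schmidt norm to reduce the problem to estimating the $\fS_2$--norm of each rank--one operator $|\phi_k\rangle\langle\phi_k| \, h$. The product $\gamma_{0,N}^0 h$ will first be interpreted, for $h \in L^6(\R^3)$, as the composition of the multiplication operator by $h$, acting from $\cH_1 = L^2(\R^3)$ to some larger space of functions, with the bounded operator $\gamma_{0,N}^0$ (which, since the $\phi_k$'s are smooth and rapidly decaying, extends naturally to such larger spaces).

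Concretely, for fixed $k$ the operator $|\phi_k\rangle\langle\phi_k|\, h$ sends $f \in \cH_1$ to $\phi_k \int_{\R^3} \phi_k(\br') h(\br') f(\br') \, \rd\br'$ (recall that the $\phi_k$ are real--valued). This is the rank--one operator $\phi_k \otimes \overline{\phi_k h}$, whose Hilbert--Schmidt norm is simply
\[
\left\| |\phi_k\rangle\langle\phi_k|\, h \right\|_{\fS_2(\cH_1)} = \|\phi_k\|_{L^2} \, \|\phi_k h\|_{L^2}.
\]
By H\"older's inequality with exponents $3$ and $6$, $\|\phi_k h\|_{L^2} \leq \|\phi_k\|_{L^3} \, \|h\|_{L^6}$. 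Summing over $k = 1,\dots,N$ and using the triangle inequality yields
\[
\left\| \gamma_{0,N}^0 h \right\|_{\fS_2(\cH_1)} \leq \left( \sum_{k=1}^N \|\phi_k\|_{L^2} \, \|\phi_k\|_{L^3} \right) \|h\|_{L^6},
\]
which is the desired bound with $K := \sum_{k=1}^N \|\phi_k\|_{L^2} \|\phi_k\|_{L^3}$.

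It only remains to check that $K < \infty$, i.e. that each $\phi_k$ lies in $L^2 \cap L^3$. By construction the $\phi_k$'s are normalized eigenvectors of the one--body Hamiltonian $h_1$ associated with negative eigenvalues $\varepsilon_k < 0$. Since $h_1$ has domain $H^2(\R^3)$ (the perturbation by $v_\ext$ and the Hartree--type term $\rho_N^0 \ast |\cdot|^{-1}$ being relatively compact with respect to $-\tfrac12 \Delta$), each $\phi_k$ lies in $H^2(\R^3)$. The Sobolev embedding $H^2(\R^3) \hookrightarrow L^\infty(\R^3)$ combined with $\phi_k \in L^2(\R^3)$ gives $\phi_k \in L^p(\R^3)$ for every $p \in [2,+\infty]$ by interpolation, hence in particular $\|\phi_k\|_{L^3} < \infty$. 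This concludes the proof.

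There is no real obstacle here: the statement is a straightforward consequence of the finite--rank structure of $\gamma_{0,N}^0$ together with the regularity of the bound states $\phi_k$. The only mildly delicate point is the proper interpretation of the unbounded multiplication operator $h$ when composed with $\gamma_{0,N}^0$, but since $\phi_k h \in L^2(\R^3)$ by H\"older, the rank--one operators $|\phi_k\rangle\langle\phi_k|\, h$ are genuinely bounded (in fact Hilbert--Schmidt) on $\cH_1$, and the sum is well--defined.
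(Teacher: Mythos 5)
Your proof is correct, but it takes a genuinely different route from the paper. You exploit the explicit finite-rank structure $\gamma_{0,N}^0=\sum_{k=1}^N|\phi_k\rangle\langle\phi_k|$, reduce to rank-one operators whose Hilbert--Schmidt norm is computed exactly, and conclude by H\"older ($\|\phi_k h\|_{L^2}\le\|\phi_k\|_{L^3}\|h\|_{L^6}$) together with the regularity $\phi_k\in H^2(\R^3)\hookrightarrow L^\infty(\R^3)$, hence $\phi_k\in L^3$ — a fact the paper itself records elsewhere, so that step is sound, and your handling of the \emph{a priori} unbounded multiplication by $h$ (the pairing $\int\phi_k h f$ is well defined since $\phi_k h\in L^2$) is fine. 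The paper instead uses only that $\gamma_{0,N}^0$ is a projector with finite kinetic energy: it writes $\left\|\gamma_{0,N}^0 h\right\|_{\fS_2}^2=\Tr\left(\overline{h}\,\gamma_{0,N}^0\gamma_{0,N}^0 h\right)$, factors through $(1-\Delta)^{\pm 1/2}$, and invokes the Kato--Seiler--Simon inequality $\left\|(1-\Delta)^{-1/2}h\right\|_{\fS_6(\cH_1)}\le K\|h\|_{L^6}$ together with $(1-\Delta)^{1/2}\gamma_{0,N}^0(1-\Delta)^{1/2}\in\fS_1(\cH_1)$. The trade-off: your argument is more elementary (no Schatten-class machinery), but it relies on $\gamma_{0,N}^0$ being a finite-rank projector with $H^2$ eigenfunctions; the paper's argument never uses finite rank or eigenfunction regularity and so extends verbatim to any one-body density matrix with finite trace and kinetic energy (e.g.\ the interacting $\gamma_N^0$), which is the more robust statement in this many-body context.
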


\begin{proof}[Proof of Lemma~\ref{lem:Avcf}]
Since $\gamma_{0,N}^0$ is a projector, for $h \in L^6(\R^3)$, the operator
\[
\overline{h} \gamma_{0,N}^0 \gamma_{0,N}^0 h = \overline{h}(1-\Delta)^{-1/2}  (1-\Delta)^{1/2}\gamma_{0,N}^0(1-\Delta)^{1/2}  (1-\Delta)^{-1/2}h
\]
is the composition of $(1-\Delta)^{1/2}\gamma_{0,N}^0(1-\Delta)^{1/2} \in \fS_1(\cH_1)$ with the bounded operators $(1-\Delta)^{-1/2}h$ and $\overline{h}(1-\Delta)^{-1/2}$. In fact $(1-\Delta)^{-1/2}h \in \fS_6(\cH_1)$ with
\[
\left\| (1-\Delta)^{-1/2}h \right\|_{\fS_6(\cH_1)} \leq K \| h\|_{L^6}
\]
by the Kato-Seiler-Simon inequality~\cite{SS75,Simon2005}. Therefore, $\gamma_{0,N}^0 h \in \fS_2(\cH_1)$ with
\[
\left\| \gamma_{0,N}^0 h \right\|_{\fS_2(\cH_1)} \leq K \| h\|_{L^6},
\]
which concludes the proof.
\end{proof}

We now proceed to the proof of Lemma~\ref{lem:P0}.
We first note that, for $f, g \in C^\infty_c(\R^3, \C)$, it holds
\begin{equation} \label{eq:vc1/2dual}
	\bra f | \vc^{1/2} g \ket_{\cH_1} = \bra \vc^{-1/2} f | g \ket_{\cC} = \bra \vc^{1/2} \overline{f},  g \ket_{\cC', \cC}.
\end{equation}
In particular, for $\tau \in \R_\tau^+$, and for $f,g \in \cH_1$, we get 
\begin{align*}
	\bra f | P^{0,+}_\sym(\tau)  g \ket_{\cH_1} 
		& = \left\bra f \Big| \vc^{1/2} P^{0,+}(\tau) \vc^{1/2} g \right\ket_{\cH_1} 
		= - \ri \Theta(\tau) \left\bra \vc^{1/2} \overline{f} , G_{0, \rp}(\tau) \odot G_{0, \rh}(-\tau) \vc^{1/2} g \right\ket_{\cC', \cC} \\
		& = - \ri \Theta(\tau) \Tr_{\cH_1} \left[G_{0, \rp}(\tau) \left( \vc^{1/2} g \right) G_{0, \rh}(-\tau) \left( \vc^{1/2} \overline{f} \right) \right].
\end{align*}
Let us prove that $G_{0, \rp}(\tau) \left( \vc^{1/2} g \right) G_{0, \rh}(-\tau) \left( \vc^{1/2} \overline{f} \right)$ is indeed a trace-class operator. Replacing $G_{0, \rp}$ and $G_{0, \rh}$ by their expressions found in Proposition~\ref{lemma:time-ordered_G0F}, and owing to the fact that $\gamma_{0,N}^0$ is a projector commuting with $h_1$, we obtain
\begin{align*}
	\left| \bra f | P_\sym^{0,+}(\tau)  | g \ket \right| 
		& = \left| \Tr_{\cH_1} \left( (\mathds{1}_{\cH_1} - \gamma_{0,N}^0 ) \re^{- \ri \tau h_1} \left( \vc^{1/2}  g\right) \gamma_{0,N}^0 \re^{\ri \tau h_1} \gamma_{0,N}^0 \gamma_{0,N}^0 \left( \vc^{1/2} \overline{f} \right) \right) \right| \\
		& \le \left\| (\mathds{1}_{\cH_1} - \gamma_{0,N}^0 ) \re^{- \ri \tau h_1} \right\|_{\cB(\cH_1)} \left\| \re^{\ri \tau h_1} \right\|_{\cB(\cH_1)} \\
			& \qquad \times \left\| \gamma_{0,N} \left( \vc^{1/2} \overline{f} \right) \right\|_{\cB(\cH_1)} \left\| \gamma_{0,N} \left( \vc^{1/2} g \right) \right\|_{\cB(\cH_1)}  \| \gamma_{0,N}^0 \|_{\fS_1(\cH_1)}.
\end{align*}
According to Lemma \ref{lem:vc}, $\vc^{1/2} \overline{f} \in \cC' \hookrightarrow L^6(\R^3)$. Therefore, $\gamma_{0,N}^0 {\left( \vc^{1/2} f \right)} \in \fS_2(\cH_1)$ by Lemma~\ref{lem:Avcf}, hence is bounded, with
\[
\left\| \gamma_{0,N}^0 {\left( \vc^{1/2} f \right)} \right\|_{\cB(\cH_1)} \le
\left\| \gamma_{0,N}^0 {\left( \vc^{1/2} f \right)} \right\|_{\fS_2(\cH_1)} \le C \left\| \vc^{1/2} \overline{f} \right\|_{L^6} \leq \widetilde{C} \| f \|_{\cH_1}.
\]
Similarly, $\left\| \left( \vc^{1/2} g \right) \gamma_{0,N}^0 \right\|_{\cB(\cH_1)} \leq \widetilde{C} \| g \|_{\cH_1}$. Altogether, we found a constant $C \in \R^+$ independent of~$\tau$ such that
\[
	\forall f, g \in \cH_1, \quad \left| \bra f | P_\sym^{0,+}(\tau) | g \ket \right| \le C \| f \|_{\cH_1} \| g\|_{\cH_1},
\]
which proves that $\left( P_\sym^{0,+}(\tau) \right)_{\tau \in \R}$ is a bounded causal operator on $\cH_1$.\\

Let us now prove that $\phi_k$ is a bounded operator from $\cH_1$ to $\cC$. Recall that $\phi_k$ is real-valued and $\phi_k \in H^2(\R^3) \subset L^2(\R^3, \R) \cap L^\infty(\R^3, \R)$ for $1 \le k \le N$. For $f \in \cH_1$, we obtain
\[
	\left\| \phi_k f \right\|_{\cC} \le C \left\| \phi_k f \right\|_{L^{6/5}} \le C \left\| \phi_k \right\|_{L^{3}} \| f \|_{\cH_1},
\]
where $C$ is a constant independent of $f$. The proof that $\phi_k$ is also a bounded operator from $\cC'$ to $\cH_1$ is similar, noticing that $\cC' \hookrightarrow L^6(\R^3)$. We now use~\eqref{eq:gamma0_projector}, and find that, for $f,g \in \cH_1$ and for~$\tau \in \R^+_\tau$,
\begin{align}
	\bra f | P^{0,+}_\sym(\tau) | g \ket 
		& = - \ri \sum_{k=1}^N \Tr_{\cH_1} \left( (\mathds{1}_{\cH_1} - \gamma_{0,N}^0 ) \re^{- \ri \tau h_1} \left( \vc^{1/2}  g\right) | \phi_k \ket  \re^{\ri \tau \varepsilon_k} \bra \phi_k |  \left( \vc^{1/2} \overline{f} \right) \right) \nonumber \\
		& = - \ri \sum_{k=1}^N \Tr_{\cH_1} \left( (\mathds{1}_{\cH_1} - \gamma_{0,N}^0 ) \re^{- \ri \tau (h_1 - \varepsilon_k)} \left| \left( \vc^{1/2}  g\right) \phi_k \right\ket \left\bra \left( \vc^{1/2} f \right)\phi_k \right| \right) \nonumber \\
		& = - \ri \sum_{k=1}^N \left\bra \left( \vc^{1/2} f \right)\phi_k \right| (\mathds{1}_{\cH_1} - \gamma_{0,N}^0 ) \re^{- \ri \tau (h_1 - \varepsilon_k)} (\mathds{1}_{\cH_1} - \gamma_{0,N}^0 ) \left| \left( \vc^{1/2}  g\right) \phi_k \right\ket, \label{eq:fP0+g}
\end{align}
which gives~\eqref{eq:explicitP0}. \\

We finally prove that $P_\sym^{0,-}(\tau) = P_\sym^{0,+}(-\tau)$. Performing similar calculation as for $P_\sym^{0,+}$, we find that, for $f,g \in \cH_1$ and for $\tau < 0$,
\begin{align*}
	\bra f | P^{0,-}_\sym(\tau) | g \ket & = - \ri \sum_{k=1}^N \left\bra \left( \vc^{1/2} \overline{g} \right) \phi_k \right|(\mathds{1}_{\cH_1} - \gamma_{0,N}^0 ) \re^{\ri \tau (h_1 - \varepsilon_k)} \left| \left( \vc^{1/2} \overline{f} \right) \phi_k \right\ket \\
		& =  \bra \overline{g} | P^{0,+}_\sym(-\tau) | \overline{f} \ket.
\end{align*}
For a bounded operator $A \in \cB(\cH_1)$ and for $f,g \in \cH_1$, it holds $
	\bra f | A g \ket = \overline{\bra Ag | f \ket} = \bra \overline{g} | \overline{A^\ast f} \ket
$, so that, since the functions $\phi_k$ are real-valued for $1 \le k \le N$,
\begin{equation} \label{eq:P0-_P0+}
	\bra f | P^{0,-}_\sym(\tau) | g \ket = \left\bra f \Big| \overline{\left( P^{0,+}_\sym (- \tau) \right)^\ast \overline{g}} \right\ket.
\end{equation}
Since $h_1$ is real-valued, in the sense that $h_1 f$ is real-valued whenever $f$ is real-valued, we easily get that
\[
	\forall f \in \cH_1, \quad \overline{(\mathds{1}_{\cH_1} - \gamma_{0,N}^0 ) \re^{ - \ri \tau h_1 } f} = (\mathds{1}_{\cH_1} - \gamma_{0,N}^0 ) \re^{\ri \tau h_1} \overline{f},
\]
so that, 
\[
	\overline{\left( P^{0,+}_\sym (- \tau) \right)^\ast \overline{g}} = P^{0,+}_\sym (- \tau) {g}.
\]
Together with~\eqref{eq:P0-_P0+}, this proves $P^{0,-}(\tau) = P^{0,+}( - \tau)^\ast$.
%%%%%%%%%%%%%%%%%%%%%%%%%%%%%%%%%%

%%%%%%%%%%%%%%%%%%%%%%%%%%%%%%%%%%
\subsection{Proof of Proposition~\ref{prop:propertiesP0}}
\label{proof:propertiesP0}

The expression for $\widetilde{P^0_\sym}$ in~\eqref{eq:explicit_widetildeP} comes from the expression for $\widetilde{P^0_\sym}$ in~\eqref{eq:explicit_tildeP+}.
%, the fact that $\widetilde{P^{0,-}_\sym}(\ri \omega) = \widetilde{P^{0,+}_\sym}(-\ri \omega)$, and the identity
%\[
%	\dfrac{1}{\ri \omega - h_1 + \varepsilon_k} + \dfrac{1}{- \ri \omega - h_1 + \varepsilon_k} = -\dfrac{2 (h_1 - \varepsilon_k)}{\omega^2 + (h_1 - \varepsilon_k)^2}.
%\]
Since for $k \le N$, it holds $\varepsilon_k \le \varepsilon_N$, we obtain
\begin{equation} \label{eq:ineq:h1_epsk}
	\forall 1 \le k \le N, \quad (\mathds{1}_{\cH_1} - \gamma_{0,N}^0 ) (h_1 - \varepsilon_k) \ge \varepsilon_{N+1} - \varepsilon_N > 0.
\end{equation}
From~\eqref{eq:explicit_widetildeP}, we deduce that $\widetilde{P^0_\sym}(\ri \omega)$ is a negative bounded operator for all $\omega \in \R_\omega$. The self-adjointness comes from Remark~\ref{rem:phik}. \\

The bound~\eqref{eq:int_fP0f} is proved similarly as in Lemma~\ref{lem:ReGp}. Let us now prove~\eqref{eq:ineq_P0}. From~\eqref{eq:ineq:h1_epsk}, it holds that, for all $1 \le k \le N$,
\[
	0 \le  (\mathds{1}_{\cH_1} - \gamma_{0,N}^0 ) \dfrac{h_1 - \varepsilon_k}{\omega^2 + (h_1 - \varepsilon_k)^2}
	\le
	\sup_{E \ge \varepsilon_{N+1} - \varepsilon_N} \left( \dfrac{E}{\omega^2 + E^2}  \right) = 
	\left\{ \begin{array}{l}
		\dfrac{1}{2 \omega} \quad  \text{if} \quad \omega \ge \varepsilon_{N+1} - \varepsilon_N \\
		\dfrac{\varepsilon_{N+1} - \varepsilon_N}{\omega^2 + (\varepsilon_{N+1} - \varepsilon_N)^2} \quad \text{otherwise}.
	\end{array} \right.
\]
In particular, there exists a constant $C \in \R^+$ such that
\[
	\forall 1 \le k \le N, \quad \forall \omega \in \R_\omega, \quad 
	0 \le  (\mathds{1}_{\cH_1} - \gamma_{0,N}^0 ) \dfrac{h_1 - \varepsilon_k}{\omega^2 + (h_1 - \varepsilon_k)^2} 
	\le \dfrac{C}{\sqrt{\omega^2 + 1}}.
\]
Using the fact that for $1 \le k \le N$, $\phi_k$ is real-valued, with $\sum_{k=1}^N \phi_k^2 = \rho_{0,N}^0$, we obtain
\[
	\forall \omega \in \R_\omega, \quad  0 \le - \widetilde{P_\sym^0}(\ri \omega) 
	\le \dfrac{2C}{\sqrt{\omega^2 + 1}} \sum_{k=1}^N \vc^{1/2}  \phi_k^2 \vc^{1/2}
	= \dfrac{2C}{\sqrt{\omega^2 + 1}} \vc^{1/2} \rho_{0,N}^0 \vc^{1/2},
\]
which proves~\eqref{eq:ineq_P0}. The fact that $\vc^{1/2} \rho_{0,N}^0 \vc^{1/2}$ is indeed a bounded self-adjoint operator on $\cH_1$ comes from the fact that $v_c^{1/2}$ is a bounded operator from $\cC$ to $\cH_1$ and from $\cH_1$ to $\cC'$, and that the operator of multiplication by $\phi_k$ is a bounded operator from $\cC'$ to $\cH_1$ and from $\cH_1$ to $\cC$. Together with the fact that the function $\omega \mapsto (\omega^2 + 1)^{-1/2}$ is in $L^p(\R_\omega)$ for all $p> 1$, this implies that $\widetilde{P^0_\sym} (\ri \cdot) \in L^p(\R_\omega, \cS(\cH_1))$ for all $p > 1$. The analyticity of this map is straightforward.

%%%%%%%%%%%%%%%%%%%%%%%%%%%%%%%%%%

%%%%%%%%%%%%%%%%%%%%%%%%%%%%%%%%%%%%%%%%%%%%%%%%%%%

\subsection{Proof of Theorem~\ref{th:sumrule_P0}}
\label{proof:sumrule_P0}

Let us prove the equality $2 \sum_{k=1}^N \phi_k (\mathds{1}_{\cH_1} - \gamma_{0,N}^0) (h_1 - \varepsilon_k) \phi_k = - \div \left( \rho_{0,N}^0 \nabla \cdot \right)$, as operators from $\cC'$ to $\cC$. We first note that, since $\phi_k \in L^4(\R^3)$ for $1 \le k \le N$, it holds $\phi_k \phi_l \in \cH_1$ for $1 \le k,l \le N$. In particular,
\begin{align*}
	\sum_{k=1}^N \phi_k \gamma_{0,N}^0 (h_1 - \varepsilon_k) \phi_k = \sum_{k=1}^N \sum_{l=1}^N | \phi_k \phi_l \ket (\varepsilon_l - \varepsilon_k) \bra \phi_l \phi_k | = 0,
\end{align*}
so that
\begin{equation} \label{eq:without_gamma0N}
	2 \sum_{k=1}^N \phi_k (\mathds{1}_{\cH_1} - \gamma_{0,N}^0) (h_1 - \varepsilon_k) \phi_k = 2 \sum_{k=1}^N \phi_k ( h_1 - \varepsilon_k) \phi_k.
\end{equation}
Consider now $f,g \in C^\infty_c(\R^3, \C)$. In view of the equality
\begin{align*}
	(h_1 - \varepsilon_k) \left( \phi_k g\right) = \phi_k \left( - \frac12 \Delta g \right) - \nabla \phi_k \cdot \nabla g,
\end{align*}
it follows that
\begin{align*}
	2 \left\bra f \Bigg| \sum_{k=1}^N \phi_k (h_1 - \varepsilon_k) \phi_k \Bigg| g \right\ket_{\cH_1} 
	& =  \left\bra f \Bigg| \sum_{k=1}^N \phi_k^2 \left( -  \Delta g \right) \right\ket_{\cH_1} - 2\left\bra f \Bigg| \sum_{k=1}^N \phi_k \nabla \phi_k \cdot \nabla g \right\ket_{\cH_1}  \\
		& = \bra f | \rho_{0,N}^0 ( - \Delta g) \ket_{\cH_1} - \bra f |  \nabla \rho_{0,N}^0  \cdot \nabla g \ket_{\cH_1} 
		= \int_{\R^3} \rho_{0,N}^0 \overline{\nabla f} \cdot  \nabla g,
\end{align*}
where we used an integration by part to obtain the last equality. Together with~\eqref{eq:without_gamma0N}, we obtain that the operators $2 \sum_{k=1}^N \phi_k (\mathds{1}_{\cH_1} - \gamma_{0,N}^0) (h_1 - \varepsilon_k) \phi_k$ and  $- \div \left( \rho_{0,N}^0 \nabla \cdot \right)$ are equal on the core~$C^\infty_c(\R^3, \C)$. 
The end of the proof is similar to the one of Theorem~\ref{lem:sumrule_chi}. \\

%%%%%%%%%%%%%%%%%%%%%%%%%%%%%%%%%%%%%%%%%%%%%%%%%%%

%%%%%%%%%%%%%%%%%%%%%%%%%%%%%%%%%%
\subsection{Proof of Lemma~\ref{lem:widetilde_odot_odot}}
\label{proof:widetilde_odot_odot}

It is sufficient to prove that, for any $\nu' < \varepsilon_{N+1}$,  $\nu > \varepsilon_N - \nu'$,   $\omega, \omega' \in \R_\omega$, and~$f, g \in \cH_1$, it holds
\begin{align*} \label{eq:with_widetilde_odot}
	&\Tr_{\cH_1} \left(  \left(v_c^{1/2} \overline{f}\right) \widetilde{G_{0,\rh}}(\nu + \nu' + \ri (\omega + \omega')) \left( v_c^{1/2} g\right) \widetilde{G_{0, \rp}}(\nu' + \ri \omega') \right) \\
	&\qquad 
	=
	\Tr_{\cH_1} \left( \widetilde{G_{0,\rh}}(\nu + \nu' + \ri (\omega + \omega')) \left( v_c^{1/2} g \right) \widetilde{G_{0, \rp}}(\nu' + \ri \omega') \left( v_c^{1/2} \overline{f} \right) \right)
\end{align*}

We will only consider the case $\nu = 0$, $\nu' = \mu_0$, $\omega = \omega' = 0$ for simplicity, the other cases being similar. We rely on the fact that if $A, B \in \cB(\cH_1)$ are such that $AB$ and $BA$ are trace-class operators, then $\Tr(AB) = \Tr(BA)$~\cite{Simon2005}. In our case, we consider $f, g \in C^\infty_c \subset \cH_1$, so that $f_1 := v_c^{1/2}f$ and $g_1 := v_c^{1/2}g$ are in $\cC' \cap L^\infty$, and we set $A = \overline{f_1}$ and $B = \widetilde{G_{0,\rh}}(\mu_0) g_1 \widetilde{G_{0, \rp}}(\mu_0)$. The operators $A$ and $B$ are bounded operators on $\cH_1$. Moreover, from Proposition~\eqref{lemma:time-ordered_G0F}, we get 
\[
	BA = \widetilde{G_{0,\rh}}(\mu_0) g_1 \widetilde{G_{0, \rp}}(\mu_0) \overline{f_1} = \gamma_{0,N}^0 \left( \dfrac{1}{\mu_0 - h_1} g_1 \right) \left( \dfrac{ \mathds{1}_{\cH_1} - \gamma_{0,N}^0 }{\mu_0 - h_1}\overline{f_1} \right).
\]
It holds $\gamma_{0,N}^0 \in \fS_1(\cH_1)$. Also, from the definition of $\mu_0$, it is easy to see that there exists $0 < c \le C < \infty$ such that
\[
	c(1 - \Delta) \le \left| \mu_0  - h_1 \right| \le C (1 - \Delta).
\]
In particular, the operator $(\mu_0 - h_1)^{-1} g_1 = \left[ (\mu_0-h_1)^{-1} (1 - \Delta) \right] \left[(1 - \Delta)^{-1}g_1 \right]$ is the composition of two bounded operators. Actually, the operator $(1 - \Delta)^{-1}g_1$ is in the Schatten class $\fS_6(\cH_1)$, thanks to the Kato-Seiler-Simon inequality~\cite{SS75,Simon2005}, and it holds
\begin{equation} \label{eq:KSS_g}
	\left\| (1 - \Delta)^{-1} g_1 \right\|_{\cB(\cH_1)} \le \left\| (1 - \Delta)^{-1} g_1 \right\|_{\fS_6(\cH_1)} \le C \| g_1 \|_{L^6} \le C' \| g \|_{\cH_1},
\end{equation}
where $C' \in \R^+$ is a constant independent of $g$. Similarly, $(\mathds{1} - \gamma_{0,N})(\mu_0 - h_1)^{-1} \overline{f_1}$ is a bounded operator, satisfying estimates similar to~\eqref{eq:KSS_g}.
Altogether, we deduce that, for all $f,g \in C^\infty_c(\R^3)$,
\[
	\widetilde{G_{0,\rh}}(\mu_0) \left( v_c^{1/2} g \right) \widetilde{G_{0, \rp}}(\mu_0) \left( v_c^{1/2} \overline{f} \right) \in \fS_1(\cH_1),
\]
with
\begin{equation} \label{eq:BA_continuous}
	\left\| \widetilde{G_{0,\rh}}(\mu_0) \left( v_c^{1/2} g \right) \widetilde{G_{0, \rp}}(\mu_0) \left( v_c^{1/2} \overline{f}  \right) \right\|_{\fS_1(\cH_1)} \le C\| g \|_{\cH_1}  \| f \|_{\cH_1},
\end{equation}
where $C \in \R^+$ is a constant independent of $f$ and $g$. The proof that $AB = \widetilde{G_{0,\rp}}(\mu_0) g_1 \widetilde{G_{0, \rh}}(\mu_0) \overline{f}_1$ is trace-class is similar. As a result, we deduce that for any $f, g \in C^\infty_c$, 
\begin{equation*}
	\Tr_{\cH_1} \left(  \left(v_c^{1/2} \overline{f}\right) \widetilde{G_{0,\rh}}(\mu_0) \left( v_c^{1/2} g\right) \widetilde{G_{0, \rp}}(\mu_0) \right)
	=
	\Tr_{\cH_1} \left( \widetilde{G_{0,\rh}}(\mu_0) \left( v_c^{1/2} g \right) \widetilde{G_{0, \rp}}(\mu_0) \left( v_c^{1/2} \overline{f} \right) \right).
\end{equation*}
The proof for the general case $f, g \in \cH_1$ is deduced by density from the estimate~\eqref{eq:BA_continuous}.

%%%%%%%%%%%%%%%%%%%%%%%%%%%%%%%%%%

%%%%%%%%%%%%%%%%%%%%%%%%%%%%%%%%%%
\subsection{Proof of Proposition~\ref{prop:fg_fs}}
\label{proof:fg_fs}

Let us first prove that $\fs$ is a bounded linear operator. Let $\widetilde{G^\app}(\mu_0 + \ri \cdot) \in L^2(\R_\omega, \cB(\cH_1))$. For $f,g \in \cH_1$ and $\omega \in \R_\omega$,
\begin{align*}
	\left\bra f \Big| \fs \left[ \widetilde{G^\app}\right](\mu + \ri \omega) \Big| g \right\ket = \bra f | K_x | g \ket - \dfrac{1}{2 \pi} \int_{-\infty}^{+\infty} \Tr_{\cH_1} \left( \widetilde{G^\app} \big( \mu_0 + \ri (\omega + \omega') \big) g \widetilde{W_c^0}( \ri \omega') \overline{f} \right) \rd \omega'.
\end{align*}
Let us first treat the exchange part $K_x$, and prove that it is a Hilbert-Schmidt operator. From the definition~\eqref{eq:Kx}, $K_x$ is an integral operator, and, from Proposition~\ref{prop:Psi_N^0}, its kernel satisfies
\[
\int_{\R^3} \int_{\R^3} \left| K_x(\br, \br')\right|^2 \rd \br \, \rd \br' \le \int_{\R^3} \int_{\R^3} \dfrac{\rho_{0,N}^0(\br) \rho_{0,N}^0(\br')}{| \br - \br'|^2} \rd \br \, \rd \br' < \infty,
\]
where the last inequality comes from the fact that $\rho_{0,N}^0 \ast |\cdot|^{-2} \in L^\infty(\R^3)$ (since $\rho_{0,N}^0 \in L^1(\R^3) \cap L^\infty(\R^3)$, while $|\cdot|^{-2} \in L^1(\R^3) + L^{\infty}(\R^3)$) and $\rho_{0,N}^0 \in L^1(\R^3)$. We conclude that $K_x$ is a Hilbert-Schmidt operator, hence is bounded. \\

For the correlation part, we use the following lemma.

\begin{lemma} \label{lem:gW0f}
	For all $f,g \in \cH_1$ and all $\omega \in \R_\omega$, the operator $g \widetilde{W^0_c}(\ri \omega) \overline{f}$ is trace-class, and
	\begin{equation} \label{eq:ineq_fWg}
		\exists C \in \R^+, \quad \forall f, g \in \cH_1, \quad \left\| g  \widetilde{W^0_c}(\ri \omega) \overline{f} \right\|_{\fS_1(\cH_1)} \le \dfrac{C}{(\omega^2 + 1)^{1/2}} \| f \|_{\cH_1} \| g \|_{\cH_1}.
	\end{equation}
\end{lemma}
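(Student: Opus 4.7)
The plan is to factor $\widetilde{\chi_\sym^0}(\ri\omega)$ via the spectral theorem. Since this operator is bounded, self-adjoint and non-positive on $\cH_1$ (by Proposition~\ref{prop:propertiesP0} and the inequality~\eqref{eq:estimate_chi0sym}), I set $A_\omega := \left(-\widetilde{\chi_\sym^0}(\ri\omega)\right)^{1/2}$, a bounded, non-negative self-adjoint operator on $\cH_1$. Using $\widetilde{W_c^0}(\ri\omega) = v_c^{1/2} \widetilde{\chi_\sym^0}(\ri\omega) v_c^{1/2}$, this yields the decomposition
\[
g \widetilde{W_c^0}(\ri\omega) \overline{f} = -\left(g v_c^{1/2} A_\omega\right)\left(A_\omega v_c^{1/2} \overline{f}\right).
\]
By the standard inequality $\|AB\|_{\fS_1} \le \|A\|_{\fS_2} \|B\|_{\fS_2}$ and the observation that $g v_c^{1/2} A_\omega = (A_\omega v_c^{1/2} \overline{g})^*$ is the adjoint of an operator of the same type, it suffices to prove that $A_\omega v_c^{1/2} \overline{f}$ is Hilbert-Schmidt with $\|A_\omega v_c^{1/2} \overline{f}\|_{\fS_2} \le C (\omega^2+1)^{-1/4} \|f\|_{\cH_1}$, and the analogous bound for $\overline{g}$.

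The key estimate is then obtained by combining the operator inequality~\eqref{eq:ineq_P0} with a direct kernel computation. Formally,
\[
\left\|A_\omega v_c^{1/2} \overline{f}\right\|_{\fS_2}^2 = \Tr_{\cH_1}\!\left(f v_c^{1/2} A_\omega^2 v_c^{1/2} \overline{f}\right),
\]
and applying $A_\omega^2 = -\widetilde{\chi_\sym^0}(\ri\omega) \le C (\omega^2+1)^{-1/2} v_c^{1/2} \rho_{0,N}^0 v_c^{1/2}$ majorizes the right-hand side by $C(\omega^2+1)^{-1/2} \left\|(\rho_{0,N}^0)^{1/2} v_c \overline{f}\right\|_{\fS_2}^2$. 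The remaining Hilbert-Schmidt norm is computed from the integral kernel
\[
\left[(\rho_{0,N}^0)^{1/2} v_c \overline{f}\right](\br,\br') = \frac{(\rho_{0,N}^0)^{1/2}(\br)\, \overline{f}(\br')}{|\br-\br'|},
\]
giving
\[
\left\|(\rho_{0,N}^0)^{1/2} v_c \overline{f}\right\|_{\fS_2}^2 = \int_{\R^3} |f(\br')|^2 \left(\rho_{0,N}^0 \ast |\cdot|^{-2}\right)(\br') \, \rd\br'.
\]
Since $\rho_{0,N}^0 \in L^1(\R^3) \cap L^\infty(\R^3)$ by (the analogue of) Proposition~\ref{prop:Psi_N^0} and $|\cdot|^{-2} \in L^1_\loc(\R^3) + L^\infty(\R^3)$, the convolution $\rho_{0,N}^0 \ast |\cdot|^{-2}$ lies in $L^\infty(\R^3)$, whence the bound by a constant times $\|f\|_{\cH_1}^2$.

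The main obstacle is justifying the operator-inequality reduction rigorously, because $v_c^{1/2}\overline{f}$ and $v_c\overline{f}$ do not \emph{a priori} define bounded operators on $\cH_1$ when $f$ is only in $\cH_1 = L^2(\R^3)$ (one generally would need $f \in L^3$ for $\overline{f}\psi \in L^{6/5} \hookrightarrow \cC$). The cleanest way around this is to reverse the logical order: first establish directly from the kernel computation above that $(\rho_{0,N}^0)^{1/2} v_c \overline{f}$ extends from a dense core to a Hilbert-Schmidt operator on $\cH_1$ with the displayed norm control; then introduce the densely defined quadratic form $\psi \mapsto \left\langle v_c^{1/2}\overline{f}\psi, A_\omega^2 v_c^{1/2}\overline{f}\psi\right\rangle_{\cH_1}$ on the core $\{\psi \in \cH_1 : \overline{f}\psi \in \cC\}$, bound it pointwise by its analogue with $A_\omega^2$ replaced by $C(\omega^2+1)^{-1/2} v_c^{1/2}\rho_{0,N}^0 v_c^{1/2}$ via~\eqref{eq:ineq_P0}, and finally recover the Hilbert-Schmidt bound for $A_\omega v_c^{1/2}\overline{f}$ by summing this quadratic form along an orthonormal basis contained in the core. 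The trace-class estimate~\eqref{eq:ineq_fWg} then follows by combining the two Hilbert-Schmidt factors.
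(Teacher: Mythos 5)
Your argument is correct and follows essentially the same route as the paper's proof: factorization through $\sqrt{-\widetilde{\chi^0_\sym}(\ri\omega)}$ combined with the Schatten--H\"older inequality, the domination $0\le -\widetilde{\chi^0_\sym}(\ri\omega)\le C(\omega^2+1)^{-1/2}\vc^{1/2}\rho_{0,N}^0\vc^{1/2}$, and the Hilbert--Schmidt kernel computation using $\rho_{0,N}^0\ast|\cdot|^{-2}\in L^\infty$ (the paper simply treats the diagonal case $f=g$ first and then factorizes, which is your argument in reverse order). Your additional care with the dense core on which $\vc^{1/2}\overline{f}$ is defined addresses a point the paper passes over silently and is a clarification rather than a different approach.
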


\begin{proof}[Proof of Lemma~\ref{lem:gW0f}]
%Let $f,g \in \cH_1$ and $\psi, \phi \in C^\infty_c(\R^3, \C)$ so that $f \psi$ and $g \phi$ are in $\cC$. For $\omega \in \R_\omega$, it holds that
%\begin{equation} \label{eq:W_is_chi}
%	\left\bra \widetilde{W^0_c}(\ri \omega) \left(\overline{f \psi} \right) , \left(g \phi \right) \right\ket_{\cC', \cC} = \left\bra v_c^{1/2} \left( f \psi \right) \left| \widetilde{\chi^0_\sym}(\ri \omega) \right|  v_c^{1/2} \left(g \phi \right) \right\ket.
%\end{equation}
We first prove~\eqref{eq:ineq_fWg} for $f = g \in \cH_1$. Let $\psi \in C^\infty_c(\R^3, \C)$, we have
\[
	\left\bra \psi \left| {f} \widetilde{W^0_c}(\ri \omega) \overline{f} \right|  \psi  \right\ket 
	%\left\bra \widetilde{W^0_c}(\ri \omega) \left( \overline{f \psi}\right) , \left(f \psi \right)\right\ket_{\cC', \cC} 
	%= \left\bra v_c^{1/2} \left( f \psi \right) \left| \widetilde{\chi^0_\sym}(\ri \omega) \right|  v_c^{1/2} \left( f \psi \right) \right\ket
	= \left\bra  \psi  \left| \overline{f} v_c^{1/2} \widetilde{\chi^0_\sym}(\ri \omega) v_c^{1/2}  f \right| \psi  \right\ket
	,
\]
and we infer from~\eqref{eq:estimate_chi0sym} that
\[
	\forall f \in \cH_1, \quad \forall \omega \in \R_\omega, \quad 
	\left\| f \widetilde{W^0_c}(\ri \omega) \overline{f} \right\|_{\fS_1(\cH_1)} \le \dfrac{C}{(\omega^2 + 1)^{1/2}} \left\| f \vc \rho_{0,N}^0 \vc \overline{f} \right\|_{\fS_1(\cH_1)}.
\]
Since
	\[
		\left\| f \vc \rho_{0,N}^0 \vc \overline{f} \right\|_{\fS_1(\cH_1)} = \left\| \sqrt{\rho_{0,N}^0} \vc \overline{f} \right\|_{\fS_2(\cH_1)}^2 = \int_{\R^3} \int_{\R^3} \dfrac{\rho_{0,N}^0 (\br) \left| f(\br') \right|^2}{| \br - \br' |^2} \, \rd \br \, \rd \br' \le C \| f \|_{\cH_1}^2,
	\]
	where we used again the fact that $\rho_{0,N}^0 \ast |\cdot|^{-2}$ is bounded, we obtain that~\eqref{eq:ineq_fWg} holds true for $f = g$. For $f \neq g$, we deduce from the fact that $\widetilde{\chi^0_\sym}(\ri \omega)$ is a bounded self-adjoint negative operator, that
\begin{align*}
	\left\| g \widetilde{W^0_c}(\ri \omega) \overline{f} \right\|_{\fS_1(\cH_1)} 
	& = \left\| g v_c^{1/2} \widetilde{\chi^0_\sym}(\ri \omega) v_c^{1/2} \overline{f} \right\|_{\fS_1(\cH_1)} 
	= \left\| g v_c^{1/2} \sqrt{- \widetilde{\chi^0_\sym}(\ri \omega)} \sqrt{- \widetilde{\chi^0_\sym}(\ri \omega)} v_c^{1/2} \overline{f} \right\|_{\fS_1(\cH_1)} \\
	& \le \left\| g v_c^{1/2} \sqrt{- \widetilde{\chi^0_\sym}(\ri \omega)} \right\|_{\fS_2(\cH_1)} \left\| \sqrt{- \widetilde{\chi^0_\sym}(\ri \omega)} v_c^{1/2} \overline{f} \right\|_{\fS_2(\cH_1)} \\
	& \le \left\| g v_c^{1/2} \widetilde{\chi^0_\sym}(\ri \omega) v_c^{1/2} \overline{g} \right\|_{\fS_1(\cH_1)}^{1/2} 
	\left\| f v_c^{1/2} \widetilde{\chi^0_\sym}(\ri \omega) v_c^{1/2} \overline{f} \right\|_{\fS_1(\cH_1)}^{1/2}\\
	& \le \dfrac{C}{(\omega^2 + 1)^{1/2}} \| f \|_{\cH_1} \| g \|_{\cH_1}.
\end{align*}	
\end{proof}

We now proceed to the proof of Proposition~\ref{prop:fg_fs}. From Lemma~\ref{lem:gW0f}, we get, for $f, g \in \cH_1$,
\begin{align*}
	\left| \left\bra f \Big| \fs_c \left[ \widetilde{G^\app}\right](\mu + \ri \omega) \Big| g \right\ket \right| 
	& =  \dfrac{1}{2 \pi} \left| \int_{-\infty}^{+\infty} \Tr_{\cH_1} \left( \widetilde{G^\app} \big( \mu_0 + \ri (\omega + \omega') \big) g \widetilde{W^0}( \ri \omega') \overline{f} \right) \rd \omega' \right| \nonumber \\
	& \le \dfrac{1}{2 \pi} \int_{-\infty}^{+\infty} \left\|  \widetilde{G^\app}(\mu_0 + \ri (\omega + \omega')) \right\|_{\cB(\cH_1)} \left\| g \widetilde{W^0}( \ri \omega') \overline{f} \right\|_{\fS_1(\cH_1)} \rd \omega' \\
	& \le \int_{-\infty}^{+\infty} \left\|  \widetilde{G^\app}(\mu_0 + \ri (\omega + \omega')) \right\|_{\cB(\cH_1)} \dfrac{C}{(\omega'^2 + 1)^{1/2}} \| f \|_{\cH_1} \| g \|_{\cH_1} \rd \omega' \\
	& \le C' \left\| \widetilde{G^\app}(\mu_0 + \ri \cdot) \right\|_{L^2(\R_\omega, \cB(\cH_1))}        \| f \|_{\cH_1} \| g \|_{\cH_1},
\end{align*}
where we used Cauchy-Schwarz inequality for the last inequality and the fact that $\omega \mapsto (\omega^2 + 1)^{-1/2} \in L^2(\R_\omega)$. Here, $C'$ does not depend on $\omega \in \R_\omega$ nor on $f, g \in \cH_1$. Altogether, we deduce that $\fs$ is a bounded linear operator from $L^2(\R_\omega, \cB(\cH_1))$ to $L^\infty(\R_\omega, \cB(\cH_1))$.\\

We now prove the claimed properties of $\ffg_\lambda$. Consider $M > 0$. By definition of $\mu_0$, the real number $d := \max(\varepsilon_{N+1} - \mu_0, \mu_0 - \varepsilon_N) $ is positive, and $\left| \mu_0 - h_1 \right| \ge d$. Let us choose $0 < \lambda_M < d/ M$. For $0 \le \lambda \le \lambda_M$ and $\widetilde{\Sigma^\app}(\mu_0 + \ri \cdot) \in L^\infty(\R_\omega, \cB(\cH_1))$ such that $\left\|  \widetilde{\Sigma^\app}(\mu_0 + \ri \cdot) \right\|_{L^\infty(\R_\omega, \cB(\cH_1))} \le M$, it holds
\begin{align*}
	\forall \omega \in \R_\omega, \quad  \mu_0 + \ri \omega - h_1 - \lambda \widetilde{\Sigma^\app}(\mu_0 + \ri \omega) 
		=  \left[ \mu_0 + \ri \omega - h_1\right] \left( 1 - \lambda \left[ \mu_0 + \ri \omega - h_1 \right]^{-1} \widetilde{\Sigma^\app}(\mu_0 + \ri \omega) \right).
\end{align*}
Since
\[
	\left\| \lambda \left[ \mu_0 + \ri \omega - h_1 \right]^{-1} \widetilde{\Sigma^\app}(\mu_0 + \ri \omega) \right\|_{\cB(\cH_1)} 
	\le  \dfrac{\lambda}{d}\left\|  \widetilde{\Sigma^\app}(\mu_0 + \ri \omega)\right\|_{\cB(\cH_1)} \le \dfrac{\lambda}{d} M \le \dfrac{\lambda_M}{d} M < 1,
\]
the operator $ 1 - \lambda \left[ \mu_0 + \ri \omega - h_1 \right]^{-1} \widetilde{\Sigma^\app}(\mu_0 + \ri \omega) $ is invertible, with
\[
	\left\| \left( 1 - \lambda \left[ \mu_0 + \ri \omega - h_1 \right]^{-1} \widetilde{\Sigma^\app}(\mu_0 + \ri \omega) \right)^{-1} \right\|_{\cB(\cH_1)} \le \dfrac{d}{d - \lambda_M M}.
\]
Since $\mu_0 + \ri \omega - h_1$ is an invertible operator with $\left\| (\mu_0 + \ri \omega - h_1)^{-1} \right\|_{\cB(\cH_1)} \le  ( \omega^2 + d^2)^{1/2}$, we obtain that $ \mu_0 + \ri \omega - h_1 - \lambda \widetilde{\Sigma^\app}(\mu_0 + \ri \omega) $ is invertible, with
\[
	\left\| \left[ \mu_0 + \ri \omega - h_1 - \lambda \widetilde{\Sigma^\app}(\mu_0 + \ri \omega) \right]^{-1} \right\|_{\cB(\cH_1)} \le \dfrac{d}{d - \lambda_M M} \left\| (\mu_0 + \ri \omega - h_1)^{-1} \right\|_{\cB(\cH_1)} \le \dfrac{K_M}{(\omega^2 + 1)^{1/2}}.
\]
We deduce from this inequality that
\[
	\left\| \ffg_\lambda \left( \widetilde{\Sigma^\app} \right) (\mu_0 + \ri \cdot) \right\|_{L^\infty(\R_\omega, \cB(\cH_1))} + \left\| \ffg_\lambda \left( \widetilde{\Sigma^\app} \right) (\mu_0 + \ri \cdot) \right\|_{L^2(\R_\omega, \cB(\cH_1))} \le C_M,
\]
where the constant $C_M \in \R^+$ does not depend on $ \lambda \in [0, \lambda_M]$. This gives the claimed result.
Finally,~\eqref{eq:resolvent_sigma} is a direct consequence of the resolvent formula.

%%%%%%%%%%%%%%%%%%%%%%%%%%%%%%%%%%%%%%%%%%%%%%%%%%%
\subsection{Proof of Theorem~\ref{th:contraction}}
\label{proof:contraction}

Let us denote for simplicity
\[
	\| \fs \| = \| \fs \|_{\cB(L^2(\R_\omega, \cB(\cH_1)), L^\infty(\R_\omega, \cB(\cH_1)))},
\]
and fix $M> \| \fs \| \left\| \widetilde{G^0}(\mu + \ri \cdot) \right\|_{L^2(\R_\omega, \cB(\cH_1))}$. Let $\lambda_M$ and $C_M$ be chosen as in Proposition~\ref{prop:fg_fs} for this choice of $M > 0$, and introduce 
\[
	r = \frac{M}{\| s \|} -  \left\| \widetilde{G^0}(\mu + \ri \cdot) \right\|_{L^2(\R_\omega, \cB(\cH_1))} > 0.
\]
For this choice of $r$, it holds that, for any $\widetilde{G^\app} \in \fB \left( \widetilde{G_0}, r \right)$, 
\[
	\left\| \fs \left[ \widetilde{G^\app} \right](\mu_0 + \ri \cdot) \right\|_{L^\infty(\R_\omega, \fB(\cH_1))} \le M.
\]
Therefore, from Proposition~\ref{prop:fg_fs},  $\ffg_\lambda \circ \fs \left[ \widetilde{G^\app} \right]$ is well-defined for all $\lambda \in [0, \lambda_M]$. \\

Let us prove that there exists $\lambda_\ast > 0$ sufficiently small such that for any $0 \le \lambda \le \lambda_\ast$, $\ffg_\lambda \circ \fs$ maps $\fB \left( \widetilde{G^0}, r \right)$ into itself. For $\widetilde{G^\app} \in \fB \left( \widetilde{G_0}, r \right)$, it holds
\begin{align}
	 &\left\| \left( \ffg_\lambda \circ \fs \left[ \widetilde{G^\app} \right] - \widetilde{G_0} \right)(\mu_0 + \ri \cdot) \right\|_{L^2(\R_\omega, \cB(\cH_1))}\nonumber \\ 
	& \qquad  \le \left\| \left( \ffg_\lambda \circ \fs \left[ \widetilde{G^\app} \right] - \ffg_\lambda \circ \fs \left[ \widetilde{G_0} \right] \right)(\mu_0 + \ri \cdot) \right\|_{L^2(\R_\omega, \cB(\cH_1))} \label{eq:Lipschitz} \\
	& \qquad  \qquad + \left\| \left( \ffg_\lambda \circ \fs \left[ \widetilde{G_0} \right] -  \widetilde{G^0} \right)(\mu_0 + \ri \cdot) \right\|_{L^2(\R_\omega, \cB(\cH_1))}. \label{eq:G1-G0}
\end{align}

To control the first term~\eqref{eq:Lipschitz}, we use the following result.
\begin{lemma} \label{lem:Lipschitz}
The map $\ffg_\lambda \circ \fs$ is $(\lambda C_M^2 \| \fs \|)$-Lipschitz on $\fB \left( \widetilde{G_0}, r \right)$.
\end{lemma}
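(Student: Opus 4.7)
The plan is as follows. Given two elements $\widetilde{G_1^\app}, \widetilde{G_2^\app} \in \fB(\widetilde{G_0}, r)$, I will first use the linearity of $\fs$ to write $\fs[\widetilde{G_1^\app}] - \fs[\widetilde{G_2^\app}] = \fs[\widetilde{G_1^\app} - \widetilde{G_2^\app}]$, so that the $L^\infty$-boundedness of $\fs : L^2(\R_\omega, \cB(\cH_1)) \to L^\infty(\R_\omega, \cB(\cH_1))$ (established in the previous proposition) immediately gives
\[
\left\| \fs[\widetilde{G_1^\app}] - \fs[\widetilde{G_2^\app}] \right\|_{L^\infty(\R_\omega, \cB(\cH_1))} \leq \|\fs\| \, \left\| \widetilde{G_1^\app} - \widetilde{G_2^\app} \right\|_{L^2(\R_\omega, \cB(\cH_1))}.
\]
Moreover, since $\widetilde{G_i^\app} \in \fB(\widetilde{G_0}, r)$, the choice of $r$ ensures $\|\fs[\widetilde{G_i^\app}]\|_{L^\infty} \leq M$ for $i=1,2$, so that $\ffg_\lambda[\fs[\widetilde{G_i^\app}]]$ is well-defined for $\lambda \in [0,\lambda_M]$.

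Next, I apply the resolvent-type identity~\eqref{eq:resolvent_sigma} with $\widetilde{\Sigma^\app_i} = \fs[\widetilde{G_i^\app}]$, which gives
\[
\ffg_\lambda[\fs[\widetilde{G_1^\app}]] - \ffg_\lambda[\fs[\widetilde{G_2^\app}]] = \lambda \, \ffg_\lambda[\fs[\widetilde{G_1^\app}]] \, \bigl(\fs[\widetilde{G_2^\app}] - \fs[\widetilde{G_1^\app}]\bigr) \, \ffg_\lambda[\fs[\widetilde{G_2^\app}]].
\]
Evaluating at $\mu_0 + \ri \omega$ and taking the $\cB(\cH_1)$-norm pointwise in $\omega$, submultiplicativity gives, for almost every $\omega \in \R_\omega$,
\[
\left\| \bigl( \ffg_\lambda[\fs[\widetilde{G_1^\app}]] - \ffg_\lambda[\fs[\widetilde{G_2^\app}]] \bigr)(\mu_0 + \ri \omega) \right\|_{\cB(\cH_1)} \leq \lambda \, \| \fs[\widetilde{G_2^\app}] - \fs[\widetilde{G_1^\app}] \|_{L^\infty} \, a(\omega) \, b(\omega),
\]
where $a(\omega) := \| \ffg_\lambda[\fs[\widetilde{G_1^\app}]](\mu_0 + \ri \omega) \|_{\cB(\cH_1)}$ and $b(\omega) := \| \ffg_\lambda[\fs[\widetilde{G_2^\app}]](\mu_0 + \ri \omega) \|_{\cB(\cH_1)}$.

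I will then take the $L^2(\R_\omega)$-norm of this pointwise inequality. Using the elementary bound $\|ab\|_{L^2(\R_\omega)} \leq \|a\|_{L^\infty(\R_\omega)} \|b\|_{L^2(\R_\omega)}$, together with the estimates $\|a\|_{L^\infty} \leq C_M$ and $\|b\|_{L^2} \leq C_M$ provided by Proposition~\ref{prop:fg_fs}, I obtain
\[
\left\| \ffg_\lambda \circ \fs[\widetilde{G_1^\app}] - \ffg_\lambda \circ \fs[\widetilde{G_2^\app}] \right\|_{L^2(\R_\omega, \cB(\cH_1))} \leq \lambda C_M^2 \|\fs\| \, \left\| \widetilde{G_1^\app} - \widetilde{G_2^\app} \right\|_{L^2(\R_\omega, \cB(\cH_1))},
\]
which is the required Lipschitz estimate. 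There is no genuine obstacle here: the argument is a direct combination of the linearity and continuity of $\fs$, the resolvent identity~\eqref{eq:resolvent_sigma}, and the $L^2 \cap L^\infty$ bounds on $\ffg_\lambda$ already proved. The only minor subtlety is to correctly split the product of two $\cB(\cH_1)$-valued factors in $\omega$ using $L^\infty \cdot L^2 \hookrightarrow L^2$, rather than trying a Cauchy--Schwarz-type $L^2 \cdot L^2 \hookrightarrow L^1$ bound, which would yield the wrong target space.
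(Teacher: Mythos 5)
Your proof is correct and follows essentially the same route as the paper: the resolvent identity~\eqref{eq:resolvent_sigma} applied with $\widetilde{\Sigma^\app_i} = \fs[\widetilde{G^\app_i}]$, the linearity and $L^2\to L^\infty$ boundedness of $\fs$, and the $C_M$ bounds of Proposition~\ref{prop:fg_fs}. Your explicit $L^\infty \cdot L^2 \hookrightarrow L^2$ split of the two outer factors is in fact slightly more careful than the paper's write-up, which quotes only the $L^\infty$ bounds on $\ffg_\lambda \circ \fs[\widetilde{G^\app_j}]$ while implicitly using the $L^2$ bound on one of them to land in $L^2(\R_\omega,\cB(\cH_1))$.
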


\begin{proof}[Proof of Lemma~\ref{lem:Lipschitz}] 
Let $\widetilde{G^\app_1}, \widetilde{G^\app_2} \in \fB \left( \widetilde{G_0}, r \right)$. From~\eqref{eq:resolvent_sigma}, we obtain
\begin{equation} \label{eq:Lipschitz_resolvent}
	\ffg_\lambda \circ \fs \left[  \widetilde{G^\app_1} \right] - \ffg_\lambda \circ \fs \left[ \widetilde{G^\app_2} \right] =  \lambda \left( \ffg_\lambda \circ \fs  \left[ \widetilde{G^\app_1} \right] \right) \left( \fs \left[ \widetilde{G^\app_2} \right] - \fs \left[ \widetilde{G^\app_1} \right] \right) \left( \ffg_\lambda \circ \fs \left[ \widetilde{G^\app_2} \right] \right).
\end{equation}
From Proposition~\ref{prop:fg_fs},
\[
	\left\| \ffg_\lambda \circ \fs \left[ \widetilde{G^\app_j} \right] (\mu_0 + \ri \cdot ) \right\|_{L^\infty(\R_\omega, \cB(\cH_1))} \le C_M
	 \quad  \text{for} \quad
	1 \le j \le 2.
\]
Moreover,
\[
	\left\| \left( \fs \left[ \widetilde{G^\app_2} \right] - \fs \left[ \widetilde{G^\app_1} \right] \right)(\mu_0 + \ri \cdot) \right\|_{L^\infty(\R_\omega, \cB(\cH_1)} \le \| \fs \| \left\| \left( \widetilde{G^\app_2} - \widetilde{G^\app_1} \right) (\mu_0 + \ri \cdot) \right\|_{L^2(\R_\omega, \cB(\cH_1))}.
\]
Plugging these estimates into~\eqref{eq:Lipschitz_resolvent}, we obtain
\begin{align*}
	& \left\| \left( \ffg_\lambda \circ \fs \left[ \widetilde{G^\app_1} \right] - \ffg_\lambda \circ \fs \left[ \widetilde{G^\app_2} \right] \right)(\mu_0 + \ri \cdot) \right\|_{L^2(\R_\omega, \cB(\cH_1))}  \\
	& \qquad \le \lambda C_M^2 \| \fs \| \left\| \left( \widetilde{G^\app_2} - \widetilde{G^\app_1} \right)(\mu_0 + \ri \cdot) \right\|_{L^2(\R_\omega, \cB(\cH_1))},
\end{align*}
which proves that $\ffg_\lambda \circ \fs$ is $(\lambda C_M^2 \| \fs \|)$-Lipschitz on $\fB \left( \widetilde{G_0}, r \right)$.
\end{proof}

Let us now control~\eqref{eq:G1-G0}. By noting that $\ffg_{\lambda = 0} \circ \fs \left[ \widetilde{G_0} \right] = \widetilde{G_0}$, we get from the resolvent formula that
\[
	\ffg_\lambda \circ \fs \left[ \widetilde{G_0} \right] - \widetilde{G_0} 
	=
	 \left(\ffg_\lambda - \ffg_0 \right) \circ \fs \left( \widetilde{G_0} \right) = \lambda \left( \ffg_\lambda \circ \fs \left[ \widetilde{G_0} \right]\right) \left( \fs  \left[ \widetilde{G_0} \right]\right)  \widetilde{G_0}.
\]
Using estimates similar to the ones used in the proof of Lemma~\ref{lem:Lipschitz}, we deduce that
\begin{equation} \label{eq:control_G1-G0}
	\left\| \left( \ffg_\lambda \circ \fs \left[ \widetilde{G_0} \right] - \widetilde{G_0} \right)(\mu_0 + \ri \cdot) \right\|_{L^2(\R_\omega, \cB(\cH_1))} 
	\le \lambda C_M^2 \left\| \fs \left[ \widetilde{G_0} \right](\mu_0 + \ri \cdot) \right\|_{L^\infty(\R_\omega, \cB(\cH_1))}.
\end{equation}

From Lemma~\ref{lem:Lipschitz} and~\eqref{eq:control_G1-G0}, we arrive at the conclusion that for all $0 \le \lambda \le \lambda_\ast$, where
\[
	\lambda_\ast = \dfrac{1}{ C_M^2 \left( \| \fs \| r +  	\left\| \fs \left[ \widetilde{G_0} \right](\mu_0 + \ri \cdot) \right\|_{L^\infty(\R_\omega, \cB(\cH_1))} \right)},
\]
it holds $\ffg_\lambda \circ \fs \left( \fB \left( \widetilde{G_0}, r \right) \right) \subset \fB \left( \widetilde{G_0}, r \right)$. \\

Finally, without loss of generality, we can assume that $\lambda_\ast C_M^2 \| \fs \| < 1$, so that, from Lemma~\ref{lem:Lipschitz}, we get that for all $0 \le \lambda \le \lambda_\ast$, the map $\ffg_\lambda \circ \fs$ is a contraction. The end of the proof follows from Picard's fixed point theorem.

%%%%%%%%%%%%%%%%%%%%%%%%%%%%%%%%%%
\subsection*{Acknowledgements}

We are grateful to the numerous physicists who patiently explained to us various aspects of the GW formalism: Xavier Blase, Christian Brouder, Fabien Bruneval, Ismaela Dabo, Claudia Draxl, Lucia Reining, Gian-Marco Rignanese, Patrick Rinke, Pina Romaniello, Francesco Sottile and Paulo Umari. Several useful discussions took place during the IPAM program on \textit{Materials for a sustainable energy future} and the IPAM hands-on summer school on \textit{Electronic structure theory for materials and biomolecules}.

%%%%%%%%%%%%%%%%%%%%%%%%%%%%%%%%%%

%----------------------- BIBLIO ---------------------------
%\section{References}

\bibliographystyle{plain}
\bibliography{GW}

\end{document}